\tikzset{
>=stealth',
true/.style={
rectangle,
draw=black, very thick,
text width=6.5em,
minimum height=2em,
text centered,
fill=gray, opacity = 0.5},
punkt/.style={
rectangle,
rounded corners,
draw=black, very thick,
text width=6.5em,
minimum height=2em,
text centered},
est/.style={
circle,
draw=black, very thick,
text centered},
shade/.style={
circle,
draw=black, very thick, fill=gray!50,
text centered},
weight/.style={
circle,
draw=black, very thick,
text width=6.5em,
minimum height=2em,
text centered},
pil/.style={
->,
thick,
shorten <=2pt,
shorten >=2pt,},
double/.style={
<->,
thick,
shorten <=2pt,
shorten >=2pt,},
dash/.style={
dashed,
thick,
shorten <=2pt,
shorten >=2pt,},
dashdouble/.style={
<->,
dashed,
thick,
shorten <=2pt,
shorten >=2pt,}
}
\def\BU{\mathbb{U}}
\def\E{\mathsf{E}}
\def\IF{\mathsf{IF}}
\def\EIF{\mathsf{EIF}}
\def\BP{\mathbb{P}}
\def\cov{\mathsf{cov}}
\def\var{\mathsf{var}}
\def\Pr{\mathsf{Pr}}
\def\BR{\mathbb{R}}
\def\SNMM{\mathsf{SNMM}}
\def\hat{\widehat}
\def\tilde{\widetilde}
\newtheorem{theorem}{Theorem}
\newtheorem{corollary}{Corollary}
\newtheorem{lemma}{Lemma}
\newtheorem{remark}{Remark}
\newenvironment{proof}[1][Proof]{\noindent\textbf{#1.} }{\ \rule{0.5em}{0.5em}}
\begin{document}

\title{\textbf{Efficient estimation of optimal regimes under a no direct effect assumption\thanks{This work was motivated by a talk given by James M. Robins at ENAR 2015 entitled ``How to increase efficiency of estimation when a test used to decide treatment has no direct effect on the outcome'' in a session organized by Michael R. Kosorok.}}}
\author{Lin Liu \hspace{.2cm} \\
Institute of Natural Sciences, School of Mathematical Sciences, MOE-LSC \\
and SJTU-Yale Center for Biostatistics and Data Science, \\
Shanghai Jiao Tong University, \\
Zach Shahn \\
IBM Research, \\
James M. Robins \\
Department of Biostatistics and Epidemiology, Harvard University, \\
and \\
Andrea Rotnitzky \\
Department of Economics, Universidad Torcuato Di Tella and CONICET}
\maketitle

\newpage
\begin{abstract}
We derive new estimators of an optimal joint testing and treatment regime under the no direct effect assumption that a given laboratory, diagnostic, or screening test has no effect on a patient's clinical outcomes except through the effect of the test results on the choice of treatment. We model the optimal joint strategy with an optimal structural nested mean model (opt-SNMM). The proposed estimators are more efficient than previous estimators of the parameters of an opt-SNMM because they efficiently leverage the `no direct effect of testing' assumption. Our methods will be of importance to decision scientists who either perform cost-benefit analyses or are tasked with the estimation of the `value of information' supplied by an expensive diagnostic test (such as an MRI to screen for lung cancer).
\end{abstract}

\noindent\textit{Keywords:} opt-SNMM, optimal dynamic regimes, screening, nuisance tangent space, direct effects

\newpage

\doublespacing

\renewcommand{\baselinestretch}{1}

\if11 \fi

\if01

\begin{center}
\textbf{Efficient estimation of optimal regimes under a no direct effect assumption}
\end{center}

\medskip \fi
\renewcommand{\baselinestretch}{2}

\section{Introduction}
\label{sec:background}

This paper provides new, highly efficient estimators of an optimal joint testing and treatment regime under the no direct effect assumption that a given laboratory, diagnostic, or screening test has no effect on a patient's clinical outcome of interest except through the effect of the test results on the choice of treatment. The proposed estimators attain high efficiency because they leverage this `no direct effect of testing' (abbreviated as NDE) assumption. What is surprising and, indeed, unprecedented in the experience of the authors, is that, in actual substantive studies, estimators that leverage the NDE assumption can provide a 50-fold increase in efficiency (and, thus, a 50 fold reduction in sample size) compared to estimators that fail to leverage the NDE assumption \citep{caniglia2019emulating}.

The following case study of a randomized trial in HIV infected subjects in Africa motivates the issues with which we are concerned. First some background. HIV infected subjects who start on first line antiretroviral therapy (ART) and who later develop evidence of disease progression, quantified by an increase in viral load, a decrease in CD4 immune cell count in blood, or new clinical symptoms must consider switching to second line therapy. Therefore, at each clinic visit, a decision based on the previous laboratory and clinical data must be made as to (1) whether to order viral load and/or CD4 count tests at some cost and burden to the patient and (2) whether to switch treatment. Waiting too long to switch may result in further clinical deterioration and eventually AIDS and/or death. On the other hand, switching too early is unwise as there are only a limited number of alternative regimes and these are associated with increased side effects and expense. Thus, there is a need to determine the optimal testing and treatment regime based on empirical analysis of observational or randomized trial data. Tests of viral load and CD4 count in themselves have no direct biological effect on a patient. Rather the test results are used to help gauge the likely benefit of switching treatment. Hence appropriately timed tests may result in an increase in the expected utility, even if the financial and other costs of the test are considered.

The issue of how to determine when to switch to second line therapy is most pressing in Africa, where there has been a longstanding concern that first line ART is prescribed with limited, often no, laboratory monitoring of disease progression. Therefore, in 2003, HIV infected individuals were recruited into the Development of AntiRetroviral Therapy in Africa (DART) randomized trial in which individuals on first line ART in Uganda and Zimbabwe were randomized to laboratory and clinical monitoring (LCM) versus clinically driven monitoring (CDM). In the LCM arm, CD4 cell counts were measured every 12 weeks and made available to the caregivers. In the CDM arm, CD4 counts were not available to the caregivers. LCM subjects were switched to second line therapy when either the CD4 count fell below 100 or a world health organization (WHO) stage 4 clinical event occurred; CDM arm subjects were switched at the occurrence of a WHO stage 4 event. The trial results were published by the DART Team in 2010 \citep{dart2010routine}. An intention to treat analysis showed an adverse effect of CDM compared to LCM both for overall mortality (hazard ratio 1.35, 95\% CI [1.10,1.65]) and for the primary outcome, time to (the minimum of) a WHO stage 4 event and death (hazard ratio 1.31, 95\% CI [1.14, 1.51]).

In 2012, the DART Team published a cost-effectiveness analysis \citep{lara2012cost}. Owing to both the cost of CD4 count testing every 12th week and the higher price of second line therapy, their analysis estimated an incremental cost of at least 3,000 US dollars for each additional quality-adjusted life year (QALY) gained from following the LCM strategy compared to the CDM strategy. By its definition, a cost effectiveness analysis compares the ratio of incremental dollar costs to the health benefit in QALYs for a number of different health interventions. The philosophy underlying this definition is that those interventions, ranked from the lowest to highest ratio, would be implemented in that order until available funds are exhausted. (This is a simplification, overlooking, for instance, that factors (individual and social) other than QALYs may be used to quantify the benefits of an intervention.) An alternative approach is to combine the costs and benefits into a single utility function whose expected value we wish to maximize. An analysis whose goal is to maximize a single utility function is generally referred to as a cost-benefit, rather than a cost-effectiveness, analysis. A cost-benefit analysis requires that we place a monetary value on an additional year of quality-adjusted life, as discussed next.

The DART Team noted that a WHO publication \citep{world2001report} suggested that, to be cost effective, the incremental cost of a proposed intervention per QALY added should be less than three times the per capita GDP. If one takes the WHO literally (which may not be what WHO meant for one to do), it allows one to place an upper bound on the monetary value of a year of quality-adjusted human life. For example, the per capita GDP of Zimbabwe and Uganda were both less than 480 US dollars in 2008 (presumably the last year available to the authors). The DART Team thus concluded that the LCM strategy was not cost effective compared to the CDM strategy as $3 \times 480 =$ 1,440 is less than incremental cost of 3,000 dollars. That is, they found that the monetized health value of the information (VoI) concerning disease progression obtained from CD4 testing every 12th week was less than the cost of obtaining the information. Thus, in terms of the definitions above, the DART Team had performed a cost-benefit analysis rather than simply a cost-effectiveness analysis. In contrast, the 2008 GDP/per capita of the US was 48,000 dollars. Hence, the WHO criteria, taken literally, would suggest that the monetary value of a QALY for a US citizen was 144,000 dollars, 100 times that of a citizen of Uganda or Zimbabwe. We leave further discussion of the obvious ethical and economic issues raised by monetizing a human life to others. In this paper, we will assume the cost of testing and the health benefit are combined into a single utility function. A natural question that arises is whether the above WHO cost-effectiveness criterion would be met if, instead of testing every 12 weeks as in the trial, testing occurred every 24 weeks or 36 weeks or 48 weeks with switching to second line therapy when the CD4 count first fell below 50,100, 150, or 200. None of these $3 \times 4 = 12$ testing and treatment regimes were studied in the RCT so any attempt to answer such questions requires one analyze the trial data as if it were an observational study using methods related to those proposed herein \citep{ford2015impact}.

Finding an optimal joint testing and treatment regime for HIV infected subjects is but one of many contexts in which our methodology should be applicable. Many laboratory, diagnostic and screening tests have no effect on the clinical outcome of interest, except through the effect of the test results on the choice of treatment. An area in which our new, more efficient estimators should be particularly important is that of cost-benefit analyses wherein the costs of expensive tests (such as MRIs to screen for lung cancer, mammograms to screen for breast cancer, and urinary cytology to screen for bladder cancer) are weighed against the clinical VoI supplied by the test results (e.g. \citet{mushlin1992screening, krahn1994screening}).

There is a large literature in economics and decision science on the VoI. For the purposes of this paper, we define the VoI to be the increase in the expected utility resulting from incorporating costly information without a direct causal effect on the outcomes of interest (such as a screening test) into an optimal regime (e.g. \citet{lavalle1968cashi, lavalle1968cashii, gould1974risk, hilton1981determinants}). That is, the VoI is the difference in expected utilities of two regimes: the optimal testing and treatment regime versus the optimal treatment regime under the constraint that no testing is allowed. Our methodology also allows for more efficient estimation of the VoI.

In fact, the VoI can be directly calculated from the parameters of an optimal regime Structural Nested Mean Models (opt-SNMM). \citet{robins2004optimal}, building on \citet{murphy2003optimal}, introduced the opt-SNMM, a semiparametric model for estimating the optimal testing and treatment regime from data. Under the model, the optimal testing and treatment regime is a deterministic function of the model parameters $\Psi$.

\citet{robins2004optimal} proposed g-estimation, a semiparametric version of dynamic programming (DP), to estimate the parameters $\Psi^{\ast}$ of an opt-SNMM. Under standard assumptions required for the identification of causal effects in longitudinal settings (consistency, positivity, and sequential exchangeability), g-estimation of a correctly specified opt-SNMM yields the regular, asymptotically linear (RAL) estimator $\tilde{\Psi}$ of $\Psi^{\ast}$ and thus of the optimal joint testing and treatment regime and its value (i.e. expected utility), provided the true law generating the data is not an exceptional law as defined in \citet[page 219]{robins2004optimal} and the bias of $\tilde{\Psi}$ is of order $o_{p} (n^{-1/2})$ with $n$ the sample size. In this paper we exclude such exceptional laws for reasons given later in this section. We discuss conditions required for the bias to be $o_{p} (n^{-1/2})$ in Appendix \ref{app:drml}. An estimator $\tilde{\Psi}$ is RAL if (i) $\tilde{\Psi} - \Psi^{\ast}$ is the sum of i.i.d. mean zero random variables (referred to as the influence function of $\tilde{\Psi})$ plus a term of order $o_{p} (n^{-1/2})$ and (ii) is locally asymptotically unbiased.

In this paper we show that under the NDE assumption it is possible to construct RAL estimators that are more efficient than the g-estimators of \citet{robins2004optimal} that do not use the NDE assumption. Our construction is not straightforward because imposing the NDE assumption does not restrict the values of any of the parameters of our opt-SNMM, including those parameters that determine the optimal testing regime. For a more comprehensive review of SNMMs, we refer interested readers to the articles by \citet{robins2000marginal, robins2004optimal} or a more recent piece by \citet{vansteelandt2014structural}.

We now briefly describe our estimator construction. Full details are given in Sections \ref{sec:estimator} to \ref{sec:feasible}. \citet{robins2004optimal} showed that, without imposing the NDE assumption, the g-estimator $\tilde{\Psi}$ was equal to the solution of estimating equations $0 = \hat{\BU} (\Psi)$ where, at the true $\Psi^{\ast}$, $\tilde{\Psi}$ and $\hat{\BU} (\Psi^{\ast})$ were RAL estimators of $\Psi^{\ast}$ and $0$ respectively and, in addition, were doubly robust in the sense of \citet{bang2005doubly}; see Section \ref{sec:estimator} for further discussion. The new estimators $\tilde{\Psi} (b)$ in this paper solve $0 = \hat{\BU} (\Psi, b)$ where $\hat{\BU} (\Psi, b)$ is the residual from the orthogonal projection of the (influence function of) $\hat{\BU} (\Psi)$ into a given linear space of random variables $T_{b}$ indexed by a vector function $b$. In the absence of confounding by unmeasured factors, all components of $T_{b}$ have mean zero under the NDE assumption. To construct $T_{b}$, we first show that the NDE assumption implies the following restriction on the distribution of the observed data. Given past history, the screening variable at any time $t$ is independent of the health outcome of interest after reweighting by the inverse of the conditional probability of a subject's treatment history from $t + 1$ onwards. We then use the independence of screening and the health outcome in this reweighted distribution to construct the vector $T_{b}$. The Pythagorean Theorem then guarantees that the asymptotic relative efficiency (ARE) $\lim_{n \rightarrow \infty} \var (\tilde{\Psi}) / \var (\tilde{\Psi} (b))$ of the old relative to the new estimator is always greater than or equal to 1. Furthermore the new estimator, like the old, is doubly robust.

In the paper, we assume a semiparametric model on the joint distribution of the factual and counterfactual variables defined by the restrictions that (i) the NDE assumption is true, (ii) confounding by unmeasured variables is absent, and (iii) the opt-SNMM holds. The regular estimator $\tilde{\Psi} (b_{opt})$ with minimum asymptotic variance (and hence the notation $b_{opt}$) among all RAL estimators under the model solves $0 = \hat{\BU} (\Psi, b_{opt})$ where $\hat{\BU} (\Psi, b_{opt})$ is the residual from the projection $T_{b_{opt}}$ (indexed by $b_{opt})$ of (the influence function of) $\hat{\BU} (\Psi)$ onto the space of all random variables $T_{b}$ with mean zero under and only under the NDE assumption. We show in Section \ref{sec:optimal_dr} that, when the utility is a continuous random variable, this projection and thus $\tilde{\Psi} (b_{opt})$ depend on the observed data distribution through the solutions to a set of complicated integral equations that do not exist in closed form. In contrast, when the utility is discrete, we obtain a closed form expression for the projection. Furthermore, in the case of a continuous utility (see Remark \ref{rem:sub}), we propose a \textit{computationally tractable} estimator $\tilde{\Psi} (b_{sub})$ solving $0 = \hat{\BU} (\Psi, b_{sub})$ with relative efficiency that can be made arbitrarily close to that of the `optimal' yet \textit{computationally intractable} estimator $\tilde{\Psi} (b_{opt})$. Specifically $\hat{\BU} (\Psi, b_{sub})$ is the residual from the closed-form projection $T_{b_{sub}}$ (indexed by $b_{sub}$) of (the influence function of) $\hat{\BU} (\Psi)$ onto a large, but strict, subspace of the space of random variables with mean zero under the NDE assumption. As the dimension of the chosen subspace increases, the relative efficiency of $\tilde{\Psi} (b_{sub})$ approaches that of $\tilde{\Psi} (b_{opt})$; hence if we allow the dimension to converge to infinity with the sample size at a sufficiently slow rate, $\tilde{\Psi} (b_{sub})$ and $\tilde{\Psi} (b_{opt})$ will be asymptotically equivalent. Because the projection needed to compute $\tilde{\Psi} (b_{sub})$ exists in closed-form, it is much easier to compute than $\tilde{\Psi} (b_{opt})$ and is therefore the estimator we recommend.

We shall study the relative efficiency of the estimators $\tilde{\Psi} (b_{sub})$ and $\tilde{\Psi}$ that, respectively, do and do not use the NDE assumption, as estimators of the parameters $\Psi^{\ast}$ of an opt-SNMM. However, our approach is quite generic in the following sense. Consider a semiparametric model for the joint distribution of the factual and counterfactual variables defined by the restrictions that (i) the NDE assumption is true, (ii) confounding by unmeasured variables is absent, and (iii) a given semiparametric model holds with $\Psi^{\ast}$ the true value of the Euclidean parameter. As one important example other than an opt-SNMM, the given model could be a dynamic marginal structural model (dyn-MSM) for a pre-specified class of testing and treatment regimes \citep{hernan2005discussion, van2007causal, robins2008estimation, orellana2010dynamici, orellana2010dynamicii}. Given a doubly robust RAL estimator $\tilde{\Psi}$ of $\Psi^{\ast}$ that solves $0 = \hat{\BU} (\Psi)$ (with $\hat{\BU} (\Psi)$ an unbiased estimating equation in the semiparametric model in the absence of the NDE assumption), the methods in this paper can be used to construct a RAL doubly robust estimator $\tilde{\Psi} (b_{sub})$ with improved efficiency compared to $\tilde{\Psi}$ by solving $0 = \hat{\BU} (\Psi, b_{sub})$ where again $\hat{\BU} (\Psi, b_{sub})$ is the residual from the closed-form projection (indexed by $b_{sub})$ of (the influence function of) $\hat{\BU} (\Psi)$ onto the exact same subspace as earlier. The only difference is that $\hat{\BU} (\Psi)$ and its influence function differ depending on the chosen semiparametric model (e.g. opt-SNMM vs dyn-MSM) for $\Psi^{\ast}$. Since our procedure is a generic procedure applied to an initial RAL estimator $\tilde{\Psi}$ and we are simply using an opt-SNMM as a particular example, we decided, as mentioned above, to exclude exceptional laws because the opt-SNMM g-estimator $\tilde{\Psi}$ is not a RAL estimator under an exceptional law.

The methodology discussed above constructs highly efficient estimators under the NDE assumption of no direct effect of testing on the clinical outcome $Y^{d}$ of interest. However even more efficient estimators can be constructed if we can impose the stronger NDE assumption of no direct effect of testing not only on $Y^{d}$ but also on measured time dependent covariates, as the stronger NDE assumption implies additional mean zero random variables on which to project. See Section \ref{sec:nde_variety} for more discussion.

The above development offers the reader little guidance or intuition as to when to expect small versus enormous gains in efficiency compared to estimators that do not incorporate the NDE assumption. To provide intuition, in Section \ref{sec:intuition}, we discuss an inverse probability weighting (IPW) estimator $\tilde{\Psi}_{nde\mbox{-}ipw}$ introduced in \citet{robins2008estimation} and used by \citet{caniglia2019emulating} in a substantive paper that uses a dyn-MSM to determine the optimal testing and treatment strategy [within a class of (CD4 cell and HIV RNA) testing and (anti-retroviral) treatment strategies] to prolong the survival of HIV infected individuals included in the Harvard HIV-CAUSAL Collaboration and the Centers for AIDS Research Network of Integrated Clinical Systems. The estimator $\tilde{\Psi}_{nde\mbox{-}ipw}$ has also been studied by \citet{neugebauer2017identification} and \citet{kreif2020exploiting}. $\tilde{\Psi}_{nde\mbox{-}ipw}$ is an asymptotically linear, though generally inefficient, estimator under the stronger NDE assumption; however, it is inconsistent when this assumption is false. In the analysis of \citet{caniglia2019emulating}, $\tilde{\Psi}_{nde\mbox{-}ipw}$ is unprecedentedly 50 times as efficient as the usual IPW estimator $\tilde{\Psi}_{ipw}$ that does not exploit an NDE assumption. In fact, $\tilde{\Psi}_{nde\mbox{-}ipw}$ would be nearly 50 times as efficient as any estimator that remains RAL when the NDE assumption does not hold.

Even so, we show in Section \ref{sec:formal} that $\tilde{\Psi}_{nde\mbox{-}ipw}$ will still be less efficient than $\tilde{\Psi}_{ipw} (b_{opt, ipw})$. Here $\tilde{\Psi}_{ipw} (b_{opt, ipw})$ solves $0 = \hat{\BU}_{IPW} (\Psi, b_{opt, ipw})$ where $\hat{\BU}_{IPW} (\Psi, b_{opt, ipw})$ is the residual from the projection (indexed by $b_{opt, ipw}$) of the estimated influence function of $\tilde{\Psi}_{ipw}$ onto the space of all random variables with mean zero only under the stronger NDE assumption! Indeed, we show that $\tilde{\Psi}_{ipw} (b_{opt, ipw})$ is semiparametric efficient in the model characterized solely by the stronger NDE assumption if the dyn-MSM is a saturated model.

In Section \ref{sec:simulation}, we design a simple data generating process that makes it transparent why $\tilde{\Psi}_{nde\mbox{-}ipw}$ and $\tilde{\Psi}_{ipw} (b_{opt, ipw})$ are so much more efficient than $\tilde{\Psi}_{ipw}$ in the context of \citet{caniglia2019emulating}. 
The above may lead one to suspect that $\tilde{\Psi}_{nde\mbox{-}ipw}$ is always more efficient than the usual IPW estimator $\tilde{\Psi}_{ipw}$, whenever the NDE assumption is imposed. However this is far from the case. In fact, we design a second data generating process (see Appendix \ref{app:worse}) under which, as a particular parameter of the data generating process converges to 0, the ratio of the asymptotic variance of $\tilde{\Psi}_{ipw}$ compared to $\tilde{\Psi}_{nde\mbox{-}ipw}$ tends to 0 and simultaneously the asymptotic variance of $\tilde{\Psi}_{ipw}$ converges to that of the efficient estimator $\tilde{\Psi}_{ipw} (b_{opt, ipw})$.

The organization of the paper is as follows. In Section \ref{sec:notation}, we establish some notation and review the counterfactual framework. In Section \ref{sec:estimator}, we review opt-SNMMs and g-estimation without imposing the NDE assumption. In Section \ref{sec:nde}, we characterize the ortho-complement of the tangent spaces under the NDE assumption as a key step toward constructing more efficient estimators. In Section \ref{sec:dr}, we describe several strategies for obtaining more efficient estimators by leveraging the NDE assumption. In particular we describe the aforementioned computationally tractable procedure used in construction of $\tilde{\Psi} (b_{sub})$ in Section \ref{sec:optimal_dr}. In Section \ref{sec:feasible}, we study the statistical properties of the proposed estimators when the nuisance parameters/functions are estimated from data possibly using machine learning methodology. In Section \ref{sec:nde_variety}, we introduce increasingly strong versions of the NDE assumption and compare their substantive plausibility. In Section \ref{sec:intuition}, we provide the intuition behind our large efficiency gains through a study of the properties of $\tilde{\Psi}_{nde\mbox{-}ipw}$. We also formally compare the statistical properties of all the various estimators treated in the paper. In Section \ref{sec:simulation} and Appendix \ref{sim:further}, we illustrate the method on simulated examples. In Section \ref{sec:discussion}, we conclude with some open problems and future directions.

\section{Notation, Framework, and Background}
\label{sec:notation}

We begin by providing the notation that will be used throughout the paper. Let:

\begin{itemize}
\item $t \in \{0, \ldots, K\}$ index time or visits, assumed discrete, with $K$ being the last occasion;

\item $A_{t} \in \{0, 1\}$ denote whether a screening test is performed at time $t$;

\item $R_{t}$ denote the results of the test (e.g. MRI screening result) if performed at $t - 1$ and $R_{t} = ?$ otherwise;

\item $S_{t}$ be a binary $\{0, 1\}$ variable denoting the treatment (e.g. switching therapy) at time $t$;

\item $L_{t}$ denote covariates at time $t$ that may influence testing and treatment decisions;

\item $Y^{d}$ denote the observed value of the health outcome utility;

\item $Y \equiv Y^{d} - \sum_{t = 0}^{K} c^{\ast} A_{t}$ denote the total utility of interest with $c^{\ast }$ being the known (utility) cost of the test at time $t$; and

\item $\bar{X}_{t}$ denote $(X_{0}, \ldots, X_{t})$ and $\underaccent{\bar}{X}_{t}$ denote $(X_{t}, \ldots, X_{K})$ for an arbitrary vector $X \equiv (X_{0}, \ldots, X_{K})$.
\end{itemize}

We assume that we observe $N$ i.i.d. realizations of the random vector: 
\begin{equation*}
O \equiv (L_{0}, R_{0}, S_{0}, A_{0}, \ldots, L_{K}, R_{K}, S_{K}, A_{K}, Y^{d}).
\end{equation*}

We use capital letters to denote random variables and corresponding lower case letters to denote specific values that random variables might take. We consider the scenario where at each time $t$ the chronological ordering of the variables is $L_{t}$ before $R_{t}$ before $S_{t}$ before $A_{t}$. Our definition of $R_{t}$ implies that the results of tests at time $t$ are not available until time $t + 1$. (By convention, we take $A_{K}$ to always be 0 as testing results at $K$ would not be available until after $Y$ is measured at $K + 1$. As with $A_{K}$, we also take $A_{K, g}$ defined below to be 0. Further we set $R_{0} \equiv 0$ as there is no test before $R_{0}$ in our setting.) If the results of tests at time $t$ were available immediately and hence could influence $S_{t}$, we would simply redefine $R_{t}$ to be the results of testing at $t$ rather than at $t - 1$ and reorder as $(L_{t}, A_{t}, R_{t}, S_{t})$. Furthermore we let $\bar{H}_{m} \equiv (\bar{L}_{m}, \bar{R}_{m}, \bar{S}_{m - 1}, \bar{A}_{m - 1})$ be the past history through time $m$, excluding $S_{m}, A_{m}$. We let $\bar{\mathcal{H}}_{m}$ be the sample space of the random vector $\bar{H}_{m}$.

In Remark \ref{rem:cost} of Section \ref{sec:formal}, we provide conditions under which the cost of testing $c^{\ast}$ at $t$ can be made a function $c (t, \bar{H}_{t}, S_{t})$ of the past information rather than a fixed constant.

A deterministic testing and treatment regime $g \equiv (g_{0}, g_{1}, \ldots, g_{K})$ is a vector of rules or functions $g_{t}: (\bar{L}_{t}, \bar{R}_{t}, \bar{S}_{t - 1}, \bar{A}_{t - 1}) \mapsto (s_{t}, a_{t})$ that determines the values $(s_{t}, a_{t})$ to which $S_{t}$ and $A_{t}$ will be set given the past $(\bar{L}_{t}, \bar{R}_{t}, \bar{S}_{t - 1}, \bar{A}_{t - 1})$. We denote arbitrary regimes by $g$ and we adopt the counterfactual framework of \citet{robins1986new, robins1987addendum} in which $Y_{g}$, $Y_{g}^{d}$, $L_{t, g}$, $R_{t, g}$, $S_{t, g}$ and $A_{t, g}$ are random variables representing the counterfactual data had regime $g$ been followed. Implicit in the notation is the assumption that the treatment regime followed by one patient does not influence the outcome of any other patient. A testing and treatment regime is said to be \textit{static} if $g_{t}$ is a constant function for all $t$. A regime is said to be \textit{dynamic} if it stipulates that testing and/or treatment at time $t$ depends on past covariate and/or treatment values. A random testing and treatment regime replaces the functions $g_t$ by conditional densities taking values in the support of $(S_t, A_t)$. Since the optimal testing and treatment strategy can always be taken to be a deterministic strategy, we do not further consider random regimes in this paper. We make three additional standard assumptions that serve to identify the optimal testing and treatment regime, even when the NDE assumption fails to hold \citep{robins2004optimal}. Throughout, let 
\begin{equation*}
\Pi_{m} \equiv \pi_{m} (\bar{H}_{m}, S_{m}) \equiv \Pr [A_{m} = 1 | \bar{H}_{m}, S_{m}] \text{ and } p (\cdot | \bar{H}_{m}) \equiv \Pr [S_{m} = \cdot | \bar{H}_{m}].
\end{equation*}

\begin{enumerate}
\item \label{id1} \textbf{Positivity: } for all $m = 0, \ldots, K$, $\Pi_{m}$ and $p (s_{m} | \bar{H}_{m})$ for all $s_{m}$ in the sample space of $S_{m}$, are bounded away from 0 and 1 on a set of probability 1;

\item \label{id2} \textbf{Consistency: } $Y^{d} = Y_{g}^{d}$ and $Y = Y_{g}$ if $(\bar{A}_{K, g}, \bar{S}_{K, g}) = (\bar{A}_{K}, \bar{S}_{K})$, $(\bar{L}_{t + 1}, \bar{R}_{t + 1}, \bar{A}_{t + 1}, \bar{S}_{t + 1}) = (\bar{L}_{t + 1, g}, \bar{R}_{t + 1, g}, \bar{A}_{t + 1, g}, \bar{S}_{t + 1, g})$ if $(\bar{A}_{t, g}, \bar{S}_{t, g}) = (\bar{A}_{t}, \bar{S}_{t})$;

\item \label{id3} \textbf{Sequential exchangeability: } 
\begin{equation*}
(Y_{g}^{d}, Y_{g}, \underaccent{\bar}{L}_{t + 1, g}, \underaccent{\bar}{R}_{t + 1, g}, \underaccent{\bar}{A}_{t, g}, \underaccent{\bar}{S}_{t, g}) \amalg (A_{t}, S_{t}) | \bar{L}_{t}, \bar{R}_{t}, (\bar{A}_{t - 1}, \bar{S}_{t - 1}) = \bar{g}_{t - 1} (\bar{H}_{t - 1}) \forall t, g
\end{equation*}
where $\amalg$ stands for statistical independence and $\{ (\bar{A}_{t - 1}, \bar{S}_{t - 1}) = \bar{g}_{t - 1} (\bar{H}_{t - 1})\} \equiv \{ (A_{m}, S_{m}) = g_{m} (\bar{H}_{m}), m = 0, \ldots, t - 1 \}$.
\end{enumerate}

\begin{remark}
\label{rem:missing} 
The sample space of $R_{t + 1}$ includes ``?" and $R_{t + 1}$ is always observed. When $A_{t} = 0$ (no screening at $t$), $R_{t + 1}$ is not missing; rather $R_{t + 1} = ?$. Note $R_{t + 1} = ?$ if and only if $A_{t} = 0$. In Section \ref{sec:nde_variety} we introduce a new underlying variable $R_{t + 1}^{\ast}$ that is equal to $R_{t + 1}$ when $A_{t} = 1$ and is missing when $A_{t} = 0$. Until then the variable $R_{t + 1}^{\ast}$ is not needed as both treatment and screening decisions at time $t + 1$ will depend only on always observed available information such as $R_{t + 1}$.
\end{remark}

Note that we could replace the vector $(Y_{g}^{d}, Y_{g}, \underaccent{\bar}{L}_{t + 1, g}, \underaccent{\bar}{R}_{t + 1, g}, \underaccent{\bar}{A}_{t + 1, g}, \underaccent{\bar}{S}_{t + 1, g})$ in assumption \ref{id3} by $(Y_{g}^{d}, \underaccent{\bar}{L}_{t + 1, g}, \underaccent{\bar}{R}_{t + 1, g})$ because conditional on $(\bar{A}_{t - 1}, \bar{S}_{t - 1}) = \bar{g}_{t - 1} (\bar{H}_{t - 1})$, $(Y_{g}, \underaccent{\bar}{A}_{t, g}, \underaccent{\bar}{S}_{t, g})$ is a deterministic function of $(Y_{g}^{d}, \underaccent{\bar}{L}_{t + 1, g}, \underaccent{\bar}{R}_{t + 1, g})$ and $\bar{H}_{t}$. For future reference, we highlight that assumption \ref{id3} implies 
\begin{equation}
\E \left[ \left. Y_{\bar{S}_{t - 1}, \bar{A}_{t - 1}, s_{t}, a_{t}, \underaccent{\bar}{g}_{t + 1}} \right\vert \bar{H}_{t} \right] = \E \left[ \left. Y_{\bar{S}_{t - 1}, \bar{A}_{t - 1}, s_{t}, a_{t}, \underaccent{\bar}{g}_{t + 1}} \right\vert \bar{H}_{t}, A_{t}, S_{t} \right], \label{radom-eq}
\end{equation}
where for any $g, Y_{\bar{S}_{t - 1}, \bar{A}_{t - 1}, s_{t}, a_{t}, \underaccent{\bar}{g}_{t + 1}}$ denotes the counterfactual total utility $Y$ when a subject takes her observed treatment history $(\bar{S}_{t - 1}, \bar{A}_{t - 1})$ through $t - 1$, and possibly contrary to fact, takes $(s_{t}, a_{t})$ at $t$, and, for $t < K$ the subject follows the dynamic testing and treatment regime $g$ from $t + 1$ onward.

Throughout we call the set of three assumptions \ref{id1} - \ref{id3} the identifying (ID) assumptions. \citet{robins1999testing} noted that assumptions \ref{id2} and \ref{id3} do not impose restrictions on the observed data distribution. \citet{robins1999testing} showed that under the ID assumptions, for any $(s_{t}, a_{t})$, the conditional counterfactual mean $\E [Y_{\bar{S}_{t - 1}, \bar{A}_{t - 1}, \underaccent{\bar}{g}_{t}} \vert \bar{H}_{t} = \bar{h}_{t}]$ is identified and it is equal to the following, so-called g-formula \citep{robins1986new} 
\begin{align*}
& \E \left[ \left. Y_{\bar{S}_{t - 1}, \bar{A}_{t - 1}, \underaccent{\bar}{g}_{t}} \right\vert \bar{H}_{t} = \bar{h}_{t} \right] \\
& = \int y f (y | \bar{h}_{K}, \bar{a}_{K}, \bar{s}_{K}) \prod_{m = t}^{K} I_{g_{m} (\bar{h}_{m})} (a_{m}, s_{m}) \prod_{m = t + 1}^{K} f (l_{m}, r_{m} | \bar{h}_{m - 1}, \bar{a}_{m - 1},\bar{s}_{m - 1}) d \underaccent{\bar}{a}_{m} d \underaccent{\bar}{s}_{m} d \underaccent{\bar}{l}_{m + 1} d \underaccent{\bar}{r}_{m + 1} dy.
\end{align*}
Note that $\E [Y_{\bar{S}_{t - 1}, \bar{A}_{t - 1}, \underaccent{\bar}{g}_{t}} \vert \bar{H}_{t} = \bar{h}_{t}]$ depends on the law of the observed data only through the conditional distributions $f (y | \bar{h}_{K}, a_{K}, s_{K})$ and $f (l_{m}, r_{m} | \bar{h}_{m - 1}, \bar{a}_{m - 1}, \bar{s}_{m - 1})$ for $m = t + 1, \ldots, K$. Let $\mathcal{G}$ be the set of all regimes $g$ satisfying the ID assumptions. Thus, for every $g \in \mathcal{G}$, $\E [Y_{\bar{S}_{t - 1}, \bar{A}_{t - 1}, \underaccent{\bar}{g}_{t}} \vert \bar{H}_{t} = \bar{h}_{t}]$ and $\E [Y_{g}]$ are identified. Our goal is to estimate an optimal dynamic testing and treatment regime $g^{opt} \equiv \arg \max_{g \in \mathcal{G}} \E [Y_{g}]$ and its corresponding value (i.e. expected utility) $\E [Y_{g^{opt}}]$. Since we have excluded exceptional laws, $g^{opt}$ is unique \citep{robins2004optimal}.

Under the ID assumptions, $g^{opt}$ can be computed by dynamic programming \citep{bellman1952theory} as follows. Define $g^{\mathsf{opt}\ast} = (g_{0}^{opt \ast}, \ldots, g_{K}^{opt \ast})$ by the following backward recursion. First, we define $g_{K}^{opt \ast} (\bar{H}_{K}) \equiv \arg \max_{s_{K}, a_{K}} \E [Y_{\bar{S}_{K - 1}, \bar{A}_{K - 1}, s_{K}, a_{K}} \vert \bar{H}_{K}]$ and recursively for $t = K - 1, \ldots, 0$, we define $g_{t}^{opt \ast} (\bar{H}_{t}) \equiv \arg \max_{s_{t}, a_{t}} \E [Y_{\bar{S}_{t - 1}, \bar{A}_{t - 1}, s_{t}, a_{t}, \underaccent{\bar}{g}_{t + 1}^{opt \ast}} \vert \bar{H}_{t}]$. \citet{robins2004optimal} proved that under the ID assumptions, $\E [Y_{\bar{S}_{t - 1}, \bar{A}_{t - 1}, \underaccent{\bar}{g}_{t}^{opt \ast}} \vert \bar{H}_{t}] = \max_{\underaccent{\bar}{g}_{t} \in \underaccent{\bar}{\mathcal{G}}_{t}} \E [Y_{\bar{S}_{t - 1}, \bar{A}_{t - 1}, \underaccent{\bar}{g}_{t}} \vert \bar{H}_{t}]$, thus proving that the DP solution $g^{opt \ast}$ agrees with the optimal treatment regime $g^{opt}$ under the ID assumptions.

\section{Estimator of opt-SNMM Without the NDE Assumption}
\label{sec:estimator}

In this section we review the definition and the estimation of opt-SNMMs \citep{robins2004optimal} for estimating optimal dynamic regimes. For any regime $g$, define the testing and treatment effect contrast 
\begin{align*}
\gamma_{t}^{\underaccent{\bar}{g}_{t + 1}} (\bar{H}_{t},s_{t},a_{t}) \coloneqq \E\left[ \left. Y_{\bar{S}_{t - 1}, \bar{A}_{t - 1}, s_{t}, a_{t}, \underaccent{\bar}{g}_{t + 1}} - Y_{\bar{S}_{t - 1}, \bar{A}_{t - 1}, s_{t} = 0, a_{t} = 0, \underaccent{\bar}{g}_{t + 1}} \right\vert \bar{H}_{t}, \left( S_{t}, A_{t} \right) = \left( s_{t}, a_{t} \right) \right].
\end{align*}
This contrast is the average causal effect, among subjects with the history $\bar{H}_{t}, (S_{t}, A_{t}) = (s_{t}, a_{t})$, of setting possibly contrary to fact $S_{t}$ and $A_{t}$ both to 0 rather than to their observed values when, again possibly contrary to fact, the regime $g$ is followed from time $t + 1$ onward. Because $Y_{\bar{S}_{t - 1}, \bar{A}_{t - 1}, s_{t} = 0, a_{t} = 0, \underaccent{\bar}{g}_{t + 1}}$ does not depend on the free indices $(s_{t}, a_{t})$, 
\begin{equation*}
\arg \max_{(s_{t}, a_{t})} \gamma_{t}^{\underaccent{\bar}{g}_{t + 1}} (\bar{H}_{t}, s_{t}, a_{t}) = \arg \max_{(s_{t}, a_{t})} \E \left[ \left. Y_{\bar{S}_{t - 1}, \bar{A}_{t - 1}, s_{t}, a_{t}, \underaccent{\bar}{g}_{t + 1}} \right\vert \bar{H}_{t}, (S_{t}, A_{t}) = (s_{t},a_{t}) \right]
\end{equation*}
under the ID assumptions. It then follows, that by the arguments given in the preceding section, under the ID assumptions, the optimal testing and treatment regime $g^{opt}$ is given by the following backward recursion: for $t = K, K - 1, \ldots, 0$, $g_{t}^{opt} (\bar{H}_{t}) \coloneqq \arg \max_{(s_{t}, a_{t})} \gamma_{t}^{\underaccent{\bar}{g}_{t + 1}^{opt}} (\bar{H}_{t}, s_{t}, a_{t})$. For $t = 0, \ldots, K$, define $\gamma_{t} (\bar{H}_{t}, s_{t}, a_{t}) \equiv \gamma_{t}^{\underaccent{\bar}{g}_{t + 1}^{opt}} (\bar{H}_{t}, s_{t}, a_{t})$ and $\left( S_{t}^{opt} (\gamma_{t}), A_{t}^{opt} (\gamma_{t}) \right) \equiv g_{t}^{opt} (\bar{H}_{t})$. Finally, define 
\begin{align}
\Delta_{t} (\gamma_{t}; \underaccent{\bar}{\gamma}_{t + 1}) \equiv Y - \gamma_{t} (\bar{H}_{t}, S_{t}, A_{t}) + \sum_{m = t + 1}^{K} \left\{ \gamma_{m} \left( \bar{H}_{m}, S_{m}^{opt} \left( \gamma_{m} \right), A_{m}^{opt} \left( \gamma_{m} \right) \right) - \gamma_{m} \left( \bar{H}_{m}, S_{m}, A_{m} \right) \right\}. \notag
\end{align}
where by convention $\sum_{m = K + 1}^{K} \left( \cdot \right) \equiv 0$. By straightforward algebra it can be shown that for any $t$, 
\begin{equation}
\E [\Delta_{t} (\gamma_{t}; \underaccent{\bar}{\gamma}_{t + 1}) | \bar{H}_{t}, A_{t}, S_{t}] = \E [Y_{\bar{A}_{t - 1}, \bar{S}_{t - 1}, a_{t} = 0, s_{t} = 0, \underaccent{\bar}{g}_{t + 1}^{opt}} | \bar{H}_{t}, A_{t}, S_{t}] \label{new_eq}
\end{equation}
regardless of whether or not the sequential exchangeability holds. Heuristically $\Delta_{t} (\gamma_{t}; \underaccent{\bar}{\gamma}_{t + 1})$ mimics $Y_{\bar{A}_{t - 1}, \bar{S}_{t - 1}, a_{t} = 0, s_{t} = 0, \underaccent{\bar}{g}_{t + 1}^{opt}}$ in the sense that they have the same mean given $(\bar{H}_{t}, A_{t}, S_{t})$. Under the ID assumptions and equation \eqref{new_eq} 
\begin{equation}  \label{eq:preU}
\E \left[ \Delta_{t} (\gamma_{t}; \underaccent{\bar}{\gamma}_{t + 1}) | \bar{H}_{t}, A_{t}, S_{t} \right] = \E \left[ \Delta_{t} (\gamma_{t}; \underaccent{\bar}{\gamma}_{t + 1}) | \bar{H}_{t} \right],
\end{equation}
which implies that, for arbitrary functions $Q_{t} (s_{t}, a_{t}) \equiv q_{t} (\bar{H}_{t}, s_{t}, a_{t})$ selected by the analyst, the random variable 
\begin{equation}  \label{eq:U}
U_{t} (q_{t}, \underaccent{\bar}{\gamma}_{t}) \equiv \left\{ \Delta_{t} (\gamma_{t}; \underaccent{\bar}{\gamma}_{t + 1}) - \E [\Delta_{t} (\gamma _{t}; \underaccent{\bar}{\gamma}_{t + 1}) | \bar{H}_{t}] \right\} \times \left\{ Q_{t} (S_{t}, A_{t}) - \E [Q_{t} (S_{t}, A_{t}) | \bar{H}_{t}] \right\}
\end{equation}
has mean zero by equation \eqref{eq:preU}. In fact, \citet{robins2004optimal} showed that $\gamma_{t} (\bar{H}_{t}, S_{t}, A_{t})$ is the unique function of $(\bar{H}_{t}, S_{t}, A_{t})$ satisfying $\gamma_{t} (\bar{H}_{t}, 0, 0) = 0$ that also satisfies the condition $\E [U_{t} (q_{t}, \underaccent{\bar}{\gamma}_{t})] = 0$ for all $q_{t}$ such that $U_{t} (q_{t}, \underaccent{\bar}{\gamma}_{t})^{2}$ has finite mean. Therefore $\gamma_{t}$ is identified through the system of equations \eqref{eq:U} for all $q_{t}$. One could choose $Q_{t} (S_{t}, A_{t})$ either to simplify the form or minimize the variance of the opt-SNMM estimator defined below.

An opt-SNMM assumes a parametric model for the treatment effect contrasts $\gamma_{t} (\bar{H}_{t}, s_{t}, a_{t})$, i.e. it assumes that 
\begin{equation}
\gamma_{t} (\bar{H}_{t}, s_{t}, a_{t}) = \gamma_{t} (\bar{H}_{t}, s_{t}, a_{t}; \Psi_{t}^{\ast})  \label{model}
\end{equation}
where $\Psi_{t}^{\ast}$ is an unknown parameter vector and $\gamma_{t} (\bar{H}_{t}, s_{t}, a_{t}; \Psi_{t})$ is a known function equal to 0 whenever $s_{t} = a_{t} = 0$ or $\Psi_{t} = 0$.

Under the model the optimal testing and treatment choice at $t$ is $(S_{t}^{opt} (\Psi_{t}), A_{t}^{opt} (\Psi_{t})) \equiv \arg \max_{s_{t}, a_{t}} \gamma_{t} (\bar{H}_{t}, s_{t}, a_{t}; \Psi_{t})$ for $\Psi = \Psi^{\ast}$. In an abuse of notation, we define $\Delta_{t} (\Psi_{t}; \underaccent{\bar}{\Psi}_{t + 1})$ just like $\Delta_{t} (\gamma_{t}; \underaccent{\bar}{\gamma}_{t + 1})$ but with the functions $\gamma_{m} (\bar{H}_{m}, S_{m}, A_{m}; \Psi _{m})$ replacing the true functions $\gamma_{m} (\bar{H}_{m}, S_{m}, A_{m})$.

The ID assumptions and the opt-SNMM model \eqref{model} determine a semiparametric model $\mathbb{M}_{1}$ for the observed data distribution defined by the restriction that there exists a unique $\Psi^{\ast} \equiv (\Psi_{0}^{\ast}, \ldots, \Psi_{K}^{\ast})$ such that for all $t = 0, 1, \ldots, K$, $\E [\Delta_{t} (\Psi_{t}^{\ast}; \underaccent{\bar}{\Psi}_{t + 1}^{\ast}) | \bar{H}_{t}, A_{t}, S_{t}] = \E [\Delta_{t} (\Psi_{t}^{\ast}; \underaccent{\bar}{\Psi}_{t + 1}^{\ast}) | \bar{H}_{t}]$. Before proceeding further we need to review some well-known results from semiparametric theory.

\subsection{Review of semiparametric theory}
\label{sec:review}

For a law $P$ in a semiparametric model $\mathbb{M}$, the tangent space $\Lambda \equiv \Lambda (P)$ at $P$ is defined as the closed linear span of the scores $\phi (P)$ at $P$ for all one dimensional parametric submodels $\alpha \rightarrow P_{\alpha}$ such that $P_{\alpha = 0} = P$. For a given functional of interest $\beta \equiv \beta (P)$, a random variable $\IF = \IF (P)$ is referred to as an influence function for $\beta$ at $P$ if $\E_{P} [\IF (P)] = 0$ and for each submodel $P_{\alpha}$, $\partial \beta (P_{\alpha}) / \partial \alpha |_{\alpha = 0} = \E_{P} [\IF (P) \phi (P)]$ for the score $\phi (P)$ of model $P_{\alpha}$ at $\alpha = 0$. The efficient influence function $\EIF = \EIF (P)$ for a functional $\beta$ is the unique influence function lying in $\Lambda$. A random variable $\IF$ is an influence function if and only if $\IF - \EIF \in \Lambda^{\perp}$ where $\Lambda^{\perp}$ is the ortho-complement to $\Lambda$ in $L_{2, 0} (P)$ where $L_{2, 0} (P)$ is the subspace of $L_{2} (P)$ comprised of mean zero elements of $L_{2} (P)$. Hence the set $\mathcal{IF}$ of all influence functions (at $P)$ is $\left\{ \EIF + H; H \in \Lambda^{\perp} \right\}$. It follows that for any influence function $\IF$, $\EIF = \IF - \Pi [\IF | \Lambda^{\perp}]$ with $\Pi = \Pi_{P}$ the projection operator in $L_{2, 0} (P)$. The nuisance tangent space $\Lambda_{\mathsf{nuis}} \equiv \Lambda_{\mathsf{nuis}} (P)$ for $\beta$ at $P$ is the subspace of $\Lambda (P)$ generated by the scores of one dimensional parametric submodels in $\mathbb{M}$ for which $\Psi (P_{\alpha})$ is constant. The set $\mathcal{IF}$ of influence functions is contained in $\Lambda_{\mathsf{nuis}}^{\perp}$.

Estimators $\hat{\beta}$ of $\beta = \beta (P)$ with the property that there exists a random variable $D_{i} \equiv d (O_{i})$ with mean zero and finite variance under $P$ such that $n^{1/2} (\hat{\beta} - \beta) = n^{-1/2} \sum_{i = 1}^{n} D_{i} + o_{p} (1)$ are called asymptotically linear at $P$. By the Central Limit Theorem and Slutsky's Theorem, any asymptotically linear estimator is consistent and asymptotically normal with asymptotic variance equal to $\var (D)$. Furthermore, any two asymptotically linear estimators, say $\hat{\beta}_{1}$ and $\hat{\beta}_{2}$ with the same $D$ are asymptotically equivalent in the sense that $n^{1/2} (\hat{\beta}_{1} - \hat{\beta}_{2}) = o_{p} (1)$. An estimator of $\hat{\beta}$ is regular at $P$ in a semiparametric model $\mathbb{M}$ if its convergence to $\beta$ is locally uniform \citep{van1998asymptotic}. Any regular, asymptotically linear (RAL) estimator has $D$ equal to some influence function $\mathsf{IF}$. It follows that the minimal possible variance of any RAL estimator is $\E [\EIF^{2}]$, which is referred to as the semiparametric variance bound for $\beta$ at $P$ in model $\mathbb{M}$. If $\Lambda = \Lambda (P)$ is all of $L_{2, 0} (P)$, then all RAL estimators have the same influence function at $P$.

\subsection{Semiparametric inference in model $\mathbb{M}_{1}$}
\label{sec:review}

The set $\Lambda_{\mathsf{ancillary}}$ of conditional scores for $p [A_{t}, S_{t} | \bar{H}_{t}]$, for $t = 0, \cdots, K$ consists of all functions of $(A_{t}, S_{t}, \bar{H}_{t})$ in $L_{2, 0} (P)$ that have mean zero given $\bar{H}_{t}$. Let $\Lambda_{1, \mathsf{nuis}}$ denote the nuisance tangent space of model $\mathbb{M}_{1}$. Because $\Psi^{\ast} \equiv \Psi^{\ast} (P)$ does not depend on $P$ through $p [A_{t}, S_{t} | \bar{H}_{t}]$ for $t = 0, \cdots, K$, we conclude that $\Lambda_{\mathsf{ancillary}} \subset \Lambda_{1, nuis}$.

\citet[Theorem 3.3, eq (3.10)]{robins2004optimal} proved that under model $\mathbb{M}_{1}$, 
\begin{equation*}
\Lambda_{1, \mathsf{nuis}}^{\perp} = \left\{ \BU (\bm{q}, \Psi^{\ast}) = \sum_{t = 0}^{K} U_{t} (q_{t}, \underaccent{\bar}{\Psi}_{t}^{\ast}); q_{t}(\bar{H}_{t}, a_{t}, s_{t}) \right\}
\end{equation*}
where the $q_{t}$ are vector functions of $\dim (\Psi _{t}^{\ast})$ that are unrestricted except for the requirement $\E [U_{t} (q_{t})^{2}] < + \infty$ where $Q_{t} (s_{t}, a_{t}) \equiv q_{t} (\bar{H}_{t}, s_{t}, a_{t})$ and 
\begin{equation*}
U_{t} (q_{t}, \underaccent{\bar}{\Psi}_{t}) \equiv \left\{ \Delta_{t} (\Psi_{t}; \underaccent{\bar}{\Psi}_{t + 1}) - \E [\Delta_{t} (\Psi_{t}; \underaccent{\bar}{\Psi}_{t + 1}) | \bar{H}_{t}] \right\} \times \left\{ Q_{t} (S_{t}, A_{t}) - \E [Q_{t} (S_{t}, A_{t}) | \bar{H}_{t}] \right\}.
\end{equation*}
Furthermore, $U_{t} (q_{t}, \underaccent{\bar}{\Psi}_{t})$ is a doubly robust estimating function in the sense that it still has mean zero at $\underaccent{\bar}{\Psi}_{t}^{\ast}$ if either (but not both) $\E \left[ \Delta_{t} (\Psi_{t}^{\ast}; \underaccent{\bar}{\Psi}_{t + 1}^{\ast}) | \bar{H}_{t} \right]$ or $\E [Q_{t} (S_{t}, A_{t}) | \bar{H}_{t}]$ is replaced by an arbitrary function of $\bar{H}_{t}$.

The identity $\E [\BU (\bm{q}, \Psi^{\ast})] = 0$ suggests that one solve $\BP_{n} [\BU (\bm{q}, \Psi)] = 0$ to estimate $\Psi^{\ast}$. Assuming that $\E [\BU (\bm{q}, \Psi)] = 0$ has a unique solution and that $\left. \frac{\partial}{\partial \Psi} \E [\BU (\bm{q}, \Psi)] \right\vert_{\Psi = \Psi^{\ast}}$ is invertible, $\hat{\Psi} (\bm{q})$ will be a RAL estimator of $\Psi^{\ast}$ under standard regularity conditions. For a specific choice $\bm{q}_{opt} = \bm{q}_{opt} (P)$ of $\bm{q}$, the estimator $\hat{\Psi} (\bm{q}_{opt})$ attains semiparametric efficiency bound for regular estimators of $\Psi^{\ast}$ under model $\mathbb{M}_{1}$. Because of its dependence on $P$,  $\bm{q}_{opt}$ would have to be estimated from the data. In this paper, we decided not to consider the issue of finding $\bm{q}_{opt}$ and instead focus only on adjusting for the NDE assumption. Finding $\bm{q}_{opt}$ often leads to solving complicated integral equations. In practice, a heuristic choice of $\bm{q}$ that generally does not depend on the data can be used \citep{vansteelandt2014structural}.

Notice that $\hat{\Psi} (\bm{q})$ is infeasible because $\BU (\bm{q}, \Psi^{\ast})$ depends on the unknowns $\E [\Delta_{t} (\Psi_{t}; \underaccent{\bar}{\Psi}_{t + 1}) | \bar{H}_{t}]$ and $\E [Q_{t} (S_{t}, A_{t}) | \bar{H}_{t}]$. A feasible estimator $\tilde{\Psi} (\bm{q})$ solves $\BP_{n} [\hat{\BU} (\bm{q}, \Psi)] = 0$ where $\hat{\BU} (\bm{q}, \Psi)$ is defined like $\BU (\bm{q}, \Psi)$ but with estimates $\hat{\E} [\Delta_{t} (\Psi_{t}; \underaccent{\bar}{\Psi}_{t + 1}) | \bar{H}_{t}]$ and $\hat{\E} [Q_{t} (S_{t}, A_{t}) | \bar{H}_{t}]$ replacing the unknown true expectations. We discuss feasible estimators in Section \ref{sec:feasible}. Until then, to focus on important issues, we restrict our discussion to infeasible, also called oracle, estimators.

Given oracle estimators $\hat{\Psi} \equiv \hat{\Psi} (\bm{q})$, we can estimate the optimal value function $\theta_{g^{opt}} \equiv \E [Y_{g^{opt}}]$ by noting $\E [\Delta_{0} (\Psi_{0}^{\ast}; \underaccent{\bar}{\Psi}_{1}^{\ast})] = \E [Y_{a_{0} = 0, \underaccent{\bar}{g}_{1}^{opt}}]$ and hence $\E [Y_{g^{opt}}] = \E [\Delta_{0} (\Psi_{0}^{\ast}; \underaccent{\bar}{\Psi}_{1}^{\ast}) + \gamma_{0} (L_{0}, R_{0}, S_{0}^{opt} (\underaccent{\bar}{\Psi}_{0}^{\ast}), A_{0}^{opt} (\underaccent{\bar}{\Psi}_{0}^{\ast }); \Psi_{0}^{\ast})]$. Our oracle estimate of $\theta_{g^{opt}} = \E [Y_{g^{opt}}]$ is then $\hat{\theta}_{\SNMM, g^{opt}}$ that solves 
\begin{equation}
\BP_{n} [U_{\SNMM, g^{opt}} (\theta, \hat{\Psi})] = 0, U_{\SNMM, g^{opt}} (\theta, \hat{\Psi}) = \Delta_{0} (\hat{\Psi}_{0}; \underaccent{\bar}{\hat{\Psi}}_{1}) + \gamma_{0} (\bar{L}_{0}, \bar{R}_{0}, S_{0}^{opt} (\underaccent{\bar}{\hat{\Psi}}_{0}), A_{0}^{opt} (\hat{\underaccent{\bar}{\Psi}}_{0}); \hat{\Psi}_{0}) - \theta. \label{optsnmm}
\end{equation}

\section{The No Direct Effect of testing assumption}
\label{sec:nde}

The restriction 
\begin{equation}
Y_{\bar{a}_{K}, \bar{s}_{K}}^{d} = Y_{\bar{a}_{K}^{\prime }, \bar{s}_{K}}^{d} \ \forall \bar{a}_{K}, \bar{a}_{K}^{\prime}, \bar{s}_{K}
\label{nde_test}
\end{equation}
encodes the assumption that testing history $\bar{A}_{K}$ has no direct effect on observed health outcome $Y^{d}$ not through treatment $\bar{S}_{K}$. We refer to equation \eqref{nde_test} as the No Direct Effect (NDE) on $Y^{d}$ [NDE$(Y^{d})$] assumption. The assumption means that if we intervene and set the treatment history $\bar{S}_{K}$ to any $\bar{s}_{K}$, then further intervening on testing $A$ at any time has no effect on the health outcome $Y^{d}$. [Note however that $A_{t}$ has a direct effect on the total utility $Y = Y^{d} - c^{\ast} \sum_{t} A_{t}$ for $c^{\ast} \neq 0$ even under the NDE assumption for the health outcome $Y^{d}$.] We will write NDE as shorthand for NDE$(Y^{d})$.

Consider the model defined by the ID assumptions \ref{id1}-\ref{id3} and the assumption \eqref{nde_test}. \citet{robins1999testing} showed that such a model determines a model $\mathbb{M}_{\mathsf{NDE}}$ for the observed data characterized, possibly up to inequality constraints, by the following set of restrictions, with $W_{t} \equiv \prod_{m = t}^{K} p (S_{m} | \bar{H}_{m})$ 
\begin{equation}
\E \left[ \left. \frac{b_{t} (\bar{H}_{t}, \underaccent{\bar}{S}_{t}, Y^{d})}{W_{t + 1}} \right\vert \bar{H}_{t}, S_{t}, A_{t} \right] = \E \left[ \left. \frac{b_{t} (\bar{H}_{t}, \underaccent{\bar}{S}_{t}, Y^{d})}{W_{t + 1}} \right\vert \bar{H}_{t}, S_{t} \right] \ t = 0, \ldots, K \label{rest2}
\end{equation}
for any function $b_{t} (\bar{H}_{t},\underaccent{\bar}{S}_{t},Y^{d})$ of all the history $\bar{H}_{t}$ up to time $t$, all the treatments $\underaccent{\bar}{S}_{t}$ at $t$ and onward, and the disease outcome $Y^{d}$.

The intuition behind these restrictions is that, under the three ID assumptions, weighting by $W_{t + 1}^{-1}$, which is heuristically, the inverse of the probability of $\underaccent{\bar}{S}_{t + 1}$, mimics a randomized trial in which each of the treatments in $\underaccent{\bar}{S}_{t + 1}$ are assigned independently with probability $1 / 2$ and the testing indicator $A_{t}$ is randomly assigned given the past $(\bar{H}_{t}, S_{t})$. Therefore, in this weighted distribution, $A_{t}$ is independent of $(\underaccent{\bar}{S}_{t + 1}, Y^{d})$ [and thus of any function $b_{t} (\bar{H}_{t}, \underaccent{\bar}{S}_{t}, Y^{d})$] conditional on $(\bar{H}_{t}, S_{t})$, by $\underaccent{\bar}{S}_{t + 1}$ being completely randomized and the NDE assumption.

In what follows, for any $b_{t}, t = 0, \ldots, K$, we let $\mathcal{B}_{t} \equiv \left\{ b_{t} (\bar{H}_{t}, \underaccent{\bar}{S}_{t}, Y^{d}): \E [T_{b, t}^{2}] < + \infty \right\}$ and $\mathcal{T}_{t} \equiv \left\{ T_{b, t}; b_{t} \in \mathcal{B}_{t} \right\}$ where 
\begin{eqnarray}
T_{b, t} &\equiv& D_{b, t} - \left\{ \E [D_{b, t} | \bar{H}_{t}, S_{t}, A_{t}] - \E [D_{b, t} | \bar{H}_{t}, S_{t}] \right\} - \sum_{m = t + 1}^{K} \left\{ \E [D_{b, t} | \bar{H}_{m}, S_{m}] - \E [D_{b, t} | \bar{H}_{m}] \right\}, \label{tbt-def} \\
\text{ with } D_{b, t} &\equiv& \frac{b_{t} (\bar{H}_{t}, \underaccent{\bar}{S}_{t}, Y^{d})}{W_{t + 1}} (A_{t} - \E [A_{t} | \bar{H}_{t}, S_{t}]).
\end{eqnarray}
Here $T_{b, t}$ is the residual from the projection of $D_{b, t}$ on the space 
\begin{equation*}
\Lambda_{\mathsf{ancillary}} = \left\{ V = \sum_{t = 0}^{K} \left\{ V_{t} - \E [V_{t} | \bar{H}_{t}] \right\}; V_{t} = v_{t} (\bar{H}_{t}, S_{t}, A_{t}): \E [\var (V_{t} | \bar{H}_{t})] < + \infty \right\} 
\end{equation*}
of the conditional scores for the testing and treatment processes.

\begin{remark}
Under the ID assumptions, the NDE assumption is empirically falsifiable; since the NDE$(Y^{d})$ assumption implies $\E [D_{b, t}] = 0$ for all $b_{t} \in \mathcal{B}_{t}$. Owing to limited space, we do not consider falsification tests further.
\end{remark}

\begin{remark}
The projection of any random variable $C = c(O)$ on conditional scores for testing and treatment is $\Pi [C | \Lambda_{\mathsf{ancillary}}] = \sum\limits_{m = 0}^{K} (\E [C | \bar{H}_{m}, S_{m}, A_{m}] - \E [C | \bar{H}_{m}])$. In Appendix \ref{app:nuisance_nde}, however, we show that the NDE assumption implies that if $C = D_{b, t}$, this projection does not depend on conditional scores for $\Pr (A_{m} = 1 | \bar{H}_{m}, S_{m})$ for $m > t$ and $\E [D_{b, t} | \bar{H}_{t}, S_{t}] = 0$ proving equation \eqref{tbt-def} (for $m < t$, both $\E [C | \bar{H}_{m}, S_{m}, A_{m}]$ and $\E [C | \bar{H}_{m}]$ are zero by the NDE assumption).
\end{remark}

We then have the following theorems, the proofs of which are deferred to Appendix \ref{app:nuisance_nde} and \ref{app:alternative_nde}.

\begin{theorem}
\label{lem:nuisance_nde} 
The space $\Lambda_{\mathsf{NDE}}^{\perp}$ of mean zero random variables orthogonal to the tangent space $\Lambda_{\mathsf{NDE}}$ of model $\mathbb{M}_{\mathsf{NDE}}$ is given by 
\begin{equation*}
\Lambda_{\mathsf{NDE}}^{\perp} = \mathcal{T}_{0} + \cdots + \mathcal{T}_{K} = \left\{ T_{b} \equiv \sum_{t = 0}^{K} T_{b, t}; b \equiv (b_{0}, \ldots, b_{K}) \text{ with } b_{t} \in \mathcal{B}_{t}, \ \ t = 0, \ldots, K \right\}.
\end{equation*}
\end{theorem}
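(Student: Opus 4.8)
The plan is to exploit the variation-independent (sequential) factorization of the observed-data likelihood to reduce the problem to a single block of scores, and then to read off $\Lambda_{\mathsf{NDE}}^{\perp}$ by linearizing the defining moment restriction \eqref{rest2}. First I would write the orthogonal direct sum $L_{2, 0} (P) = \Lambda_{\mathsf{cov}} \oplus \Lambda_{\mathsf{ancillary}} \oplus \Lambda_{Y}$, where $\Lambda_{\mathsf{cov}}$ is the span of the scores of the covariate factors $f (L_{m}, R_{m} \mid \cdot)$, $\Lambda_{\mathsf{ancillary}}$ is the span of the scores of the treatment-and-testing factors $p (S_{m} \mid \bar{H}_{m}) \, \pi_{m} (A_{m} \mid \bar{H}_{m}, S_{m})$ (exactly the space defined in the excerpt), and $\Lambda_{Y}$ is the span of the scores of the outcome factor $f (Y^{d} \mid \bar{H}_{K}, \bar{S}_{K}, \bar{A}_{K})$. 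Orthogonality across blocks is the standard consequence of each block score having conditional mean zero given its chronological past.

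The first substantive step is to show $\Lambda_{\mathsf{ancillary}} \subseteq \Lambda_{\mathsf{NDE}}$, i.e.\ the testing-and-treatment process is entirely unconstrained by the NDE assumption. The identity $\E [D_{b, t} \mid \bar{H}_{t}, S_{t}] = \Pi_{t} (1 - \Pi_{t}) (\mu_{1} - \mu_{0})$, with $\mu_{a} \equiv \E [b_{t} / W_{t + 1} \mid \bar{H}_{t}, S_{t}, A_{t} = a]$, shows that \eqref{rest2} is equivalent to $\mu_{1} = \mu_{0}$, which does not involve $\Pi_{t}$; moreover the weight $W_{t + 1}^{-1}$ cancels the treatment densities $p (S_{m} \mid \bar{H}_{m})$, $m > t$, inside $\mu_{a}$, and by the second Remark $D_{b, t}$ is orthogonal to the future testing scores $\Pr (A_{m} = 1 \mid \bar{H}_{m}, S_{m})$, $m > t$. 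Hence every ancillary score is a valid tangent direction, so $\Lambda_{\mathsf{NDE}}^{\perp} \subseteq \Lambda_{\mathsf{ancillary}}^{\perp} = \Lambda_{\mathsf{cov}} \oplus \Lambda_{Y}$. Since each $T_{b, t} = D_{b, t} - \Pi [D_{b, t} \mid \Lambda_{\mathsf{ancillary}}]$ already lies in $\Lambda_{\mathsf{cov}} \oplus \Lambda_{Y}$, the problem collapses to identifying the tangent directions inside this single block.

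The heart of the proof is then to characterize $\Lambda_{\mathsf{NDE}} \cap (\Lambda_{\mathsf{cov}} \oplus \Lambda_{Y})$ by linearizing \eqref{rest2}. For a one-dimensional submodel with score $h \in \Lambda_{\mathsf{cov}} \oplus \Lambda_{Y}$, I would differentiate $\E_{P_{\alpha}} [D_{b, t}^{(\alpha)} \mid \bar{H}_{t}, S_{t}] \equiv 0$ at $\alpha = 0$. The explicit $P$-dependence of $D_{b, t}^{(\alpha)}$ enters only through $W_{t + 1}^{(\alpha)}$ and $\Pi_{t}^{(\alpha)}$, whose $\alpha$-derivatives are built solely from treatment and testing scores and therefore vanish for a pure covariate/outcome direction $h$; what survives is the density-variation term, which, using $\E [D_{b, t} \mid \bar{H}_{t}, S_{t}] = 0$, reduces the linearized constraint to $\E [D_{b, t} \, h] = 0$ for all $b_{t} \in \mathcal{B}_{t}$ and all $t$. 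Because $h \in \Lambda_{\mathsf{ancillary}}^{\perp}$, we have $\E [D_{b, t} \, h] = \E [T_{b, t} \, h]$, so $\Lambda_{\mathsf{NDE}} \cap (\Lambda_{\mathsf{cov}} \oplus \Lambda_{Y}) = \{ h \in \Lambda_{\mathsf{cov}} \oplus \Lambda_{Y} : h \perp T_{b, t} \ \forall b, t \}$. Combining this with $\Lambda_{\mathsf{ancillary}} \subseteq \Lambda_{\mathsf{NDE}}$ and $\mathcal{T}_{t} \subseteq \Lambda_{\mathsf{cov}} \oplus \Lambda_{Y}$ gives the orthogonal decomposition $\Lambda_{\mathsf{NDE}} = \Lambda_{\mathsf{ancillary}} \oplus \big( \Lambda_{\mathsf{cov}} \oplus \Lambda_{Y} \big) \cap \big( \sum_{t} \mathcal{T}_{t} \big)^{\perp}$, whose orthocomplement in $L_{2, 0} (P)$ is $\mathcal{T}_{0} + \cdots + \mathcal{T}_{K}$, yielding both inclusions at once.

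I expect the main obstacle to be the linearization in the third step: because $D_{b, t}$ depends on $P$ through both $W_{t + 1}$ and $\Pi_{t}$, differentiating the conditional restriction requires carefully showing that the extra terms are ancillary scores that vanish against $h$ and that $D_{b, t}$ may be replaced by its residual $T_{b, t}$ — precisely the content of the two Remarks (that $\E [D_{b, t} \mid \bar{H}_{t}, S_{t}] = 0$ and that the projection onto $\Lambda_{\mathsf{ancillary}}$ is insensitive to future testing scores). A secondary technical point is to verify that $\mathcal{T}_{0} + \cdots + \mathcal{T}_{K}$ is already closed, so that no closure bar is needed in the statement; I would argue this from the positivity assumption \ref{id1}, which bounds $W_{t + 1}^{-1}$, $\Pi_{t}^{-1}$, and $(1 - \Pi_{t})^{-1}$ and so keeps the relevant projections and the finite sum within a closed subspace of $L_{2, 0} (P)$.
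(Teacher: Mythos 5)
Your proposal is correct and follows essentially the same route as the paper's proof in Appendix A.1: both arguments rest on the observation that the testing and treatment densities are unconstrained under $\mathbb{M}_{\mathsf{NDE}}$, so that $\Lambda_{\mathsf{NDE}}^{\perp}$ consists of (sums of) the residuals of the moment functions $D_{b,t}$ from their projections onto $\Lambda_{\mathsf{ancillary}}$, followed by the simplification of that residual to $T_{b,t}$ via \eqref{rest2}. The only difference is one of emphasis: you spell out the likelihood-factorization decomposition and the linearization of \eqref{rest2} (which the paper compresses into its opening sentence) and defer the vanishing of the $m>t$ projection terms to the Remark, whereas the paper's written proof consists almost entirely of that last computation.
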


\begin{theorem}
\label{thm:tdr} 
Each $T_{b} \in \Lambda_{\mathsf{NDE}}^{\perp}$ is doubly robust in the sense that it has mean zero if either (i) the true densities $p (L_{k + 1} | \bar{H}_{k}, S_{k})$ are replaced by arbitrary densities $p^{\dag} (L_{k + 1} | \bar{H}_{k}, S_{k})$ for $k = 0, \ldots, K$ except with $H_{K + 1} \equiv Y^{d}$ replacing $L_{K + 1}$ or (ii) $p (S_{k} | \bar{H}_{k})$ and $\Pi_{k}$ are replaced by arbitrary conditional probability functions $p^{\dag} (S_{k} | \bar{H}_{k})$ and $\Pi_{k}^{\dag}$ for $k = 0, \ldots, K$, but not both.
\end{theorem}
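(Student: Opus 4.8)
Since $T_{b}=\sum_{t=0}^{K}T_{b,t}$ and expectation is linear, it suffices to show that each $T_{b,t}$ has mean zero under each of the two misspecification scenarios. Fix $t$ and abbreviate $D\equiv D_{b,t}$, so that by \eqref{tbt-def} we may write $T_{b,t}=D-\mathcal{A}$, where $\mathcal{A}$ collects the ``augmentation'' increments $\{\E[D\mid \bar{H}_{t},S_{t},A_{t}]-\E[D\mid \bar{H}_{t},S_{t}]\}+\sum_{m=t+1}^{K}\{\E[D\mid \bar{H}_{m},S_{m}]-\E[D\mid \bar{H}_{m}]\}$. The plan is to split the nuisances into a treatment/testing part $(p(S_{m}\mid\bar{H}_{m}),\Pi_{m})$ -- which enters $D$ through the weight $W_{t+1}$ and the centering $\Pi_{t}$, and which also supplies the conditional laws of $S_{m},A_{m}$ used to evaluate the conditional expectations in $\mathcal{A}$ -- and a covariate/outcome part $p(L_{m+1}\mid\bar{H}_{m},S_{m})$ (with $Y^{d}$ in place of $L_{K+1}$), which supplies the remaining conditional laws. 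I would then verify the two scenarios by tracking how these two groups of densities propagate through $D$ and $\mathcal{A}$.

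Scenario (i): treatment/testing model correct, covariate/outcome model arbitrary. Here $D$ is the true statistic, and the first step is $\E[D]=0$. I would obtain this exactly as in the mean-zero derivation behind \eqref{rest2}: conditioning on $(\bar{H}_{t},S_{t},A_{t})$ and factoring out $(A_{t}-\Pi_{t})$, the restriction \eqref{rest2} makes $\E[b_{t}/W_{t+1}\mid\bar{H}_{t},S_{t},A_{t}]$ free of $A_{t}$, whence $\E[D\mid\bar{H}_{t},S_{t}]=\E[A_{t}-\Pi_{t}\mid\bar{H}_{t},S_{t}]\,\E[b_{t}/W_{t+1}\mid\bar{H}_{t},S_{t}]=0$. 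For the augmentation I would show each increment in $\mathcal{A}$ has true mean zero: since the two conditioning events in an increment differ only by a \emph{treatment} variable ($A_{t}$, or $S_{m}$ for $m\ge t+1$), and its conditional density is correct, the coarser conditional expectation is the average of the finer one over that variable under its true law, so both have the same expectation under $P$ and the increment integrates to $0$. Summing gives $\E[\mathcal{A}]=0$ and hence $\E[T_{b,t}]=0$, for \emph{any} covariate/outcome densities.

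Scenario (ii): covariate/outcome model correct, treatment/testing model arbitrary. Now $D^{\dag}$ uses the wrong weight $W_{t+1}^{\dag}$ and centering $\Pi_{t}^{\dag}$, so generally $\E[D^{\dag}]\neq0$ and the augmentation must cancel this bias. The key mechanism is that, when evaluating $\E[D^{\dag}\mid\bar{H}_{m},S_{m}]$, the integration over future treatments $S_{m+1},\dots,S_{K}$ is carried out with the same wrong densities $p^{\dag}(S_{j}\mid\bar{H}_{j})$ that appear in $1/W_{t+1}^{\dag}$; these cancel, leaving an integral of $b_{t}$ against the \emph{correct} covariate transitions with flat treatment weights -- precisely the $W^{-1}$-reweighted law under which $\underaccent{\bar}{S}_{t+1}$ is as-if completely randomized. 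Under this reweighted law the NDE restriction \eqref{rest2} forces $A_{t}\amalg(\underaccent{\bar}{S}_{t+1},Y^{d})\mid(\bar{H}_{t},S_{t})$, which is exactly what makes $D^{\dag}-\mathcal{A}^{\dag}$ integrate to zero. To organize the bookkeeping I would introduce the reweighted measure $\widetilde{P}^{\dag}$ with density proportional to $dP\cdot(W_{t+1}^{\dag})^{-1}$ expressed through the correct covariate transitions, rewrite every conditional expectation of $D^{\dag}$ as an ordinary conditional expectation under $\widetilde{P}^{\dag}$, and thereby reduce the claim to the mean-zero property of an augmented IPW statistic in the reweighted problem.

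The main obstacle is scenario (ii): the treatment/testing densities enter in three entangled places -- inside $D^{\dag}$ via $W_{t+1}^{\dag}$, via the centering $\Pi_{t}^{\dag}$, and as the integrating measure for the future treatments inside each conditional expectation in $\mathcal{A}^{\dag}$ -- and one must follow their exact cancellation across the nested conditioning sequence $\bar{H}_{m}\subset(\bar{H}_{m},S_{m})\subset\bar{H}_{m+1}$. A cleaner but equivalent route, which sidesteps committing to either scenario at the outset, is to compute $\E[T_{b,t}^{\dag}]$ as a single telescoping sum and display it as a sum of terms each of which is a product of a treatment-model error $(p^{\dag}-p,\ \Pi^{\dag}-\Pi)$ and a covariate-model error $(p^{\dag}(L\mid\cdot)-p(L\mid\cdot))$; double robustness is then immediate because every term vanishes as soon as either group of errors is zero. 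I expect the telescoping identity underlying this ``mixed-bias'' representation to be the technical heart of the argument.
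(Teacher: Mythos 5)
Your overall strategy coincides with the paper's. Scenario (i) of your argument is complete and correct: $D_{b,t}$ itself involves no covariate densities, so it keeps its true mean of zero via \eqref{rest2}, and each augmentation increment in \eqref{tbt-def} pairs a finer with a coarser conditional expectation that differ only by a treatment or testing variable whose conditional law is correctly specified, so each increment has conditional (hence unconditional) mean zero by the tower property no matter how the covariate densities are perturbed. For scenario (ii), what you call the ``technical heart'' --- the telescoping identity obtained by cancelling the (possibly wrong) treatment densities in the weight against the same densities in the integrating measure --- is exactly the paper's Lemma \ref{lem:alternative_nde}: with $\eta_{K}(\bar{H}_{K},S_{K})\equiv\E[b_{t}\mid\bar{H}_{K},S_{K}]$, $y_{k,\eta_{k}}(\bar{H}_{k})\equiv\sum_{s_{k}}\eta_{k}(\bar{H}_{k},s_{k})$ and $\eta_{k-1}\equiv\E[y_{k,\eta_{k}}\mid\bar{H}_{k-1},S_{k-1}]$, one obtains
\begin{equation*}
T_{b,t}=\Bigl[\sum_{k=t+1}^{K+1}\frac{1}{W_{t+1}^{k-1}}\bigl\{y_{k,\eta_{k}}(\bar{H}_{k})-\eta_{k-1}(\bar{H}_{k-1},S_{k-1})\bigr\}\Bigr](A_{t}-\Pi_{t}),
\end{equation*}
and after substituting arbitrary $p^{\dag}(S_{k}\mid\bar{H}_{k})$ and $\Pi_{t}^{\dag}$ each summand retains conditional mean zero because the $\eta$'s are built solely from the correct covariate transitions. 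You have identified the right device but not derived it; establishing this representation (and its algebraic equivalence with \eqref{tbt-def}) is the one substantive step your plan leaves open. Two smaller points. First, your claim that in scenario (ii) the NDE restriction is ``exactly what makes $D^{\dag}-\mathcal{A}^{\dag}$ integrate to zero'' is a little off: for $k\geq t+2$ the summands vanish by the pure tower identity $\eta_{k-1}=\E[y_{k,\eta_{k}}\mid\bar{H}_{k-1},S_{k-1}]$, with no appeal to NDE; the restriction \eqref{rest2} is needed only to guarantee that the conditional expectations defining the $\eta$'s are free of the testing indicators --- in particular for the $k=t+1$ term, so that $\E[y_{t+1,\eta_{t+1}}\mid\bar{H}_{t},S_{t},A_{t}]$ does not depend on $A_{t}$ and multiplication by $(A_{t}-\Pi_{t}^{\dag})$ still yields conditional mean zero. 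Second, the ``mixed-bias'' representation you float as an alternative route is in fact carried out in the paper, but in Lemma \ref{lem:unified_bias} of Appendix \ref{app:drml}, where it controls the conditional bias of the feasible estimator rather than proving Theorem \ref{thm:tdr}.
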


We prove Theorem \ref{thm:tdr} by actually proving a stronger version of double robustness in Appendix \ref{app:alternative_nde}, based on an alternative representation of any $T_{b}$.

\section{Oracle estimators of $\Psi^{\ast}$ with improved efficiency}
\label{sec:dr}

\subsection{Sub-optimal estimators with improved efficiency}
\label{sec:simple_dr}

Let the semiparametric model $\mathbb{M}_{int} = \mathbb{M}_{1} \cap \mathbb{M}_{\mathsf{NDE}}$ consist of the set of observed data laws that satisfy the restrictions of models $\mathbb{M}_{1}$ and $\mathbb{M}_{\mathsf{NDE}}$. Since under model $\mathbb{M}_{\mathsf{NDE}}$, $\E [T_{b}] = 0$ for $b \equiv (b_{0}, \ldots, b_{K})^{\top}$ where $T_{b} \equiv \sum_{t = 0}^{K} T_{b, t}$ and $T_{b, t}$ is defined as in equation \eqref{tbt-def}, it follows that under model $\mathbb{M}_{int}$ we can consider the class of estimating functions $\left\{ \BU (\bm{q}, b, \Psi) = \BU (\bm{q}, \Psi) - c_{OLS} (\bm{q}, b, \Psi) T_{b}; b \right\}$ where $c_{OLS} (\bm{q}, b, \Psi) = \E [\BU (\bm{q}, \Psi) T_{b}^{\top}] \{\E [T_{b} T_{b}^{\top}]\}^{-1}$. Note that $\hat{\Psi} (\bm{q})$ defined in Section \ref{sec:estimator} is the same as $\hat{\Psi} (\bm{q}, b)$ for $b$ identically zero. Define $J = \frac{\partial}{\partial \Psi^{\top}} \left. \E [\BU(\bm{q}, b)] \right\vert_{\Psi = \Psi^{\ast}} = \frac{\partial}{\partial \Psi^{\top}} \left. \E [\BU (\bm{q})] \right\vert_{\Psi = \Psi^{\ast}}$. Since $J$ does not depend on $b$, we have the following result:

\begin{theorem}
Suppose $\hat{\Psi} (\bm{q}, b)$ is a RAL estimator of $\Psi^{\ast}$. Then it has influence function $J^{-1} \{\BU (\bm{q}, \Psi^{\ast}) - c_{OLS} (\bm{q}, b, \Psi) T_{b}\}$. Consequently, $\sqrt{n} \{\hat{\Psi} (\bm{q}, b) - \Psi^{\ast}\}$ is asymptotically normal with mean zero and variance equal to 
\begin{equation*}
V_{oracle} (\bm{q}, b) \equiv J^{-1} (\var [\BU (\bm{q}, \Psi)] - c_{OLS} (\bm{q}, b, \Psi) \E [T_{b} T_{b}^{\top}] c_{OLS} (\bm{q}, b, \Psi)^{\top}) J^{-\top}.
\end{equation*}
Furthermore, $\hat{\Psi} (\bm{q})$ has influence function $J^{-1} \BU (\bm{q}, \Psi^{\ast})$ and asymptotic variance $V_{oracle} (\bm{q}, b = 0)$ greater than $V_{oracle} (\bm{q}, b)$ (in the positive definitive sense) whenever $\E [\BU (\bm{q}, \Psi^{\ast}) T_{b}] \neq 0$.
\end{theorem}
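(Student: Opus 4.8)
The plan is to regard $\hat{\Psi}(\bm q, b)$ as the oracle Z-estimator that solves $\BP_n[\BU(\bm q, b, \Psi)] = 0$ with $\BU(\bm q, b, \Psi) = \BU(\bm q, \Psi) - c_{OLS}(\bm q, b, \Psi) T_b$, to read off its influence function from the usual first-order expansion, and then to obtain both the variance formula and the efficiency ordering from the observation that $\BU(\bm q, b, \Psi^{\ast})$ is the residual of the $L_2$ (ordinary least squares) projection of $\BU(\bm q, \Psi^{\ast})$ onto the span of $T_b$.

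Before expanding, I would record the two facts that drive the argument. First, both pieces of the estimating function are mean zero at the truth: $\E[\BU(\bm q, \Psi^{\ast})] = 0$ because $\BU(\bm q, \Psi^{\ast}) \in \Lambda_{1,\mathsf{nuis}}^{\perp} \subset L_{2,0}$, and $\E[T_b] = 0$ under $\mathbb{M}_{\mathsf{NDE}}$, hence under $\mathbb{M}_{int}$, since $T_b \in \Lambda_{\mathsf{NDE}}^{\perp} \subset L_{2,0}$ by Theorem~\ref{lem:nuisance_nde}. Second, and this is the one point that needs care, the Jacobian $J$ is unchanged by the correction term. Because $T_b$ does not depend on $\Psi$, taking expectations gives $\E[\BU(\bm q, b, \Psi)] = \E[\BU(\bm q, \Psi)] - c_{OLS}(\bm q, b, \Psi)\,\E[T_b] = \E[\BU(\bm q, \Psi)]$ for every $\Psi$ in the model; the two population estimating functions therefore coincide identically in $\Psi$ and share the same derivative $J$ at $\Psi^{\ast}$, which is the assertion that $J$ does not depend on $b$. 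Equivalently, the $\Psi$-derivative of $c_{OLS}(\bm q, b, \Psi)T_b$ contributes only $(\partial c_{OLS}/\partial\Psi^{\top})\E[T_b] = 0$ to the Jacobian, so the $\Psi$-dependence of $c_{OLS}$ is first-order irrelevant and $c_{OLS}$ may be frozen at $\Psi^{\ast}$ in the influence function.

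Next I would perform the standard mean-value expansion of $0 = \BP_n[\BU(\bm q, b, \hat{\Psi})]$ about $\Psi^{\ast}$. Using the consistency implied by the assumed RAL property and the convergence of the empirical Jacobian to $J$, the expansion yields $\sqrt{n}\{\hat{\Psi}(\bm q, b) - \Psi^{\ast}\} = J^{-1}\,\sqrt{n}\,\BP_n[\BU(\bm q, b, \Psi^{\ast})] + o_p(1)$ (with $J$ as defined above), so the influence function is $J^{-1}\{\BU(\bm q, \Psi^{\ast}) - c_{OLS}(\bm q, b, \Psi^{\ast}) T_b\}$, as claimed. The central limit theorem then gives asymptotic normality with mean zero and variance $J^{-1}\var[\BU(\bm q, \Psi^{\ast}) - c_{OLS}T_b]J^{-\top}$. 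To reduce this to the stated $V_{oracle}(\bm q, b)$, I would use that the regression residual is orthogonal to $T_b$: since $c_{OLS}\E[T_bT_b^{\top}] = \E[\BU(\bm q, \Psi^{\ast})T_b^{\top}]$, expanding the quadratic form and cancelling the matching cross terms gives the Pythagorean identity $\var[\BU(\bm q, \Psi^{\ast}) - c_{OLS}T_b] = \var[\BU(\bm q, \Psi^{\ast})] - c_{OLS}\E[T_bT_b^{\top}]c_{OLS}^{\top}$.

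Finally, taking $b \equiv 0$ forces $T_b = 0$, so $\BU(\bm q, 0, \Psi) = \BU(\bm q, \Psi)$ and $\hat{\Psi}(\bm q) = \hat{\Psi}(\bm q, 0)$ has influence function $J^{-1}\BU(\bm q, \Psi^{\ast})$ and variance $V_{oracle}(\bm q, 0) = J^{-1}\var[\BU(\bm q, \Psi^{\ast})]J^{-\top}$. Subtracting the two variance expressions gives $V_{oracle}(\bm q, 0) - V_{oracle}(\bm q, b) = J^{-1}c_{OLS}\E[T_bT_b^{\top}]c_{OLS}^{\top} J^{-\top}$, which is positive semidefinite because $\E[T_bT_b^{\top}]$ is (indeed positive definite, as $c_{OLS}$ requires its inverse), and is nonzero---so that $V_{oracle}(\bm q, 0)$ strictly dominates in the positive semidefinite ordering---exactly when $c_{OLS}\neq 0$; since $c_{OLS} = \E[\BU(\bm q, \Psi^{\ast})T_b^{\top}]\{\E[T_bT_b^{\top}]\}^{-1}$, this is precisely the condition $\E[\BU(\bm q, \Psi^{\ast})T_b]\neq 0$. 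The only genuinely delicate step is the invariance of $J$ under the projection correction; the remainder is the routine Z-estimation expansion together with the orthogonality (Pythagorean) identity for the residual variance.
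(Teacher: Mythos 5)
Your proposal is correct and follows essentially the same route the paper intends: the paper gives no separate proof of this theorem beyond the remark that $J$ does not depend on $b$ (which holds exactly for the reason you give, namely $\E[T_{b}] = 0$ identically in $\Psi$ and $T_{b}$ free of $\Psi$), after which the influence function follows from the standard Z-estimator expansion and the variance formula from the Pythagorean/OLS orthogonality identity. Your identification of the $J$-invariance as the one delicate step, and your verification of the cross-term cancellation and of the equivalence $c_{OLS} \neq 0 \Leftrightarrow \E[\BU(\bm{q},\Psi^{\ast})T_{b}] \neq 0$, match the paper's argument.
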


\subsection{Near optimal oracle estimators with improved efficiency}
\label{sec:optimal_dr}

Our goal is to improve efficiency by finding the function $b_{opt} (\bm{q})$ that minimizes $V_{oracle} (\bm{q}, b)$ for a given $\mathbf{q}$. To do so, we first define the orthogonal projection of a mean zero random variable $U$ onto a closed linear space $\digamma$ of mean zero random variables to be the unique element $F^{\ast} \in \digamma$ defined as $F^{\ast} = \arg \min_{F \in \digamma} \var [U - F]$ in the positive definite sense. Now for any two nested subspaces $\Phi_{l} \subset \Phi_{l^{\prime}} \subseteq \Lambda_{\mathsf{NDE}}^{\perp}$, let $T_{b_{l}}$ and $T_{b_{l^{\prime}}}$ be equal to $\Pi [\BU (\bm{q}, \Psi^{\ast}) | \Phi_{l}]$ and $\Pi [\BU (\bm{q}, \Psi^{\ast}) | \Phi_{l^{\prime}}]$. From the definition of a projection, $\var (T_{b_{l}}) \leq \var (T_{b_{l^{\prime}}})$. It follows that $V_{oracle} (\bm{q}, b_{l}) \geq V_{oracle} (\bm{q}, b_{l^{\prime}})$, demonstrating that the larger the subspace $\Phi$ of $\Lambda_{\mathsf{NDE}}^{\perp}$ one projects on, the more efficient the estimator. It also follows that $b_{opt} (\bm{q})$ is the unique function $b = (b_{0}, \ldots, b_{K})^{\top}$ that satisfies $T_{b} = \Pi [\BU (\bm{q}, \Psi^{\ast}) | \Lambda_{\mathsf{NDE}}^{\perp}]$. Here each $b_{t} \in \mathcal{B}_{t}$ is a column vector function of the same dimension as $\Psi_{t}^{\ast}, t = 0, \ldots, K$.

Our next task is to characterize $\Pi [U | \Lambda_{\mathsf{NDE}}^{\perp}]$ for any random variable $U$. In the Appendix \ref{app:cont} we show that when $U$ is a continuous random variable, $\Pi [U | \Lambda_{\mathsf{NDE}}^{\perp}]$ is a solution to a set of complicated integral equations that do not exist in closed form. However, for $U$ a discrete random variable with finite sample space we will give below a closed form expression for $\Pi [U | \Lambda_{\mathsf{NDE}}^{\perp}]$. Furthermore, in the case of a continuous $U$, we will derive a closed form expression for $\Pi [U | \Omega]$ for a particular subspace $\Omega \subset \Lambda _{\mathsf{NDE}}^{\perp}$. We will argue that the associated estimator $\hat{\Psi} (\bm{q}, b_{sub}) \equiv \hat{\Psi} (\bm{q}, b_{sub} (\bm{q}))$ where $T_{b_{sub}} = \Pi [\BU (\bm{q}, \Psi^{\ast}) | \Omega]$ should have efficiency nearly equal to that of the computationally intractable optimal estimator $\hat{\Psi} (\bm{q}, b_{opt}) \equiv \hat{\Psi} (\bm{q}, b_{opt} (\bm{q}))$, with $T_{b_{opt}} (\bm{q}) = \Pi [\BU (\bm{q}, \Psi^{\ast}) | \Lambda_{\mathsf{NDE}}^{\perp}]$. Our results are corollaries of the next theorem, proved in Appendix \ref{app:sub}.

In what follows for $t = 0, \ldots, K$, let $T_{t}$ be a fixed $\delta_{t} \times 1$ random vector and let $\Gamma_{t} \equiv \left\{ d_{t} (\bar{H}_{t}) T_{t}: d_{t} (\bar{H}_{t}) \text{ any } 1 \times \delta_{t} \text{ vector with } \E \left[ d_{t} (\bar{H}_{t})^{2} \right] < + \infty \right\}$. For $j = 0, \ldots, K$, define $T_{j}^{(K)} \equiv T_{j}$ and define also recursively for $t = K - 1, \cdots, j$, 
\begin{equation*}
T_{j}^{(t)} \equiv T_{j}^{(t + 1)} - \E \left[ \left. T_{j}^{(t + 1)} T_{t + 1}^{(t + 1) \top} \right\vert \bar{H}_{t + 1} \right] \E \left[ \left. T_{t + 1}^{(t + 1)} T_{t + 1}^{(t + 1) \top} \right\vert \bar{H}_{t + 1} \right]^{-1} T_{t+1}^{(t + 1)}.
\end{equation*}

\begin{theorem}
\label{thm:sub} 
For any random variable $U$, $\Pi [U | \Gamma_{0} + \ldots + \Gamma_{K}] = \sum_{t = 0}^{K} d_{t}^{\ast} (\bar{H}_{t}) T_{t}$ where 
\begin{equation}
d_{0}^{\ast} \left( \bar{H}_{0} \right) \equiv \E \left[ \left. U T_{0}^{(0) \top} \right\vert \bar{H}_{0} \right] \E \left[ \left. T_{0}^{(0)} T_{0}^{(0) \top} \right\vert \bar{H}_{0} \right]^{-1} \label{main_0}
\end{equation}
and for $t = 1, \ldots, K$, 
\begin{equation}
d_{t}^{\ast} \left( \bar{H}_{t} \right) \equiv \E \left[ \left. \left\{ U - \sum_{j = 0}^{t - 1} d_{j}^{\ast} \left( \bar{H}_{j} \right) T_{j}^{(t)} \right\} T_{t}^{(t) \top} \right\vert \bar{H}_{t} \right] \E \left[ \left. T_{t}^{(t)} T_{t}^{(t) \top} \right\vert \bar{H}_{t} \right]^{-1}. \label{main_d1}
\end{equation}
\end{theorem}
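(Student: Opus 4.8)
The plan is to verify directly the two properties that characterize the $L_{2}$ projection onto the closed subspace $\Gamma_{0}+\cdots+\Gamma_{K}$: first, that the candidate $\sum_{t=0}^{K}d_{t}^{\ast}(\bar{H}_{t})T_{t}$ lies in $\Gamma_{0}+\cdots+\Gamma_{K}$, which is immediate since each summand $d_{t}^{\ast}(\bar{H}_{t})T_{t}$ sits in $\Gamma_{t}$ by construction; and second, that the residual $U-\sum_{t=0}^{K}d_{t}^{\ast}(\bar{H}_{t})T_{t}$ is orthogonal to every $\Gamma_{s}$. The entire construction is a conditional Gram--Schmidt procedure, so I would begin by extracting what the recursively defined $T_{j}^{(t)}$ give us. Writing $\tilde{T}_{t}\equiv T_{t}^{(t)}$, the defining recursion yields at once the one-step orthogonality $\E[T_{j}^{(t)}\tilde{T}_{t+1}^{\top}\mid\bar{H}_{t+1}]=0$ for every $j\le t$, because $T_{j}^{(t)}$ is by definition the residual of $T_{j}^{(t+1)}$ after its conditional (given $\bar{H}_{t+1}$) regression on $\tilde{T}_{t+1}$.

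From this single fact I would bootstrap, by induction on the gap $m-l$, the full conditional orthogonality
\begin{equation*}
\E[\tilde{T}_{l}\,\tilde{T}_{m}^{\top}\mid\bar{H}_{m}]=0,\qquad l<m,
\end{equation*}
and, more generally, $\E[T_{j}^{(t)}\tilde{T}_{m}^{\top}\mid\bar{H}_{m}]=0$ for $j\le t<m$. The inductive step unwinds $\tilde{T}_{l}$ into $T_{l}^{(m-1)}$ minus an $\bar{H}_{p}$-measurable combination of the $\tilde{T}_{p}$ with $l<p<m$ (a representation obtained by iterating the recursion): the $T_{l}^{(m-1)}$ term is killed by the one-step fact, while every remaining term has a strictly smaller gap and a coefficient measurable with respect to $\bar{H}_{m}$, hence is killed by the induction hypothesis. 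Iterating the recursion in the reverse direction gives the triangular change of basis $T_{j}=\tilde{T}_{j}+\sum_{l>j}c_{l}^{(j)}(\bar{H}_{l})\tilde{T}_{l}$ with each $c_{l}^{(j)}$ a function of $\bar{H}_{l}$; since $d_{j}(\bar{H}_{j})\,c_{l}^{(j)}(\bar{H}_{l})$ is again a function of $\bar{H}_{l}$, this shows $\Gamma_{0}+\cdots+\Gamma_{K}=\tilde{\Gamma}_{0}+\cdots+\tilde{\Gamma}_{K}$ where $\tilde{\Gamma}_{t}\equiv\{d_{t}(\bar{H}_{t})\tilde{T}_{t}\}$, and by the displayed orthogonality the $\tilde{\Gamma}_{t}$ are mutually orthogonal closed subspaces.

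With this in hand the residual check becomes clean. Since orthogonality to $\Gamma_{0}+\cdots+\Gamma_{K}$ is equivalent to orthogonality to each $\tilde{\Gamma}_{s}$, I would expand $\sum_{t}d_{t}^{\ast}(\bar{H}_{t})T_{t}=\sum_{l}e_{l}(\bar{H}_{l})\tilde{T}_{l}$ via the triangular change of basis, with $e_{l}=d_{l}^{\ast}+\sum_{j<l}d_{j}^{\ast}c_{l}^{(j)}$. Testing against $\tilde{T}_{s}$ and conditioning on $\bar{H}_{s}$, every cross term with $l\ne s$ vanishes by the conditional orthogonality (the $l>s$ terms because $\E[\tilde{T}_{l}\tilde{T}_{s}^{\top}\mid\bar{H}_{l}]=0$ after a tower step, the $l<s$ terms because $\E[\tilde{T}_{l}\tilde{T}_{s}^{\top}\mid\bar{H}_{s}]=0$ and $e_{l}$ is $\bar{H}_{s}$-measurable), leaving the single requirement $e_{s}=\tilde{d}_{s}$, where $\tilde{d}_{s}\equiv\E[U\tilde{T}_{s}^{\top}\mid\bar{H}_{s}]\E[\tilde{T}_{s}\tilde{T}_{s}^{\top}\mid\bar{H}_{s}]^{-1}$. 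I would then verify $e_{s}=\tilde{d}_{s}$ straight from the definition of $d_{s}^{\ast}$: using $\tilde{T}_{s}=T_{s}^{(s)}$ together with the identity $\E[T_{j}^{(s)}\tilde{T}_{s}^{\top}\mid\bar{H}_{s}]=c_{s}^{(j)}\E[\tilde{T}_{s}\tilde{T}_{s}^{\top}\mid\bar{H}_{s}]$ (itself a consequence of the orthogonality lemma applied to the truncated expansion of $T_{j}^{(s)}$), the formula for $d_{s}^{\ast}$ rearranges to $d_{s}^{\ast}=\tilde{d}_{s}-\sum_{j<s}d_{j}^{\ast}c_{s}^{(j)}$, i.e.\ exactly $e_{s}=\tilde{d}_{s}$.

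The main obstacle is purely the bookkeeping of the conditional Gram--Schmidt: one must track at every stage that the regression coefficients produced by the recursion are measurable with respect to the correct history (this is what keeps all manipulations inside the intended $\Gamma$'s and legitimizes pulling coefficients out of conditional expectations), and one must arrange the gap induction so the orthogonality of the $\tilde{T}_{t}$ is available precisely where it is invoked. I would also flag the routine measure-theoretic caveats---$U\in L_{2}$, finiteness of the relevant conditional second moments so that each $d_{t}^{\ast}(\bar{H}_{t})T_{t}\in L_{2}$, and almost-sure invertibility of $\E[\tilde{T}_{t}\tilde{T}_{t}^{\top}\mid\bar{H}_{t}]$---which guarantee that the projections exist, that each $\tilde{\Gamma}_{t}$ is closed, and hence that their orthogonal sum is closed and is characterized by the orthogonality conditions verified above.
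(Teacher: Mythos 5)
Your proposal is correct, and it reorganizes the argument in a way that is genuinely different from the paper's proof, even though both exploit the same recursion defining the $T_{j}^{(t)}$. The paper proves the theorem by induction on $K$ via the normal equations: it writes the orthogonality conditions characterizing $\Pi[U|\Gamma_{0}+\cdots+\Gamma_{K}]$, solves the last equation for $d_{t}^{\ast}(\bar{H}_{t})$, substitutes back, and observes that the remaining system is exactly the set of normal equations for projecting $U$ onto the spaces generated by the partially orthogonalized vectors $T_{k}^{(t-1)}$ — a problem with one fewer time point to which the inductive hypothesis applies. You instead carry the conditional Gram--Schmidt to completion up front: you prove the mutual conditional orthogonality $\E[T_{l}^{(l)}T_{m}^{(m)\top}|\bar{H}_{m}]=0$ for $l<m$ by an induction on the gap $m-l$, use the triangular change of basis $T_{j}=T_{j}^{(j)}+\sum_{l>j}c_{l}^{(j)}(\bar{H}_{l})T_{l}^{(l)}$ to show that $\Gamma_{0}+\cdots+\Gamma_{K}$ is the orthogonal sum of the spaces $\tilde{\Gamma}_{t}$ generated by the $T_{t}^{(t)}$, and then verify directly that the stated coefficients satisfy $e_{s}=\tilde{d}_{s}$, i.e.\ that the candidate agrees with the obvious projection formula in the orthogonalized basis. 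I checked the two identities your verification leans on — $\E[T_{j}^{(s)}T_{s}^{(s)\top}|\bar{H}_{s}]=c_{s}^{(j)}\E[T_{s}^{(s)}T_{s}^{(s)\top}|\bar{H}_{s}]$ and the resulting rearrangement $d_{s}^{\ast}=\tilde{d}_{s}-\sum_{j<s}d_{j}^{\ast}c_{s}^{(j)}$ — and they follow from your orthogonality lemma together with the measurability of $c_{s}^{(j)}(\bar{H}_{s})$ and $d_{j}^{\ast}(\bar{H}_{j})$ with respect to $\bar{H}_{s}$, so the reduction is sound. What each approach buys: the paper's elimination scheme is more economical, since it never has to invert the change of basis or state the orthogonality of the $\tilde{\Gamma}_{t}$ explicitly; your version costs more bookkeeping but yields the structural byproduct $\Pi[U|\Gamma_{0}+\cdots+\Gamma_{K}]=\sum_{s}\tilde{d}_{s}(\bar{H}_{s})T_{s}^{(s)}$ with $\tilde{d}_{s}=\E[UT_{s}^{(s)\top}|\bar{H}_{s}]\E[T_{s}^{(s)}T_{s}^{(s)\top}|\bar{H}_{s}]^{-1}$, which makes the variance-decomposition and "larger subspace, more efficiency" claims of Section \ref{sec:optimal_dr} transparent. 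Your closing caveats (square-integrability of each $d_{t}^{\ast}(\bar{H}_{t})T_{t}$ and almost-sure invertibility of $\E[T_{t}^{(t)}T_{t}^{(t)\top}|\bar{H}_{t}]$) are exactly the regularity conditions the paper also implicitly assumes.
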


The preceding theorem has the following important consequences. Let $\mathcal{S}_{t}$ denote the sample space of the treatment variable $S_{t}$ and let $I_{t}$ be the $card(\mathcal{S}_{t} \times \cdots \times \mathcal{S}_{K}) \times 1$ vector whose elements are the indicators that $\underaccent{\bar}{S}_{t}$ take a specific value $\underaccent{\bar}{s}_{t} \in \mathcal{S}_{t} \times \cdots \times \mathcal{S}_{K}$, i.e. $I_{t} \equiv (I_{\underaccent{\bar}{s}_{t}}(\underaccent{\bar}{S}_{t}))_{\underaccent{\bar}{s}_{t} \in \mathcal{S}_{t} \times \cdots \times \mathcal{S}_{K}}$. Next, for any given $\xi \times 1$ vector $\bm{\varphi} (Y^{d}) \equiv (\varphi_{1} (Y^{d}), \ldots, \varphi_{\xi} (Y^{d}))^{\top}$ of linearly independent functions of $Y^{d}$, define for each $t = 0, \ldots, K$, the $\delta_{t} \times 1$ vector function 
\begin{equation}
b_{t}^{\ast} (\bar{H}_{t}, \underaccent{\bar}{S}_{t}, Y^{d}) = (\varphi_{1} (Y^{d}) I_{t}^{\top}, \varphi_{2} (Y^{d}) I_{t}^{\top}, \cdots, \varphi_{\xi} (Y^{d}) I_{t}^{\top})^{\top} \label{eq:sublinear1}
\end{equation}
where $\delta_{t} \equiv card(\mathcal{S}_{t} \times \cdots \times \mathcal{S}_{K}) \xi$ and the dependence on $\xi$ has been suppressed in the notation. Note $b_{t}^{\ast} (\bar{H}_{t}, \underaccent{\bar}{S}_{t}, Y^{d})$ does not actually depend on $\bar{H}_{t}$. Define the set $\Omega_{t}$ to be the set $\Gamma_{t}$ with $T_{b^{\ast}, t}$ (see equation \eqref{tbt-def} for its form) substituted for $T_{t}$. Clearly $\Omega_{t}$ is a subspace of $\mathcal{T}_{t}$, defined in Theorem \ref{lem:nuisance_nde}. In particular, we have the following important corollary.

\begin{corollary}
\label{cor:sub} 
Consider the space $\Omega = \sum_{t = 0}^{K} \Omega_{t}$. Then 
\begin{equation}
\Pi [\BU (\bm{q}, \Psi^{\ast}) | \Omega] = \sum_{t = 0}^{K} d_{t}^{\ast} (\bar{H}_{t}) T_{b^{\ast}, t} =: \sum_{t = 0}^{K} T_{b_{sub}, t} \label{eq:sublinear2}
\end{equation}
where $d_{t}^{\ast} (\bar{H}_{t})$ is defined as in equations \eqref{main_0} and \eqref{main_d1} but with $T_{t}$ replaced by $T_{b^{\ast}, t}$, and $b_{sub} = (b_{sub, 0}, \ldots, b_{sub, K})^{\top}$, where the dependence of $b_{sub}$ on $\mathbf{q}$ has been suppressed in the notation. In particular, when $Y$ is discrete and $\xi$ is the cardinality of the sample space of $Y$ then $\Lambda_{\mathsf{NDE}}^{\perp} = \Omega$ and consequently $\Pi [\BU (\bm{q}, \Psi^{\ast}) | \Lambda_{\mathsf{NDE}}^{\perp}] = \Pi [\BU (\bm{q}, \Psi^{\ast}) | \Omega] = \sum\limits_{t = 0}^{K} T_{b_{sub}, t}$.
\end{corollary}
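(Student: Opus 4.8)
The plan is to handle the two assertions of the corollary separately. The first---the closed-form expression for $\Pi[\BU(\bm{q}, \Psi^{\ast}) \mid \Omega]$---is essentially immediate from Theorem \ref{thm:sub}. By construction $\Omega_{t}$ is exactly the space $\Gamma_{t}$ of that theorem with the fixed random vector $T_{t}$ taken to be $T_{b^{\ast}, t}$, so that $\Omega = \sum_{t=0}^{K}\Omega_{t}$ coincides with $\Gamma_{0} + \cdots + \Gamma_{K}$ under this substitution. Applying Theorem \ref{thm:sub} with $U = \BU(\bm{q}, \Psi^{\ast})$ and $T_{t} = T_{b^{\ast}, t}$ then yields \eqref{eq:sublinear2} directly, with the auxiliary vectors $T_{j}^{(t)}$ and the coefficients $d_{t}^{\ast}(\bar{H}_{t})$ produced by the same recursion \eqref{main_0}--\eqref{main_d1}. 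The only point needing care is that this recursion inverts the conditional second-moment matrices $\E[T_{b^{\ast}, t}^{(t)} T_{b^{\ast}, t}^{(t) \top} \mid \bar{H}_{t}]$, so I would either assume these are a.s.\ nonsingular or use generalized inverses; this is a regularity issue, not a substantive one.

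The substantive content is the second assertion, that $\Lambda_{\mathsf{NDE}}^{\perp} = \Omega$ when $Y$ is discrete and $\xi$ equals the cardinality of its sample space. By Theorem \ref{lem:nuisance_nde}, $\Lambda_{\mathsf{NDE}}^{\perp} = \mathcal{T}_{0} + \cdots + \mathcal{T}_{K}$, and since each $\Omega_{t}$ is a subspace of $\mathcal{T}_{t}$ (as already noted before the corollary), the inclusion $\Omega \subseteq \Lambda_{\mathsf{NDE}}^{\perp}$ holds with no discreteness assumption. Hence it suffices to prove the reverse inclusion, and I would do so in the stronger, term-by-term form $\mathcal{T}_{t} \subseteq \Omega_{t}$ for every $t$. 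The map $b_{t} \mapsto T_{b, t}$ of \eqref{tbt-def} is linear in $b_{t}$ (it is $D_{b, t}$ minus its projection onto $\Lambda_{\mathsf{ancillary}}$, and $D_{b, t}$ is linear in $b_{t}$), and moreover an $\bar{H}_{t}$-measurable factor passes through the residualization: such a factor is measurable with respect to every conditioning $\sigma$-field appearing in \eqref{tbt-def} for $m \geq t$, while the $m < t$ terms vanish, so $T_{d_{t}(\bar{H}_{t}) b_{t}^{\ast}, t} = d_{t}(\bar{H}_{t}) T_{b^{\ast}, t}$. It therefore suffices to write an arbitrary $b_{t} \in \mathcal{B}_{t}$ in the form $d_{t}(\bar{H}_{t}) b_{t}^{\ast}$.

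This last step is the crux, and it is where discreteness is used. The components of $b_{t}^{\ast}$ in \eqref{eq:sublinear1} are precisely the products $\varphi_{i}(Y^{d}) I_{\underaccent{\bar}{s}_{t}}(\underaccent{\bar}{S}_{t})$ over $i = 1, \ldots, \xi$ and $\underaccent{\bar}{s}_{t} \in \mathcal{S}_{t} \times \cdots \times \mathcal{S}_{K}$, and a generic element of $\Omega_{t}$ allows an arbitrary $\bar{H}_{t}$-measurable coefficient in front of each such product. I would argue these products span all of $\mathcal{B}_{t}$: the indicators $\{I_{\underaccent{\bar}{s}_{t}}(\underaccent{\bar}{S}_{t})\}$ form a complete basis for functions of the finitely supported $\underaccent{\bar}{S}_{t}$, and the key claim is that the $\xi$ linearly independent functions $\varphi_{1}, \ldots, \varphi_{\xi}$ form a complete basis for the $\xi$-dimensional space of functions of the discrete outcome. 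Granting this, every $b_{t}(\bar{H}_{t}, \underaccent{\bar}{S}_{t}, Y^{d})$ expands as $\sum_{i, \underaccent{\bar}{s}_{t}} c_{i, \underaccent{\bar}{s}_{t}}(\bar{H}_{t}) \varphi_{i}(Y^{d}) I_{\underaccent{\bar}{s}_{t}}(\underaccent{\bar}{S}_{t})$, which is exactly of the form $d_{t}(\bar{H}_{t}) b_{t}^{\ast}$; hence $T_{b, t} \in \Omega_{t}$, giving $\mathcal{T}_{t} = \Omega_{t}$. Summing over $t$ yields $\Lambda_{\mathsf{NDE}}^{\perp} = \Omega$, and the displayed identity $\Pi[\BU(\bm{q}, \Psi^{\ast}) \mid \Lambda_{\mathsf{NDE}}^{\perp}] = \sum_{t} T_{b_{sub}, t}$ then follows from the first assertion.

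I expect the main obstacle to be exactly the completeness claim for $\{\varphi_{i}\}$ together with the bookkeeping between $Y$ and $Y^{d}$: both $b_{t}$ and the $\varphi_{i}$ are functions of $Y^{d}$, whereas the stated count $\xi$ is the cardinality of the support of $Y = Y^{d} - \sum_{t} c^{\ast} A_{t}$. I would therefore need to verify carefully that $\xi$ linearly independent functions of $Y^{d}$ genuinely exhaust the functions of $Y^{d}$ that are relevant to $T_{b, t}$---equivalently, that the conditioning operations in \eqref{tbt-def} reduce the effective outcome to one whose support has cardinality $\xi$. Reconciling this clean finite-dimensional linear-algebra argument with the measure-theoretic definition of $\mathcal{B}_{t}$ is the delicate part, and I would treat it as the heart of the proof; everything else reduces to invoking Theorems \ref{thm:sub} and \ref{lem:nuisance_nde}.
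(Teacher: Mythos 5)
Your proof is correct and follows essentially the same route the paper intends: the paper gives no separate proof of Corollary~\ref{cor:sub}, treating it as an immediate consequence of Theorem~\ref{thm:sub} together with the observation that $\bar{H}_{t}$-measurable factors commute with the residualization in \eqref{tbt-def} (so $\Omega_t\subseteq\mathcal{T}_t$) and that, for finitely supported outcomes, the tensor products $\varphi_i(Y^d)I_{\underaccent{\bar}{s}_t}(\underaccent{\bar}{S}_t)$ span all of $\mathcal{B}_t$ over $\bar{H}_t$-measurable coefficients, which is exactly your argument. The $Y$-versus-$Y^d$ bookkeeping you flag is a genuine (minor) imprecision in the paper's own statement rather than a gap in your proof: since $Y^d=Y+c^{\ast}\sum_t A_t$, one needs $\xi$ to be the cardinality of the support of $Y^d$ (or $c^{\ast}=0$, as in the paper's simulations) for the completeness claim to hold as stated.
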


\begin{remark}
\label{rem:sub} 
Consider the case where $Y$ is continuous and $\bm{\varphi} (Y)$ is the vector of the first $\xi$ elements of a complete basis for $L_{2} (\mu)$ with $\mu$ the Lebesgue measure. Then as $\xi \rightarrow \infty$, $\Pi [\BU (\bm{q}, \Psi^{\ast}) | \Omega] = \sum\limits_{t = 0}^{K} d_{t}^{\ast} (\bar{H}_{t})T_{b^{\ast}, t}$ should converge to $\Pi [\BU (\bm{q}, \Psi^{\ast}) | \Lambda_{\mathsf{NDE}}^{\perp}]$. As a consequence, by choosing $\xi \rightarrow \infty$ slowly with the sample size $n$, the asymptotic variance $V_{oracle} (\bm{q}, b_{sub})$ of $\hat{\Psi} (\bm{q}, b_{sub})$ should converge to the asymptotic variance  $V_{oracle} (\bm{q}, b_{opt})$ of $\hat{\Psi} (\bm{q}, b_{opt})$ and thus the oracle estimators $\hat{\Psi} (\bm{q}, b_{sub})$ and $\hat{\Psi} (\bm{q}, b_{opt})$ will be asymptotically equivalent.
\end{remark}

\section{Doubly Robust Nearly Efficient Estimation}
\label{sec:feasible}

In this section we consider the construction of feasible estimators of $\Psi^{\ast}$. To do so, we shall need to estimate the unknown conditional means and densities (hereafter nuisance functions) that are present in $\hat{\Psi} (\bm{q}, b)$, $\hat{\Psi} (\bm{q}, b_{sub})$, $\hat{\Psi} (\bm{q}, b_{opt})$ where the latter two estimators contain additional nuisance functions due to the dependence of $b_{opt}$ and $b_{sub}$ on on the distribution $P$. In the following to cover all cases we write the generic oracle estimator as $\hat{\Psi} (\bm{q}, b(P))$. We will consider the state-of-the-art cross-fit doubly robust machine learning (DR-ML) estimators \citep{chernozhukov2018double, smucler2019unifying} $\tilde{\Psi}_{cf} (\bm{q}, \hat{b})$ in which the nuisance functions are estimated by arbitrary machine learning algorithms chosen by the analyst. [If $b (P) = b$ does not depend on $P$, then $\hat{b} = b$.]

We need to use sample splitting because the functions estimated by modern machine learning algorithms (e.g. deep neural nets) have unknown statistical properties and, in particular, may not lie in a so-called Donsker class (see e.g. \citet[Chapter 2]{van1996weak}) - a condition generally needed for asymptotic linearity when we do not split the sample. Cross-fitting can recover the information lost due to sample splitting, provided that $\tilde{\Psi}_{cf} (\bm{q}, \hat{b})$ is asymptotically linear.

The following algorithm computes $\tilde{\Psi}_{cf} (\bm{q}, \hat{b})$:

\begin{itemize}
\item[(i)] The $n$ study subjects are randomly split into 2 parts: an estimation sample of size $n_{1}$ and a nuisance sample of size $n_{2} = n - n_{1}$ with $n_{1} / n \approx 1 / 2$.

\item[(ii)] Estimate all the unknown conditional expectation and density functions occurring in $\BU (\bm{q}, b(P), \Psi) \equiv \BU (\bm{q}, \Psi) - c_{OLS} (\bm{q}, b(P), \Psi)^{\top} T_{b(P)}$ from the nuisance sample data by machine learning. However, unconditional expectations can be estimated from the estimation sample.

\item[(iii)] Define $\hat{\BU} (\bm{q}, \hat{b}, \Psi)$ to be $\BU (\bm{q}, b(P), \Psi)$ except with the estimates under (ii) substituted for their estimands. Find the (assumed unique) solution $\tilde{\Psi}^{(1)} (\bm{q}, \hat{b})$ to $\BP_{n}^{(1)} [\hat{\BU} (\bm{q}, \hat{b}, \Psi)] = 0$ where $\BP_{n}^{(1)}$ is the sample average operator in the estimation sample.

\item[(iv)] Let $\tilde{\Psi}_{cf} (\bm{q}, \hat{b}) = \left\{ \tilde{\Psi}^{(1)} (\bm{q}, \hat{b}) + \tilde{\Psi}^{(2)} (\bm{q}, \hat{b}) \right\} / 2$ where $\tilde{\Psi}^{(2)} (\bm{q}, \hat{b})$ is $\tilde{\Psi}^{(1)} (\bm{q}, \hat{b})$ but with the \textit{training} and \textit{estimation} samples reversed.
\end{itemize}

In Appendix \ref{app:drml} we provide regularity conditions under which $\tilde{\Psi}_{cf} (\bm{q}, \hat{b})$ is RAL with the same influence function as $\hat{\Psi} (\bm{q}, b(P))$. Further we describe modified versions of $\tilde{\Psi}_{cf} (\bm{q}, \hat{b})$ that should exhibit better finite sample behavior. Finally we show that under the linear model $\gamma_{t} (\bar{H}_{t}, S_{t}, A_{t}; \Psi_{t}) = \Psi_{t}^{\top} V_{t}$, for a given vector $V_{t} = v_{t} (\bar{H}_{t}, S_{t}, A_{t})$, for each $t$, $\tilde{\Psi}_{cf} (\bm{q},b)$ exists in closed form and thus is simple to compute. A more detailed discussion of this strategy is given in Remark \ref{rem:close} in the Appendix and also see \citet{kallus2019localized} for an extended discussion and theoretical details.

We let $\tilde{\Psi}_{ss} (\bm{q}, \hat{b})$ denote the single sample estimator solving $\BP_{n} [\hat{\BU} (\bm{q}, \hat{b}, \Psi)] = 0$ where $\BP_{n}$ is the sample average operator in the entire sample and the unknown nuisance functions in $\BU (\bm{q}, b(P), \Psi)$ are estimated from all $n$ subjects. In the sequel we will use $\tilde{\Psi}$ to represent both estimators, as both are in common use, although the regularity conditions under which $\tilde{\Psi}_{ss} (\bm{q}, \hat{b})$ is RAL are more restrictive, generally requiring nuisance functions to lie in a Donsker class. For convenience, we will assume the necessary regularity conditions hold for any given $\tilde{\Psi}$ to be RAL, as regularity conditions are not the focus of the paper.

\section{Varieties of NDE assumptions -- substantive plausibility}
\label{sec:nde_variety}

In this section, we first define additional NDE assumptions that are stronger than the NDE$(Y^{d})$ assumption and discuss their differing statistical implications and substantive meaning. Let $C_{t} \subseteq L_{t}, t = 0, \ldots, K$, be a (possibly improper or null) subset of the covariates $L_{t}$. A stronger assumption than \eqref{nde_test} is that $\bar{A}_{K}$ has no direct effect on observed responses $\bar{C}_{K}$ and $Y^{d}$ not through treatment $\bar{S}_{K}$. Formally, this is the assumption that 
\begin{equation}
C_{t, \bar{a}_{t - 1}, \bar{s}_{t - 1}} = C_{t, \bar{a}_{t - 1}^{\prime}, \bar{s}_{t - 1}}, Y_{\bar{a}_{K}, \bar{s}_{K}}^{d} = Y_{\bar{a}_{K}^{\prime}, \bar{s}_{K}}^{d} \ \forall \bar{a}_{K}, \bar{a}_{K}^{\prime}, \bar{s}_{K}, t \label{Stronger_NDE}
\end{equation}
which we refer to as the NDE$(\bar{C}_{K}, Y^{d})$ assumption. Note, for any $C_{t}^{\prime} \subset C_{t}$, the NDE$(\bar{C}_{K}, Y^{d})$ assumption implies the NDE$(\bar{C}_{K}^{\prime})$, NDE$(\bar{C}_{K})$ and NDE$(Y^{d})$ assumptions, where the NDE$(\bar{C}_{K})$ assumption is given by equation \eqref{Stronger_NDE} without reference to $Y^{d}$. It follows from \citet{robins1999testing} that all the above results obtained under NDE$(Y^{d})$ hold under NDE$(\bar{C}_{K}, Y^{d})$ when we replace $D_{b, t}$, $T_{b, t}$, $\mathcal{B}_{t}$, and $\Lambda_{\mathsf{NDE}}^{\perp}$ with $D_{b, t}^{c}$, $T_{b, t}^{c}$, $\mathcal{B}_{t}^{c}$, and $\Lambda_{\mathsf{NDE}}^{c \perp}$ where the latter are defined like the former except with the set of functions $\{b_{t}^{c} (\bar{H}_{t}, \underaccent{\bar}{S}_{t}, \underaccent{\bar}{C}_{t + 1}, Y^{d})\}$ replacing the set $\{b_{t} (\bar{H}_{t}, \underaccent{\bar}{S}_{t}, Y^{d})\}$. Since the latter set is contained in the former, it follows that the asymptotic variance $V_{oracle} (\bm{q}, b_{opt}^{c})$ of $\hat{\Psi} (\bm{q}, b_{opt}^{c})$ is never greater and usually less than $V_{oracle} (\bm{q}, b_{opt})$ of $\hat{\Psi} (\bm{q}, b_{opt})$, where $b_{opt}^{c}$ is the minimizer of $V_{oracle} (\bm{q}, b^{c})$ over functions in $\mathcal{B}_{t}^{c}$.

We next briefly discuss, via examples, how one might assess whether a particular NDE assumption is substantively plausible. Without thinking carefully, most tend to assume that a test will not have an effect on an outcome not through the treatment $S_{t}$ under study. Therefore, it will be pedagogically most useful to give examples where this is not the case. In the HIV setting, a test may have an adverse effect on mortality $Y^{d}$ not through anti-retroviral therapy $S_{t}$, hence falsifying the NDE$(Y^{d})$ assumption, if poor test results (high HIV RNA and low CD4 count) motivate physicians to provide life-extending ancillary care such as prophylactic therapy against opportunistic infections or a list of avoidable behaviors likely to increase viral replication. This is true whether the ancillary care is generally (prophylaxis therapy) or rarely (a list of avoidable behaviors) recorded in the medical record.

Consider next an example in which a test has no effect on $Y^{d}$ except through $S_{t}$ but does have an effect on a covariate $C_{t}$. A cystoscopic exam $A_{t - 1}$ to evaluate recurrence of a bladder cancer can cause intense pain $C_{t} = 1$ in certain individuals. Pain itself has no effect on bladder cancer mortality $Y^{d}$, except by (possibly delaying) chemotherapy $S_{t}$. Hence NDE$(Y^{d})$ is true but NDE$(\bar{C}_{K}, Y^{d})$ is false. In this setting, an analyst who uses the estimator $\hat{\Psi} (\bm{q}, b_{opt})$ will succeed in gaining efficiency without incurring bias while an analyst who uses $\hat{\Psi} (\bm{q}, b_{opt}^{c})$ may introduce bias as $T_{b}^{c}$ may not have mean zero.

Finally, we consider an example wherein the assumption that `test at time $t$ has no direct effect on the results of later tests' is false. To formalize, we follow \citet{robins2008estimation} and assume the existence of an underlying variable $R_{t}^{\ast}$ that in conjunction with $A_{t - 1}$ determines $R_{t}$ by $R_{t} = A_{t - 1} R_{t}^{\ast} + (1 - A_{t - 1}) ?$. That is, if a test is performed at $t - 1$, the underlying variable $R_{t}^{\ast}$ is revealed; otherwise $R_{t}^{\ast}$ is hidden and recorded as $?$. Let $R_{t, \bar{a}_{t - 1}, \bar{s}_{t - 1}}^{\ast}$ and $R_{t, g}^{\ast}$ be the corresponding counterfactuals. We assume $R_{t}^{\ast}$ just precedes $R_{t}$ in our temporal ordering, as would be the case if a test (e.g. a cardiac stress test) ordered at visit $t - 1$ was both conducted and results reported at visit $t$. (With minor notational changes, we could have defined $R_{t} = A_{t - 1} R_{t - 1}^{\ast} + (1 - A_{t - 1})?$, with $R_{t - 1}^{\ast}$ just preceding $S_{t - 1}$, if the results of a test ordered and conducted at $t - 1$ are not available till $t$.) We say that the NDE$(\bar{R}_{K}^{\ast}, Y^{d})$ assumption of no effect of testing on $\bar{R}_{K}^{\ast}, Y^{d}$ holds if 
\begin{equation*}
R_{t, \bar{a}_{t - 1}, \bar{s}_{t - 1}}^{\ast} = R_{t, \bar{a}_{t - 1} \bar{s}_{t - 1}}^{\ast}, Y_{\bar{a}_{K}, \bar{s}_{K}}^{d} = Y_{\bar{a}_{K} \bar{s}_{K}}^{d}, \ \forall \bar{a}_{t}, \bar{a}_{t}^{\prime }, \bar{s}_{t - 1}, t.
\end{equation*}
Consider a study of an effect of a drug $S_{t}$ on brain amyloid content $Y^{d}$ in early Alzheimer's disease in which a test $A_{t}$ of (a secondary outcome) mental ability $R_{t + 1}^{\ast}$ is repeatedly administered to some but not other participants. Then, even when the NDE$(Y^{d})$ holds, the NDE$(\bar{R}_{K}^{\ast}, Y^{d})$ assumption will not hold if repetition has a direct effect on later test scores $R_{t}^{\ast}$ due to `practice effects'. That is, $R_{t, \bar{a}_{t - 1} = \bar{0}_{t - 1}, a_{t} = 1, \bar{s}_{t - 1}}^{\ast}$ will generally be less than $R_{t, \bar{a}_{t - 1} = \bar{1}_{t - 1}, a_{t} = 1, \bar{s}_{t - 1}}^{\ast}$.

Furthermore as in \citet{robins1999testing} we modify the identifying assumptions to incorporate $\bar{R}_{K}^{\ast}$, by (a) modifying the sequential exchangeability assumption both by adding the counterfactual $\underaccent{\bar}{R}_{t + 1, g}$ and conditioning on $\bar{R}_{t}^{\ast}$, (b) having $\bar{R}_{K}^{\ast}$ satisfy consistency, and (c) assuming $\bar{R}_{t}^{\ast} \amalg (S_{t}, A_{t}) | \bar{H}_{t}$ to insure that the modified ID assumptions both imply the unmodified ID assumptions and leave $\Pi_{m} \equiv \mathsf{Pr} [A_{m} = 1 | \bar{H}_{m}, S_{m}]$ and $p (s_{m} | \bar{H}_{m})$ unchanged, with no dependence on $\bar{R}_{m}^{\ast}$ except through $\bar{R}_{m}$.

Suppose the NDE$(\bar{R}_{K}^{\ast}, Y^{d})$ and modified ID assumptions hold. Then we would like to use the estimator $\hat{\Psi} (\bm{q}, b_{opt}^{r^{\ast}})$ where $T_{b_{opt}^{r^{\ast}}}^{r} \equiv \Pi [\BU (\bm{q}, \Psi^{\ast}) | \Lambda_{\mathsf{NDE}}^{r^{\ast} \perp}]$. However it is not possible to project onto $\Lambda_{\mathsf{NDE}}^{r^{\ast} \perp}$, because though $\underaccent{\bar}{R}_{t + 1}$ is observed, $\underaccent{\bar}{R}_{t + 1}^{\ast}$ is unobserved except for subjects with $\underaccent{\bar}{A}_{t + 1} = \underaccent{\bar}{1}_{t + 1}$. Instead, we project onto the ortho-complement of the tangent space of the induced observed data model. Specifically and, more generally, let $\mathbb{M}_{\mathsf{NDE}} (\bar{C}_{K}, \bar{R}_{K}^{\ast}, Y^{d})$ be the model defined by the NDE$(\bar{R}_{K}^{\ast}, \bar{C}_{K}, Y^{d})$ and modified ID assumptions. Let $\mathbb{M}_{\mathsf{NDE}, obs} (\bar{C}_{K}, \bar{R}_{K}^{\ast}, Y^{d})$ be the induced model for the observed data $O$. Let $\Lambda_{\mathsf{NDE}, obs}^{c, r^{\ast} \perp}$ be the observed data ortho-complement to the tangent space of $\mathbb{M}_{\mathsf{NDE}, obs} (\bar{C}_{K}, \bar{R}_{K}^{\ast}, Y^{d})$. In Appendix \ref{app:close_obs} we show that $\Lambda_{\mathsf{NDE}, obs}^{c, r^{\ast} \perp}$ is a strict subset of $\Lambda_{\mathsf{NDE}}^{c, r^{\ast} \perp}$, but a superset of $\Lambda_{\mathsf{NDE}}^{c \perp}$. In Theorem \ref{thm:obs_nde} and Corollary \ref{lem:obs_proj} of Appendix \ref{app:close_obs}, we provide closed-form expressions both for $\Lambda_{\mathsf{NDE}, obs}^{c, r^{\ast} \perp}$ and for the projection of any random variable onto a large subspace $\Omega_{obs}^{c, r^{\ast}}$ of $\Lambda_{\mathsf{NDE}, obs}^{c, r^{\ast} \perp}$ using Theorem \ref{thm:sub} and Corollary \ref{cor:sub}. Hence to gain maximum efficiency we use the algorithm of Section \ref{sec:feasible} to construct feasible versions of the oracle estimators $\hat{\Psi} (\bm{q}, b_{opt, obs}^{c, r^{\ast}})$ or $\hat{\Psi} (\bm{q}, b_{sub, obs}^{c, r^{\ast}})$ where $T_{b_{opt}^{c, r^{\ast}}, obs} = \Pi [\BU (\bm{q}, \Psi^{\ast}) | \Lambda_{\mathsf{NDE}, obs}^{c, r^{\ast} \perp}]$ and $T_{b_{sub}^{c, r^{\ast}}, obs} = \Pi [\BU (\bm{q}, \Psi^{\ast}) | \Omega_{obs}^{c, r^{\ast}}]$. 

\section{Efficiency and Intuition}
\label{sec:intuition}

Above we described how to construct novel highly efficient estimators under various NDE assumptions. In this section we provide guidance and intuition as to when novel estimators will achieve small versus large gains in efficiency compared to estimators that do not rely on any NDE assumption. To do so, we begin by discussing the paper of \citet{caniglia2019emulating}. The authors of that paper analyzed data from an observational study that followed 41,724 HIV infected individuals on first line anti-retroviral therapy (ART) whose HIV RNA had been suppressed to less than 200 copies/ml at baseline. They estimated the optimal testing and treatment strategy among four candidate regimes. Testing costs were ignored. The utility $Y = Y^{d}$ was the indicator of survival at $60$ months of follow-up; hence the expected counterfactual utility $\theta_{g} \equiv \E [Y_{g}]$ is the probability of surviving to $60$ months under regime $g$. See \citet{robins2008estimation} and later Remark \ref{rem:cost} for the case of costly tests. Data were recorded at monthly intervals and the ID assumptions of Section \ref{sec:notation} were assumed to hold. Their analysis compared the standard IPW estimator $\tilde{\theta}_{ipw, g}$ with a novel IPW estimator $\tilde{\theta}_{nde\mbox{-}ipw, g}$ introduced in \citet{robins2008estimation}. Below we show that the latter is a consistent estimator of $\theta_{g}$ under the NDE$(\bar{R}_{K}^{\ast}, Y^{d})$ assumption, but not under the weaker NDE$(Y^{d})$ assumption; the former is consistent regardless of whether any NDE assumption holds. They found $\tilde{\theta}_{nde\mbox{-}ipw, g}$ to be 50 times as efficient as the usual IPW estimator $\tilde{\theta}_{ipw, g}$. In this subsection we describe results in \citet{caniglia2019emulating} and provide an intuitive understanding of the reason for such a remarkable efficiency gain.

\citet{caniglia2019emulating} compared $\theta_{g}$ for four candidate regimes $g$ formed by crossing two testing with two treatment regimes. For reasons of both pedagogy and space, in the main text, we describe a somewhat simplified version of these regimes. Appendix \ref{app:actual} compares these approximate regimes with the actual regimes used for data analysis. The two testing regimes $g_{x}$ require simultaneous tests for both HIV RNA and CD4 count at baseline and then every 11 months while RNA $<$ 200 copies/ml and CD4 $>x$ counts/ml for $x \in \{350, 500\}$; otherwise the tests are performed every 5 months. The two treatment regimes $g_{z}$ require one to switch to second line ART therapy 2 months after the first time the measured RNA level exceeds $z$ copies for $z \in \{200, 1000\}$. Let $g_{x, z} = (g_{x}, g_{z}), x \in \{350, 500\}, z \in \{200, 1000\}$ denote the four regimes.

At baseline all 41,724 subjects can follow each of the 4 regimes. To avoid unnecessary clutter, we henceforth restrict attention to a comparison of the regimes $g_{x = 350, z = 200}$ and $g_{x = 500, z = 200}$. Consider first an analysis that does not impose the NDE assumption. A subject is censored under (i.e. stops following) regime $g_{x, z}$ at the first month $t$ that either the subject's observed treatment indicator $S_{t}$ or testing indicator $A_{t}$ is incompatible with $g_{x, z}$, meaning that either $S_{t, g} \neq S_{t}$ or $A_{t, g} \neq A_{t}$. For regime $g_{x = 500, z = 200}$, of the 41,724 initially following the regime, the number still following the regime (i.e. still uncensored) at month 12, 24, 36, 48, 60 were 8926, 2634, 949, 348, 152. The corresponding numbers 4806, 1006, 204, 94, 45 for $g_{x = 350, z = 200}$ were even smaller. The explanation for the rapid decrease in the number of uncensored subjects is that most individuals were tested more frequently in the observed data than is allowed by either of the two testing regimes under consideration. In fact, even under regime $g_{x = 500, z = 200}$, well over 95\% of the censoring events were due to early retest (i.e. $A_{t, g} = 0, A_{t} = 1$).

\citet{caniglia2019emulating} constructed a so-called NDE dataset in which subjects are no longer censored from regime $g_{x, z}$ for reasons of overly frequent testing, thereby increasing the number of uncensored subjects by roughly one to two orders of magnitude. The estimator $\tilde{\theta}_{nde\mbox{-}ipw, g_{x, z}}$ is computed from the uncensored subjects in the NDE dataset and thus would be expected to have a much smaller variance than $\tilde{\theta}_{ipw, g_{x, z}}$. In fact, \citet{caniglia2019emulating} reported the nonparametric bootstrap variance estimators of the two different IPW estimators of the difference $\theta_{g_{x = 350, z = 200}} - \theta_{g_{x = 500, z = 200}}$. The nonparametric bootstrap estimator of the variance of $\tilde{\theta}_{ipw, g_{x = 350, z = 200}} - \tilde{\theta}_{ipw, g_{x = 500, z = 200}}$ was $1.69 \times 10^{-4}$ compared to $3.22 \times 10^{-6}$ for the variance of $\tilde{\theta}_{nde\mbox{-}ipw, g_{x = 350, z = 200}} - \tilde{\theta}_{nde\mbox{-}ipw, g_{x = 500, z = 200}}$, indicating a relative efficiency in favor of the latter of 53 fold. However, as noted above, if the relevant NDE assumptions fail to hold, $\tilde{\theta}_{nde\mbox{-}ipw, g_{x, z}}$ is inconsistent for $\theta_{g_{x, z}}$.

\subsection{Definitions and Statistical Properties of $\tilde{\theta}_{ipw, g}$ and $\tilde{\theta}_{nde\mbox{-}ipw, g}$}
\label{sec:formal}

Prior to the current paper the estimator $\tilde{\theta}_{nde\mbox{-}ipw, g}$ was the only estimator in the literature that leveraged the NDE assumptions to increase efficiency. Thus it is of interest to understand how $\tilde{\theta}_{nde\mbox{-}ipw, g}$ is related to the estimators proposed in the current paper that leverage NDE assumptions by subtracting from an inefficient estimator the projection of its influence function onto the ortho-complement to the tangent space of an NDE model. In this section we study this relationship after we formally define $\tilde{\theta}_{nde\mbox{-}ipw, g}$. Recall $W_{t} \equiv \prod\limits_{m = t}^{K} p (S_{m} | \bar{H}_{m})$ and $\Pi_{m} \equiv \Pr (A_{m} = 1 | \bar{H}_{m}, S_{m})$ and let $\hat{W}_{t}$ and $\hat{\Pi}_{m}$ be corresponding estimates. We ignore testing costs so $Y = Y^{d}$. Under the ID assumptions and regardless of whether the NDE assumption holds, $\theta_{g} = \E [Y_{g}]$ is identified by the IPW representation $\E [V_{ipw, g}]$ of the g-formula where 
\begin{equation*}
V_{ipw, g} \equiv \frac{\prod\limits_{t = 0}^{K} \mathbbm{1}\{A_{t} = A_{t, g}, S_{t} = S_{t, g}\} Y}{\prod\limits_{t = 0}^{K} \Pr (A_{t} = A_{t, g} | \bar{H}_{t}, S_{t}) W_{0}} = \prod\limits_{t = 0}^{K} \left\{ \frac{\mathbbm{1} \{A_{t} = 0\}}{1 - \Pi_{t}} \right\}^{1 - A_{t, g}} \prod\limits_{t = 0}^{K} \left\{ \frac{\mathbbm{1} \{A_{t} = 1\}}{\Pi_{t}} \right\}^{A_{t, g}} \frac{\prod\limits_{t = 0}^{K} \mathbbm{1} \{S_{t} = S_{t, g}\}}{W_{0}} Y.
\end{equation*}
The quantities $\hat{\theta}_{ipw, g} = \mathbb{P}_{n} [V_{ipw, g}]$ and $\tilde{\theta}_{ipw, g} = \mathbb{P}_{n} [\hat{V}_{ipw, g}]$ are, respectively, the oracle IPW estimator and the IPW estimator of $\theta_{g}$, where, for any `oracle' random vector $Q$ depending on unknown conditional expectations and densities, $\hat{Q}$ replaces these unknown functions by estimates. By consistency and $g = (g_{a}, g_{s})$ having both the testing and treatment regimes $g_{a}$ and $g_{s}$ deterministic, one can easily show that $V_{ipw, g}$ is a function of the observed data assuming the oracle knowledge that all the nuisance functions are known. [Hence $\hat{V}_{ipw, g}$ is a statistic without such oracle knowledge.] Furthermore, a necessary condition for $V_{ipw, g}$ to be non-zero is that $\bar{H}_{t, g} = \bar{H}_{t}$ for $t = 0, \ldots, K$, and $S_{K} = g_{s, K} (\bar{H}_{K})$. A subject is said to have been censored under regime $g$ if and only if for some $t \leq K$ at least one of the following 4 events happens: $[S_{t} = 0, S_{t, g} = 1]$, $[S_{t} = 1, S_{t, g} = 0]$, $[A_{t} = 0, A_{t, g} = 1]$, and $[A_{t} = 1, A_{t, g} = 0]$. Hence all censored subjects have $V_{ipw, g} = 0$.

\citet{robins2008estimation} introduced the NDE-IPW estimator $\tilde{\theta}_{nde\mbox{-}ipw, g} = \BP_{n} [\hat{V}_{nde\mbox{-}ipw, g}]$ with oracle version $\hat{\theta}_{nde\mbox{-}ipw, g} = \BP_{n} [V_{nde\mbox{-}ipw, g}]$, where 
\begin{equation*}
V_{nde\mbox{-}ipw, g} = \prod\limits_{t = 0}^{K} \left\{ \frac{\mathbbm{1} \{A_{t} = 1\}}{\Pi_{t}} \right\}^{A_{t, g}} \frac{\prod\limits_{t = 0}^{K} \mathbbm{1} \{S_{t} = S_{t, g}\}}{W_{0}} Y.
\end{equation*}
Note $V_{nde\mbox{-}ipw, g}$ differs from $V_{ipw, g}$ only in that we have removed the first factor. A subject who experiences the event $[A_{t} = 1, A_{t, g} = 0]$ need not have $V_{nde\mbox{-}ipw, g} = 0$. Therefore we will say a subject is NDE-censored under $g$ (and thus not included in the NDE dataset under regime $g$) if and only if, for some $t \leq K$, at least one of the three NDE censoring events $[S_{t} = 0, S_{t, g} = 1]$, $[S_{t} = 1, S_{t, g} = 0]$, $[A_{t} = 0, A_{t, g} = 1]$ happens.

\begin{remark}
\label{rem:cost} 
If we wish to incorporate testing costs so $Y$ need not equal $Y^{d}$, it is necessary to replace $Y$ in the definition of $V_{nde\mbox{-}ipw, g}$ by $Y_{g} = Y + \sum\limits_{t = 0}^{K - 1} c^{\ast} \mathbbm{1} [A_{t} = 1, A_{t, g} = 0]$ where $Y_{g}$ adds to $Y$ the number of times $t$ that a subject experiences the event $[A_{t} = 1, A_{t, g} = 0]$ multiplied by the cost $c^{\ast}$ of the test. To see why, note that, for non-NDE-censored subjects under regime $g$, the observed total utility $Y$ is less than the utility $Y_{g}$ under regime $g$ by $c^{\ast}$ times the difference in the number of tests $\sum\limits_{t = 0}^{K - 1} \mathbbm{1} [A_{t} = 1, A_{t, g} = 0]$. All the results given below continue to hold when testing costs are included if we substitute $Y_{g}$ for $Y$ in the definitions of $V_{nde\mbox{-}ipw, g}$ and $V_{ipw, g}$.
\end{remark}

There is a subtle, but critical point we have glossed over: assuming that the nuisance functions are known, it is not immediately clear that $V_{nde\mbox{-}ipw, g}$ itself is a statistic (i.e. a function of the observed data $O$ with probability 1) because, unlike for $V_{ipw, g}$, an appeal to consistency does not suffice (since if the event $[A_{t - 1} = 1, A_{t - 1, g} = 0]$ occurs, then $\bar{H}_{t^{\prime}, g} \neq \bar{H}_{t^{\prime}}$ for $t^{\prime} \geq t)$ and $\bar{H}_{t^{\prime}, g}$ may then not be a function of the observed data. Given a testing and treatment regime $g$ of interest, the following lemma, proved in Appendix \ref{app:lem_err}, provides sufficient conditions for $V_{nde\mbox{-}ipw, g}$ to be a statistic. As in Section \ref{sec:nde_variety}, we assume there exists an underlying variable $R_{t + 1}^{\ast}$ that is revealed only if a test is performed, i.e. $R_{t + 1} = A_{t} R_{t + 1}^{\ast} + (1 - A_{t}) ?$.

\begin{lemma}
\label{lem:nde-ipw-1} 
Suppose for regime $g$ (i) $g_{t}(\bar{h}_{t})$ depends on $\bar{h}_{t}$ through and only through the testing results $\bar{r}_{t}$, covariates $\bar{c}_{t}$, and treatments $\bar{s}_{t - 1}$; equivalently $g_{t} (\bar{h}_{t}) = g_{t}^{\dag} (\bar{h}_{t}^{\dag})$ for known $g_{t}^{\dag}$ where $\bar{h}_{t}^{\dag} = (\bar{s}_{t - 1}, \bar{c}_{t}, \bar{r}_{t})$. Then, if (ii) the NDE$(\bar{C}_{K}, \bar{R}_{K}^{\ast})$ assumption holds, $V_{nde\mbox{-}ipw, g}$ is a statistic.
\end{lemma}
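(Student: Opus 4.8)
The plan is to prove the lemma by a forward recursion in $t$ that reconstructs, for each subject and up to the first time NDE-censoring occurs, the counterfactual decisions $(S_{m, g}, A_{m, g})$ and the counterfactual recorded history $\bar{H}_{t, g}^{\dag} \equiv (\bar{S}_{t - 1, g}, \bar{C}_{t, g}, \bar{R}_{t, g})$ as functions of the observed data $O$. Since the only counterfactual ingredients of $V_{nde\mbox{-}ipw, g}$ are the indices $(S_{t, g}, A_{t, g})$ (all other factors involve only $A_t$, $S_t$, the oracle nuisance functions $\Pi_t, W_0$, and $Y$, each a function of $O$), and since every NDE-censored subject has $V_{nde\mbox{-}ipw, g} = 0$ by construction, it suffices to show that on the event that the subject is not NDE-censored through $t$ the pair $(S_{t, g}, A_{t, g})$ is a function of $O$.

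The induction hypothesis is that, on the event of no NDE-censoring through $t - 1$, the decisions $(S_{m, g}, A_{m, g})_{m = 0}^{t - 1}$ and hence $\bar{H}_{t, g}^{\dag}$ are functions of $O$. The base case $t = 0$ is immediate: with $\bar{S}_{-1}$ empty and $R_0 \equiv 0$ by convention, $\bar{H}_{0, g}^{\dag} = (C_{0, g}, 0) = (C_0, 0)$ equals its observed counterpart because no treatment or test has yet occurred. For the inductive step, condition (i) is what makes the reconstruction possible: because $g_t$ depends on its argument only through $(\bar{s}_{t - 1}, \bar{c}_t, \bar{r}_t)$ and in particular \emph{not} through the past testing history $\bar{a}_{t - 1}$, I may write $(S_{t, g}, A_{t, g}) = g_t^{\dag}(\bar{S}_{t - 1, g}, \bar{C}_{t, g}, \bar{R}_{t, g})$, which is a function of $O$ by the hypothesis. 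One then reads off from $O$ whether the NDE-censoring event ($S_t \neq S_{t, g}$, or $A_{t, g} = 1$ together with $A_t = 0$) occurs; on the complement I propagate the history to $t + 1$. Treatment compatibility gives $\bar{S}_{t, g} = \bar{S}_t$; recursive substitution identifies $C_{t + 1, g}$ with $C_{t + 1, \bar{A}_{t, g}, \bar{S}_t}$, the NDE$(\bar{C}_K, \bar{R}_K^{\ast})$ invariance replaces the counterfactual testing history $\bar{A}_{t, g}$ by the observed $\bar{A}_t$, and consistency yields $C_{t + 1, \bar{A}_t, \bar{S}_t} = C_{t + 1}$; the identical chain gives $R_{t + 1, g}^{\ast} = R_{t + 1}^{\ast}$.

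The crux -- and the step I expect to require the most care -- is reconstructing the counterfactual \emph{recorded} result $R_{t + 1, g}$, since once the subject over-tests relative to $g$ (the event $[A_t = 1, A_{t, g} = 0]$) the counterfactual testing history genuinely diverges from the observed one and $\bar{H}_{t + 1, g} \neq \bar{H}_{t + 1}$. I would handle this through the defining relation $R_{t + 1, g} = A_{t, g} R_{t + 1, g}^{\ast} + (1 - A_{t, g}) \, ?$ and split on the (already reconstructed) value $A_{t, g}$. If $A_{t, g} = 0$ then $R_{t + 1, g} = {?}$, a known constant, irrespective of whether the subject tested. If $A_{t, g} = 1$ then non-NDE-censoring forces $A_t = 1$, so the observed $R_{t + 1} = R_{t + 1}^{\ast}$ is revealed, and combining with $R_{t + 1, g}^{\ast} = R_{t + 1}^{\ast}$ from the previous step gives $R_{t + 1, g} = R_{t + 1}$. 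This is precisely where condition (i) is indispensable: because $g$ queries the recorded results $\bar{r}_t$ rather than the latent $\bar{r}_t^{\ast}$, it never demands a latent test value at a time when it declined to test, so the recursion never forces us to observe an $R^{\ast}$ that the data leave hidden.

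Assembling the recursion completes the induction: on the non-NDE-censored event all of $(S_{t, g}, A_{t, g})_{t = 0}^K$ are functions of $O$, so every factor of $V_{nde\mbox{-}ipw, g}$ is a function of $O$; on the complementary event $V_{nde\mbox{-}ipw, g} = 0$. Hence $V_{nde\mbox{-}ipw, g}$ is a statistic. With testing costs one replaces $Y$ by $Y_g$ as in Remark \ref{rem:cost}; since $Y_g$ adds to $Y$ only terms in $A_t$ and the reconstructed $A_{t, g}$, the same argument applies.
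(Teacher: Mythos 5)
Your proof is correct, and the substantive mechanism is the one the paper's proof also relies on: under NDE$(\bar{C}_{K}, \bar{R}_{K}^{\ast})$ the counterfactual inputs to the regime functions can be replaced by observed quantities, with the recorded result $R_{t + 1, g}$ recovered by the case split on $A_{t, g}$ (equal to $?$ when $A_{t, g} = 0$; equal to the observed $R_{t + 1}$ when $A_{t, g} = 1$, since non-NDE-censoring then forces $A_{t} = 1$). The architecture differs, however. The paper does not run an induction on the first NDE-censoring time; it introduces a one-parameter family $V_{nde\mbox{-}ipw, g}^{z}$ interpolating between $V_{ipw, g}$ (at $Z_{t} = 0$) and $V_{nde\mbox{-}ipw, g}$ (at $Z_{t} = \Pi_{t}$) and expands the product over $t$ into a sum over all $2^{K}$ testing configurations $\bar{i}_{K}$, with indicators $\mathbbm{1} \{ g_{a, t} (\cdot) = i_{t} \}$ selecting the configuration realized by the regime; the NDE assumption is then invoked in a single step to replace $(\bar{R}_{t, g}, \bar{C}_{t, g}, \bar{S}_{t - 1, g})$ by $(\bar{R}_{t, \bar{i}_{t - 1}}, \bar{C}_{t}, \bar{S}_{t - 1})$, where $\bar{R}_{t, \bar{i}_{t - 1}}$ masks the observed results according to the configuration. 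Your recursion is more transparent about exactly where condition (i) and each component of the NDE assumption enter, and about why the regime never queries a latent $R^{\ast}$ that the data leave hidden; the paper's expansion buys two things in exchange: it establishes the result simultaneously for the whole family $V_{nde\mbox{-}ipw, g}^{z}$ (hence for a larger class of estimators containing both IPW variants), and it yields the explicit closed-form representation of $V_{nde\mbox{-}ipw, g}$ as a sum over configurations that is reused directly in the proof of Lemma \ref{lem:nde-ipw-3}. Either argument is complete.
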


\begin{remark}\leavevmode
\label{rem:6}
\begin{enumerate}
\item We could have included $\bar{a}_{t - 1}$ in $\bar{h}_{t - 1}^{\dag}$ but it would have been redundant since $\bar{a}_{t - 1}$ is deterministic function of $\bar{r}_{t}$, as $r_{m} = ?$ implies $a_{m - 1} = 0$ and $r_{m} \neq ?$ implies $a_{m - 1} = 1$ for $1 \leq m \leq K$. Had we included $\bar{a}_{t - 1}$ in $\bar{h}_{t - 1}^{\dag}$, then $\bar{h}_{t} = (\bar{h}_{t}^{\dag}, \bar{l}_{t-1} \setminus \bar{c}_{t})$ would differ from $\bar{h}_{t}^{\dag}$ only through the components $\bar{l}_{t} \setminus \bar{c}_{t}$ of $\bar{l}_{t}$ not included in $\bar{c}_{t}$. Note further the Lemma implies that, if the NDE$(\bar{C}_{K}, \bar{R}_{K}^{\ast})$ assumption holds, then, for non-NDE-censored subjects, $Y_{g} = Y + \sum\limits_{t = 0}^{K - 1} \mathbbm{1} [A_{t} = 1, A_{t, g} = 0] c^{\ast} (t, \bar{C}_{t}, \bar{R}_{t}, \bar{S}_{t})$ for cost function $c^{\ast} (t, \bar{c}_{t}, \bar{r}_{t}, \bar{s}_{t})$ is a statistic and hence generalizes Remark \ref{rem:cost} by allowing $c^{\ast} (t, \bar{C}_{t}, \bar{R}_{t}, \bar{S}_{t})$ to replace $c^{\ast}$. See Remark \ref{rem:app_cost} in Appendix \ref{app:lem_err} for additional discussion.

\item To understand the need for the NDE$(\bar{R}_{K}^{\ast})$ assumption in Lemma \ref{lem:nde-ipw-1}, suppose the assumption fails to hold and consider the following counterexample. Consider a regime $g$ defined by $S_{0, g} = 0$, $A_{0, g} = 0$, $S_{1, g} = 1$, $A_{1, g} = 1$, $S_{2, g} = g_{s, 2} (R_{2, g})$. Then, for a subject with observed data $S_{0} = 0$, $A_{0} = 1$, $S_{1} = 1$, $A_{1} = 1$, $R_{2} \equiv R_{2, a_{0} = 1, a_{1} = 1, s_{0} = 0, s_{1} = 1}^{\ast} = 1$, $S_{2} = 1$, we cannot compute $R_{2, g} \equiv R_{2, a_{0} = 0, a_{1} = 1, s_{0} = 0, s_{1} = 1}^{\ast}$ and thus $S_{2, g}$ from the observed data because $A_{0} \neq A_{0, g}$. Hence $\hat{V}_{nde\mbox{-}ipw, g}$ is not a statistic. However if the NDE$(\bar{R}_{K}^{\ast})$ assumption holds, $R_{2, g} = R_{2} = R_{2, s_{0} = 0, s_{1} = 1}^{\ast}$ and thus $\hat{V}_{nde\mbox{-}ipw, g}$ is a statistic. An analogous counterexample can be constructed when the NDE$(\bar{C}_{K})$ assumption fails to hold.
\end{enumerate}
\end{remark}

The next lemma gives conditions under which $V_{nde\mbox{-}ipw, g}$ is unbiased for $\theta_{g} = \E [Y_{g}]$.

\begin{lemma}
\label{lem:nde-ipw-2} 
Suppose the modified ID assumptions of Section \ref{sec:nde_variety} hold and $g_{t}$ depends on and only on $\bar{r}_{t}$, $\bar{c}_{t}$, and $\bar{s}_{t - 1}$. Then the NDE$(\bar{C}_{K}, \bar{R}_{K}^{\ast}, Y^{d})$ assumption implies that $\E [V_{nde\mbox{-}ipw, g}] = \theta_{g}$ and, thus, also that $\tilde{\theta}_{nde\mbox{-}ipw, g} = \mathbb{P}_{n} [\hat{V}_{nde\mbox{-}ipw, g}]$ is a RAL estimator of $\theta_{g}$ under regularity conditions.
\end{lemma}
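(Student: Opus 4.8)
The plan is to reduce the claim to the already-established IPW identification $\E[V_{ipw, g}] = \theta_{g}$ (valid under the ID assumptions alone) and then to show that the NDE assumption lets us delete, without changing the mean, the factor that distinguishes the two integrands. Since $V_{nde\mbox{-}ipw, g}$ is obtained from $V_{ipw, g}$ by removing the factor $F \equiv \prod_{t: A_{t, g} = 0} \mathbbm{1}\{A_{t} = 0\} / (1 - \Pi_{t})$, and each such factor satisfies $\E[\mathbbm{1}\{A_{t} = 0\}/(1 - \Pi_{t}) \mid \bar{H}_{t}, S_{t}] = 1$, it suffices to prove $\E[(F - 1) V_{nde\mbox{-}ipw, g}] = 0$. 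Before doing so I would invoke Lemma~\ref{lem:nde-ipw-1}, whose hypotheses (the regime depends on the past only through $(\bar{r}_{t}, \bar{c}_{t}, \bar{s}_{t - 1})$ together with the NDE$(\bar{C}_{K}, \bar{R}_{K}^{\ast})$ part of the assumption) guarantee that $V_{nde\mbox{-}ipw, g}$ is a genuine statistic and, crucially, that on the (non-NDE-censored) support of $V_{nde\mbox{-}ipw, g}$ every counterfactual decision $S_{m, g}, A_{m, g}$ can be written as a function of $\bar{H}_{t}$ and $(\underaccent{\bar}{S}_{t}, \underaccent{\bar}{C}_{t + 1}, \underaccent{\bar}{R}_{t + 1}^{\ast})$ — i.e.\ through the underlying test results $\underaccent{\bar}{R}^{\ast}$ rather than the observed, missingness-contaminated $\underaccent{\bar}{R}$. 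This ``$R^{\ast}$-representation'' is what will let me cast residual functionals into the form to which restriction \eqref{rest2} applies.

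I would delete the factors in $F$ one time point at a time, processing $t = K, \ldots, 0$. Fix a time $t$ with $A_{t, g} = 0$; after the later factors have been removed, the tail multiplying $\mathbbm{1}\{A_{t} = 0\}/(1 - \Pi_{t})$ is $P_{t} = W_{t + 1}^{-1} G_{t} \prod_{m > t} \mathbbm{1}\{S_{m} = S_{m, g}\}\, Y$, where $G_{t} = \prod_{m > t, A_{m, g} = 1} \mathbbm{1}\{A_{m} = 1\}/\Pi_{m}$ collects the retained testing weights. Conditioning on $(\bar{H}_{t}, S_{t}, A_{t})$ and using that the deleted factor has conditional mean $1$, the whole step reduces to the single assertion that $\E[P_{t} \mid \bar{H}_{t}, S_{t}, A_{t}]$ does not depend on $A_{t}$. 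The obstruction is that $P_{t}$ still contains the future testing weights $G_{t}$, so it is not yet of the admissible form $b_{t}(\bar{H}_{t}, \underaccent{\bar}{S}_{t}, \underaccent{\bar}{C}_{t + 1}, \underaccent{\bar}{R}_{t + 1}^{\ast}, Y^{d}) / W_{t + 1}$ required by \eqref{rest2}; this is the main difficulty.

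To remove $G_{t}$ I would run an inner backward recursion over $m = K, \ldots, t + 1$ that integrates out each retained weight $\mathbbm{1}\{A_{m} = 1\}/\Pi_{m}$ to $1$. At stage $m$ the deeper stages have already reduced the part beyond $m$ to a testing-free object, so the quantity multiplying $\mathbbm{1}\{A_{m} = 1\}/\Pi_{m}$ is $b_{m}/W_{m + 1}$ with $b_{m} = \prod_{m' > m} \mathbbm{1}\{S_{m'} = S_{m', g}\}\, Y$ (the time-$m$ treatment indicator and weight being $(\bar{H}_{m}, S_{m})$-measurable and hence pulling out); by the $R^{\ast}$-representation this $b_{m}$ is a function of $(\bar{H}_{m}, \underaccent{\bar}{S}_{m}, \underaccent{\bar}{C}_{m + 1}, \underaccent{\bar}{R}_{m + 1}^{\ast}, Y^{d})$ alone, so the $(\bar{C}_{K}, \bar{R}_{K}^{\ast}, Y^{d})$ version of \eqref{rest2} gives $\E[b_{m}/W_{m + 1} \mid \bar{H}_{m}, S_{m}, A_{m}] = \E[b_{m}/W_{m + 1} \mid \bar{H}_{m}, S_{m}]$; a further expectation given $(\bar{H}_{m}, S_{m})$ then sends $\mathbbm{1}\{A_{m} = 1\}/\Pi_{m}$ to its conditional mean $1$. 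Iterating collapses $P_{t}$ to $\tilde{P}_{t} = W_{t + 1}^{-1} \prod_{m > t} \mathbbm{1}\{S_{m} = S_{m, g}\}\, Y = b_{t}/W_{t + 1}$, which is now admissible, and one last application of \eqref{rest2} at time $t$ yields the required $A_{t}$-freeness of $\E[\tilde{P}_{t} \mid \bar{H}_{t}, S_{t}, A_{t}]$. Summing the telescoped identities over $t$ gives $\E[V_{nde\mbox{-}ipw, g}] = \E[V_{ipw, g}] = \theta_{g}$. (The convention $A_{K} \equiv A_{K, g} \equiv 0$ makes the endpoint degenerate, and NDE$(Y^{d})$ enters through \eqref{rest2} at the final time, equivalently guaranteeing $Y = Y_{g}$ on the support.)

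Finally, to upgrade the unbiasedness $\E[V_{nde\mbox{-}ipw, g}] = \theta_{g}$ to RAL-ness of $\tilde{\theta}_{nde\mbox{-}ipw, g} = \BP_{n}[\hat{V}_{nde\mbox{-}ipw, g}]$, I would write $\tilde{\theta}_{nde\mbox{-}ipw, g} - \theta_{g} = \BP_{n}[V_{nde\mbox{-}ipw, g} - \theta_{g}] + \BP_{n}[\hat{V}_{nde\mbox{-}ipw, g} - V_{nde\mbox{-}ipw, g}]$. The first term is an average of i.i.d.\ mean-zero variables and supplies the influence function $V_{nde\mbox{-}ipw, g} - \theta_{g}$ by the CLT; the second term, the contribution of estimating the nuisance weights $W_{t}$ and $\Pi_{t}$, is $o_{p}(n^{-1/2})$ under the regularity conditions assumed throughout (Donsker conditions, or the cross-fitting conditions of Section~\ref{sec:feasible}), which also give local asymptotic unbiasedness. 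I expect the genuine mathematical content to lie entirely in the nested-induction step above — arranging each residual functional into the admissible $b_{t}/W_{t + 1}$ form so that \eqref{rest2} can be invoked — with the RAL conclusion being routine given the paper's standing regularity assumptions.
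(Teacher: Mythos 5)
Your argument is correct in outline, but it takes a genuinely different route from the paper's. The paper proves Lemma \ref{lem:nde-ipw-2} indirectly: it first establishes Lemma \ref{lem:nde-ipw-3}, namely that $V_{nde\mbox{-}ipw, g} - V_{ipw, g}$, after subtracting its projection onto $\Lambda_{\mathsf{ancillary}}$, lies in the ortho-complement $\Lambda_{\mathsf{NDE}, obs}^{c, r^{\ast} \perp}$ characterized in Theorem \ref{thm:obs_nde}, and then deduces $\E[V_{nde\mbox{-}ipw, g}] = \theta_{g}$ from the fact that an influence function plus an element of the ortho-complement is again an influence function (Corollary \ref{cor:nde-ipw-3}). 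The membership verification uses the same telescoping of the two IPW products over time that you use, but each telescoped term --- which still carries the retained future testing weights $\prod_{m > t} (A_{m}/\Pi_{m})^{i_{m}}$ --- is handled algebraically, by expanding $A_{m}/\Pi_{m} = (A_{m}/\Pi_{m} - 1) + 1$ and absorbing everything except the factor at the maximal index into an admissible $B$, rather than by your iterated-conditioning ``inner recursion.'' Your direct route (closer in spirit to the original argument of Robins et al.\ 2008 that the paper cites) avoids the tangent-space machinery entirely and proves only what the lemma asserts; the paper's route costs more setup but additionally delivers the influence-function identity of Corollary \ref{cor:nde-ipw-3}, which is reused in Section \ref{sec:formal}.

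Two points in your sketch deserve explicit care if written out. First, the index sets $\{t : A_{t, g} = 0\}$ and $\{m > t : A_{m, g} = 1\}$ are random (they are determined by $(\bar{C}, \bar{R}^{\ast}, \bar{S})$ through the regime), so the telescoping and the inner recursion must either be carried out with the indicators $\mathbbm{1}\{A_{m, g} = i_{m}\}$ absorbed into the admissible $b$'s --- which is legitimate precisely because of the $R^{\ast}$-representation you invoke --- or, as the paper does, by summing over all patterns $\bar{i}_{K} \in \{0, 1\}^{K}$. Second, the weight deletions in the inner recursion must be performed under the full expectation via the tower property, so that after deleting the weight at the maximal retained time $m^{\ast}$ the integrand is again the \emph{raw} product $\prod_{m' > m^{\ast\ast}} \mathbbm{1}\{S_{m'} = S_{m', g}\} Y / W_{m^{\ast\ast} + 1}$ times $(\bar{H}_{m^{\ast\ast}}, \bar{R}^{\ast}_{m^{\ast\ast}}, S_{m^{\ast\ast}})$-measurable factors; if one instead literally replaced the tail by its conditional expectation given $(\bar{H}_{m^{\ast}}, S_{m^{\ast}})$ and iterated, the resulting function would depend on the intermediate testing indicators through $\bar{H}_{m^{\ast}}$ and would no longer be admissible for the restriction \eqref{rest2} at the next stage. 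Your phrasing (``the deeper stages have already reduced the part beyond $m$ to a testing-free object \ldots $b_{m} = \prod_{m' > m} \mathbbm{1}\{S_{m'} = S_{m', g}\} Y$'') indicates the correct reading, so I regard this as a presentational rather than a substantive gap.
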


\begin{remark}
It is of interest that the NDE$(\bar{C}_{K}, \bar{R}_{K}^{\ast})$ assumption was already required for $\hat{V}_{nde\mbox{-}ipw, g}$ to even be a statistic. \citet{robins2008estimation} proved Lemma \ref{lem:nde-ipw-2} under the stronger NDE$(\bar{L}_{K}, \bar{R}_{K}^{\ast}, Y^{d})$ assumption; inspection of their proof shows they actually proved Lemma \ref{lem:nde-ipw-2}; in fact, they showed that Lemma \ref{lem:nde-ipw-2} remains true without requiring $\Pi_{m} \equiv \Pr (A_{m} = 1 | \bar{H}_{m}, S_{m})$ be bounded away from 1.
\end{remark}

Here we offer an alternate proof of this lemma that reveals the relationship of $V_{nde\mbox{-}ipw, g}$ to our earlier discussions of semiparametric models and their tangent spaces. Corollary \ref{cor:nde-ipw-3} below shows that Lemma \ref{lem:nde-ipw-2} follows directly from Lemma \ref{lem:nde-ipw-3} below. In our proof of Lemma \ref{lem:nde-ipw-3} in Appendix \ref{app:lem_err}, we show that the estimator $\hat{\theta}_{nde\mbox{-}ipw, g}$ (and its feasible version $\tilde{\theta}_{nde\mbox{-}ipw, g}$) is a member of a much larger class of estimators to which the results we obtain for $\hat{\theta}_{nde\mbox{-}ipw, g}$ (and its feasible version $\tilde{\theta}_{nde\mbox{-}ipw, g}$) also apply.

The model $\mathbb{M}_{np, obs}$ for the observed data defined by the modified ID assumptions has tangent space $\Lambda = L_{2, 0} (P)$, as the model places no restrictions on the distribution of $O$. It follows that the functional $\theta_{g} = \E [V_{ipw, g}]$ has a unique influence function $\IF_{ipw, g} \equiv V_{ipw, g} - \Pi [V_{ipw, g} | \Lambda_{\mathsf{ancillary}}] - \theta_{g}$ in model $\mathbb{M}_{np, obs}$, which is thus also an influence function in the smaller model $\mathbb{M}_{\mathsf{NDE}, obs} (\bar{C}_{K}, \bar{R}_{K}^{\ast}, Y^{d})$.

\begin{lemma}
\label{lem:nde-ipw-3} 
Under the assumptions of Lemma \ref{lem:nde-ipw-2}, $\IF_{ipw, g} - \theta_{g} - \left\{ V_{nde\mbox{-}ipw, g} - \Pi [V_{nde\mbox{-}ipw, g} | \Lambda_{\mathsf{ancillary}}] \right\} \in \Lambda_{\mathsf{NDE}, obs}^{c, r^{\ast} \perp}$.
\end{lemma}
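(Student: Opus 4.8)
The plan is to reduce the displayed quantity to a single residual and then exhibit that residual as an element of $\Lambda_{\mathsf{NDE}, obs}^{c, r^{\ast} \perp}$. Substituting $\IF_{ipw, g} = V_{ipw, g} - \Pi[V_{ipw, g} | \Lambda_{\mathsf{ancillary}}] - \theta_{g}$ and cancelling the centering constants, the quantity in the statement equals
\[ R \equiv \{ V_{ipw, g} - V_{nde\mbox{-}ipw, g} \} - \Pi[ V_{ipw, g} - V_{nde\mbox{-}ipw, g} \, | \, \Lambda_{\mathsf{ancillary}} ]. \]
Because Lemma \ref{lem:nde-ipw-2} is to be obtained as a corollary, I may not assume $\E[V_{nde\mbox{-}ipw, g}] = \theta_{g}$; instead, establishing $R \in \Lambda_{\mathsf{NDE}, obs}^{c, r^{\ast} \perp}$ will by itself force $\E[R] = 0$, hence $\E[V_{nde\mbox{-}ipw, g}] = \E[V_{ipw, g}] = \theta_{g}$, thereby recovering Lemma \ref{lem:nde-ipw-2}. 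So the whole content is the membership $R \in \Lambda_{\mathsf{NDE}, obs}^{c, r^{\ast} \perp}$.

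The main computation is a telescoping of the ratio of the two weights. Setting $\rho_{t} \equiv \{ \mathbbm{1}\{A_{t} = 0\} / (1 - \Pi_{t}) \}^{1 - A_{t, g}}$, the two estimators differ exactly by the factors indexed by $\{ t : A_{t, g} = 0 \}$, so $V_{ipw, g} = ( \prod_{t = 0}^{K} \rho_{t} ) V_{nde\mbox{-}ipw, g}$. Combining $\prod_{t = 0}^{K} \rho_{t} - 1 = \sum_{t = 0}^{K} ( \prod_{m < t} \rho_{m} ) ( \rho_{t} - 1 )$ with $\rho_{t} - 1 = -(A_{t} - \Pi_{t})/(1 - \Pi_{t})$ when $A_{t, g} = 0$ and $\rho_{t} - 1 = 0$ otherwise gives
\[ V_{ipw, g} - V_{nde\mbox{-}ipw, g} = - \sum_{t : A_{t, g} = 0} V_{nde\mbox{-}ipw, g} \left( \prod_{m < t} \rho_{m} \right) \frac{A_{t} - \Pi_{t}}{1 - \Pi_{t}}. \]
Each summand carries the factor $A_{t} - \Pi_{t} = A_{t} - \E[A_{t} | \bar{H}_{t}, S_{t}]$, which is precisely the testing score that generates $\Lambda_{\mathsf{NDE}}^{\perp}$ through the objects $D_{b, t}$ and $T_{b, t}$ of Theorem \ref{lem:nuisance_nde} and its $(\bar{C}_{K}, \bar{R}_{K}^{\ast})$-generalization. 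I would match the $t$-th summand to such an object by reading off the coefficient $V_{nde\mbox{-}ipw, g} (\prod_{m < t} \rho_{m}) W_{t + 1} / (1 - \Pi_{t})$ as the candidate $b_{t}$, and then residualize off $\Lambda_{\mathsf{ancillary}}$.

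Two things must then be checked. First, the coefficient must be admissible, i.e. a function of only the variables allowed in $\mathcal{B}_{t}^{c}$. Under the hypotheses of Lemma \ref{lem:nde-ipw-2} --- namely $g_{t}(\bar{h}_{t})$ depending on the past only through $(\bar{s}_{t - 1}, \bar{c}_{t}, \bar{r}_{t})$ and the NDE$(\bar{R}_{K}^{\ast})$ assumption identifying $R_{t}^{\ast}$ with the observed $R_{t}$ along $g$-compatible paths, exactly as in Lemma \ref{lem:nde-ipw-1} --- the indicators $\mathbbm{1}\{S_{s} = S_{s, g}\}$ and the regime-dependent weights become functions of $(\bar{H}_{t}, \underaccent{\bar}{S}_{t}, \underaccent{\bar}{C}_{t + 1}, \underaccent{\bar}{R}_{t + 1}^{\ast}, Y^{d})$. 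Second, since $\Lambda_{\mathsf{ancillary}} \subset \Lambda_{\mathsf{NDE}, obs}^{c, r^{\ast}}$, the residualization is internal to the model's tangent space, and summing the residualized summands places $R$ in the ortho-complement, provided these residuals are the elements supplied by the closed-form observed-data characterization of $\Lambda_{\mathsf{NDE}, obs}^{c, r^{\ast} \perp}$ in Theorem \ref{thm:obs_nde} and Corollary \ref{lem:obs_proj} of Appendix \ref{app:close_obs}.

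The hard part is the coefficient of $A_{t} - \Pi_{t}$: through $V_{nde\mbox{-}ipw, g}$ it contains the \emph{future} inverse testing-probability weights $\prod_{s > t,\, A_{s, g} = 1} \mathbbm{1}\{A_{s} = 1\}/\Pi_{s}$, which are not measurable with respect to $(\bar{H}_{t}, \underaccent{\bar}{S}_{t}, \underaccent{\bar}{C}_{t + 1}, \underaccent{\bar}{R}_{t + 1}^{\ast}, Y^{d})$ and so cannot enter a bare $b_{t} \in \mathcal{B}_{t}^{c}$. The resolution I would pursue exploits that each such factor has conditional mean one given its past, $\E[\mathbbm{1}\{A_{s} = 1\}/\Pi_{s} | \bar{H}_{s}, S_{s}] = 1$: processing the telescoping sum in reverse time and taking conditional expectations sequentially, the future testing weights can be absorbed modulo $\Lambda_{\mathsf{NDE}, obs}^{c, r^{\ast}}$, which is exactly the operation the observed-data projection of Theorem \ref{thm:obs_nde} encodes. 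Showing that this absorption is \emph{exact} --- that the discarded pieces genuinely lie in the tangent space rather than merely being mean zero --- together with the bookkeeping keeping every coefficient a function of observed data, is where the real effort lies. The double robustness of Theorem \ref{thm:tdr} serves as a useful consistency check that the resulting $R$ has the required orthogonality and mean-zero properties.
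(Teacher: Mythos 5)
Your reduction to showing
\[
R \equiv \{V_{ipw, g} - V_{nde\mbox{-}ipw, g}\} - \Pi[V_{ipw, g} - V_{nde\mbox{-}ipw, g} \mid \Lambda_{\mathsf{ancillary}}] \in \Lambda_{\mathsf{NDE}, obs}^{c, r^{\ast} \perp}
\]
and your telescoping of $\prod_t \rho_t - 1$ to extract a factor $A_t - \Pi_t$ from each summand are exactly the paper's first two moves (the paper writes the telescoping after first expanding over all testing-regime paths $\bar{i}_K \in \{0,1\}^K$, which is what makes the regime indicators $\mathbbm{1}\{A_{t,g} = i_t\}$, $\mathbbm{1}\{S_t = S_{t,g}\}$ honest functions of the observed data via the NDE$(\bar{C}_K, \bar{R}_K^{\ast})$ assumption --- the step you correctly delegate to Lemma \ref{lem:nde-ipw-1}). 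The gap is in the last step, and it is precisely the one you flag: the summands carry the future inverse-testing weights $\prod_{t' > t}(A_{t'}/\Pi_{t'})^{i_{t'}}$ through $V_{nde\mbox{-}ipw, g}$, and your proposed resolution --- absorbing them by sequential conditional expectations ``modulo $\Lambda_{\mathsf{NDE}, obs}^{c, r^{\ast}}$'' --- points at the wrong target. Replacing $A_{t'}/\Pi_{t'}$ by its conditional mean $1$ produces remainder terms proportional to $A_{t'}/\Pi_{t'} - 1$, which are \emph{not} tangent-space elements to be discarded; they are additional generators of the ortho-complement attached to the later time $t'$. So the ``absorption modulo the tangent space'' as stated would not close the argument.

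The paper's resolution is structural rather than an absorption: Theorem \ref{thm:obs_nde} characterizes $\Lambda_{\mathsf{NDE}, obs}^{c, r^{\ast} \perp}$ as the $\Lambda_{\mathsf{ancillary}}$-residuals of sums of elements
\[
D_{b, t, \bar{i}_{t-1}, \underaccent{\bar}{i}_{t+1}}^{\text{obs}} = \frac{B_{t, \bar{i}_{t-1}, \underaccent{\bar}{i}_{t+1}}}{W_{t+1}}\left(\frac{A_t}{\Pi_t} - 1\right)\prod_{t' < t}\left(\frac{A_{t'}}{\Pi_{t'}}\right)^{i_{t'}}\prod_{t' > t}\left(\frac{A_{t'}}{\Pi_{t'}}\right)^{i_{t'}},
\]
i.e.\ the future testing weights are built into the generators of the space, so once the telescoped summands are matched to this form the membership is immediate. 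The content you are missing is the proof of Theorem \ref{thm:obs_nde} itself, which re-expands each future factor as $A_{t'}/\Pi_{t'} = (A_{t'}/\Pi_{t'} - 1) + 1$ and redistributes the resulting cross-terms as full-data generators $D^{\text{full}}_{b, s}$ at later times $s$ --- formally the iteration you sketch, but with the remainders collected into the ortho-complement rather than discarded into the tangent space. Your proof is therefore the right skeleton with the decisive lemma unproved; citing Theorem \ref{thm:obs_nde} at the point where you currently invoke the absorption heuristic would complete it along the paper's lines.
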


The corollary below then follows from the fact that in any semiparametric model the sum of any element of the ortho-complement to the tangent space and an influence function is itself an influence function.

\begin{corollary}
\label{cor:nde-ipw-3} 
Under the assumptions of Lemma \ref{lem:nde-ipw-2}, $\IF_{nde\mbox{-}ipw, g}^{c, r^{\ast}} \equiv V_{nde\mbox{-}ipw, g} - \Pi [V_{nde\mbox{-}ipw, g} | \Lambda_{\mathsf{ancillary}}] - \theta_{g}$ is an influence function in model $\mathbb{M}_{\mathsf{NDE}, obs} (\bar{C}_{K}, \bar{R}_{K}^{\ast}, Y^{d})$ and thus $V_{nde\mbox{-}ipw, g}$ has mean $\theta_{g}$.
\end{corollary}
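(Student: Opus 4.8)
The plan is to obtain the Corollary as an immediate consequence of Lemma~\ref{lem:nde-ipw-3} together with the structural description of influence functions reviewed in Section~\ref{sec:review}: in any semiparametric model the set of influence functions is $\mathcal{IF} = \{\EIF + H : H \in \Lambda^{\perp}\}$, so that adding any element of the ortho-complement of the tangent space to one influence function produces another. The whole argument is a two-line application of this principle once the preceding lemma is in hand.

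First I would record that $\IF_{ipw, g} = V_{ipw, g} - \Pi[V_{ipw, g} \mid \Lambda_{\mathsf{ancillary}}] - \theta_{g}$ is already known to be an influence function for $\theta_{g}$ in the nonparametric model $\mathbb{M}_{np, obs}$, and hence---because every path of the submodel $\mathbb{M}_{\mathsf{NDE}, obs}(\bar{C}_{K}, \bar{R}_{K}^{\ast}, Y^{d})$ is a path of $\mathbb{M}_{np, obs}$, so that the pathwise-derivative identity defining an influence function continues to hold against the smaller collection of submodel scores---also an influence function for $\theta_{g}$ in $\mathbb{M}_{\mathsf{NDE}, obs}(\bar{C}_{K}, \bar{R}_{K}^{\ast}, Y^{d})$. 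This embedding is stated in the paragraph preceding the Corollary and requires no new work.

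Next I would invoke Lemma~\ref{lem:nde-ipw-3}, whose content is exactly that $\IF_{nde\mbox{-}ipw, g}^{c, r^{\ast}}$ and $\IF_{ipw, g}$ differ by an element $H \in \Lambda_{\mathsf{NDE}, obs}^{c, r^{\ast} \perp}$. Writing $\IF_{ipw, g} = \EIF + H'$ with $H' \in \Lambda_{\mathsf{NDE}, obs}^{c, r^{\ast} \perp}$ in the submodel, I get $\IF_{nde\mbox{-}ipw, g}^{c, r^{\ast}} = \EIF + (H + H')$ with $H + H' \in \Lambda_{\mathsf{NDE}, obs}^{c, r^{\ast} \perp}$; by the characterization of $\mathcal{IF}$ this is an influence function for $\theta_{g}$ in $\mathbb{M}_{\mathsf{NDE}, obs}(\bar{C}_{K}, \bar{R}_{K}^{\ast}, Y^{d})$, which is the first assertion of the Corollary.

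Finally, since every influence function has mean zero and $\Pi[V_{nde\mbox{-}ipw, g} \mid \Lambda_{\mathsf{ancillary}}] \in L_{2, 0}(P)$ is itself mean zero, taking expectations in $\IF_{nde\mbox{-}ipw, g}^{c, r^{\ast}} = V_{nde\mbox{-}ipw, g} - \Pi[V_{nde\mbox{-}ipw, g} \mid \Lambda_{\mathsf{ancillary}}] - \theta_{g}$ gives $0 = \E[V_{nde\mbox{-}ipw, g}] - \theta_{g}$, i.e. $\E[V_{nde\mbox{-}ipw, g}] = \theta_{g}$, which recovers Lemma~\ref{lem:nde-ipw-2}. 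With Lemma~\ref{lem:nde-ipw-3} granted there is essentially no obstacle; the one point to handle carefully is that ``influence function'' already encodes mean zero, so one must avoid circularity by deducing the mean zero of $\IF_{nde\mbox{-}ipw, g}^{c, r^{\ast}}$ from the mean zero of $\IF_{ipw, g}$ and of the $\Lambda_{\mathsf{NDE}, obs}^{c, r^{\ast} \perp}$ summand, rather than from the identity $\E[V_{nde\mbox{-}ipw, g}] = \theta_{g}$ that one is ultimately trying to prove.
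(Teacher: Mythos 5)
Your proposal is correct and is essentially the paper's own argument: the paper derives the corollary in one line from the fact that adding an element of $\Lambda_{\mathsf{NDE}, obs}^{c, r^{\ast} \perp}$ (supplied by Lemma \ref{lem:nde-ipw-3}) to the influence function $\IF_{ipw, g}$ yields another influence function, whose mean-zero property then gives $\E[V_{nde\mbox{-}ipw, g}] = \theta_{g}$. Your added care about deducing the mean-zero property noncircularly is a sound reading of the same argument, not a different route.
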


This following remark summarizes the implications of the above results. 

\begin{remark}
\leavevmode

\begin{enumerate}
\item Lemma \ref{lem:nde-ipw-3} fails to hold if we replace $\Lambda_{\mathsf{NDE}, obs}^{c, r^{\ast} \perp}$ with the ortho-complement of any model for which the NDE assumption only holds for a strict subset of $\{\bar{C}_{K}, \bar{R}_{K}^{\ast}, Y^{d}\}$, since $V_{nde\mbox{-}ipw, g} - \Pi [V_{nde\mbox{-}ipw, g} | \Lambda_{\mathsf{ancillary}}]$ explicitly depends on each of $Y^{d}, \bar{C}_{K}$, and $\bar{R}_{K}^{\ast}$. Indeed, unless the NDE$(\bar{C}_{K}, \bar{R}_{K}^{\ast})$ assumption holds, $V_{nde\mbox{-}ipw, g} - \Pi [V_{nde\mbox{-}ipw, g} | \Lambda_{\mathsf{ancillary}}]$ is not a function of the observed data. (So neither is $\hat{V}_{nde\mbox{-}ipw, g} - \hat{\Pi} [\hat{V}_{nde\mbox{-}ipw, g} | \Lambda_{\mathsf{ancillary}}]$, where $\hat{\Pi}$ denotes the projection onto the space $\Lambda_{\mathsf{ancillary}}$ with all nuisance functions replaced by their estimates.)

\item Next suppose, we are only willing to assume model $\mathbb{M}_{\mathsf{NDE}}(Y^{d})$ that just imposes the NDE$(Y^{d})$ and modified ID assumptions. Our goal is again to estimate $\theta_{g}$ for a regime $g_{t}$ that depends on $\bar{r}_{t}$, $\bar{c}_{t}$, and $\bar{s}_{t - 1}$ so $\hat{\theta}_{nde\mbox{-}ipw, g}$ (and hence $\tilde{\theta}_{nde\mbox{-}ipw, g}$) is not a statistic. Nevertheless we can still improve upon the efficiency of $\tilde{\theta}_{ipw, g} = \BP_{n} [\hat{V}_{ipw, g}]$. We first give the oracle version of the procedure. Let $U_{ipw, g} (\theta) = V_{ipw, g} - \Pi [V_{ipw, g} | \Lambda_{\mathsf{ancillary}}] - \theta_{g}$ so $U_{ipw, g} (\theta_{g}) = \IF_{ipw, g}$ is an influence function for $\theta_{g}$ in $\mathbb{M}_{\mathsf{NDE}} (Y^{d})$. It follows that $\hat{\theta}_{ipw, g} (b_{opt, ipw, g})$ solving $\BP_{n} [U_{ipw} (\theta, b_{opt, ipw, g})] \equiv \BP_{n} [U_{ipw, g} (\theta) - \Pi [U_{ipw, g} (\theta) | \Lambda_{\mathsf{NDE}}^{\perp}]] = 0$ is semiparametric efficient in model in $\mathbb{M}_{\mathsf{NDE}}(Y^{d})$, as its influence function is $\EIF = \IF_{ipw, g} - \Pi [\IF_{ipw, g} | \Lambda_{\mathsf{NDE}}^{\perp}]$ under the model. Note that 
\begin{equation*}
U_{ipw, g} (\theta) - \Pi [U_{ipw, g} (\theta) | \Lambda_{\mathsf{NDE}}^{\perp}] = V_{ipw, g} - \theta - \Pi [V_{ipw, g} | \Lambda_{\mathsf{ancillary}}] - \Pi [V_{ipw, g} | \Lambda_{\mathsf{NDE}}^{\perp}],
\end{equation*}
as neither $\Pi [V_{ipw, g} | \Lambda_{\mathsf{ancillary}}]$ nor $\theta$ has a projection on $\Lambda_{\mathsf{NDE}}^{\perp}$. Hence $\hat{\theta}_{ipw, g} (b_{opt, ipw, g}) = \BP_{n} \{V_{ipw, g} - \Pi [V_{ipw, g} | \Lambda_{\mathsf{ancillary}}] - \Pi [V_{ipw, g} | \Lambda_{\mathsf{NDE}}^{\perp}]\}$. We can obtain a feasible version $\tilde{\theta}_{ipw, g} (\hat{b}_{opt, ipw, g})$ of $\hat{\theta}_{ipw, g} (b_{opt, ipw, g})$ using the analogue of the algorithm proposed in Section \ref{sec:feasible}.

\item Suppose we again assume model $\mathbb{M}_{\mathsf{NDE}, obs} (\bar{C}_{K}, \bar{R}_{K}^{\ast}, Y^{d})$ is true so both $\tilde{\theta}_{ipw, g} = \BP_{n} [\hat{V}_{ipw, g}]$ and $\tilde{\theta}_{nde\mbox{-}ipw, g} = \BP_{n} [\hat{V}_{nde\mbox{-}ipw, g}]$ are consistent for $\theta_{g}$. Furthermore  $\tilde{\theta}_{ipw, g} - \hat{\Pi} [V_{ipw, g} | \Lambda_{\mathsf{ancillary}}]$ and $\tilde{\theta}_{nde\mbox{-}ipw, g} - \hat{\Pi} [V_{nde\mbox{-}ipw, g} | \Lambda_{\mathsf{ancillary}}]$ are RAL with the influence functions $\IF_{ipw, g}$ and $\IF_{nde\mbox{-}ipw, g}^{c, r^{\ast}}$ for $\theta_{g}$. Recall that in \citet{caniglia2019emulating} $\tilde{\theta}_{nde\mbox{-}ipw, g}$ was 50 times as efficient as $\tilde{\theta}_{ipw, g}$. Now the asymptotic variances of the single sample estimators $\tilde{\theta}_{ipw, g}$ and $\tilde{\theta}_{nde\mbox{-}ipw, g}$ used by \citet{caniglia2019emulating} approximate $\var (\IF_{ipw, g})$ and $\var (\IF_{nde\mbox{-}ipw, g}^{c, r^{\ast}})$, because flexible, high dimensional models were used in estimating $\Pi_{t}$ and $W_{0}$. But (i) neither $\var (\IF_{ipw, g})$ nor $\var (\IF_{nde\mbox{-}ipw, g}^{c, r^{\ast}})$ dominates the other at all laws in the model and both exceed the semiparametric variance bound of model $\mathbb{M}_{\mathsf{NDE}, obs} (\bar{C}_{K},\bar{R}_{K}^{\ast },Y^{d})$. Rather the oracle estimators $\hat{\theta}_{ipw, g} (b_{opt, ipw, g}^{c, r^{\ast}}) = \BP_{n} \{V_{ipw, g} - \Pi [V_{ipw, g} | \Lambda_{\mathsf{ancillary}}] - \Pi [V_{ipw, g} | \Lambda_{\mathsf{NDE}, obs}^{c, r^{\ast} \perp}]$ solving $\BP_{n} [U_{ipw, g} (\theta) - \Pi [U_{ipw, g} (\theta) | \Lambda_{\mathsf{NDE}, obs}^{c, r^{\ast} \perp}]] = 0$ and $\hat{\theta}_{nde\mbox{-}ipw, g} (b_{opt, nde\mbox{-}ipw, g}^{c, r^{\ast}}) = \BP_{n} \{V_{nde\mbox{-}ipw, g} - \Pi [V_{nde\mbox{-}ipw, g} | \Lambda_{\mathsf{ancillary}}] - \Pi [V_{nde\mbox{-}ipw, g} | \Lambda_{\mathsf{NDE}, obs}^{c, r^{\ast} \perp}]\}$ solving $\BP_{n} [U_{nde\mbox{-}ipw, g} (\theta) - \Pi [U_{nde\mbox{-}ipw, g} (\theta) | \Lambda_{\mathsf{NDE}, obs}^{c, r^{\ast} \perp}]] = 0$ are semiparametric efficient in $\mathbb{M}_{\mathsf{NDE}, obs} (\bar{C}_{K}, \bar{R}_{K}^{\ast}, Y^{d})$, with common influence function $\EIF_{obs}^{c, r^{\ast}} = \IF_{ipw, g} - \Pi [\IF_{ipw, g} | \Lambda_{\mathsf{NDE}, obs}^{c, r^{\ast} \perp}] = \IF_{nde\mbox{-}ipw, g}^{c, r^{\ast}} - \Pi [\IF_{nde\mbox{-}ipw, g}^{c, r^{\ast}} | \Lambda_{\mathsf{NDE}, obs}^{c, r^{\ast} \perp}]$. Again the algorithm in Section \ref{sec:feasible} can be used to obtain feasible efficient estimators $\tilde{\theta}_{ipw, g} (\hat{b}_{opt, ipw, g}^{c, r^{\ast}})$ and $\tilde{\theta}_{nde\mbox{-}ipw, g} (\hat{b}_{opt, nde\mbox{-}ipw, g}^{c, r^{\ast}})$ with influence function $\EIF_{obs}^{c, r^{\ast}}$.
\end{enumerate}
\end{remark}

\section{Simulation studies}
\label{sec:simulation}

\subsection{A simple data generating process related to \citet{caniglia2019emulating}}

This simulation study explores in greater depth data generating processes (DGPs) under which $\tilde{\theta}_{nde\mbox{-}ipw, g}$ is much more efficient than $\tilde{\theta}_{ipw, g}$. The study also relates these estimators to estimators of the optimal value function $\E [Y_{g^{opt}}]$ based on opt-SNMMs considered in Section \ref{sec:estimator}.

We consider a much simplified version of the HIV study of \citet{caniglia2019emulating} in which $K = 1$ with time-ordered random variables: $A_{0}, R_{1}^{\ast}, R_{1}, S_{1}, Y^{d}, Y$. Here $A_{0}$ is the indicator of having an HIV RNA test; $R_{1}$ is the test result: $R_{1}$ is $?$ if $A_{0} = 0$ and is $R_{1}^{\ast}$ if $A_{0} = 1$; $R_{1}^{\ast}$ is the CD4 count dichotomized as $R_{1}^{\ast} = 0$ if low and $R_{1}^{\ast} = 1$ otherwise; $S_{1}$ is the indicator of treatment (i.e. of switching to second line anti-retroviral therapy); $Y^{d}$ is a health utility that is a decreasing function of HIV RNA levels at time $K + 1$ with $K + 1 = 2$; $Y \coloneqq Y^{d} - c^{\ast} A_{0}$ with $c^{\ast} = 3$ is the overall observed utility incorporating the cost of testing. Since we are including the cost of testing we replace $Y$ by $Y_{g} = Y + \mathbbm{1} [A_{0} = 1, A_{0, g} = 0]$ in defining $\tilde{\theta}_{nde\mbox{-}ipw, g}$ and $\tilde{\theta}_{ipw, g}$ as discussed in Remark \ref{rem:cost}. $Y_{g}$ is just the health utility $Y^{d}$ for any regime with $A_{0, g} \equiv 0$.

Our DGP was designed such that (i) the NDE$(R_{1}^{\ast },Y^{d})$ and modified ID assumptions hold and (ii) $g^{opt}=(a_{0}^{opt}=0,s_{1}^{opt}=1)$ and hence $\theta _{g^{opt}}=\E[Y_{a_{0}=0,s_{1}=1}]\equiv \E [Y_{a_{0}=0,s_{1}=1}^{d}]$. Simulations with more complex DGPs with $g^{opt}$ depending on $R$ can be found in Appendix \ref{sim:further}.

The data generating process (DGP) is:

\begin{itemize}
\item $A_{0} \sim \text{Bernoulli} (\rho)$, with $\rho \in \{0.1, 0.2, \ldots, 0.8, 0.9, 0.95, 0.99\}$;

\item $R_{1}^{\ast} \sim \text{Bernoulli} (0.5)$;

\item $R_{1} = R_{1}^{\ast}$ if $A_{0} = 1$ and ? otherwise;

\item $S_{1} \sim \text{Bernoulli} (0.15) \mathbbm{1} \{A_{0} = 1, R_{1} = 0\} + \text{Bernoulli} (0.85) \mathbbm{1} \{A_{0} = 1, R_{1} = 1\} + \text{Bernoulli} (0.15) \mathbbm{1} \{A_{0} = 0, R_{1} = ?\}$;

\item $Y^{d} \sim N(-5 R_{1}^{\ast} + 2 S_{1} R_{1}^{\ast} - 0.1S_{1} (1 - R_{1}^{\ast}), 1)$;

\item $Y \coloneqq Y^{d} - c^{\ast} A_{0}$ with $c^{\ast} = 3$.
\end{itemize}

Since $A_{0}$ is neither a direct cause of $Y$ nor of $R_{1}^{\ast}$, the NDE$(R_{1}^{\ast}, Y^{d})$ assumption holds. The sample size was $n = 2.5 \times 10^{4}$ or $5 \times 10^{4}$. We computed the ``truth'' based on a single simulated dataset with sample size $10^{7}$. \textit{Monte Carlo} summary statistics are based on 100 replications. The R codes for the simulation studies are provided in the online supplementary materials.

We note several other properties of our DGP. Whenever $K = 1$, the explicit representation of $\Lambda _{\mathsf{NDE}, obs}^{r^{\ast} \perp}$ in Appendix \ref{app:close_obs} shows that the ortho-complements to the observed data tangent spaces in models $\mathbb{M}_{\mathsf{NDE}} (Y^{d})$ and $\mathbb{M}_{\mathsf{NDE}, obs} (\bar{R}_{K}^{\ast}, Y^{d})$ are equal, i.e. $\Lambda_{\mathsf{NDE}}^{\perp} = \Lambda_{\mathsf{NDE}, obs}^{r^{\ast} \perp}$ [The intuition is that $\Lambda_{\mathsf{NDE}, obs}^{r^{\ast} \perp}$ cannot be a function of either $R_{1}^{\ast}$ or $R_{1}$ as $(A_{0} - \rho) R_{1}^{\ast}$ has mean zero but is not a statistic and the statistic $(A_{0} - \rho) R_{1}^{\ast} A_{0} / \rho$ does not have mean zero.] Thus, as models for the observed data, the two models are locally equivalent and therefore we can restrict attention to $\mathbb{M}_{\mathsf{NDE}} (Y^{d})$. We further note that as $\rho \rightarrow 1$, the probability of the event $\{A_{0} = 1, A_{0, g^{opt}} = 0\}$ that testing occurred too early with regard to $g^{opt}$ increases to 1, mimicking what was observed in \citet{caniglia2019emulating} discussed in Section \ref{sec:intuition}. We are thus interested in precisely how $\rho$ affects the efficiency to be gained by imposing the NDE assumption.

We now turn to data analysis. Since $A_{0},R_{1},S_{1}$ are discrete, we estimated their joint distribution by their empirical distribution, which is also the nonparametric MLE. Because $Y$ is continuous, we approximated $\Lambda_{\mathsf{NDE}}^{\perp}$ by the large, but strict, subspace $\Omega$ of Section \ref{sec:optimal_dr} using natural spline transformations $\bm{\varphi}$ with $\xi = 6$. In our simulation, $\Omega = \{d_{0} T_{b^{\ast}, 0}: d_{0} \in \BR^{2 \xi}\}$, with $b_{0}^{\ast} (\bar{H}_{0}, \underaccent{\bar}{S}_{0}, Y^{d}) \equiv b_{0}^{\ast} (S_{1}, Y^{d}) \equiv (S_{1} \bm{\varphi} (Y^{d})^{\top}, (1 - S_{1})\bm{\varphi} (Y^{d})^{\top})^{\top}$. Hence, for any $U$, $\Pi [U | \Omega] = d_{0}^{\ast} T_{b^{\ast}, 0} \equiv T_{b_{sub}, 0}$ (following the notation in Corollary \ref{cor:sub}), with $d_{0}^{\ast} = \E [U T_{b^{\ast}, 0}^{\top}] \{\E [T_{b^{\ast}, 0} T_{b^{\ast}, 0}^{\top}]\}^{-1}$ and $T_{b^{\ast}, 0}$ as in equation \eqref{tbt-def}, with $b_{t}$ replaced by $b_{0}^{\ast} (S_{1}, Y^{d})$. Because we use substitution estimators based on the empirical distribution of discrete variables with few levels, sample splitting was unnecessary \citep{robins1995semiparametric}. The feasible version of $\Pi [U | \Omega]$ in our simulation is denoted as $\hat{T}_{b_{sub}, 0} \coloneqq \hat{\Pi} [U | \Omega] = \tilde{d}_{0}^{\ast} \hat{T}_{b^{\ast}, 0}$, where $\tilde{d}_{0}^{\ast} = \BP_{n} [U T_{b^{\ast}, 0}^{\top}] \{\BP_{n} [\hat{T}_{b^{\ast}, 0} \hat{T}_{b^{\ast}, 0}^{\top}]\}^{-1}$ and $\hat{T}_{b^{\ast}, 0}$ is $T_{b^{\ast}, 0}$ with all unknown nuisance functions replaced by their empirical version as described above. We used the entire sample to estimate $\theta_{g^{opt}}$ with each of the following estimators: 
\begin{align*}
\tilde{\theta}_{ipw, g^{opt}} & \coloneqq \mathbb{P}_{n} [\hat{V}_{ipw, g^{opt}}], \hat{V}_{ipw, g^{opt}} = \frac{(1 - A_{0}) S_{1} Y}{\hat{\Pr} (A_{0} = 0) \hat{\Pr} (S_{1} = 1 | A_{0} = 0, R_{1})}, \\
\tilde{\theta}_{nde\mbox{-}ipw, g^{opt}} & \coloneqq \mathbb{P}_{n} [\hat{V}_{nde\mbox{-}ipw, g^{opt}}], \hat{V}_{nde\mbox{-}ipw, g^{opt}} = \frac{S_{1} Y_{g^{opt}}}{\hat{\Pr} (S_{1} = 1 | A_{0}, R_{1})}, \\
\tilde{\theta}_{ipw, g^{opt}} (\hat{b}_{sub, ipw, g^{opt}}) & \coloneqq \tilde{\theta}_{ipw, g^{opt}} - \mathbb{P}_{n} \{\hat{\Pi} [\hat{V}_{ipw, g^{opt}} - \tilde{\theta}_{ipw, g^{opt}} | \Omega]\}, \\
\tilde{\theta}_{nde\mbox{-}ipw, g^{opt}} (\hat{b}_{sub, nde\mbox{-}ipw, g^{opt}}) & \coloneqq \tilde{\theta}_{nde\mbox{-}ipw, g^{opt}} - \mathbb{P}_{n} \{\hat{\Pi} [\hat{V}_{nde\mbox{-}ipw, g^{opt}} - \tilde{\theta}_{nde\mbox{-}ipw, g^{opt}} | \Omega]\}, \\
\tilde{\theta}_{\mathsf{SNMM}, g^{opt}} (\tilde{\Psi}) & \equiv \text{ the solution to } \mathbb{P}_{n} [\widehat{U}_{\mathsf{SNMM}, g^{opt}} (\theta (\widetilde{\Psi}), \widetilde{\Psi})] = 0, \\
\text{with } & \text{$\widehat{U}_{\mathsf{SNMM}, g^{opt}} (\theta (\widetilde{\Psi}), \widetilde{\Psi})$ defined in equation \eqref{optsnmm} and} \\
& \text{ $\widetilde{\Psi}$ the solution to $\mathbb{P}_{n} [\hat{\mathbb{U}} (\bm{q}, \Psi)] = 0$, defined in Section \ref{sec:review}, } \\
\tilde{\theta}_{\mathsf{SNMM}, g^{opt}} (\tilde{\Psi}; \hat{b}_{sub, g^{opt}}) & \equiv \text{ the solution to } \mathbb{P}_{n} [\widehat{U}_{\mathsf{SNMM}, g^{opt}} (\theta (\widetilde{\Psi} (\hat{b}_{sub, g^{opt}})), \widetilde{\Psi} (\hat{b}_{sub, g^{opt}}))] = 0, \\
\text{with } & \text{$\widetilde{\Psi} (\hat{b}_{sub, g^{opt}})$ the solution to $\mathbb{P}_{n} [\hat{\mathbb{U}} (\bm{q}, \Psi) - \hat{\Pi} [\hat{\mathbb{U}} (\bm{q}, \Psi) | \Omega]] = 0$,}
\end{align*}
where, for the last two estimators, we fit a saturated opt-SNMM. In a saturated opt-SNMM, $\Psi^{\ast}$ is 7 dimensional with $6 = 2 \times 3$ parameters for the effect of $S_{1}$ on $Y_{g}$ within levels of $\bar{H}_{1} = (A_{0}, R_{1})$ and 1 parameter for the effect of $A_{0}$.

\subsection{Simulation results}

In the simulation, at both sample sizes $n = 2.5 \times 10^{4}$ or $5 \times 10^{4}$, the true optimal regime $g^{opt} = (a_{0} = 0, s_{1} = 1)$ is correctly selected in each of the 100 replications by both $\tilde{\Psi}$ and $\tilde{\Psi} (\hat{b}_{sub, g^{opt}})$. Because $\tilde{\theta}_{\mathsf{SNMM}, g^{opt}} (\tilde{\Psi})$ and $\tilde{\theta}_{ipw, g^{opt}}$ are both nonparametric MLEs of $\theta_{g^{opt}}$ under an observed data model that place no restrictions on the distribution of the observed data $O$, we expected and empirically verified that $\tilde{\theta}_{\mathsf{SNMM}, g^{opt}} (\widetilde{\Psi})$ and $\widetilde{\theta}_{ipw, g^{opt}}$ are algebraically identical in each of the 100 replications. Furthermore since $\widetilde{\theta}_{\mathsf{SNMM}, g^{opt}} (\widetilde{\Psi}; \widehat{b}_{sub, g^{opt}})$ and $\widetilde{\theta}_{ipw, g^{opt}} (\widehat{b}_{sub, ipw, g^{opt}})$ are computed by subtracting from $\widetilde{\theta}_{\mathsf{SNMM}, g^{opt}} (\widetilde{\Psi})$ and $\widetilde{\theta}_{ipw, g^{opt}}$ the empirical projections of their algebraically identical empirical influence functions on the same space $\Omega$, we expected and empirically verified that $\widetilde{\theta}_{\mathsf{SNMM}, g^{opt}} (\widetilde{\Psi}; \widehat{b}_{sub, g^{opt}})$ and $\widetilde{\theta}_{ipw, g^{opt}} (\widehat{b}_{sub, ipw, g^{opt}})$ are also algebraically identical. As a result it suffices to report results for $\widetilde{\theta}_{ipw, g^{opt}}$ and $\widetilde{\theta}_{ipw, g^{opt}} (\widehat{b}_{sub, ipw, g^{opt}})$.

The results of the simulation study are in Figure \ref{fig:dgp1value}. The results are consistent with the intuition we gained in Section \ref{sec:intuition}. In Figure \ref{fig:dgp1value} we plot the following as a function of $\rho $: (1) in the left panel, the Monte Carlo variances of $\tilde{\theta}_{ipw, g^{opt}}$ (black), $\tilde{\theta}_{nde\mbox{-}ipw, g^{opt}}$ (blue), $\tilde{\theta}_{ipw, g^{opt}} (\hat{b}_{sub, ipw, g^{opt}})$ (red) and $\tilde{\theta}_{nde\mbox{-}ipw, g^{opt}} (\hat{b}_{sub, nde\mbox{-}ipw, g^{opt}})$ (green); and (2) in the right panel, the inverse relative efficiencies (REs) of $\tilde{\theta}_{nde\mbox{-}ipw, g^{opt}}$ (blue), $\tilde{\theta}_{ipw, g^{opt}} (\hat{b}_{sub, ipw, g^{opt}})$ (red) and $\tilde{\theta}_{nde\mbox{-}ipw, g^{opt}} (\hat{b}_{sub, nde\mbox{-}ipw, g^{opt}})$ (green) against $\tilde{\theta}_{ipw, g^{opt}}$.

We now summarize the results of Figure \ref{fig:dgp1value}:

\begin{itemize}
\item In the left panel of Figure \ref{fig:dgp1value}, the Monte Carlo variance of $\tilde{\theta}_{ipw, g^{opt}}$ increases dramatically as $\rho$ (and thus the probability of the event $[A_{0} = 1, A_{0, g^{opt}} = 0]$) approaches 1. The variance increases because all subjects $i$ with $A_{0, i} = 1$ are censored so $\hat{V}_{ipw, g^{opt}, i} = 0$. In contrast, the Monte Carlo variances of the three estimators that leverage the NDE assumption are stable as $\rho$ varies and cannot be easily distinguished from one another at this scale.

\item In the right panel of Figure \ref{fig:dgp1value}, we compare the inverse RE of the three estimators leveraging the NDE assumption relative to the estimator $\tilde{\theta}_{ipw, g^{opt}}$. As $\rho$ approaches 1, the inverse relative efficiency of the three estimators decreases to $0$. When $\rho$ is far from 1, the estimators $\tilde{\theta}_{ipw, g^{opt}}(\hat{b}_{sub, ipw, g^{opt}})$ and $\tilde{\theta}_{nde\mbox{-}ipw, g^{opt}} (\hat{b}_{sub, nde\mbox{-}ipw, g^{opt}})$ (red and green curves, which are nearly on top of each other) are more efficient than $\tilde{\theta}_{nde\mbox{-}ipw, g^{opt}}$ because the former two are nearly semiparametric efficient in the model $\mathbb{M}_{\mathsf{NDE}} (Y^{d})$, as they are residuals from the projection of an influence function onto a large subspace $\Omega$ of the ortho-complement $\Lambda_{\mathsf{NDE}}^{\perp}$ to the tangent space of $\mathbb{M}_{\mathsf{NDE}} (Y^{d})$. However, as $\rho$ approaches 1, the variance of $\tilde{\theta}_{nde\mbox{-}ipw, g^{opt}}$ converges from above to that of $\tilde{\theta}_{ipw, g^{opt}} (\hat{b}_{sub, ipw, g^{opt}})$ and $\tilde{\theta}_{nde\mbox{-}ipw, g^{opt}} (\hat{b}_{sub, nde\mbox{-}ipw, g^{opt}})$, as the ortho-complement to the tangent space of model $\mathbb{M}_{\mathsf{NDE}} (Y^{d})$ becomes degenerate at $\rho = 1$. When $\rho = 1$ (not shown in the plot), we can show the following: $\tilde{\theta}_{ipw, g^{opt}}$ and, thus $\tilde{\theta}_{ipw, g^{opt}} (\hat{b}_{sub, ipw, g^{opt}})$ are undefined and thus inconsistent due to the positivity violation: $A_{0} = 0$ with probability 1. In contrast, $\tilde{\theta}_{nde\mbox{-}ipw, g^{opt}}$ and $\tilde{\theta}_{nde\mbox{-}ipw, g^{opt}} (\hat{b}_{sub, nde\mbox{-}ipw, g^{opt}})$ are equal with probability 1 and are RAL. In fact, they are semiparametric efficient for $\theta_{g^{opt}}$ in model $\mathbb{M}_{\mathsf{NDE}} (Y^{d})$ since, at $\rho = 1$, $\theta_{g^{opt}}$ is just identified and so has a unique influence function. Further details of this simulation are presented in Appendix \ref{sim:further}.
\end{itemize}
In Appendix \ref{sim:cost}, we provide a second simulation which shows that adjusting for the NDE assumption in an opt-SNMM can greatly improve the efficiency of estimating the VoI in the context of a cost benefit analysis. As mentioned in Section \ref{sec:background}, $\tilde{\theta}_{nde\mbox{-}ipw, g^{opt}}$ can be less efficient than the estimator $\tilde{\theta}_{ipw, g^{opt}}$ that ignores the NDE assumption under certain DGPs. In Appendix \ref{app:worse}, we design a data generating process to illustrate this point. We perform exact calculations of the variance of the oracle estimators $\hat{\theta}_{ipw, g^{opt}}$ and $\hat{\theta}_{nde\mbox{-}ipw, g^{opt}}$ and show $\var [\hat{\theta}_{nde\mbox{-}ipw, g^{opt}}] / \var [\hat{\theta}_{ipw, g^{opt}}] > 1$. We actually show this ratio can be made arbitrarily large by varying a parameter of the DGP.

\begin{figure}[tbp]
\centering
\includegraphics[scale=.35]{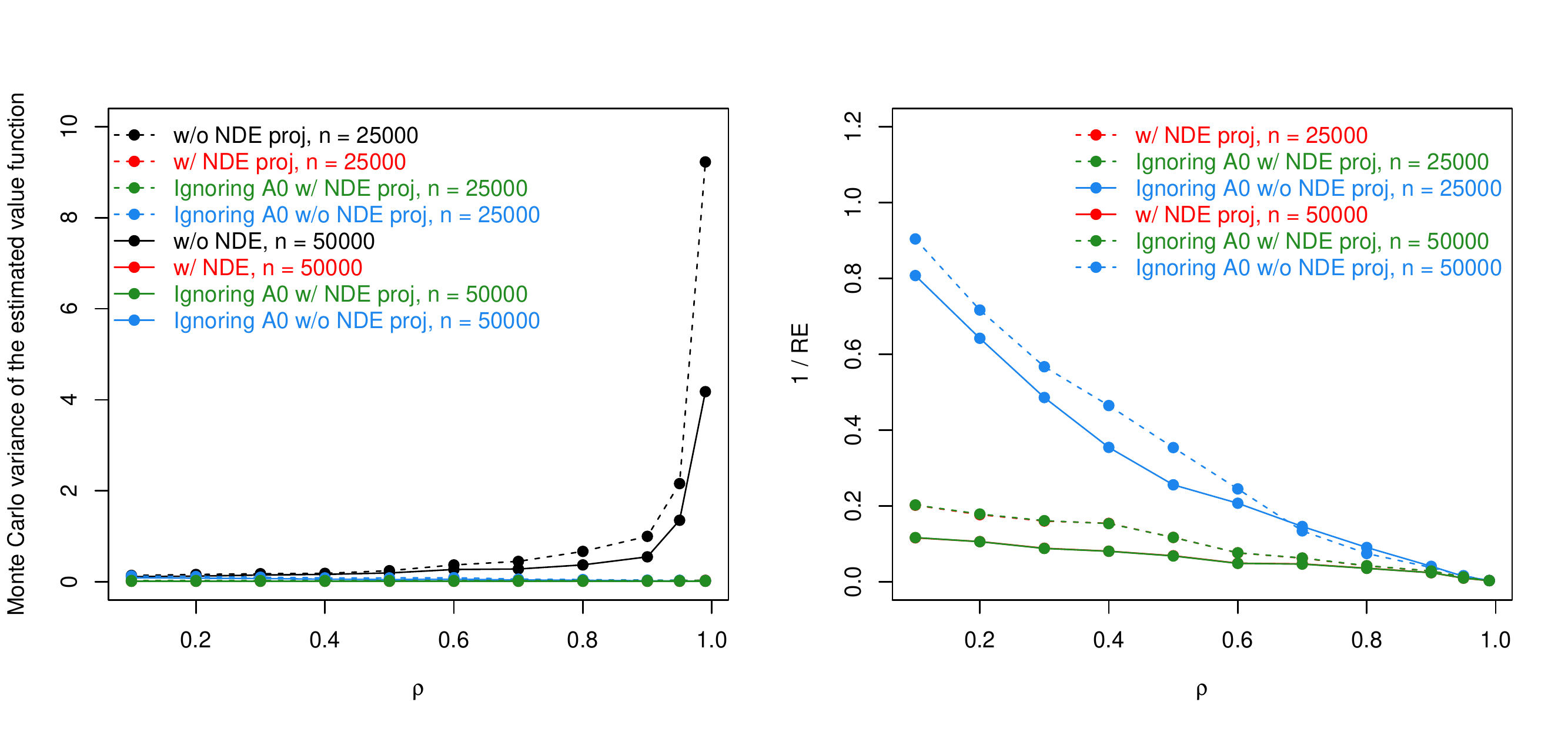}
\caption{The Monte Carlo variances (left panel) of $\tilde{\theta}_{ipw, g^{opt}}$ (black), $\hat{\theta}_{nde\mbox{-}ipw, g^{opt}}$ (blue), $\tilde{\theta}_{ipw, g^{opt}} (\hat{b}_{sub, ipw, g^{opt}})$ (red) and $\tilde{\theta}_{nde\mbox{-}ipw, g^{opt}} (\hat{b}_{sub, nde\mbox{-}ipw, g^{opt}})$ (green), and inverse REs (right panel) of $\tilde{\theta}_{nde\mbox{-}ipw, g^{opt}}$ (blue), $\tilde{\theta}_{ipw, g^{opt}} (\hat{b}_{sub, ipw, g^{opt}})$ (red) and $\tilde{\theta}_{nde\mbox{-}ipw, g^{opt}} (\hat{b}_{sub, nde\mbox{-}ipw, g^{opt}})$ (green) against $\tilde{\theta}_{ipw, g^{opt}}$ over different $\rho$'s. The red and green lines lie on top of one another; due to scaling, the blue and green lines are indistinguishable in the left panel.} \label{fig:dgp1value}
\end{figure}

\section{Discussion}
\label{sec:discussion} 
To conclude, we discuss five open problems. The optimal choice of $\bm{q}^{opt}$ for the user-supplied function $\bm{q}$ is the unique vector function $\bm{q}^{opt}$ satisfying for all $\bm{q} = \{q_{t} (\bar{H}_{t}, s_{t}, a_{t}); t = 0, \ldots, K\}$, 
\begin{equation*}
\E \left[ \frac{\partial \E \left[ \BU (\bm{q}, \Psi^{\ast}) \right]}{\partial \Psi^{\top}} \right] = \E [\BU (\bm{q}, \Psi^{\ast}) \BU (\bm{q}^{opt}, b_{opt} (\bm{q}^{opt}), \Psi^{\ast})^{\top}].
\end{equation*}
Furthermore $\left\{ \E \left[ \BU (\bm{q}^{opt}, b_{opt} (\bm{q}^{opt}), \Psi^{\ast}) \BU (\bm{q}^{opt}, b_{opt} (\bm{q}^{opt}), \Psi^{\ast})^{\top} \right] \right\}^{-1}$ is the semiparametric variance bound for $\Psi^{\ast}$ in model $\mathbb{M}$. We did not explore estimation of $\bm{q}^{opt}$ because, as discussed earlier, in practice, users of opt-SNMM have chosen to employ heuristic choices of $\bm{q}$ in their analyses. Nonetheless, it would be interesting to study the estimation with $\bm{q}^{opt}$ in future research. A second open problem is to develop multiply robust estimators of the parameters of an opt-SNMM under the NDE assumption to provide even more robustness than that obtained by the doubly robust estimators proposed in the current paper \citet{luedtke2017sequential, rotnitzky2017multiply}.

A third open problem is to extend our methodology to the case where either (i) the dimension of the parameter $\Psi^{\ast}$ of the opt-SNMM is allowed to grow with the sample size or (ii) the optimal blip functions $\gamma_{t}^{g^{opt}} (\bar{H}_{t}, s_{t}, a_{t})$ are modeled non-parametrically under (e.g. smoothness) restrictions on their complexity. This problem is important because our estimator of the optimal regime is not robust to misspecification of the opt-SNMM model. A fourth important open problem is the extension of our methods to data generated under exceptional laws as defined in \citet{robins2004optimal}.

We conclude by discussing a fifth open problem. Consider the setting discussed earlier under which positivity fails because $\Pr (A_{t} = 1) = 1$ for all $t$ but other aspects of positivity, consistency and sequential exchangeability continue to hold. In this setting, the parameter vector $\Psi^{\ast}$ of a saturated opt-SNMM is identified only if the NDE assumption holds. In fact, under the NDE assumption, all $g$ for which $\theta_{g}$ was identified under positivity remain identified; similarly, since $\Psi^{\ast}$ is identified, $g^{opt}$ and $\theta_{g^{opt}}$ remain identified under the NDE assumption. However, by lack of identification in the absence of the NDE assumption, the methodology of this paper cannot be used to estimate $\Psi^{\ast}$. It remains open how to efficiently estimate $\Psi^{\ast}$ and thus $g^{opt}$ and $\theta_{g^{opt}}$ in this setting under the NDE assumption. Preliminary investigations indicate that estimation of $g^{opt}$ by dynamic programming is not possible. If true, then any algorithms that compute or estimate $g^{opt}$ may well be computationally intractable.

\section*{Online supplementary materials}

Appendix and R codes are included in the online supplementary materials.

\section*{Acknowledgement}

We thank Michael R. Kosorok and two anonymous referees for helpful comments. Lin Liu and James M. Robins were supported by the U.S. Office of Naval Research Grant N000141912446, and National Institutes of Health (NIH) awards R01 AG057869 and R01 AI127271. Lin Liu was also partially sponsored by Shanghai Pujiang Program Research Grant 20PJ1408900 and Shanghai Jiao Tong University Start-up Grant WF220441912. The computations were run on the FASRC Cannon cluster supported by the FAS Division of Science Research Computing Group at Harvard University.

\bibliographystyle{chicago}
\bibliography{./snmm}

\newpage

\appendix
\renewcommand{\thesection}{\Alph{section}} \setcounter{section}{0}

\section{Appendix}

The online supplementary materials include technical proofs and more details on the simulation studies omitted in the main text.

\subsection{Proof of Theorem \ref{lem:nuisance_nde}}
\label{app:nuisance_nde}

In this section, we prove Theorem \ref{lem:nuisance_nde}.

\begin{proof}
Since $p (S_{m} | \bar{H}_{m})$ and $p (A_{m} | \bar{H}_{m}, S_{m})$ are unconstrained under the model, it follows that $\Lambda_{\mathsf{NDE}}^{\perp}$ must be comprised of linear combinations of the residuals from the orthogonal projections in $L_{2, 0} (P)$ of $D_{b, t}$ onto scores for arbitrary parametric submodels for $p (S_{m} | \bar{H}_{m})$ and $p (A_{m} | \bar{H}_{m}, S_{m})$. Thus, any element of $\Lambda_{\mathsf{NDE}}^{\perp}$ must be linear combinations of random variables of the form $\sum_{t = 0}^{K} R_{b, t}$ where 
\begin{equation*}
R_{b, t} = D_{b, t} - \sum_{m = t}^{K} \{\E [D_{b, m} | \bar{H}_{m}, S_{m}, A_{m}] - \E [D_{b, m} | \bar{H}_{m}, S_{m}]\} - \sum_{m = t + 1}^{K} \{\E [D_{b, t} | \bar{H}_{t}, S_{t}] - \E [D_{b, t} | \bar{H}_{t}]\}
\end{equation*}
However, the terms $\E [D_{b, t} | \bar{H}_{m}, S_{m}, A_{m}] - \E [D_{b, t} | \bar{H}_{m}, S_{m}]$ occurring in $R_{b, t}$ are $0$ for $m > t$, as then $\E [D_{b, t} | \bar{H}_{m}, S_{m}, A_{m}]$ does not depend on $A_{m}$ since 
\begin{eqnarray*}
\E [D_{b, t} | \bar{H}_{m}, A_{m}, S_{m}] &=& \E \left[ \left. \frac{b_{t} (\bar{H}_{t}, \underaccent{\bar}{S}_{t}, Y^{d})}{W_{m + 1}} \right\vert \bar{H}_{m}, A_{m}, S_{m} \right] \frac{(A_{t} - \E [A_{t} | \bar{H}_{t}, S_{t}])}{\prod_{j = t + 1}^{m} p (S_{j} | \bar{H}_{j})} \\
&=& \E \left[ \left. \frac{b_{t} (\bar{H}_{t}, \underaccent{\bar}{S}_{t}, Y^{d})}{W_{m+1}} \right\vert \bar{H}_{m}, S_{m} \right] \frac{(A_{t} - \E [A_{t} | \bar{H}_{t}, S_{t}])}{\prod_{j = t + 1}^{m} p (S_{j} | \bar{H}_{j})}
\end{eqnarray*}
where the last equality follows from equation \eqref{rest2}. Thus $R_{b, t} = T_{b, t}$ as defined in equation \eqref{tbt-def}.
\end{proof}

\subsection{Proof of Theorem \ref{thm:tdr}}
\label{app:alternative_nde}

In this section, we prove Theorem \ref{thm:tdr}. To facilitate our proof, we first provide an algebraically equivalent expression for $T_{b, t}$.

For $0 \leq t \leq j \leq K$ define $W_{t}^{j} \equiv \prod_{m = t}^{j} p (S_{m} | \bar{H}_{m})$ and for $j < t$ define $W_{t}^{j} \equiv 1$. Given arbitrary functions $\eta_{k}^{\dag} (\bar{H}_{k}, S_{k})$, for $k = 0, \ldots, K$, define $y_{k + 1, \eta_{k + 1}^{\dag}} (\bar{H}_{k + 1}) \equiv \sum_{s_{k + 1}} \eta_{k + 1}^{\dag} (\bar{H}_{k + 1}, s_{k + 1})$. Notice that $y_{k + 1, \eta_{k + 1}^{\dag}} (\bar{H}_{k + 1})$ coincides with $\E \left[ \left. \frac{\eta_{k + 1}^{\dag} (\bar{H}_{k + 1}, S_{k + 1})}{p (S_{k + 1} | \bar{H}_{k + 1})} \right\vert \bar{H}_{k + 1} \right]$.

Given a fixed time $t$ and a function $b_{t} (\bar{H}_{t}, \underaccent{\bar}{S}_{t}, Y^{d})$ define $\eta_{K} (\bar{H}_{K}, S_{K}) \equiv \E [b_{t} (\bar{H}_{t}, \underaccent{\bar}{S}_{t}, Y^{d}) | \bar{H}_{K}, S_{K}]$ and for $k = K - 1, \ldots, t$ define recursively $\eta_{k} (\bar{H}_{k}, S_{k}) \equiv \E \left[ y_{k + 1, \eta_{k + 1}} (\bar{H}_{k + 1}) | \bar{H}_{k}, S_{k} \right]$. Finally, also define $y_{K + 1, \eta_{K + 1}} (\bar{H}_{K + 1}) \equiv b_{t} (\bar{H}_{t}, \underaccent{\bar}{S}_{t}, Y^{d})$. In the preceding definitions we have suppressed the dependence on the time $t$ and the function $b_{t}$ to simplify the notation. We thus have the following result.

\begin{lemma}[Alternative expression for $T_{b, t}$]
\label{lem:alternative_nde} 
For any $b_{t} (\bar{H}_{t}, \underaccent{\bar}{S}_{t}, Y^{d})$ it holds that 
\begin{eqnarray}
T_{b, t} &=& \left[ \frac{b_{t} (\bar{H}_{t}, \underaccent{\bar}{S}_{t}, Y^{d})}{W_{t + 1}^{K}} - \eta_{t} (\bar{H}_{t}, S_{t}) - \sum_{k = t + 1}^{K} \frac{1}{W_{t + 1}^{k - 1}} \left\{ \frac{\eta_{k} (\bar{H}_{k}, S_{k})}{p (S_{k} | \bar{H}_{k})} - y_{k, \eta_{k}} (\bar{H}_{k}) \right\} \right] \left( A_{t} - \Pi_{t} \right) \notag \\
&=& \left[ \sum_{k = t + 1}^{K + 1} \frac{1}{W_{t + 1}^{k - 1}} \left\{ y_{k, \eta_{k}}(\bar{H}_{k}) - \eta_{k - 1} (\bar{H}_{k - 1}, S_{k - 1}) \right\} \right] \left( A_{t} - \Pi_{t} \right). \label{tbt}
\end{eqnarray}
Furthermore, $T_{b, t}$ is doubly robust in the sense that it has mean zero if (i) $\eta_{k} (\cdot, \cdot)$ are replaced by arbitrary functions for all $k = 0, \ldots, K$ or, (ii) $p (S_{k} | \bar{H}_{k})$ and $\Pi_{t}$ are replaced by arbitrary conditional probability functions.
\end{lemma}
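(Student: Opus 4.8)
The plan is to treat \eqref{tbt-def} as the definition of $T_{b,t}$ and rewrite it by evaluating the conditional expectations it contains. The engine is the pair of identities
\[
\eta_m(\bar{H}_m, S_m) = \E\left[\frac{b_t}{W_{m+1}^K} \,\middle|\, \bar{H}_m, S_m\right], \qquad y_{m,\eta_m}(\bar{H}_m) = \E\left[\frac{b_t}{W_m^K} \,\middle|\, \bar{H}_m\right],
\]
which I would establish by backward induction on $m$ from $K$ down to $t$: the base case $\eta_K = \E[b_t | \bar{H}_K, S_K]$ is the definition, and the inductive step uses only the tower property together with the observations, noted in the statement, that $y_{m,\eta_m} = \sum_{s_m}\eta_m(\cdot, s_m) = \E[\eta_m/p(S_m|\bar{H}_m)\,|\,\bar{H}_m]$ and $W_m^K = p(S_m|\bar{H}_m)\,W_{m+1}^K$.

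Next I would substitute these into \eqref{tbt-def}. Because the common factor $A_t - \Pi_t$ is $\bar{H}_m$-measurable for every $m \ge t+1$ (as $A_t \in \bar{A}_{m-1}$) and is $(\bar{H}_t, S_t, A_t)$-measurable, each conditional expectation factors through it, giving
\[
\E[D_{b,t} | \bar{H}_m, S_m] - \E[D_{b,t} | \bar{H}_m] = (A_t - \Pi_t)\frac{1}{W_{t+1}^{m-1}}\left\{\frac{\eta_m}{p(S_m|\bar{H}_m)} - y_{m,\eta_m}\right\}
\]
for $m > t$, while for the $m=t$ term the restriction \eqref{rest2} collapses $\E[b_t/W_{t+1}^K | \bar{H}_t, S_t, A_t]$ to $\eta_t$, so that $\E[D_{b,t}|\bar{H}_t,S_t,A_t] = (A_t-\Pi_t)\eta_t$ whereas $\E[D_{b,t}|\bar{H}_t,S_t]=0$. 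Collecting terms yields the first line of \eqref{tbt}. The second, telescoping, line is then pure algebra: writing $\eta_j/W_{t+1}^j = (1/W_{t+1}^{j-1})\,\eta_j/p(S_j|\bar{H}_j)$ and reindexing $j = k-1$, the sum $\sum_{k=t+1}^{K+1}(y_{k,\eta_k}-\eta_{k-1})/W_{t+1}^{k-1}$ rearranges into the first-line bracket, using $W_{t+1}^t = 1$ and $y_{K+1,\eta_{K+1}} = b_t$.

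For double robustness I would argue the two cases separately from the telescoped form, since they require structurally different arguments. In case (ii) ($\eta$ correct, $p$ and $\Pi$ arbitrary) the true $\eta$ satisfy the recursion $\E[y_{k,\eta_k}|\bar{H}_{k-1},S_{k-1}] = \eta_{k-1}$, so each summand $(A_t-\Pi_t)(y_{k,\eta_k}-\eta_{k-1})/W_{t+1}^{k-1}$ has conditional mean zero: for $k \ge t+2$ the factor $(A_t - \Pi_t)/W_{t+1}^{k-1}$ is $(\bar{H}_{k-1},S_{k-1})$-measurable and one conditions on $(\bar{H}_{k-1},S_{k-1})$, while the boundary term $k=t+1$ (where $W_{t+1}^t=1$) requires conditioning on $(\bar{H}_t,S_t)$ and invoking \eqref{rest2}, via the second identity above, to get $\E[y_{t+1,\eta_{t+1}}|\bar{H}_t,S_t,A_t]=\eta_t$, independent of $A_t$. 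In case (i) ($p$ and $\Pi$ correct, $\eta$ arbitrary) the per-term cancellation fails, so instead I would telescope inside the expectation: with the true $p$ one has $\E[(A_t-\Pi_t)\,y_{k,\eta_k}/W_{t+1}^{k-1}] = \E[(A_t-\Pi_t)\,\eta_k/W_{t+1}^k]$ for $k\le K$, so the top term $k=K+1$ supplies $\E[D_{b,t}]$ and the remaining sum telescopes, leaving $\E[T_{b,t}] = \E[D_{b,t}] - \E[(A_t-\Pi_t)\eta_t]$; both vanish, the first by the NDE falsifiability identity $\E[D_{b,t}]=0$ (true $\Pi$), the second because $\eta_t$ is $(\bar{H}_t,S_t)$-measurable and $\Pi_t$ is correct.

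The main obstacle I anticipate is bookkeeping rather than conceptual: keeping the weight indices $W_{t+1}^{m-1}$ versus $W_{t+1}^m$ straight through the successive conditionings, and isolating the boundary term $k=t+1$, which is the only place where the NDE restriction \eqref{rest2}---as opposed to generic tower-property manipulations---is genuinely invoked in case (ii). Because the two directions of double robustness use per-term conditional mean-zero and telescoping under the expectation respectively, the cleanest exposition will derive both from the single telescoped representation in \eqref{tbt}.
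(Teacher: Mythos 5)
Your proof is correct, and in places it is more complete than the paper's own. The paper states the algebraic identity between \eqref{tbt-def} and \eqref{tbt} but does not derive it; you supply exactly the missing step via the identities $\eta_{m}(\bar{H}_{m},S_{m})=\E[b_{t}/W_{m+1}^{K}\mid\bar{H}_{m},S_{m}]$ and $y_{m,\eta_{m}}(\bar{H}_{m})=\E[b_{t}/W_{m}^{K}\mid\bar{H}_{m}]$, correctly invoking \eqref{rest2} only at the $m=t$ term. For double robustness, your case (ii) matches the paper's (condition each summand of the telescoped form on $(\bar{H}_{k-1},S_{k-1})$), but your explicit handling of the boundary summand $k=t+1$ is a genuine improvement: the paper's displayed computation pulls $(A_{t}-\Pi_{t}^{\dag})$ outside a conditional expectation given only $(\bar{H}_{t},S_{t})$, where it is not measurable, and the step really does require, as you note, that $\E[y_{t+1,\eta_{t+1}}\mid\bar{H}_{t},S_{t},A_{t}]=\eta_{t}$ not depend on $A_{t}$, which follows from \eqref{rest2}. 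For case (i) the paper argues from the first line of \eqref{tbt} instead, showing each of the three pieces has conditional mean zero (the $b_{t}/W_{t+1}^{K}$ term by \eqref{rest2}, the $\eta_{t}^{\dag}$ term by correctness of $\Pi_{t}$, and each $k>t$ term by correctness of $p(S_{k}\mid\bar{H}_{k})$ together with $y_{k,\eta_{k}^{\dag}}$ being the marginal sum of $\eta_{k}^{\dag}$); your telescoping-under-the-expectation argument from the second line arrives at the same two residual terms $\E[D_{b,t}]$ and $-\E[(A_{t}-\Pi_{t})\eta_{t}^{\dag}]$ and is equivalent, though the paper's per-term version is marginally shorter. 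Either route is fine; both yield the stronger form of double robustness the paper wants.
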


Note that $T_{b, t}$ has mean zero under (i) follows immediately from the first expression for $T_{b, t}$ in equation \eqref{tbt} and under (ii) from the second expression. Now we prove Lemma \ref{lem:alternative_nde} and hence finish the proof of Theorem \ref{thm:tdr}.

\begin{proof}
Suppose that $\eta_{k}$ is replaced by an arbitrary function $\eta_{k}^{\dag}$ for all $k = t, \ldots, K$. In the first expression for $T_{b, t}$, (a) $\E [\frac{b_{t} (\bar{H}_{t}, \underaccent{\bar}{S}_{t}, Y^{d})}{W_{t + 1}^{K}} (A_{t} - \Pi_{t}) | \bar{H}_{t}, S_{t}] = 0$, by the restriction of model $\mathbb{M}_{\mathsf{NDE}}$, (b) $\E [\eta_{t}^{\dag} (\bar{H}_{t}, S_{t}) (A_{t} - \Pi_{t}) | \bar{H}_{t}, S_{t}] = 0$ because $\Pi_{t} \equiv \E [A_{t} | \bar{H}_{t}, S_{t}]$ and (c) for $k > t$ 
\begin{align*}
& \E \left[ \left. \frac{1}{W_{t + 1}^{k - 1}} \left\{ \frac{\eta_{k}^{\dag} (\bar{H}_{k}, S_{k})}{p (S_{k} | \bar{H}_{k})} - y_{k, \eta_{k}^{\dag}} (\bar{H}_{k})\right\} \right\vert \bar{H}_{t}, S_{t} \right] \\
= & \E \left[ \left. \frac{1}{W_{t + 1}^{k - 1}} \E \left[ \left. \left\{ \frac{\eta_{k}^{\dag} (\bar{H}_{k}, S_{k})}{p (S_{k} | \bar{H}_{k})} - y_{k, \eta_{k}^{\dag}} (\bar{H}_{k}) \right\} \right\vert \bar{H}_{k} \right] \right\vert \bar{H}_{t}, S_{t} \right] = 0
\end{align*}
because $y_{k, \eta_{k}^{\dag}} (\bar{H}_{k})$ coincides $\E \left[ \left. \frac{\eta_{k}^{\dag} (\bar{H}_{k}, S_{k})}{p (S_{k} | \bar{H}_{k})} \right\vert \bar{H}_{k} \right]$. On the other hand, suppose that $p (S_{k} | \bar{H}_{k})$ and $\Pi_{t}$ are replaced by arbitrary conditional probability functions $p^{\dag} (S_{k} | \bar{H}_{k})$ and $\Pi_{t}^{\dag}$. In the second expression for $T_{b, t}$, 
\begin{align*}
& \E \left[ \left. \frac{1}{W_{t + 1}^{\dag k - 1}} \left\{ y_{k, \eta_{k}} (\bar{H}_{k}) - \eta_{k - 1} (\bar{H}_{k - 1}, S_{k - 1}) \right\} (A_{t} - \Pi_{t}^{\dag}) \right\vert \bar{H}_{k - 1}, S_{k - 1} \right] \\
= & \frac{1}{W_{t + 1}^{\dag k - 1}} (A_{t} - \Pi_{t}^{\dag}) \E \left[ \left. \left\{ y_{k, \eta_{k}} (\bar{H}_{k}) - \eta_{k - 1} (\bar{H}_{k - 1}, S_{k - 1}) \right\} \right\vert \bar{H}_{k - 1}, S_{k - 1} \right] = 0
\end{align*}
because by definition, $y_{k, \eta_{k}}$ and $\eta_{k}$ do not depend on the conditional probabilities of $p (S_{t} | \bar{H}_{t})$ and $\pi_{t} (\bar{H}_{t}, S_{t})$ for any $t$ and $\eta_{k - 1} (\bar{H}_{k - 1}, S_{k - 1}) \equiv \E \left[ \left. y_{k, \eta_{k}} (\bar{H}_{k}) \right\vert \bar{H}_{k - 1}, S_{k - 1} \right]$.
\end{proof}

In this section, we instead showed a stronger form of double robustness than that of Theorem \ref{thm:tdr}: we do not have to correctly specify the full law of every covariates $L_{t}$ given the past; rather, we only need to ensure $\eta_{t} (\bar{H}_{t}, S_{t}) = \E^{\dag} [y_{t + 1, \eta_{t + 1}} (\bar{H}_{t}) | \bar{H}_{t}, S_{t}]$ for the law $p^{\dag} (L_{t} | \bar{H}_{t - 1}, S_{t}, A_{t})$.

\subsection{The projection $\Pi [U | \Lambda_{\mathsf{NDE}}^{\perp}]$ when $U$ is a continuous random variable}
\label{app:cont}

We will now derive the equations that define the projection of any random variable into $\Lambda_{\mathsf{NDE}}^{\perp}$. Let $\Lambda_{t}^{trx, test} = \{q_{t} (\bar{H}_{t}, S_{t}, A_{t}): \E [q_{t} (\bar{H}_{t}, S_{t}, A_{t}) | \bar{H}_{t}] = 0\}$ and let $\underaccent{\bar}{\Lambda}_{t}^{trx, test} \equiv \oplus _{m = t}^{K} \Lambda_{m}^{trx, test}$.

We next show that for any $0 \leq r, t \leq K$ and any $b_{t} (\bar{H}_{t}, \underaccent{\bar}{S}_{t}, Y^{d})$ and $b_{r}^{\ast} (\bar{H}_{r}, \underaccent{\bar}{S}_{r}, Y^{d})$ in $L_{2} (P)$, 
\begin{equation}
\E \left[ T_{b^{\ast}, r} T_{b, t} \right] = \E \left[ T_{b^{\ast}, r} D_{b, t} \right] \label{a_big_id}
\end{equation}
where $T_{b, t} = D_{b, t} - \Pi [D_{b, t} | \underaccent{\bar}{\Lambda}_{t}^{trx, test}]$ and $T_{b^{\ast}, r}$ is defined likewise.

Suppose first that $r \leq t$. Then $T_{b^{\ast}, r}$ is orthogonal with $\underaccent{\bar}{\Lambda}_{t}^{trx, test}$ because $\underaccent{\bar}{\Lambda}_{t}^{trx, test} \subset \underaccent{\bar}{\Lambda}_{r}^{trx, test}$ if $r \leq t$, and $T_{b^{\ast}, r}$ is a residual from a projection into $\underaccent{\bar}{\Lambda}_{r}^{trx, test}$. Therefore, $\E [T_{b^{\ast}, r} T_{b, t}] = \E [T_{b^{\ast}, r} \{D_{b, t} - \Pi [D_{b, t} | \Lambda_{t}^{trx, test}]\}] = \E [T_{b^{\ast}, r} D_{b, t}]$.

Suppose next that $r > t$. Then $\Pi [D_{b^{\ast}, r} | \Lambda_{t}^{trx, test}] = 0$. Consequently, $T_{b^{\ast}, r} = D_{b^{\ast}, r} - \Pi [D_{b^{\ast}, r} | \underaccent{\bar}{\Lambda}_{r}^{trx, test}] = D_{b^{\ast}, r} - \Pi [D_{b^{\ast}, r} | \underaccent{\bar}{\Lambda}_{t}^{trx, test}]$ is orthogonal to $\underaccent{\bar}{\Lambda}_{t}^{trx, test}$ which again yields, $\E [T_{b^{\ast}, r} T_{b, t}] = \mathsf{E} [T_{b^{\ast}, r} D_{b, t}]$.

Now, recalling that $\Lambda_{\mathsf{NDE}}^{\perp} = \sum_{j = 0}^{K} \mathcal{T}_{j}$, we conclude that 
\begin{equation*}
\Pi [U | \Lambda_{\mathsf{NDE}}^{\perp}] = \sum_{t = 0}^{K} T_{b^{\ast}, t}
\end{equation*}
for $b_{t}^{\ast} \in \mathcal{B}_{t}, t = 0, \ldots, K$ if and only if for all $b_{t} (\bar{H}_{t}, \underaccent{\bar}{S}_{t}, Y^{d})$ such that $\E [b_{t} (\bar{H}_{t}, \underaccent{\bar}{S}_{t}, Y^{d})^{2}] < \infty$ it holds that 
\begin{eqnarray*}
0 &=& \E \left[ \left\{ U - \sum_{t = 0}^{K} T_{b^{\ast}, t} \right\} \sum_{t = 0}^{K} T_{b, t} \right] \\
&=& \sum_{t = 0}^{K} \E \left[ U T_{b, t} \right] - \sum_{t = 0}^{K} \sum_{r = 0}^{K} \E \left[ T_{b^{\ast}, r} T_{b, t} \right] \\
&=& \sum_{t = 0}^{K} \E \left[ \left\{ U - \Pi \left[ U | \underaccent{\bar}{\Lambda}_{t}^{trx, test} \right] \right\} D_{b, t} \right] - \sum_{t = 0}^{K} \sum_{r = 0}^{K} \E \left[ T_{b^{\ast}, r} D_{b, t} \right] \\
&=& \sum_{t = 0}^{K} \E \left[ \left\{ U - \Pi \left[ U | \underaccent{\bar}{\Lambda}_{t}^{trx, test} \right] - \sum_{r = 0}^{K} T_{b^{\ast}, r} \right\} D_{b, t} \right] \\
&=& \sum_{t = 0}^{K} \E \left[ \left\{ U - \Pi \left[ U | \underaccent{\bar}{\Lambda}_{t}^{trx, test} \right] - \sum_{r = 0}^{K} T_{b^{\ast}, r} \right\} \frac{A_{t} - \Pi_{t}}{W_{t + 1}} b_{t} (\bar{H}_{t}, \underaccent{\bar}{S}_{t}, Y^{d}) \right].
\end{eqnarray*}

Then, in particular, taking for each $t$ 
\begin{equation*}
b_{t} (\bar{H}_{t}, \underaccent{\bar}{S}_{t}, Y^{d}) = \E \left[ \left. \left\{ U - \Pi \left[ U | \underaccent{\bar}{\Lambda}_{t}^{trx, test} \right] - \sum_{r = 0}^{K} T_{b^{\ast}, r} \right\} \frac{A_{t} - \Pi_{t}}{W_{t + 1}} \right\vert \bar{H}_{t}, \underaccent{\bar}{S}_{t}, Y^{d} \right]
\end{equation*}
we conclude that $b^{\ast} \equiv \left(b_{0}^{\ast}, \ldots, b_{K}^{\ast} \right)$ must solve the system of equations for $t = 0, \ldots, K$ 
\begin{eqnarray}
&& \E \left[ \left. \left\{ U - \Pi \left[ U | \underaccent{\bar}{\Lambda}_{t}^{trx, test} \right] \right\} \frac{A_{t} - \Pi_{t}}{W_{t + 1}} \right\vert \bar{H}_{t}, \underaccent{\bar}{S}_{t}, Y^{d} \right] \label{eq_int} \\
&=& \sum_{r = 0}^{K} \E \left[ \left. \left\{ D_{b^{\ast}, r} - \Pi \left[ D_{b^{\ast}, r} | \underaccent{\bar}{\Lambda}_{r}^{trx, test} \right] \right\} \frac{A_{t} - \Pi_{t}}{W_{t + 1}} \right\vert \bar{H}_{t}, \underaccent{\bar}{S}_{t}, Y^{d} \right] \notag
\end{eqnarray}
or equivalently 
\begin{align*}
& \E \left[ \left. \left\{ U - \Pi \left[ U | \underaccent{\bar}{\Lambda}_{t}^{trx, test} \right] \right\} \frac{A_{t} - \Pi_{t}}{W_{t + 1}} \right\vert \bar{H}_{t}, \underaccent{\bar}{S}_{t}, Y^{d} \right] \\
= & \ \sum_{r = 0}^{K} \E \left[ \left. \left\{ \begin{array}{c}
D_{b^{\ast}, r} - \left\{ \E \left[ D_{b^{\ast}, r} | \bar{H}_{r}, S_{r}, A_{r} \right] - \E \left[ D_{b^{\ast}, r} | \bar{H}_{r}, S_{r} \right] \right\} \\ 
- \sum\limits_{m = r + 1}^{K} \left\{ \E \left[ D_{b^{\ast}, r} | \bar{H}_{m}, S_{m} \right] - \E \left[ D_{b^{\ast}, r} | \bar{H}_{m} \right] \right\}
\end{array} \right\} \frac{A_{t} - \Pi_{t}}{W_{t + 1}} \right\vert \bar{H}_{t}, \underaccent{\bar}{S}_{t}, Y^{d} \right]
\end{align*}
The system of equations \eqref{eq_int} for $t = 0, \ldots, K$ has a unique solution $b^{\ast} = \left( b_{0}^{\ast}, \ldots, b_{K}^{\ast} \right)$ such that each $b_{t}^{\ast} \in \mathcal{B}_{t}$, but the solution does not exist in closed form unless $U$ is discrete.

\subsection{Proof of Theorem \ref{thm:sub}}
\label{app:sub}

In this section, we prove Theorem \ref{thm:sub}.

\begin{proof}
We will show by induction in $K$ that with $T_{j}^{(t)}$, $0 \leq j \leq t \leq K$, defined as in the main text, $\Pi \left[ U | \sum_{j = 0}^{K} \Gamma_{j} \right] = \sum_{j = 0}^{K} d_{j}^{\ast} (\bar{H}_{j}) T_{j}$ where 
\begin{equation}
d_{0}^{\ast} (\bar{H}_{0}) \equiv \E \left[ \left. U T_{0}^{(0) \top} \right\vert \bar{H}_{0} \right] \E \left[ \left. T_{0}^{(0)} T_{0}^{(0) \top} \right\vert \bar{H}_{0} \right]^{-1} \label{main_0_app}
\end{equation}
and for $t = 1, \ldots, K$, 
\begin{equation}
d_{t}^{\ast} (\bar{H}_{t}) \equiv \E \left[ \left. \left\{ U - \sum_{j = 0}^{t - 1} d_{j}^{\ast} (\bar{H}_{j}) T_{j}^{(t)} \right\} T_{t}^{(t) \top} \right\vert \bar{H}_{t} \right] \E \left[ \left. T_{t}^{(t)} T_{t}^{(t) \top} \right\vert \bar{H}_{t} \right]^{-1}. \label{main_d1_app}
\end{equation}

Suppose $K = 0$. Let $d_{0}^{\ast}$ be such that $\Pi \left[ U | \Gamma_{0} \right] = d_{0}^{\ast} (\bar{H}_{0}) T_{0}$. Then for all $d_{0}$ such that $\E \left[ d_{0} (\bar{H}_{0})^{2} \right] < \infty$, it holds that 
\begin{equation*}
0 = \E \left[ \left\{ U - d_{0}^{\ast} (\bar{H}_{0}) T_{0} \right\} T_{0}^{\top} d_{0}^{\top} (\bar{H}_{0}) \right]
\end{equation*}
or equivalently, 
\begin{equation*}
\E \left[ \left\{ U - d_{0}^{\ast} (\bar{H}_{0}) T_{0} \right\} T_{0}^{\top} | \bar{H}_{0} \right] = 0.
\end{equation*}
Therefore 
\begin{equation*}
\E \left[ U T_{0}^{\top} | \bar{H}_{0} \right] - d_{0}^{\ast} (\bar{H}_{0}) \E \left[ T_{0} T_{0}^{\top} | \bar{H}_{0} \right] = 0.
\end{equation*}
Hence 
\begin{equation*}
d_{0}^{\ast} (\bar{H}_{0}) = \E \left[ U T_{0}^{\top} | \bar{H}_{0} \right] \E \left[ T_{0} T_{0}^{\top} | \bar{H}_{0} \right]^{-1}.
\end{equation*}

Suppose now that the theorem holds for $K = 0, 1, \ldots, t - 1$. We will show that it holds for $K = t$.

Let $d_{j}^{\ast}, j = 0, \ldots, t$ be such that $\Pi \left[ U | \sum_{j = 0}^{t} \Gamma_{j} \right] = \sum_{j = 0}^{t} d_{j}^{\ast} (\bar{H}_{j}) T_{j} $. The functions $d_{j}^{\ast}, j = 0, \ldots, t$ then satisfy 
\begin{equation*}
0 = \E \left[ \left\{ U - \sum_{k = 0}^{t} d_{k}^{\ast} (\bar{H}_{k}) T_{k} \right\} \left\{ \sum_{k = 0}^{t} T_{k}^{\top} d_{k}^{\top} (\bar{H}_{k}) \right\} \right] = 0
\end{equation*}
for all $d_{j}, j = 0, \ldots, t$ such that $\cov \left[ d_{j} (\bar{H}_{j}) \right] < \infty$. Choosing, in particular, $d_{j} (\bar{H}_{j}) = \E \left[ \left. \left\{ U - \sum_{k = 0}^{t} d_{k}^{\ast} (\bar{H}_{k}) T_{k} \right\} T_{j}^{\top} \right\vert \bar{H}_{j} \right] $ and $d_{k} (\bar{H}_{k}) = 0$ for $k \neq j$ we arrive at the set of equations
for $j = 0, \ldots, t$, 
\begin{equation}
0 = \E \left[ \left. \left\{ U - \sum_{k = 0}^{t} d_{k}^{\ast} (\bar{H}_{k}) T_{k} \right\} T_{j}^{\top} \right\vert \bar{H}_{j} \right]. \label{main_2}
\end{equation}

From equation \eqref{main_2} applied to $j = t$ we arrive at 
\begin{equation}
d_{t}^{\ast} (\bar{H}_{t}) = \E \left[ \left. \left\{ U - \sum_{k = 0}^{t - 1} d_{k}^{\ast} (\bar{H}_{k}) T_{k} \right\} T_{t}^{\top} \right\vert \bar{H}_{t} \right] \left\{ \E \left[ \left. T_{t} T_{t}^{\top} \right\vert \bar{H}_{t} \right] \right\}^{-1}  \label{dt_main}
\end{equation}
Replacing the right hand side into equation \eqref{main_2} we arrive at the set of equations for $j = 0, \ldots, t - 1$ 
\begin{equation}
0 = \E \left[ \left. \left\{ U - \sum_{k = 0}^{t - 1} d_{k}^{\ast} (\bar{H}_{k}) T_{k} - \E \left[ \left. \left\{ U - \sum_{k = 0}^{t - 1} d_{k}^{\ast} (\bar{H}_{k}) T_{k} \right\} T_{t}^{\top} \right\vert \bar{H}_{t} \right] \E \left[ \left. T_{t} T_{t}^{\top} \right\vert \bar{H}_{t} \right]^{-1} T_{t} \right\} T_{j}^{\top} \right\vert \bar{H}_{j} \right] \label{main_3}
\end{equation}
Defining $T_{k}^{(t - 1)} \equiv \left( T_{k} - \E \left[ \left. T_{k} T_{t}^{\top} \right\vert \bar{H}_{t} \right] \left\{ \E \left[ \left. T_{t} T_{t}^{\top} \right\vert \bar{H}_{t} \right] \right\}^{-1} T_{t} \right)$ for $k = 0, \ldots, t - 1$, the last equation is 
\begin{eqnarray}
0 &=& \E \left[ \left. \left\{ U - \E \left[ \left. U T_{t}^{\top} \right\vert \bar{H}_{t} \right] \left\{ \E \left[ \left. T_{t} T_{t}^{\top} \right\vert \bar{H}_{t} \right] \right\}^{-1} T_{t} - \sum_{k = 0}^{t - 1} d_{k}^{\ast} (\bar{H}_{k}) T_{k}^{(t - 1)} \right\} T_{j}^{\top} \right\vert \bar{H}_{j} \right] \notag \\
&=& \E \left[ \left. \left\{ U - \sum_{k = 0}^{t - 1} d_{k}^{\ast} (\bar{H}_{k}) T_{k}^{(t - 1)} \right\} \left\{ T_{j} - \E \left[ \left. T_{j} T_{t}^{\top} \right\vert \bar{H}_{t} \right] \left\{ \E \left[ \left. T_{t} T_{t}^{\top} \right\vert \bar{H}_{t} \right] \right\}^{-1} T_{t} \right\}^{\top} \right\vert \bar{H}_{j} \right] \notag \\
&\equiv& \E \left[ \left. \left\{ U - \sum_{k = 0}^{t - 1} d_{k}^{\ast} (\bar{H}_{k}) T_{k}^{(t - 1)} \right\} T_{j}^{(t - 1) \top} \right\vert \bar{H}_{j} \right]. \label{main_5}
\end{eqnarray}

The system of equations \eqref{main_5} for $j = 0, \ldots, t - 1$ holds if and only if 
\begin{equation*}
\E \left[ \left\{ U - \sum_{k = 0}^{t - 1} d_{k}^{\ast} (\bar{H}_{k}) T_{k}^{(t - 1)} \right\} \left\{ \sum_{k = 0}^{t - 1} d_{k} (\bar{H}_{k}) T_{k}^{(t - 1)} \right\} \right] = 0
\end{equation*}
for all $d_{0}, \ldots, d_{t - 1}$ such that $\cov \left[ d_{j} (\bar{H}_{j}) \right] < \infty$ for $j = 0, \ldots, t - 1$. Then, defining 
\begin{equation*}
\Gamma_{j, sub}^{\ast} \equiv \left\{ d_{j} (\bar{H}_{j}) T_{j}^{(t - 1)}: d_{j} (\bar{H}_{j}) \text{ any } 1 \times \delta _{j} \text{ vector with } \cov \left[ d_{j} (\bar{H}_{j}) \right] < \infty \right\}
\end{equation*}
we conclude that $\Pi \left[ U | \sum_{j = 0}^{t - 1} \Gamma_{j, sub}^{\ast} \right] = \sum_{k = 0}^{t - 1} d_{k}^{\ast} (\bar{H}_{k}) T_{k}^{(t - 1)}$. Then, for each $j = 0, \ldots, K - 1$, defining recursively for $l = t - 2, \ldots, j$, 
\begin{equation*}
T_{j}^{(l)} \equiv T_{j}^{(l + 1)} - \E \left[ \left. T_{j}^{(l + 1)} T_{l + 1}^{(l + 1) \top} \right\vert \bar{H}_{t + 1} \right] \left\{ \E \left[ \left. T_{l + 1}^{(l + 1)} T_{l + 1}^{(l + 1) \top} \right\vert \bar{H}_{l + 1} \right] \right\}^{-1} T_{l + 1}^{(l + 1)}
\end{equation*}
the inductive hypothesis implies that both equation \eqref{main_0_app} and 
\begin{equation}
d_{k}^{\ast} (\bar{H}_{k}) = \E \left[ \left. \left\{ U - \sum_{j = 0}^{k - 1} d_{j}^{\ast} (\bar{H}_{j}) T_{j}^{(k)} \right\} T_{k}^{(k) \top} \right\vert \bar{H}_{k} \right] \left\{ \E \left[ \left. T_{k}^{(k)} T_{k}^{(k) \top} \right\vert \bar{H}_{k} \right] \right\}^{-1} \notag
\end{equation}
for $k = 0, \ldots, t - 1$ hold. So, combining these identities with equation \eqref{dt_main} shows that the assertion of the theorem holds when $K = t$. This concludes the proof.
\end{proof}

\subsection{More details of Section \ref{sec:feasible}}
\label{app:drml}

We have the following remarks on the algorithm described in Section \ref{sec:feasible}.

\begin{remark}
\label{rem:cf} 
Estimation of unconditional expectations in $\beta_{opt} (\bm{q}, b, \Psi)$ from the estimation sample implements a (least squares) projection under the estimation sample's empirical measure and thus may improve the finite sample performance of $\tilde{\Psi}_{cf} (\bm{q}, b)$. Also we chose to split the data in half to facilitate exposition. One can also divide the data into $M > 2$ approximately equal-sized groups, construct $M$ separate estimators by using each group as the estimation sample and the other $M - 1$ groups as the nuisance sample, and finally obtain $\tilde{\Psi}_{cf} (\bm{q}, b)$ as the average of the $M$ different estimators.
\end{remark}

\begin{remark}
\label{rem:close} 
Because conditional expectations of functions that depend on $\Psi$ must be estimated in step (ii) of the algorithm, the estimator $\tilde{\Psi}^{(1)} (\bm{q}, b)$ must, in general, be solved iteratively and may be difficult to compute. There are several strategies to decrease the computational task. First if $\Psi_{t}$ and $\Psi_{t‘}$ are variation independent for $t \neq t’$, then one can recursively, for $t = K, \ldots, 0$, estimate $\Psi_{t}^{\ast}$, given the earlier $\hat{\Psi}_{t^{\prime}}$ for $t^{\prime} > t$. Then, if, for each $t$, $\gamma_{t} (\bar{H}_{t}, S_{t}, A_{t}; \Psi_{t})$ is a linear model, i.e., 
\begin{equation*}
\gamma_{t} (\bar{H}_{t}, S_{t}, A_{t}; \Psi_{t}) = \Psi_{t}^{\top} V_{t}, \text{ for a given vector } V_{t} = v_{t} (\bar{H}_{t}, S_{t}, A_{t}),
\end{equation*}
one can avoid iteration and construct a closed form estimator $\tilde{\Psi}^{(1)} (\bm{q}, b)$ by estimating $\E \left[ \left. \Delta_{t} (\Psi _{t}; \underaccent{\bar}{\Psi}_{t + 1}) \right\vert \bar{H}_{t} \right]$ as 
\begin{equation*}
\hat{\E} \left[ \left. \Delta_{t} (\Psi_{t}; \underaccent{\bar}{\hat{\Psi}}_{t + 1}) \right\vert \bar{H}_{t} \right] = \hat{\E} \left[ \left. Y + \sum_{m = t + 1}^{K} \gamma_{m} \left( \bar{H}_{m}, S_{m}^{opt} (\underaccent{\bar}{\hat{\Psi}}_{m + 1}), A_{m}^{opt} (\underaccent{\bar}{\hat{\Psi}}_{m + 1}); \hat{\Psi}_{m} \right) \right\vert \bar{H}_{t} \right] - \Psi_{t}^{\top} \hat{\E} \left[ \left. V_{t} \right\vert \bar{H}_{t} \right].
\end{equation*}
The resulting closed form estimator $\tilde{\Psi}^{(1)} (\bm{q}, b)$ never requires, for any $t$, that one estimate a conditional expectation of a function of a free (i.e. yet to be estimated) $\Psi_{t}$ and thus is easy to compute and analyze. The explicit formula for the closed form estimator in a simple example is given in Appendix \ref{app:close_obs}. One drawback of this approach is that $\tilde{\Psi}^{(1)} (\bm{q}, b)$ may fail to estimate $\Psi^{\ast}$ at an optimal rate if $\E \left[ \left. \Delta_{t} (\Psi_{t}^{\ast}; \underaccent{\bar}{\Psi}_{t + 1}^{\ast}) \right\vert \bar{H}_{t} \right]$ can be estimated at a faster rate than $\E \left[ \left. V_{t} \right\vert \bar{H}_{t} \right]$. In that case, one can improve the rate of estimation by repeating the algorithm for $\tilde{\Psi}_{cf}$ with the unknown functions in step (ii) evaluated at $\tilde{\Psi}^{(1)} (\bm{q}, b)$ rather at the free parameter $\Psi$. See \citet{kallus2019localized} for an extended discussion and theoretical details.
\end{remark}

The following theorem provides conditions under which asymptotic properties of the proposed estimators above hold are summarized in the theorem below.

\begin{theorem}
\label{thm:drml} 
The nuisance sample is denoted as $\mathsf{Nu}$. If a) $\E [\hat{\BU} (\bm{q}, \Psi^{\ast}) | \mathsf{Nu}]$ and $\sum\limits_{t = 0}^{K} \E [\hat{T}_{b, t} | \mathsf{Nu}]$ are $o_{p} (n_{1}^{-1/2})$, and therefore $\E [ \hat{\BU} (\bm{q}, b, \Psi^{\ast}) | \mathsf{Nu}]$ is $o_{p} (n_{1}^{-1/2})$ and b) all the estimated nuisance conditional expectations and density functions converge to their true values in $L_{2} (P)$, then 
\begin{eqnarray*}
\tilde{\Psi}^{(1)} (\bm{q}, b) - \Psi^{\ast} = n_{1}^{-1} \sum_{i=1}^{n_{1}} \mathsf{IF}_{i} + o_{p} (n_{1}^{-1/2}), \tilde{\Psi}_{cf} (\bm{q}, b) - \Psi^{\ast} = n^{-1} \sum_{i=1}^{n} \mathsf{IF}_{i} + o_{p} (n^{-1/2})
\end{eqnarray*}
where $\mathsf{IF} \equiv \mathsf{IF} (\bm{q}, b, \Psi^{\ast}) = J^{-1} (\BU (\bm{q}, \Psi^{\ast}) - \beta_{opt} (\bm{q}, b, \Psi^{\ast})^{\top} T_{b}) - \Psi^{\ast}$ is the influence function of $\tilde{\Psi}^{(1)} (\bm{q}, b)$, $\hat{\Psi}^{(1)} (\bm{q}, b)$, and $\tilde{\Psi}_{cf} (\bm{q}, b)$ and $J = \frac{\partial}{\partial \Psi^{\top}} \E [\BU (\bm{q})] \vert_{\Psi = \Psi^{\ast}}$. Further $n_{1}^{1/2} (\tilde{\Psi}^{(1)} (\bm{q}, b) - \Psi^{\ast})$ converges conditionally and unconditionally to a normal distribution with mean zero and $\tilde{\Psi}_{cf} (\bm{q}, b)$ is a RAL estimator of $\Psi^{\ast}$ with asymptotic variance equal to $\var [\mathsf{IF}]$ for $\Psi^{\ast}$ in model $\mathbb{M}$.
\end{theorem}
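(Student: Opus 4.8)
The plan is to prove asymptotic linearity of the single-split estimator $\tilde{\Psi}^{(1)} (\bm{q}, b)$ first and then recover the cross-fit statement by averaging. Writing $\hat{\BU} (\bm{q}, b, \Psi) = \hat{\BU} (\bm{q}, \Psi) - \hat{c}_{OLS} (\bm{q}, b, \Psi)^{\top} \hat{T}_{b}$ for the estimating function with all nuisances fitted on $\mathsf{Nu}$, and recalling that $\tilde{\Psi}^{(1)}$ solves $\BP_{n}^{(1)} [\hat{\BU} (\bm{q}, b, \tilde{\Psi}^{(1)})] = 0$, I would begin from a mean-value expansion in $\Psi$ about $\Psi^{\ast}$, obtaining $\tilde{\Psi}^{(1)} - \Psi^{\ast} = - \hat{J}_{n}^{-1} \BP_{n}^{(1)} [\hat{\BU} (\bm{q}, b, \Psi^{\ast})]$ for an empirical Jacobian $\hat{J}_{n}$ of $\hat{\BU} (\bm{q}, b, \cdot)$ at an intermediate point. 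Consistency of $\tilde{\Psi}^{(1)}$, needed so that the intermediate point converges to $\Psi^{\ast}$, follows from uniqueness of the zero of $\E [\BU (\bm{q}, b, \Psi)]$ together with uniform convergence of $\BP_{n}^{(1)} [\hat{\BU} (\bm{q}, b, \Psi)]$ to it near $\Psi^{\ast}$. The argument then reduces to (i) $\hat{J}_{n} \xrightarrow{p} J$ and (ii) $\BP_{n}^{(1)} [\hat{\BU} (\bm{q}, b, \Psi^{\ast})] = \BP_{n}^{(1)} [\BU (\bm{q}, b, \Psi^{\ast})] + o_{p} (n_{1}^{-1/2})$.

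The crux is step (ii), where I would exploit that the estimation sample is independent of $\mathsf{Nu}$ by conditioning on $\mathsf{Nu}$. Decompose $\BP_{n}^{(1)} [\hat{\BU} (\bm{q}, b, \Psi^{\ast})]$ into the conditional mean $\E [\hat{\BU} (\bm{q}, b, \Psi^{\ast}) \mid \mathsf{Nu}]$ plus the conditionally centered empirical process $(\BP_{n}^{(1)} - \E [\, \cdot \mid \mathsf{Nu}]) [\hat{\BU} (\bm{q}, b, \Psi^{\ast})]$. The conditional mean is precisely the quantity that assumption (a) forces to be $o_{p} (n_{1}^{-1/2})$; substantively this is where the Neyman-orthogonal, doubly robust structure of $\BU (\bm{q}, b, \cdot)$ does its work, since the double robustness of the $U_{t}$ and of each $T_{b} \in \Lambda_{\mathsf{NDE}}^{\perp}$ (Theorem \ref{thm:tdr}) makes that bias a product of nuisance estimation errors. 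I would then split the empirical process as $(\BP_{n}^{(1)} - \E [\, \cdot \mid \mathsf{Nu}]) [\BU (\bm{q}, b, \Psi^{\ast})] + (\BP_{n}^{(1)} - \E [\, \cdot \mid \mathsf{Nu}]) [\hat{\BU} (\bm{q}, b, \Psi^{\ast}) - \BU (\bm{q}, b, \Psi^{\ast})]$. The first piece equals $\BP_{n}^{(1)} [\BU (\bm{q}, b, \Psi^{\ast})]$ up to its zero mean and is $O_{p} (n_{1}^{-1/2})$ by the central limit theorem, supplying the influence-function term. The second piece is the essential cross-fitting term: conditionally on $\mathsf{Nu}$ its summands are i.i.d.\ with mean zero, so its conditional variance is $n_{1}^{-1} \var (\hat{\BU} - \BU \mid \mathsf{Nu})$, which tends to zero in probability by assumption (b) that the fitted nuisances converge to their truths in $L_{2} (P)$. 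Hence this remainder is $o_{p} (n_{1}^{-1/2})$ \emph{without} any Donsker condition, which is exactly the payoff of sample splitting.

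For step (i) I would combine the just-established consistency with uniform-in-$\Psi$ convergence of the empirical Jacobian of $\hat{\BU}$ to $\partial_{\Psi^{\top}} \E [\BU (\bm{q}, b, \Psi)]$; because $\E [T_{b}] = 0$ under the NDE model, the $\hat{c}_{OLS}^{\top} \hat{T}_{b}$ correction makes no contribution to the limiting Jacobian, so $\partial_{\Psi^{\top}} \E [\BU (\bm{q}, b, \Psi^{\ast})] = \partial_{\Psi^{\top}} \E [\BU (\bm{q}, \Psi^{\ast})] = J$, the fact (already noted in Section \ref{sec:simple_dr}) that $J$ does not depend on $b$. Assembling (i) and (ii) gives $\tilde{\Psi}^{(1)} - \Psi^{\ast} = n_{1}^{-1} \sum_{i=1}^{n_{1}} \mathsf{IF}_{i} + o_{p} (n_{1}^{-1/2})$ with $\mathsf{IF}$ as in the statement. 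Since, conditionally on $\mathsf{Nu}$, the leading term is an average of i.i.d.\ mean-zero variables whose conditional variance converges to the constant $\var [\mathsf{IF}]$, a conditional Lindeberg CLT yields conditional asymptotic normality with a limit law not depending on $\mathsf{Nu}$; a standard characteristic-function (dominated-convergence) argument then upgrades this to unconditional normality with the same variance.

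Finally, for $\tilde{\Psi}_{cf} = \{\tilde{\Psi}^{(1)} + \tilde{\Psi}^{(2)}\} / 2$ I would apply the identical expansion to $\tilde{\Psi}^{(2)}$, with the roles of the two subsamples interchanged, and add. With $n_{1} / n \to 1/2$ the two half-sample averages recombine into $n^{-1} \sum_{i=1}^{n} \mathsf{IF}_{i} + o_{p} (n^{-1/2})$, so $\sqrt{n} (\tilde{\Psi}_{cf} - \Psi^{\ast}) \to N (0, \var [\mathsf{IF}])$. Regularity, and hence the RAL property with asymptotic variance $\var [\mathsf{IF}]$ in model $\mathbb{M}$, follows because $\mathsf{IF}$ is a genuine influence function for $\Psi^{\ast}$ in $\mathbb{M}$, via the general equivalence recalled in Section \ref{sec:review}. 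I expect the only genuinely delicate point, beyond the bookkeeping, to be the control of the empirical-process remainder through conditioning on $\mathsf{Nu}$, as this is what lets the mere $L_{2}$ consistency in assumption (b) replace the usual Donsker requirement; the bias control itself is handed to us as assumption (a), whose deeper justification is the product-of-errors structure guaranteed by the double robustness of Theorem \ref{thm:tdr}.
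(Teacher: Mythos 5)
Your proposal is correct and follows essentially the same route as the paper: the paper's own proof consists of deriving the exact conditional-bias formulas given $\mathsf{Nu}$ (Lemma \ref{lem:unified_bias}), whose product-of-nuisance-errors structure justifies hypothesis (a), and then declaring the remainder of the argument ``standard'' with a citation to the double/debiased machine learning literature --- which is precisely the mean-value expansion, the conditioning-on-$\mathsf{Nu}$ control of the empirical-process remainder via $L_{2}(P)$ consistency in place of a Donsker condition, and the cross-fit recombination that you spell out. You simply write out the details the paper delegates to the literature.
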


\begin{remark}[A nearly efficient estimator $\tilde{\Psi}^{(1)} (\bm{q}, \hat{b}_{sub})$]
\label{rem:feasible} 

It follows under the sufficient conditions described above the estimator $\tilde{\Psi}^{(1)} (\bm{q}, b_{sub})$ (or equivalently $\tilde{\Psi}^{(2)} (\bm{q}, b_{sub})$) with $b_{sub} (\bm{q})$ defined in Section \ref{sec:optimal_dr} is RAL with asymptotic variance $V_{oracle} (\bm{q}, b_{sub})$ which, as discussed earlier, is exactly or nearly equal to $V_{oracle} (\bm{q}, b_{opt})$ depending on whether $Y$ has finite or continuous support. However $\tilde{\Psi}^{(1)} (\bm{q}, b_{sub})$ is not a feasible estimator because $b_{sub} (\bm{q}) \equiv (b_{sub, 0} (\bm{q}), \ldots, b_{sub, K} (\bm{q}))^{\top}$ depends on the unknown nuisance functions of the distribution generating the data. Therefore let $\hat{b}_{sub} (\bm{q})$ be $b_{sub} (\bm{q})$ except with the unknown nuisance functions replaced by estimates computed from the nuisance sample. Then, under the weak condition that $\hat{b}_{sub} (\bm{q})$ converges to $b_{sub} (\bm{q})$ in $L_{2}(P)$, $\tilde{\Psi}^{(1)} (\bm{q}, \hat{b}_{sub})$ will be asymptotically equivalent to $\tilde{\Psi}^{(1)} (\bm{q}, b_{sub})$ with the same influence function and thus the same asymptotic variance. Therefore $\tilde{\Psi}^{(1)} (\bm{q}, \hat{b}_{sub})$ is the estimator that we recommend when the NDE assumption is true.
\end{remark}

Now we prove Theorem \ref{thm:drml} and provide sufficient conditions on the nuisance parameters to guarantee $\sqrt{n}$-consistency of the opt-SNMM estimators, before and after adjusting for the NDE assumption.

\begin{proof}
To do so we first need to derive the conditional biases $\E [\hat{\BU} (\bm{q}, \Psi) \vert \mathsf{Nu}]$, $\E [\hat{\BU} (\bm{q}, b, \Psi) \vert \mathsf{Nu}]$ and $\E [\hat{T}_{b, t} \vert \mathsf{Nu}]$ given the nuisance sample ($\mathsf{Nu}$) of the estimating equations $\hat{\BU} (\bm{q}, \Psi)$, $\hat{\BU} (\bm{q}, b, \Psi)$ and $\hat{T}_{b, t}$ as estimators of zero.

\begin{lemma}
\label{lem:unified_bias} 
\begin{equation}
\E \left[ \left. \hat{\BU} (\bm{q}, \Psi) \right\vert \mathsf{Nu} \right] = \sum_{t = 0}^{K} \E \left[ \left( \hat{\E} - \E \right) \left[ \left. \Delta_{t} \left( \Psi_{t}; \underaccent{\bar}{\Psi}_{t + 1} \right) \right\vert \bar{H}_{t} \right] \left( \hat{\E} - \E \right) \left[ Q_{t} (S_{t}, A_{t}) | \bar{H}_{t} \right] | \mathsf{Nu} \right] \label{bias:u}
\end{equation}
and 
\begin{equation*}
\E \left[ \hat{\BU} (\bm{q}, b, \Psi) | \mathsf{Nu} \right] = \E \left[ \hat{\BU} (\bm{q}, \Psi) | \mathsf{Nu} \right] - \hat{\beta}_{opt} \left( \bm{q}, b, \Psi \right)^{\top} \sum_{t = 0}^{K} \E \left[ \hat{T}_{b, t} | \mathsf{Nu} \right]
\end{equation*}
where 
\begin{align}
& \E \left[ \hat{T}_{b, t} | \mathsf{Nu} \right] \label{bias:tbt} \\
& = \sum_{j = t + 1}^{K + 1} \sum_{m = t + 1}^{j - 1} \E \left[
\left. \left\{ \begin{array}{c}
\dfrac{1}{\hat{W}_{t + 1}^{m - 1}} \left( \dfrac{1}{\hat{p} (S_{m} |  \bar{H}_{m})} - \dfrac{1}{p (S_{m} | \bar{H}_{m})} \right) \dfrac{1}{W_{m + 1}^{j - 1}} \\ 
\times \left\{ \E \left[ y_{j, \hat{\eta}_{j, t}} (\bar{H}_{j}) | \bar{H}_{j - 1}, S_{j - 1} \right] - \hat{\eta}_{j - 1, t} (\bar{H}_{j - 1}, S_{j - 1}) \right\}
\end{array} \right\} \left( A_{t} - \hat{\Pi}_{t} \right) \right\vert \mathsf{Nu} \right] \notag \\
& + \sum_{j = t + 1}^{K + 1} \E \left[ \left. \frac{1}{W_{t + 1}^{j - 1}} \left\{ \E \left[ \left. y_{j, \hat{\eta}_{j, t}} (\bar{H}_{j}) \right\vert \bar{H}_{j - 1}, S_{j - 1} \right] - \hat{\eta}_{j - 1, t} (\bar{H}_{j - 1}, S_{j - 1}) \right\} \left( \Pi_{t} - \hat{\Pi}_{t} \right) \right\vert \mathsf{Nu} \right].  \notag
\end{align}
\end{lemma}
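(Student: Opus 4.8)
The plan is to prove the three displayed identities in turn, throughout reading $\E[\cdot \mid \mathsf{Nu}]$ as integration over the estimation sample under the true law $P$, with every hatted nuisance function frozen at its $\mathsf{Nu}$-measurable value (the estimation and nuisance samples being independent). For \eqref{bias:u} I would write $\hat{U}_t = \{\Delta_t(\Psi_t; \underaccent{\bar}{\Psi}_{t+1}) - \hat{\E}[\Delta_t \mid \bar{H}_t]\}\{Q_t(S_t, A_t) - \hat{\E}[Q_t \mid \bar{H}_t]\}$, condition on $(\bar{H}_t, \mathsf{Nu})$ so the two hatted means become constants, and expand the product into four terms. Because $Q_t$ is a function of $(\bar{H}_t, S_t, A_t)$, the $\mathbb{M}_1$ restriction $\E[\Delta_t \mid \bar{H}_t, S_t, A_t] = \E[\Delta_t \mid \bar{H}_t]$ of \eqref{eq:preU} gives $\E[\Delta_t Q_t \mid \bar{H}_t] = \E[\Delta_t \mid \bar{H}_t]\,\E[Q_t \mid \bar{H}_t]$, which makes the cross terms and the genuine term collapse into the single product $(\hat{\E}-\E)[\Delta_t \mid \bar{H}_t]\,(\hat{\E}-\E)[Q_t \mid \bar{H}_t]$; summing over $t$ yields \eqref{bias:u}. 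The second identity is then immediate: since $\hat{\BU}(\bm{q}, b, \Psi) = \hat{\BU}(\bm{q}, \Psi) - \hat{\beta}_{opt}(\bm{q}, b, \Psi)^{\top}\sum_{t=0}^{K}\hat{T}_{b,t}$ and $\hat{\beta}_{opt}$ is $\mathsf{Nu}$-measurable, I pull $\hat{\beta}_{opt}$ outside the conditional expectation by linearity and split the sum.

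The substance is \eqref{bias:tbt}. I would start from the second (dual) representation of $\hat{T}_{b,t}$ supplied by Lemma \ref{lem:alternative_nde}, namely $\hat{T}_{b,t} = (A_t - \hat{\Pi}_t)\sum_{j=t+1}^{K+1}(\hat{W}_{t+1}^{j-1})^{-1}\{y_{j,\hat{\eta}_j}(\bar{H}_j) - \hat{\eta}_{j-1}(\bar{H}_{j-1}, S_{j-1})\}$; recall that the algebraic equivalence of the two representations in \eqref{tbt} holds verbatim for the estimated nuisances, a fact I will reuse. Next I expand the weight discrepancy by the telescoping identity $(\hat{W}_{t+1}^{j-1})^{-1} = (W_{t+1}^{j-1})^{-1} + \sum_{m=t+1}^{j-1}(\hat{W}_{t+1}^{m-1})^{-1}\big(\hat{p}(S_m \mid \bar{H}_m)^{-1} - p(S_m \mid \bar{H}_m)^{-1}\big)(W_{m+1}^{j-1})^{-1}$. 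The telescoping remainder, together with $(A_t - \hat{\Pi}_t)$ and the increment, produces—after conditioning on $(\bar{H}_{j-1}, S_{j-1})$, under which all weight factors and $A_t - \hat{\Pi}_t$ are measurable for the nonempty terms $j \geq t+2$, and which turns the increment into $\E[y_{j,\hat{\eta}_j} \mid \bar{H}_{j-1}, S_{j-1}] - \hat{\eta}_{j-1}$—exactly the first (double) sum of \eqref{bias:tbt}. In the leftover true-weight part I split $A_t - \hat{\Pi}_t = (A_t - \Pi_t) + (\Pi_t - \hat{\Pi}_t)$: the $(\Pi_t - \hat{\Pi}_t)$ piece, reduced the same way, gives the second sum; the $(A_t - \Pi_t)$ piece vanishes, since re-expressing it through the algebraically equivalent first representation with true weights and estimated $\eta$ puts it in the form covered by leg (i) of Lemma \ref{lem:alternative_nde} (arbitrary $\eta$, correct $p$ and $\Pi$), whose mean is zero by the $\mathbb{M}_{\mathsf{NDE}}$ restriction \eqref{rest2} together with $\Pi_t = \E[A_t \mid \bar{H}_t, S_t]$.

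The main obstacle will be the measurability and index bookkeeping in this last step: correctly pairing each intermediate-time weight error at $m$ with the downstream increment error at $j$, handling the time-$t$ factor $A_t - \hat{\Pi}_t$ (which is $(\bar{H}_{j-1}, S_{j-1})$-measurable only for $j \geq t+2$, matching the empty inner sum at $j = t+1$), and verifying that every term carrying only one type of error integrates to zero through the two legs of double robustness, so that precisely the bilinear terms of \eqref{bias:tbt} survive.
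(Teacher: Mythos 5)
Your proposal is correct and follows essentially the same route as the paper's (very terse) proof: the four-term factorization for \eqref{bias:u} is the step the paper calls trivial, the telescoping expansion of $1/\hat{W}_{t+1}^{j-1} - 1/W_{t+1}^{j-1}$ is precisely the identity the paper imports from \citet{rotnitzky2017multiply}, and splitting $A_{t} - \hat{\Pi}_{t}$ with the $(A_{t} - \Pi_{t})$ true-weight piece annihilated via part (i) of Lemma \ref{lem:alternative_nde} is the paper's implicit first decomposition. Your explicit measurability bookkeeping for the conditioning on $(\bar{H}_{j-1}, S_{j-1})$ simply supplies details the paper omits.
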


The proof of the above lemma is given in Appendix \ref{app:bias}. Then we have the following theorem, the proof of which is standard; See for example \citet{chernozhukov2018double}.
\end{proof}

\subsubsection{Proof sketch of Lemma \ref{lem:unified_bias}}
\label{app:bias}

In this section, we sketch the proof of Lemma \ref{lem:unified_bias}.

\begin{proof}
The decomposition form of $\E \left[ \hat{\BU} \left( \bm{q}, \Psi \right) \vert \mathsf{Nu} \right]$ is trivial to obtain so we omit the derivation.

In terms of $\E \left[ \hat{T}_{b, t} \vert \mathsf{Nu} \right]$, it is easy to show that 
\begin{align*}
& \ \E \left[ \hat{T}_{b, t} \vert \mathsf{Nu} \right] \\
= & \ \underbrace{\E \left[ \left. \left\{ \sum_{j = t + 1}^{K} \left( \dfrac{1}{\hat{W}_{t + 1}^{j - 1}} - \frac{1}{W_{t + 1}^{j - 1}} \right) \left( \E \left[ y_{j, \hat{\eta}_{j, t}} | \bar{H}_{j - 1}, S_{j - 1} \right] - \hat{\eta}_{j - 1, t} \left( \bar{H}_{j - t}, S_{j - 1} \right) \right) \right\} \left( A_{t} - \hat{\Pi}_{t} \right) \right\vert \mathsf{Nu} \right]}_{(\star)} \\
& + \E \left[ \left. \left\{ \sum_{j = t + 1}^{K + 1} \dfrac{1}{W_{t + 1}^{j - 1}} \left( \E \left[ y_{j, \hat{\eta}_{j, t}} (\bar{H}_j) |\bar{H}_{j - 1}, S_{j - 1} \right] - \hat{\eta}_{j - 1, t} (\bar{H}_{j - t}, S_{j - 1}) \right) \right\} \left( \Pi_{t} - \hat{\Pi}_{t} \right) \right\vert \mathsf{Nu} \right].
\end{align*}
Then 
\begin{align*}
(\star) & = \sum_{j = t + 1}^{K + 1} \sum_{m = t + 1}^{j - 1} \E \left[ \left. \left\{ \begin{array}{c}
\dfrac{1}{\hat{W}_{t + 1}^{m - 1}} \left( \dfrac{1}{\hat{p} (S_{m} | \bar{H}_{m})} - \dfrac{1}{p (S_{m} | \bar{H}_{m})} \right) \dfrac{1}{W_{m + 1}^{j - 1}} \\ 
\times \left\{ \E \left[ y_{j, \hat{\eta}_{j, t}} (\bar{H}_{j}) | \bar{H}_{j - 1}, S_{j - 1} \right] - \hat{\eta}_{j - 1, t} (\bar{H}_{j - 1}, S_{j - 1}) \right\}%
\end{array} \right\} \left( A_{t} - \hat{\Pi}_{t} \right) \right\vert \mathsf{Nu} \right]
\end{align*}
which follows from the identity after equation (62) appeared in \citet[Section 5.2, page 60]{rotnitzky2017multiply}.
\end{proof}

\subsubsection{Sufficient conditions for the bias to be $o_{p} (n^{-1/2})$}

We now briefly discuss when the two bias terms $\E [\hat{\mathbb{U}} (\bm{q}, \Psi^{\ast}) | \mathsf{Nu}]$ and $\sum_{t = 0}^{K} \E [\hat{T}_{b, t} | \mathsf{Nu}]$ are $o_{p} (n^{-1/2})$. The first bias term can be controlled as follows: for any $\Psi$ 
\begin{align*}
\vert \E [\hat{\BU} (\bm{q}, \Psi) | \mathsf{Nu}] \vert \leq \sum_{t = 0}^{K + 1} \left\Vert (\hat{\E} - \E) \left[ \Delta_{t} \left( \Psi_{t}; \underaccent{\bar}{\Psi}_{t + 1} \right) | \bar{H}_{t} \right] \right\Vert \left\Vert (\hat{\E} - \mathsf{E}) \left[ Q_{t} \left( S_{t}, A_{t} \right) | \bar{H}_{t} \right]
\right\Vert
\end{align*}
where $\Vert \cdot \Vert$ denotes the $L_{2} (P)$-norm. Then a sufficient condition under which $\vert \E [\hat{\BU} (\bm{q}, \Psi^{\ast}) | \mathsf{Nu}] \vert$ is $o_{p} (n^{-1/2})$ is 
\begin{equation*}
\max_{t = 0, \ldots, K} \left\Vert \left( \hat{\E} - \E \right) \left[ \Delta_{t} \left( \Psi^{\ast}_{t}; \underaccent{\bar}{\Psi}^{\ast}_{t + 1} \right) | \bar{H}_{t} \right] \right\Vert \left\Vert \left( \hat{\E} - \E \right) \left[ Q_{t} \left( S_{t}, A_{t} \right) | \bar{H}_{t} \right] \right\Vert = o_{p} (n^{-1/2}).
\end{equation*}

That is, for every $t$, the rate of convergence in $L_{2} (P)$ of $\hat{\E} [\Delta_{t} (\Psi^{\ast}_{t}; \underaccent{\bar}{\Psi}^{\ast}_{t + 1}) | \bar{H}_{t}]$ to $\E [\Delta_{t} (\Psi^{\ast}_{t}; \underaccent{\bar}{\Psi}^{\ast}_{t + 1}) | \bar{H}_{t}]$ multiplied by rate of convergence of $\hat{\E} [Q_{t} (S_{t}, A_{t}) | \bar{H}_{t}]$ to $\E [Q_{t} (S_{t}, A_{t}) | \bar{H}_{t}]$ be $o_{p} (n^{-1/2})$. A bias that can be expressed as the sum of the product of the errors in the estimation of two different nuisance functions is referred to as rate double robustness by \citet{smucler2019unifying}.

A sufficient condition under which $\left\vert \sum_{t = 0}^{K} \E [\hat{T}_{b, t} | \mathsf{Nu}] \right\vert$ is $o_{p} (n^{-1/2})$ is 
\begin{align*}
& \ \left\vert \sum_{t = 0}^{K} \E [\hat{T}_{b, t} | \mathsf{Nu}] \right\vert \\
\leq & \ \sum_{t = 0}^{K} \sum_{j = t + 1}^{K + 1} \frac{1}{\varrho^{j - t - 1}} \left\Vert (\E - \hat{\E}) [y_{j, \hat{\eta}_{j, t}} | \bar{H}_{j - 1}, S_{j - 1}] \right\Vert \left\{ \begin{array}{c}
\Vert \Pi_{t} - \hat{\Pi}_{t} \Vert \\ 
+ \dfrac{\sup | A_{t} - \hat{\Pi}_{t} |}{\varrho^{-1}} \sum\limits_{m = t + 1}^{j - 1} \left\Vert \dfrac{1}{\hat{p} (S_{m} | \bar{H}_{m})} - \dfrac{1}{p (S_{m} | \bar{H}_{m})} \right\Vert 
\end{array} \right\} \\
= & \ o_{p} (n^{-1/2})
\end{align*}
where $\varrho$ is some constant that upper bounds $\max \{\max_{t = 0, \ldots, K} \{p (S_{t} | \bar{H}_{t})\}, \max_{t = 0, \ldots, K} \{\hat{p} (S_{t} | \bar{H}_{t})\}\}$, which is implied by the positivity condition \eqref{id1}. That is, the rates of convergence of $\hat{p} (S_{m} | \bar{H}_{m})$ to $p (S_{m} | \bar{H}_{m})$ and $\hat{\Pi}_{t}$ to $\Pi_{t}$ times the rate of convergence of $\hat{\eta}_{j - 1, t} (\bar{H}_{j - 1}, S_{j - 1}) = \hat{\E} \left[ y_{j, \hat{\eta}_{j, t}} (\bar{H}_{j}) | \bar{H}_{j - 1}, S_{j - 1} \right]$ to $\E \left[ y_{j, \hat{\eta}_{j, t}} (\bar{H}_{j}) | \bar{H}_{j - 1}, S_{j - 1} \right]$ is $o_{p} (n^{-1/2})$. Thus $\sum\limits_{t = 0}^{K} \E [\hat{T}_{b, t} | \mathsf{Nu}]$ is also rate doubly robust.

\subsection{The actual regimes of \citet{caniglia2019emulating}}
\label{app:actual}

The actual regimes used by \citet{caniglia2019emulating} were as follows. The two testing regimes $g_{x}$, required simultaneous tests for both HIV RNA and CD4 count every 9-12 months while RNA $<$ 200 copies/ml and CD4 $> x$ counts/ml for $x \in \{350, 500\}$; otherwise the tests are performed every 3-6 months. The two treatment regimes $g_{z}$ required one to switch to second line ART therapy 0-3 months after the first time the measured RNA level exceeds $z$ copies for $z \in \{200, 1000\}$. Let $g_{x, z} = (g_{x}, g_{z}), x \in \{350, 500\}, z \in \{200, 100\}$ denote the four regimes. The four-month intervals 9-12 months, 3-6 months, and 0-3 months are referred to as grace periods. Regimes with grace periods are often more clinically realistic and relevant than those without, as in practice, owing to logistical and medical problems, physicians require some flexibility in the timing of tests and treatment switches. However, a regime that includes a grace period is not well-defined without precisely specifying how the grace period is to be implemented. For example, in their implementation, \citet{caniglia2019emulating} assigned a marginal probability of 1/4 to each of the 4 months. Thus, regimes with grace periods are random rather than deterministic regimes.The simplified regimes considered in the main text eliminated the grace periods. Readers can consult \citet{caniglia2019emulating, robins2008estimation, neugebauer2017identification, kreif2020exploiting} for details concerning the analysis of random regimes. Due to such simplification, the counterfactual testing and treatment vector $(S_{m, g}, A_{m, g}) = g_{m} (\bar{H}_{m, g})$ is a deterministic function $g_{m} (\cdot)$ of the counterfactual history $\bar{H}_{m, g}$. We made other simplifications in the main text to avoid additional irrelevant complications. We simplified the dyn-MSM conditioned on baseline covariates actually used by \citet{caniglia2019emulating} by restricting consideration to saturated MSMs that do not condition on baseline covariates.

\subsection{Closed form expression for $\Lambda_{\mathsf{NDE}, obs}^{c, r^{\ast} \perp}$}
\label{app:close_obs}

\allowdisplaybreaks

In this section, we study the ortho-complement $\Lambda_{\mathsf{NDE}, obs}^{c, r^{\ast} \perp}$ to the tangent space in the model $\mathbb{M}_{\mathsf{NDE}, obs} (\bar{C}_{K}, \bar{R}_{K}^{\ast}, Y^{d})$ for the observed data induced by full data model $\mathbb{M}_{\mathsf{NDE}} (\bar{C}_{K}, \bar{R}_{K}^{\ast}, Y^{d})$ in which $\bar{R}_{K}^{\ast}$ is fully observed regardless of testing history. As defined in Section \ref{sec:nde_variety}, $\mathbb{M}_{\mathsf{NDE}} (\bar{C}_{K}, \bar{R}_{K}^{\ast}, Y^{d})$ asserts the truth of the following modified sequential randomization assumption and conditional independence for each $t = 0, \cdots, K$:
\begin{equation}
\begin{split} \label{modified}
(Y_{g}^{d}, Y_{g}, \underaccent{\bar}{R}_{t + 1,g}, \underaccent{\bar}{L}_{t + 1, g}, \underaccent{\bar}{R}_{t + 1, g}^{\ast}, \underaccent{\bar}{A}_{t, g}, \underaccent{\bar}{S}_{t, g}) & \amalg (A_{t}, S_{t}) | \bar{L}_{t}, \bar{R}_{t}, \bar{R}_{t}^{\ast}, (\bar{A}_{t - 1}, \bar{S}_{t - 1}) = \bar{g}_{t - 1} (\bar{H}_{t - 1}) \ \ \forall t, g, \\
\bar{R}_{t}^{\ast} & \amalg (S_{t}, A_{t}) | \bar{H}_{t}.
\end{split}
\end{equation}

Let $\bar{i}_{t - 1} \equiv \{i_{0}, \cdots, i_{t - 1}\}$ and $\underaccent{\bar}{i}_{t} \equiv \{i_{t}, \cdots, i_{K - 1}\}$ index arbitrary subsets of $\{0, 1\}^{t}$ and $\{0, 1\}^{K - t}$ respectively and $\bar{R}_{\bar{i}_{t - 1}}^{\ast} \equiv \{R_{k + 1}^{\ast}: 0 \leq k \leq t - 1, i_{k} = 1\}$ and $\underaccent{\bar}{R}_{\underaccent{\bar}{i}_{t}}^{\ast} \equiv \{R_{k + 1}^{\ast}: t \leq k \leq K - 1, i_{k} = 1\}$. Then it is easy to see that for every such subset $\bar{i}_{t - 1}$, if equations \eqref{modified} hold, they continue to hold when $\bar{R}_{t}^{\ast}$ is replaced by the subvector $\bar{R}_{\bar{i}_{t - 1}}^{\ast}$.

As a corollary of Theorem \ref{lem:nuisance_nde} in the main text, we can now characterize the ortho-complement $\Lambda_{\mathsf{NDE}}^{c, r^{\ast} \perp}$ to the tangent space of the model defined by equations \eqref{modified}. To do so, as in Section \ref{sec:nde}, we define for $t = 0, \cdots, K$,

\begin{equation*}
\mathcal{B}_{t}^{\text{full}} = \left\{ B_{t} \equiv b_{t} (\bar{R}_{t}^{\ast}, \bar{H}_{t}, \underaccent{\bar}{C}_{t + 1}, \underaccent{\bar}{R}_{t + 1}^{\ast}, \underaccent{\bar}{S}_{t}, Y^{d}): B_{t} \in L_{2} (P) \right\}
\end{equation*}
and for any $B_{t}$ in $\mathcal{B}_{t}^{\text{full}}$ we define
\begin{equation*}
D_{b, t}^{\text{full}} \equiv \frac{B_{t}}{W_{t + 1}} \left( \frac{A_{t}}{\Pi_{t}} - 1 \right).
\end{equation*}

\begin{corollary}
\begin{equation*}
\Lambda_{\mathsf{NDE}}^{c, r^{\ast} \perp} = \left\{ \sum_{t = 0}^{K - 1} D_{b, t}^{\text{full}} - \Pi [D_{b, t}^{\text{full}} \vert \Lambda_{\mathsf{ancillary}}]; B_{t} \in \mathcal{B}_{t}^{\text{full}}, t = 0, \cdots, K \right\}.
\end{equation*}
\end{corollary}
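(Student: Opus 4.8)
The plan is to obtain the statement as a verbatim transcription of Theorem \ref{lem:nuisance_nde}, after enlarging the function class $\mathcal B_t$ to $\mathcal B_t^{\text{full}}$ and replacing $D_{b,t}$ by $D_{b,t}^{\text{full}}$. The summand $D_{b,t}^{\text{full}}-\Pi[D_{b,t}^{\text{full}}\mid\Lambda_{\mathsf{ancillary}}]$ appearing in the corollary is precisely the enlarged analogue of $T_{b,t}=D_{b,t}-\Pi[D_{b,t}\mid\Lambda_{\mathsf{ancillary}}]$, so what must be checked is that the two structural facts underlying Theorem \ref{lem:nuisance_nde} persist: (a) the model $\mathbb M_{\mathsf{NDE}}(\bar C_K,\bar R_K^\ast,Y^d)$ is characterized, up to inequality constraints, by an enlarged moment restriction playing the role of \eqref{rest2}; and (b) the testing and treatment densities $p(S_m\mid\bar H_m)$, $p(A_m\mid\bar H_m,S_m)$ remain variationally unconstrained, so that the ortho-complement is spanned by residuals of the $D_{b,t}^{\text{full}}$ against the ancillary scores $\Lambda_{\mathsf{ancillary}}$.

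First I would derive the enlarged restriction. Adapting the reweighting argument of \citet{robins1999testing} that justified \eqref{rest2}---inverse weighting by $W_{t+1}$ renders $\underaccent{\bar}{S}_{t+1}$ completely randomized while $A_t$ is randomized within its conditioning set---together with the modified sequential-randomization assumption \eqref{modified}, which now conditions on $\bar R_t^\ast$, and the fact that NDE$(\bar C_K,\bar R_K^\ast,Y^d)$ protects $\underaccent{\bar}{C}_{t+1}$ and $\underaccent{\bar}{R}_{t+1}^\ast$ in addition to $Y^d$, I would obtain, for every $B_t\in\mathcal B_t^{\text{full}}$,
\begin{equation*}
\E\left[\left.\frac{B_t}{W_{t+1}}\right\vert\bar H_t,\bar R_t^\ast,S_t,A_t\right]=\E\left[\left.\frac{B_t}{W_{t+1}}\right\vert\bar H_t,\bar R_t^\ast,S_t\right],\qquad t=0,\ldots,K.
\end{equation*}
The second independence in \eqref{modified}, namely $\bar R_t^\ast\amalg(S_t,A_t)\mid\bar H_t$, then guarantees $\Pr(A_t=1\mid\bar H_t,\bar R_t^\ast,S_t)=\Pi_t$, so that $D_{b,t}^{\text{full}}=\tfrac{B_t}{W_{t+1}}(A_t/\Pi_t-1)$ has conditional mean zero given $(\bar H_t,\bar R_t^\ast,S_t)$ and is built from the same observed propensity $\Pi_t$ as in the main text.

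With the restriction established, \citet{robins1999testing} shows that these moment conditions (together with positivity) are the only restrictions imposed by $\mathbb M_{\mathsf{NDE}}(\bar C_K,\bar R_K^\ast,Y^d)$, and fact (b) holds because $p(S_m\mid\bar H_m)$ and $p(A_m\mid\bar H_m,S_m)$ depend on $\bar R_m^\ast$ only through $\bar H_m$ and are otherwise free. Exactly as in Appendix \ref{app:nuisance_nde}, it then follows that every element of $\Lambda_{\mathsf{NDE}}^{c,r^\ast\perp}$ is a sum over $t$ of residuals of $D_{b,t}^{\text{full}}$ against $\Lambda_{\mathsf{ancillary}}$, which is the displayed set. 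Finally I would record that the parametrizations $(A_t-\Pi_t)$ and $(A_t/\Pi_t-1)$ generate the same class because $B_t\mapsto B_t/\Pi_t$ is a bijection of $\mathcal B_t^{\text{full}}$ by positivity \ref{id1}, and that the $t=K$ summand is null under the convention $A_K\equiv0$, which is why the displayed sum runs only to $K-1$.

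The main obstacle is the derivation of the enlarged restriction in the presence of the latent $\bar R_K^\ast$. One must argue that, in the $W_{t+1}$-reweighted law, $A_t$ is conditionally independent of the \emph{entire} protected future $(\underaccent{\bar}{S}_{t+1},\underaccent{\bar}{C}_{t+1},\underaccent{\bar}{R}_{t+1}^\ast,Y^d)$ given $(\bar H_t,\bar R_t^\ast,S_t)$, which requires chaining the modified sequential-randomization assumption across the times $m>t$, invoking consistency of $\bar R_K^\ast$ and the no-direct-effect equalities, and verifying that weighting by $W_{t+1}$---a function of the observed $(\bar H_m,S_m)$ alone---does not disturb the conditioning on the latent $\bar R_t^\ast$. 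Establishing that the second independence in \eqref{modified} simultaneously fixes the propensity at $\Pi_t$ and keeps the treatment/testing mechanism unconstrained given $\bar H_m$ is the delicate point; once these are in place, the remainder is a routine transcription of the argument already given for Theorem \ref{lem:nuisance_nde}.
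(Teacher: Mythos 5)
Your proposal is correct and follows essentially the same route as the paper, which states this result without separate proof precisely because it is the verbatim transcription of Theorem \ref{lem:nuisance_nde} to the full-data model with the enlarged class $\mathcal{B}_t^{\text{full}}$: the enlarged analogue of restriction \eqref{rest2} (justified by the modified ID assumptions \eqref{modified} and \citet{robins1999testing}), the unconstrained treatment/testing densities, and the resulting residuals against $\Lambda_{\mathsf{ancillary}}$. Your two housekeeping observations — that $(A_t/\Pi_t-1)$ versus $(A_t-\Pi_t)$ is an immaterial reparametrization by positivity, and that the $t=K$ term vanishes since $A_K=\Pi_K=0$ — also match the remarks the paper records alongside the corollary.
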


Note that if we adopt the convention $0 / 0 \equiv 1$, $D_{b, t}^{\text{full}}$ is well-defined and identically 0 at $t = K$ (when $A_{K} = \Pi_{K} = 0$). Therefore $\sum\limits_{t = 0}^{K - 1} D_{b, t}^{\text{full}} - \Pi [D_{b, t}^{\text{full}} \vert \Lambda_{\mathsf{ancillary}}] \equiv \sum\limits_{t = 0}^{K} D_{b, t}^{\text{full}} - \Pi [D_{b, t}^{\text{full}} \vert \Lambda_{\mathsf{ancillary}}]$ and we may write either $K$ or $K - 1$ for convenience.

\begin{lemma}
\label{lem:rewrite} 
Any $B_{t} \in \mathcal{B}_{t}^{\text{full}}$ can be rewritten as
\begin{equation*}
B_{t} = \sum_{\bar{i}_{t - 1} \in \{0, 1\}^{t}, \underaccent{\bar}{i}_{t} \in \{0, 1\}^{K - t}} b_{t, \bar{i}_{t - 1}, \underaccent{\bar}{i}_{t}} (\bar{R}_{\bar{i}_{t - 1}}^{\ast}, \bar{H}_{t}, \underaccent{\bar}{C}_{t + 1}, \underaccent{\bar}{R}_{\underaccent{\bar}{i}_{t}}^{\ast}, \underaccent{\bar}{S}_{t}, Y^{d})
\end{equation*}
for some functions $b_{t, \bar{i}_{t - 1}, \underaccent{\bar}{i}_{t}}$ that depend on $\bar{R}_{t}^{\ast}$ and $\underaccent{\bar}{R}_{t + 1}^{\ast}$ through $\bar{R}_{\bar{i}_{t - 1}}^{\ast}$ and $\underaccent{\bar}{R}_{\underaccent{\bar}{i}_{t}}^{\ast}$ only.
\end{lemma}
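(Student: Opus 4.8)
The plan is to prove Lemma~\ref{lem:rewrite} as a purely algebraic statement about splitting the dependence of $B_t$ on the underlying test results into pieces indexed by subsets of coordinates; no probabilistic content of the model is needed. First I would gather all arguments of $B_t$ other than the $R^\ast$'s into a single block $W \equiv (\bar{H}_t, \underaccent{\bar}{C}_{t+1}, \underaccent{\bar}{S}_t, Y^d)$, on which every piece is permitted to depend arbitrarily, and write $B_t = f(R_1^\ast, \ldots, R_K^\ast; W)$, using that $\bar{R}_t^\ast = (R_1^\ast, \ldots, R_t^\ast)$ and $\underaccent{\bar}{R}_{t+1}^\ast = (R_{t+1}^\ast, \ldots, R_K^\ast)$ together exhaust $R_1^\ast, \ldots, R_K^\ast$. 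A pair of index patterns $(\bar{i}_{t-1}, \underaccent{\bar}{i}_t) \in \{0,1\}^t \times \{0,1\}^{K-t}$ is then in bijection with a subset $S \subseteq \{1, \ldots, K\}$ via $S = \{k+1 : i_k = 1\}$, and $(\bar{R}_{\bar{i}_{t-1}}^\ast, \underaccent{\bar}{R}_{\underaccent{\bar}{i}_t}^\ast)$ is exactly $\{R_j^\ast : j \in S\}$. The goal becomes: write $f$ as a sum over all $S \subseteq \{1,\ldots,K\}$ of functions each depending on the $R^\ast$ coordinates only through those indexed by $S$.

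This is precisely the anchored (Sobol / cut-HDMR) decomposition, and I would construct it explicitly by inclusion--exclusion. Fix a reference point $r^\circ = (r_1^\circ, \ldots, r_K^\circ)$ in the support of $(R_1^\ast, \ldots, R_K^\ast)$, and for each $S$ define $b_{t,S}(\{R_j^\ast\}_{j \in S}; W) \equiv \sum_{T \subseteq S} (-1)^{|S| - |T|} f(r_T : r^\circ; W)$, where $f(r_T : r^\circ; W)$ denotes $f$ evaluated with coordinate $j$ set to its actual value $R_j^\ast$ for $j \in T$ and to the anchor $r_j^\circ$ for $j \notin T$. Two facts must be checked. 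First, $b_{t,S}$ depends on the $R^\ast$'s only through $\{R_j^\ast : j \in S\}$, which is immediate since only coordinates in $T \subseteq S$ ever appear. Second, $\sum_{S} b_{t,S} = f$, which is the standard M\"obius/telescoping identity: swapping the order of summation gives $\sum_S \sum_{T \subseteq S}(-1)^{|S|-|T|} f(r_T : r^\circ; W) = \sum_T f(r_T : r^\circ; W)\sum_{S \supseteq T}(-1)^{|S|-|T|}$, and the inner sum equals $\sum_{U \subseteq \{1,\ldots,K\}\setminus T}(-1)^{|U|}$, which vanishes unless $T = \{1,\ldots,K\}$ and equals $1$ in that case, leaving only $f(r_{\{1,\ldots,K\}} : r^\circ; W) = f(R_1^\ast, \ldots, R_K^\ast; W)$. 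Re-indexing $S$ into its past part $\bar{i}_{t-1}$ (from $S \cap \{1,\ldots,t\}$) and its future part $\underaccent{\bar}{i}_t$ (from $S \cap \{t+1,\ldots,K\}$) then yields exactly the displayed representation; equivalently, one may view the construction as peeling off one coordinate at a time, each step adding a term that does not use that coordinate.

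The one step requiring genuine care, and the main obstacle, is verifying that each $b_{t,S}$ lies in $L_2(P)$, so that the representation is legitimate inside $\mathcal{B}_t^{\text{full}}$. Since each $b_{t,S}$ is a finite linear combination of the partially anchored functions $f(r_T : r^\circ; W)$, it suffices to control these individually. When the $R_k^\ast$ are discrete, as in every example in the paper, I would choose $r^\circ$ in an atom; then $\E[f(r_T : r^\circ; W)^2] \le p_{\min}^{-1}\,\E[f(R_1^\ast,\ldots,R_K^\ast;W)^2]$, where $p_{\min}$ lower-bounds the relevant conditional anchor probabilities (positive by positivity), so each piece inherits square-integrability from $B_t$. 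In the continuous case I would instead anchor by conditional expectation, replacing $f(r_T : r^\circ; W)$ by $\E[\,f \mid \{R_j^\ast\}_{j \in T}, W]$; these stay in $L_2(P)$ automatically by Jensen, and the identical M\"obius computation (using $\E[f \mid R_{\{1,\ldots,K\}}^\ast, W] = f$) again gives $\sum_S b_{t,S} = f$ with $b_{t,S}$ depending only on $\{R_j^\ast : j \in S\}$. Either route establishes the claim; only the integrability argument differs.
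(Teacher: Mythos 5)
Your argument is correct, but it is a genuinely different — and far heavier — route than the one the paper (implicitly) takes. Lemma \ref{lem:rewrite} is a pure existence claim, and the paper states it without proof because the trivial decomposition already works: take the component indexed by $\bar{i}_{t-1} = (1,\ldots,1)$ and $\underaccent{\bar}{i}_{t} = (1,\ldots,1)$, for which $\bar{R}_{\bar{i}_{t-1}}^{\ast} = \bar{R}_{t}^{\ast}$ and $\underaccent{\bar}{R}_{\underaccent{\bar}{i}_{t}}^{\ast} = \underaccent{\bar}{R}_{t+1}^{\ast}$, to be $B_{t}$ itself, and set every other component to zero (the zero function vacuously depends on its arguments through any subset). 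This disposes of the lemma in one line and sidesteps every integrability question, since the single nonzero component is $B_{t} \in L_{2}(P)$ by hypothesis. Your anchored M\"obius/cut-HDMR construction is valid and proves something strictly stronger — a canonical decomposition in which each component genuinely lives on exactly its index set — but that extra structure is not used anywhere: in the proof of Theorem \ref{thm:obs_nde} the authors determine which components must vanish for observability by evaluating at specific values of $\bar{A}_{K}$, not by appealing to a canonical splitting. Moreover, the cost of your stronger construction is real: anchoring an $L_{2}(P)$ equivalence class at a fixed point $r^{\circ}$ is not well defined without pointwise regularity, and your bound $\E[f(r_{T}:r^{\circ};W)^{2}] \leq p_{\min}^{-1}\,\E[f^{2}]$ requires a positivity condition on the conditional law of $\bar{R}_{K}^{\ast}$ that the paper never assumes (its positivity assumption concerns only the testing and treatment probabilities $\Pi_{m}$ and $p(s_{m}\mid\bar{H}_{m})$). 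Your conditional-expectation variant repairs this cleanly via Jensen, so your proof does go through — it simply does a great deal of work that the statement does not ask for.
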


\begin{remark}
Since $\underaccent{\bar}{R}_{t + 1}^{\ast} \equiv \{R_{t + 1}^{\ast}, \cdots, R_{K}^{\ast}\}$ is not necessarily observed, $D_{b,t}^{\text{full}}$ and, consequently, the elements of $\Lambda_{\mathsf{NDE}}^{c, r^{\ast} \perp}$ may not be statistics.
\end{remark}

Next define
\begin{equation*}
\mathcal{B}_{t, \bar{i}_{t - 1}, \underaccent{\bar}{i}_{t + 1}} \coloneqq \left\{ B_{t, \bar{i}_{t - 1}, \underaccent{\bar}{i}_{t + 1}} \equiv b_{t, \bar{i}_{t - 1}, \underaccent{\bar}{i}_{t + 1}} (\bar{R}_{\bar{i}_{t - 1}}^{\ast}, \bar{H}_{t}, \underaccent{\bar}{C}_{t + 1}, \underaccent{\bar}{R}_{\underaccent{\bar}{i}_{t + 1}}^{\ast}, \underaccent{\bar}{S}_{t}, Y^{d}): B_{t, \bar{i}_{t - 1}, \underaccent{\bar}{i}_{t + 1}} \in L_{2} (P) \right\}
\end{equation*}
and for any $B_{t, \bar{i}_{t - 1}, \underaccent{\bar}{i}_{t + 1}}$ in $\mathcal{B}_{t, \bar{i}_{t - 1}, \underaccent{\bar}{i}_{t + 1}}$ define 
\begin{equation*}
\mathcal{D}_{t, \bar{i}_{t - 1}, \underaccent{\bar}{i}_{t + 1}}^{\text{obs}} \equiv \left\{ D_{b, t, \bar{i}_{t - 1}, \underaccent{\bar}{i}_{t + 1}}^{\text{obs}} \equiv \frac{B_{t, \bar{i}_{t - 1}, \underaccent{\bar}{i}_{t + 1}}}{W_{t + 1}} \left( \frac{A_{t}}{\Pi_{t}} - 1 \right) \prod_{t' = 0}^{t - 1} \left( \frac{A_{t'}}{\Pi_{t'}} \right)^{i_{t'}} \prod_{t' = t + 1}^{K - 1} \left( \frac{A_{t'}}{\Pi_{t'}} \right)^{i_{t'}}: B_{t, \bar{i}_{t - 1}, \underaccent{\bar}{i}_{t + 1}} \in \mathcal{B}_{t, \bar{i}_{t - 1}, \underaccent{\bar}{i}_{t + 1}} \right\}
\end{equation*}
and 
\begin{equation*}
\Upsilon^{\text{obs}} \equiv \left\{ \sum_{t = 0}^{K - 1} \sum_{\bar{i}_{t - 1} \in \{0, 1\}^{t}, \underaccent{\bar}{i}_{t + 1} \in \{0, 1\}^{K - t - 1}} \left\{ D_{b, t, \bar{i}_{t - 1}, \underaccent{\bar}{i}_{t + 1}}^{\text{obs}} - \Pi [D_{b, t, \bar{i}_{t - 1}, \underaccent{\bar}{i}_{t + 1}}^{\text{obs}} | \Lambda_{\mathsf{ancillary}}] \right\}: D_{b, t, \bar{i}_{t - 1}, \underaccent{\bar}{i}_{t + 1}}^{\text{obs}} \in \mathcal{D}_{t, \bar{i}_{t - 1}, \underaccent{\bar}{i}_{t + 1}}^{\text{obs}} \right\}.
\end{equation*}

\begin{theorem}
\label{thm:obs_nde}
For $K \geq 1$, $\Upsilon^{\text{obs}}$ is identical to the ortho-complement $\Lambda_{\mathsf{NDE}, obs}^{c, r^{\ast} \perp}$ to the tangent space of model $\mathbb{M}_{\mathsf{NDE}, obs} (\bar{C}_{K}, \bar{R}_{K}^{\ast}, Y^{d})$.
\end{theorem}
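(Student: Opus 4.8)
The plan is to establish the set equality $\Upsilon^{\text{obs}} = \Lambda_{\mathsf{NDE}, obs}^{c, r^{\ast} \perp}$ by proving two inclusions, using three ingredients already in hand: the preceding corollary characterizing the full-data ortho-complement $\Lambda_{\mathsf{NDE}}^{c, r^{\ast} \perp}$, the decomposition of Lemma \ref{lem:rewrite}, and the conditional-independence restrictions \eqref{modified}, in particular $\bar{R}_t^{\ast} \amalg (S_t, A_t) \mid \bar{H}_t$, which makes the coarsening of $\bar{R}_K^{\ast}$ (revealed at $k$ iff $A_k = 1$) a sequential coarsening-at-random mechanism. Since the testing and treatment laws $p(S_t \mid \bar{H}_t)$ and $\Pi_t$ are unrestricted in $\mathbb{M}_{\mathsf{NDE}, obs}$, both sets lie in the ortho-complement of $\Lambda_{\mathsf{ancillary}}$, so throughout I would work modulo $\Lambda_{\mathsf{ancillary}}$, comparing the generators $D_{b, t, \bar{i}_{t-1}, \underaccent{\bar}{i}_{t+1}}^{\text{obs}}$ before their ancillary projections are subtracted.

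For the inclusion $\Upsilon^{\text{obs}} \subseteq \Lambda_{\mathsf{NDE}, obs}^{c, r^{\ast} \perp}$, I would first check that each generator is a statistic. The factor $(A_{t'}/\Pi_{t'})^{i_{t'}}$ vanishes whenever $A_{t'} = 0$; since $B_{t, \bar{i}_{t-1}, \underaccent{\bar}{i}_{t+1}}$ depends on $R_{t'+1}^{\ast}$ (for an index $t' \neq t$ with $i_{t'} = 1$) only through $\bar{R}_{\bar{i}_{t-1}}^{\ast}$ or $\underaccent{\bar}{R}_{\underaccent{\bar}{i}_{t+1}}^{\ast}$, and $R_{t'+1}^{\ast}$ is revealed exactly when $A_{t'} = 1$, every product $D_{b, t, \bar{i}_{t-1}, \underaccent{\bar}{i}_{t+1}}^{\text{obs}}$ is a function of $O$. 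The absence of an $i_t$-factor is consistent with $(A_t/\Pi_t - 1)$ being nonzero also when $A_t = 0$, which is precisely why $B_{t, \bar{i}_{t-1}, \underaccent{\bar}{i}_{t+1}}$ must not depend on $R_{t+1}^{\ast}$. Orthogonality and mean zero would then be transferred from the full-data setting: using $\E[A_{t'}/\Pi_{t'} \mid \bar{H}_{t'}, S_{t'}] = 1$ together with $\bar{R}_{t'}^{\ast} \amalg (S_{t'}, A_{t'}) \mid \bar{H}_{t'}$, iterated conditioning shows the inverse-probability weights integrate out, so that the mean and the inner products of $D_{b, t, \bar{i}_{t-1}, \underaccent{\bar}{i}_{t+1}}^{\text{obs}}$ with the observed outcome-process scores match those of the full-data generator $D_{b', t}^{\text{full}}$ (with $B_t' = B_{t, \bar{i}_{t-1}, \underaccent{\bar}{i}_{t+1}}$ viewed as an element of $\mathcal{B}_t^{\text{full}}$), and the latter vanish by the corollary.

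For the reverse inclusion $\Lambda_{\mathsf{NDE}, obs}^{c, r^{\ast} \perp} \subseteq \Upsilon^{\text{obs}}$, I would replay the argument proving Theorem \ref{lem:nuisance_nde}, but at the level of the observed data. Because the testing and treatment mechanism is unrestricted, any element is, modulo $\Lambda_{\mathsf{ancillary}}$, a linear combination of residuals (onto the ancillary scores) of observed-data functions $D$ that generate the ortho-complement of the restricted outcome-process tangent space. Expressing the underlying full-data weight $B_t$ via Lemma \ref{lem:rewrite} as a sum over the dependence patterns $(\bar{i}_{t-1}, \underaccent{\bar}{i}_{t+1})$, and imposing that each such $D$ be a statistic, forces the pattern indexed by $(\bar{i}_{t-1}, \underaccent{\bar}{i}_{t+1})$ to carry exactly the factors $\prod_{t' \neq t} (A_{t'}/\Pi_{t'})^{i_{t'}}$ needed to access the partially observed coordinates of $\bar{R}_K^{\ast}$, while admitting no dependence on $R_{t+1}^{\ast}$, as argued above. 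This identifies $D$ with $D_{b, t, \bar{i}_{t-1}, \underaccent{\bar}{i}_{t+1}}^{\text{obs}}$ and places the element in $\Upsilon^{\text{obs}}$.

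The main obstacle is this reverse inclusion. It requires correctly identifying the observed-data tangent space under the sequential coarsening of $\bar{R}_K^{\ast}$ and verifying that the weights $\prod_{t' \neq t}(A_{t'}/\Pi_{t'})^{i_{t'}}$ parametrize \emph{exactly} the freedom remaining in the ortho-complement once $\bar{R}_K^{\ast}$ is only partially observed. The delicate part is to show that nothing beyond the $\Upsilon^{\text{obs}}$ form is orthogonal to the tangent space: one must rule out additional directions created by the coarsening and confirm that the decomposition of Lemma \ref{lem:rewrite} exhausts all admissible $R^{\ast}$-dependence patterns, with the current-time factor $(A_t/\Pi_t - 1)$ consistently excluding dependence on $R_{t+1}^{\ast}$.
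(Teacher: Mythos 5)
Your overall architecture matches the paper's: prove both inclusions, use the full-data characterization of $\Lambda_{\mathsf{NDE}}^{c, r^{\ast} \perp}$, and exploit the requirement of being a statistic to constrain the admissible $R^{\ast}$-dependence. The forward inclusion is essentially fine, though the paper proves it differently and more cleanly: rather than arguing that means and inner products are preserved under the extra inverse-probability weights, it absorbs $\prod_{t' < t}(A_{t'}/\Pi_{t'})^{i_{t'}}$ into the $\bar{H}_t$-measurable part of $B_t$ and expands $\prod_{t' > t}(A_{t'}/\Pi_{t'})^{i_{t'}} = \prod\{(A_{t'}/\Pi_{t'}-1)+1\}^{i_{t'}}$ over subsets, exhibiting each generator of $\Upsilon^{\text{obs}}$ \emph{algebraically} as an element of the full-data form $\sum_t D^{\text{full}}_{b,t}$ that happens to be a statistic.

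The genuine gap is the reverse inclusion, which you correctly flag as the main obstacle but do not actually supply. Two things are missing. First, you impose the statistic requirement on each term $D^{\text{full}}_{b,t}$ (equivalently, on each dependence pattern from Lemma \ref{lem:rewrite}) separately, but an element of $\mathcal{D}^{\text{full}}$ is only required to be a statistic \emph{as a sum over $t$}; non-statistic pieces of different terms could in principle cancel. The paper handles this by first proving the reparametrization $\mathcal{D}^{\text{full}} = \widetilde{\mathcal{D}}^{\text{full}}$, in which the summands are indexed by distinct monomials in $\{A_{t'}/\Pi_{t'}\}$ (this requires solving an underdetermined triangular linear system relating the $B_t$ to the $\widetilde{B}_{t,\bar{j}_{t-1}}$), so that the statistic condition can legitimately be localized. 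Second, the mechanism by which "being a statistic forces no dependence on unobserved $R^{\ast}$ coordinates" is a concrete peeling argument: evaluate the candidate element at each configuration of $(A_0,\ldots,A_{K-1})$ in turn (e.g.\ for $K=2$: at $A_0=A_1=0$ to kill the $R_1^{\ast}, R_2^{\ast}$ dependence of $\widetilde{B}_{1,0}$, then at $A_0=1, A_1=0$, then at $A_0=0, A_1=1$) and deduce coordinate by coordinate which $R^{\ast}$ arguments each $\widetilde{B}$ may carry. Your proposal asserts the conclusion of this step ("forces the pattern \ldots to carry exactly the factors \ldots while admitting no dependence on $R_{t+1}^{\ast}$") without the argument, and since this is precisely where the theorem's content lies, the proof is incomplete as written.
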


\begin{remark}
Notice that $D_{b, t, \bar{i}_{t - 1}, \underaccent{\bar}{i}_{t + 1}}^{\text{obs}}$ is a function of the observed data (and so is every member of $\Upsilon^{\text{obs}}$) because its value is unchanged if we replace $\bar{R}_{\bar{i}_{t - 1}}^{\ast}$ and $\underaccent{\bar}{R}_{\underaccent{\bar}{i}_{t + 1}}^{\ast}$ by $\bar{R}_{\bar{i}_{t - 1}}$ and $\underaccent{\bar}{R}_{\underaccent{\bar}{i}_{t + 1}}$. Also note that the functions $B_{t, \bar{i}_{t - 1}, \underaccent{\bar}{i}_{t + 1}}$ occurring in $\mathcal{D}_{t, \bar{i}_{t - 1}, \underaccent{\bar}{i}_{t + 1}}^{\text{obs}}$ do not depend on $R_{t + 1}^{\ast}$. Had $\mathcal{D}_{t, \bar{i}_{t - 1}, \underaccent{\bar}{i}_{t + 1}}^{\text{obs}}$ included a function $B_{t, \bar{i}_{t - 1}, \underaccent{\bar}{i}_{t + 1}}$ that depended on $R_{t + 1}^{\ast}$, it would have been necessary to include a factor $A_{t} / \Pi_{t}$ to make $D_{t, \bar{i}_{t - 1}, \underaccent{\bar}{i}_{t + 1}}^{\text{obs}}$ a function of the observed data, in which case $D_{t, \bar{i}_{t - 1}, \underaccent{\bar}{i}_{t + 1}}^{\text{obs}}$ would include the product $(A_{t} / \Pi_{t}) (A_{t} / \Pi_{t} - 1)$. Since this product does not have mean zero it follows that the mean of $D_{t, \bar{i}_{t - 1}, \underaccent{\bar}{i}_{t + 1}}^{\text{obs}}$ would also be nonzero.
\end{remark}

\begin{proof}
\allowdisplaybreaks
First, we observe that $\prod\limits_{t' = 0}^{t - 1} \left( \frac{A_{t'}}{\Pi_{t'}} \right)^{i_{t'}}$ is a function of $\bar{H}_{t}$. Thus, we can rewrite $D_{b, t, \bar{i}_{t - 1}, \underaccent{\bar}{i}_{t + 1}}^{\text{obs}}$ as 
\begin{equation*}
D_{b, t, \bar{i}_{t - 1}, \underaccent{\bar}{i}_{t + 1}}^{\text{obs}} = \frac{\widetilde{B}_{t, \bar{i}_{t - 1}, \underaccent{\bar}{i}_{t + 1}}}{W_{t + 1}} \left( \frac{A_{t}}{\Pi_{t}} - 1 \right) \prod_{t' = t + 1}^{K - 1} \left( \frac{A_{t'}}{\Pi_{t'}} \right)^{i_{t'}}
\end{equation*}
where $\widetilde{B}_{t, \bar{i}_{t - 1}, \underaccent{\bar}{i}_{t + 1}} \equiv B_{t, \bar{i}_{t - 1}, \underaccent{\bar}{i}_{t + 1}} \prod\limits_{t' = 0}^{t - 1} \left( \frac{A_{t'}}{\Pi_{t'}} \right)^{i_{t'}}$. Next write, 
\begin{align*}
\prod_{t' = t + 1}^{K - 1} \left( \frac{A_{t'}}{\Pi_{t'}} \right)^{i_{t'}} & = \prod_{t' = t + 1}^{K - 1} \left\{ \left( \frac{A_{t'}}{\Pi_{t'}} - 1 \right) + 1 \right\}^{i_{t'}} \\
& = 1 + \sum_{\mathfrak{s} \subset \{t + 1 \leq k \leq K - 1: i_{k} = 1\}, \mathfrak{s} \neq \emptyset} \prod_{m \in \mathfrak{s}} \left( \frac{A_{m}}{\Pi_{m}} - 1 \right).
\end{align*}
Then, by denoting $W_{t_{1}}^{t_{2}} \equiv \prod\limits_{t = t_{1}}^{t_{2}} p (S_{t} | \bar{H}_{t})$ for any $t_{1} < t_{2}$,
\begin{align*}
D_{b, t, \bar{i}_{t - 1}, \underaccent{\bar}{i}_{t + 1}}^{\text{obs}} & = \frac{\widetilde{B}_{t, \bar{i}_{t - 1}, \underaccent{\bar}{i}_{t + 1}}}{W_{t + 1}} \left( \frac{A_{t}}{\Pi_{t}} - 1 \right) + \sum_{\mathfrak{s} \subset \{t + 1 \leq k \leq K - 1: i_{k} = 1\}, \mathfrak{s} \neq \emptyset} \frac{\widetilde{B}_{t, \bar{i}_{t - 1}, \underaccent{\bar}{i}_{t + 1}}}{W_{t + 1}^{\bar{\mathfrak{s}}}} \frac{1}{W_{\bar{\mathfrak{s}} + 1}} \left( \frac{A_{t}}{\Pi_{t}} - 1 \right) \prod_{m \in \mathfrak{s}} \left( \frac{A_{m}}{\Pi_{m}} - 1 \right) \\
& = \frac{\widetilde{B}_{t, \bar{i}_{t - 1}, \underaccent{\bar}{i}_{t + 1}}}{W_{t + 1}} \left( \frac{A_{t}}{\Pi_{t}} - 1 \right) + \sum_{\mathfrak{s} \subset \{t + 1 \leq k \leq K - 1: i_{k} = 1\}, \mathfrak{s} \neq \emptyset} \frac{\widetilde{B}_{\bar{\mathfrak{s}}, \bar{i}_{\bar{\mathfrak{s}}}, \underaccent{\bar}{i}_{\bar{\mathfrak{s}}}}}{W_{\bar{\mathfrak{s}} + 1}} \left( \frac{A_{\bar{\mathfrak{s}}}}{\Pi_{\bar{\mathfrak{s}}}} - 1 \right)
\end{align*}
where $\bar{\mathfrak{s}} \equiv \max \{m: m \in \mathfrak{s}\}$ and 
\begin{equation*}
\widetilde{B}_{\bar{\mathfrak{s}}, \bar{i}_{\bar{\mathfrak{s}}}, \underaccent{\bar}{i}_{\bar{\mathfrak{s}}}} \equiv \widetilde{B}_{t, \bar{i}_{t - 1}, \underaccent{\bar}{i}_{t + 1}} \frac{1}{W_{t + 1}^{\bar{\mathfrak{s}}}} \left( \frac{A_{t}}{\Pi_{t}} - 1 \right) \prod_{m \in \mathfrak{s}, m \neq \bar{\mathfrak{s}}} \left( \frac{A_{m}}{\Pi_{m}} - 1 \right).
\end{equation*}
Next, noticing that $\widetilde{B}_{\bar{\mathfrak{s}}, \bar{i}_{\bar{\mathfrak{s}} - 1}, \underaccent{\bar}{i}_{\bar{\mathfrak{s}}}}$ is a function, say $\widetilde{b}_{\bar{\mathfrak{s}}, \bar{i}_{\bar{\mathfrak{s}} - 1}, \underaccent{\bar}{i}_{\bar{\mathfrak{s}}}}$, of 
\begin{equation*}
(\bar{R}_{\bar{i}_{\bar{\mathfrak{s}} - 1}}^{\ast}, \bar{H}_{\bar{\mathfrak{s}}}, \underaccent{\bar}{C}_{\bar{\mathfrak{s}} + 1}, \underaccent{\bar}{R}_{\underaccent{\bar}{i}_{\bar{\mathfrak{s}}}}^{\ast}, \underaccent{\bar}{S}_{\bar{\mathfrak{s}}}, Y^{d})
\end{equation*}
we conclude that $D_{b, t, \bar{i}_{t - 1}, \underaccent{\bar}{i}_{t + 1}}^{\text{obs}}$ is of the form $\sum\limits_{t=0}^{K}D_{b,t}^{\text{full}}$ and consequently $$\sum_{t = 0}^{K} \sum_{\bar{i}_{t - 1}, \underaccent{\bar}{i}_{t + 1}} \left\{ D_{b, t, \bar{i}_{t - 1}, \underaccent{\bar}{i}_{t + 1}}^{\text{obs}} - \Pi [D_{b, t, \bar{i}_{t - 1}, \underaccent{\bar}{i}_{t + 1}}^{\text{obs}} | \Lambda_{\mathsf{ancillary}}] \right\} \in \Lambda_{\mathsf{NDE}}^{c, r^{\ast} \perp}.$$
Furthermore, $D_{b, t, \bar{i}_{t - 1}, \underaccent{\bar}{i}_{t + 1}}^{\text{obs}}$ is a function of the observed data $O = (\bar{H}_{K}, S_{K}, Y^{d})$. Hence $\Upsilon^{\text{obs}} \subset \Lambda_{\mathsf{NDE}, \text{obs}}^{c, r^{\ast} \perp}$.

Turn now to the proof that $\Lambda_{\mathsf{NDE}, \text{obs}}^{c, r^{\ast} \perp} \subset \Upsilon^{\text{obs}}$. We show this inclusion for the case $K = 2$ in detail and sketch the proof for general $K$ as it does not involve new ideas but requires cumbersome notation. Let 
\begin{align*}
\mathcal{D}^{\text{full}} \equiv & \ \left\{ D_{b, 0}^{\text{full}} + D_{b, 1}^{\text{full}}: B_{t} \in \mathcal{B}_{t}^{\text{full}}, t = 0, 1 \right\} 
\\
\equiv & \ \left\{ \frac{B_{0}}{p (S_{1} | \bar{H}_{1}) p (S_{2} | \bar{H}_{2})} \left( \frac{A_{0}}{\Pi_{0}} - 1 \right) + \frac{B_{1}}{p (S_{2} | \bar{H}_{2})} \left( \frac{A_{1}}{\Pi_{1}} - 1 \right): \right. \\
& \left. B_{0} = b_{0} (L_{0}, (C_{1}, C_{2}), (R_{1}^{\ast}, R_{2}^{\ast}), (S_{0}, S_{1}, S_{2}), Y^{d}) \in \mathcal{B}_{0}^{\text{full}}\text{ and } \right. \\
& \left. B_{1} = b_{1} (R_{1}^{\ast}, \bar{H}_{1}, C_{2}, R_{2}^{\ast}, (S_{1}, S_{2}), Y^{d}) \in \mathcal{B}_{1}^{\text{full}} \right\}.
\end{align*}
When $K = 2$, we have 
\begin{equation*}
\Lambda_{\mathsf{NDE}}^{c, r^{\ast} \perp} = \left\{ D_{b, 0}^{\text{full}} + D_{b, 1}^{\text{full}} - \Pi [D_{b, 0}^{\text{full}} + D_{b, 1}^{\text{full}} | \Lambda_{\mathsf{ancillary}}]: D_{b, 0}^{\text{full}} + D_{b, 1}^{\text{full}} \in \mathcal{D}^{\text{full}} \right\} 
\end{equation*}
To show that $\Lambda_{\mathsf{NDE}, \text{obs}}^{c, r^{\ast} \perp} \subset \Upsilon^{\text{obs}}$ it suffices to show that any element of $\mathcal{D}^{\text{full}}$ that depends only on the observed data $O$, is an element of the set 
\begin{eqnarray*}
\mathcal{D}^{\text{obs}} &\equiv& \left\{ \sum_{t = 0}^{1} \sum_{\bar{i}_{t - 1}, \underaccent{\bar}{i}_{t + 1}} D_{b, t, \bar{i}_{t - 1}, \underaccent{\bar}{i}_{t + 1}}^{\text{obs}}: D_{b, t, \bar{i}_{t - 1}, \underaccent{\bar}{i}_{t + 1}}^{\text{obs}} \in \mathcal{D}_{t, \bar{i}_{t - 1}, \underaccent{\bar}{i}_{t + 1}}^{\text{obs}} \right\} \\
&=& \left\{ D^{\text{obs}}_{\bar{b}} \equiv \frac{\bar{b}_{0, \emptyset, 1} (L_{0}, (C_{1}, C_{2}), R_{2}^{\ast}, (S_{0}, S_{1}, S_{2}), Y^{d})}{p (S_{1} | \bar{H}_{1}) p (S_{2} | \bar{H}_{2})} \left( \frac{A_{0}}{\Pi_{0}} - 1 \right) \frac{A_{1}}{\Pi_{1}} \right.  \\
&& \left. + \frac{\bar{b}_{0, \emptyset, 0} (L_{0}, (C_{1}, C_{2}), (S_{0}, S_{1}, S_{2}), Y^{d})}{p (S_{1} | \bar{H}_{1}) p (S_{2} | \bar{H}_{2})} \left( \frac{A_{0}}{\Pi_{0}} - 1 \right) \right. \\
&& \left. + \frac{\bar{b}_{1, 1, \emptyset} (R_{1}^{\ast}, \bar{H}_{1}, C_{2}, (S_{1}, S_{2}), Y^{d})}{p (S_{2} | \bar{H}_{2})} \frac{A_{0}}{\Pi_{0}}\left( \frac{A_{1}}{\Pi_{1}} - 1 \right) \right. \\
&& \left. + \frac{\bar{b}_{1, 0, \emptyset} (\bar{H}_{1}, C_{2}, (S_{1}, S_{2}), Y^{d})}{p (S_{2} | \bar{H}_{2})} \left( \frac{A_{1}}{\Pi_{1}} - 1 \right): \bar{B}_{0, \emptyset, 1}, \bar{B}_{0, \emptyset, 0}, \bar{B}_{1, 0, \emptyset}, \bar{B}_{1, 1, \emptyset} \text{ arbitrary} \right\}.
\end{eqnarray*}
To show this we first note that $\mathcal{D}^{\text{full}}$ is equal to the set
\begin{align*}
\widetilde{\mathcal{D}}^{\text{full}} \equiv \left\{ \begin{array}{c}
\widetilde{D}_{\tilde{b}}^{\text{full}} \equiv \dfrac{\widetilde{B}_{0, \emptyset}}{p (S_{1} | \bar{H}_{1}) p (S_{2} | \bar{H}_{2})} \left( \dfrac{A_{0}}{\Pi_{0}} - 1 \right) \dfrac{A_{1}}{\Pi_{1}} \\
+ \dfrac{\widetilde{B}_{1, 1}}{p (S_{2} | \bar{H}_{2})} \dfrac{A_{0}}{\Pi_{0}} \left( \dfrac{A_{1}}{\Pi_{1}} - 1 \right) + \dfrac{\widetilde{B}_{1, 0}}{p (S_{2} | \bar{H}_{2})} \left( \dfrac{A_{0}}{\Pi_{0}} - 1 \right) \left( \dfrac{A_{1}}{\Pi_{1}} - 1 \right): \\
\widetilde{B}_{0, \emptyset} \equiv \widetilde{b}_{0, \emptyset} (L_{0}, (C_{1}, C_{2}), (R_{1}^{\ast}, R_{2}^{\ast}), (S_{0}, S_{1}, S_{2}), Y^{d}) \in \mathcal{B}_{0}^{\text{full}}, \\
\widetilde{B}_{1, 1} \equiv \widetilde{b}_{1, 1} (R_{1}^{\ast}, \bar{H}_{1}, C_{2}, R_{2}^{\ast}, (S_{1}, S_{2}), Y^{d}) \text{ and }\widetilde{B}_{1, 0} \equiv \widetilde{b}_{1, 0} (R_{1}^{\ast}, \bar{H}_{1}, C_{2}, R_{2}^{\ast}, (S_{1}, S_{2}), Y^{d}) \in \mathcal{B}_{1}^{\text{full}}
\end{array} \right\} 
\end{align*}
The inclusion $\mathcal{D}^{\text{full}} \subset \widetilde{\mathcal{D}}^{\text{full}}$ follows because given $B_{0} \equiv b_{0} (L_{0}, (C_{1}, C_{2}), (R_{1}^{\ast}, R_{2}^{\ast}), (S_{0}, S_{1}, S_{2}), Y^{d})$ and $B_{1} \equiv b_{1} (R_{1}^{\ast}, \bar{H}_{1}, C_{2}, R_{2}^{\ast}, (S_{1}, S_{2}), Y^{d})$ we can write 
\begin{align*}
& \ \frac{B_{0}}{p (S_{1} | \bar{H}_{1}) p (S_{2} | \bar{H}_{2})} \left( \frac{A_{0}}{\Pi_{0}} - 1 \right) + \frac{B_{1}}{p (S_{2} | \bar{H}_{2})} \left( \frac{A_{1}}{\Pi_{1}} - 1 \right) \\
= & \ \frac{\tilde{B}_{0, \emptyset}}{p (S_{1} | \bar{H}_{1}) p (S_{2} | \bar{H}_{2})} \left( \frac{A_{0}}{\Pi_{0}} - 1 \right) \frac{A_{1}}{\Pi_{1}} + \frac{\widetilde{B}_{1, 1}}{p (S_{2} | \bar{H}_{2})} \frac{A_{0}}{\Pi_{0}} \left( \frac{A_{1}}{\Pi_{1}} - 1 \right) + \frac{\widetilde{B}_{1, 0}}{p (S_{2} | \bar{H}_{2})} \left( \frac{A_{0}}{\Pi_{0}} - 1 \right) \left( \frac{A_{1}}{\Pi_{1}} - 1 \right) 
\end{align*}
with
\begin{align*}
\tilde{B}_{0, \emptyset} & = B_{0}, \\
\widetilde{B}_{1, 1} & = - \widetilde{B}_{1, 0} = B_{1} - \frac{B_{0}}{p (S_{1} | \bar{H}_{1})} \left( \frac{A_{0}}{\Pi_{0}} - 1 \right) = B_{1} - \frac{B_{0}}{W_{1}^{1}} \left( \frac{A_{0}}{\Pi_{0}} - 1 \right).
\end{align*}
The inclusion $\widetilde{\mathcal{D}}^{\text{full}} \subset \mathcal{D}^{\text{full}}$ follows by writing 
\begin{align*}
& \ \frac{\widetilde{B}_{0, \emptyset}}{p (S_{1} | \bar{H}_{1}) p (S_{2} | \bar{H}_{2})} \left( \frac{A_{0}}{\Pi_{0}} - 1 \right) \frac{A_{1}}{\Pi_{1}} + \frac{\widetilde{B}_{1, 1}}{p (S_{2} | \bar{H}_{2})} \frac{A_{0}}{\Pi_{0}} \left( \frac{A_{1}}{\Pi_{1}} - 1 \right) + \frac{\widetilde{B}_{1, 0}}{p (S_{2} | \bar{H}_{2})} \left( \frac{A_{0}}{\Pi_{0}} - 1 \right) \left( \frac{A_{1}}{\Pi_{1}} - 1 \right) \\
= & \ \frac{\widetilde{B}_{0, \emptyset}}{p (S_{1} | \bar{H}_{1}) p (S_{2} | \bar{H}_{2})} \left( \frac{A_{0}}{\Pi_{0}} - 1 \right) \left( \frac{A_{1}}{\Pi_{1}} - 1 + 1 \right) + \frac{\widetilde{B}_{1, 1}}{p (S_{2} | \bar{H}_{2})} \frac{A_{0}}{\Pi_{0}} \left( \frac{A_{1}}{\Pi_{1}} - 1 \right) \\
& + \frac{\widetilde{B}_{1, 0}}{p (S_{2} | \bar{H}_{2})} \left( \frac{A_{0}}{\Pi_{0}} - 1 \right) \left( \frac{A_{1}}{\Pi_{1}} - 1 \right) \\
= & \ \frac{\widetilde{B}_{0, \emptyset}}{p (S_{1} | \bar{H}_{1}) p (S_{2} | \bar{H}_{2})} \left( \frac{A_{0}}{\Pi_{0}} - 1 \right) \\
& + \left\{ \frac{\widetilde{B}_{0, \emptyset}}{p (S_{1} | \bar{H}_{1}) p (S_{2} | \bar{H}_{2})} \left( \frac{A_{0}}{\Pi_{0}} - 1 \right) + \frac{\widetilde{B}_{1, 1}}{p (S_{2} | \bar{H}_{2})} \frac{A_{0}}{\Pi_{0}} + \frac{\widetilde{B}_{1, 0}}{p (S_{2} | \bar{H}_{2})} \left( \frac{A_{0}}{\Pi_{0}} - 1 \right) \right\} \left( \frac{A_{1}}{\Pi_{1}} - 1 \right) \\
= & \ \frac{\widetilde{B}_{0, \emptyset}}{p (S_{1} | \bar{H}_{1}) p (S_{2} | \bar{H}_{2})} \left( \frac{A_{0}}{\Pi_{0}} - 1 \right) + \frac{\frac{\widetilde{B}_{0, \emptyset}}{p (S_{1} | \bar{H}_{1})} \left( \frac{A_{0}}{\Pi_{0}} - 1 \right) + \tilde{B}_{1, 1} \frac{A_{0}}{\Pi_{0}} + \tilde{B}_{1, 0} \left( \frac{A_{0}}{\Pi_{0}} - 1 \right)}{p (S_{2} | \bar{H}_{2})} \left( \frac{A_{1}}{\Pi_{1}} - 1 \right)
\end{align*}
and thus we can set $B_{0} \equiv \widetilde{B}_{0, \emptyset}$ and $B_{1} \equiv \frac{\widetilde{B}_{0, \emptyset}}{p (S_{1} | \bar{H}_{1})} \left( \frac{A_{0}}{\Pi_{0}} - 1 \right) + \tilde{B}_{1, 1} \frac{A_{0}}{\Pi_{0}} + \tilde{B}_{1, 0} \left( \frac{A_{0}}{\Pi_{0}} - 1 \right)$.

Now, suppose that $\widetilde{D}_{\tilde{b}} \in \widetilde{\mathcal{D}}^{\text{full}}$ depends only on the observed data $O$. Then, evaluating $\widetilde{D}_{\tilde{b}}$ at $A_{0} = A_{1} = 0$, we conclude that $\widetilde{b}_{1, 0} (R_{1}^{\ast}, \bar{H}_{1}, C_{2}, R_{2}^{\ast}, (S_{1}, S_{2}), Y^{d})$ is neither a function of $R_{1}^{\ast}$ nor of $R_{2}^{\ast}$. So, we can write,
\begin{equation*}
\widetilde{b}_{1, 0} (R_{1}^{\ast}, \bar{H}_{1}, C_{2}, R_{2}^{\ast}, (S_{1}, S_{2}), Y^{d}) =\widetilde{b}_{1, 0} (\bar{H}_{1}, C_{2}, (S_{1}, S_{2}), Y^{d}).
\end{equation*}
Next, evaluating at $A_{0} = 1, A_{1} = 0$ we conclude that 
\begin{equation*}
- \frac{\widetilde{b}_{1, 1} (R_{1}^{\ast}, \bar{H}_{1}, C_{2}, R_{2}^{\ast}, (S_{1}, S_{2}), Y^{d})}{p (S_{2} | \bar{H}_{2})} \frac{1}{\Pi_{0}} - \frac{\widetilde{b}_{1, 0} (\bar{H}_{1}, C_{2}, (S_{1}, S_{2}), Y^{d})}{p (S_{2} | \bar{H}_{2})} \left( \frac{1}{\Pi_{0}} - 1 \right)
\end{equation*}
cannot depend on $R_{2}^{\ast}$, from where we deduce that $\widetilde{b}_{1, 1} (R_{1}^{\ast}, \bar{H}_{1}, C_{2}, R_{2}^{\ast}, (S_{1}, S_{2}), Y^{d})$ does not depend on $R_{2}^{\ast}$, so we can write 
\begin{equation*}
\widetilde{b}_{1, 1} (R_{1}^{\ast}, \bar{H}_{1}, C_{2}, R_{2}^{\ast}, (S_{1}, S_{2}), Y^{d}) = \widetilde{b}_{1, 1} (R_{1}^{\ast}, \bar{H}_{1}, C_{2}, (S_{1}, S_{2}), Y^{d}).
\end{equation*}
Finally, evaluating at $A_{0} = 0, A_{1} = 1$, we conclude that 
\begin{equation*}
- \frac{\widetilde{b}_{0, \emptyset} (L_{0}, (C_{1}, C_{2}), (R_{1}^{\ast}, R_{2}^{\ast}), (S_{0}, S_{1}, S_{2}), Y^{d})}{p (S_{1} | \bar{H}_{1}) p (S_{2} | \bar{H}_{2})} \frac{1}{\Pi_{1}} + \frac{\widetilde{b}_{1, 0} (\bar{H}_{1}, C_{2}, (S_{1}, S_{2}), Y^{d})}{p (S_{2} | \bar{H}_{2})}\left( \frac{1}{\Pi_{1}} - 1 \right)
\end{equation*}
cannot depend on $R_{1}^{\ast}$, from where we deduce that $\widetilde{b}_{0, \emptyset} (L_{0}, (C_{1}, C_{2}), (R_{1}^{\ast}, R_{2}^{\ast}), (S_{0}, S_{1}, S_{2}), Y^{d})$ does not depend on $R_{1}^{\ast}$, so we can write
\begin{equation*}
\widetilde{b}_{0, \emptyset} (L_{0}, (C_{1}, C_{2}), (R_{1}^{\ast}, R_{2}^{\ast}), (S_{0}, S_{1}, S_{2}), Y^{d}) = \widetilde{b}_{0, \emptyset} (L_{0}, (C_{1}, C_{2}), R_{2}^{\ast}, (S_{0}, S_{1}, S_{2}), Y^{d}).
\end{equation*}

We have thus arrived at the conclusion that 
\begin{eqnarray*}
\widetilde{\mathcal{D}}^{\text{obs}} &\equiv &\left\{ \widetilde{D}_{\widetilde{b}} \in \widetilde{\mathcal{D}}^{\text{full}}: \widetilde{D}_{\widetilde{b}} \text{ depends only on the observed data } O \right\} \\
&=& \left\{ \widetilde{D}_{\widetilde{b}} \in \mathcal{\widetilde{\mathcal{D}}}^{\text{full}}: \widetilde{B}_{0, \emptyset} \text{ does not depend on } R_{1}^{\ast}, \widetilde{B}_{1, 1} \text{ does not depend on } R_{2}^{\ast}, \right. \\
&& \left. \widetilde{B}_{1, 0} \text{ does not depend on } (R_{1}^{\ast}, R_{2}^{\ast}) \right\}.
\end{eqnarray*}

Clearly, $\widetilde{\mathcal{D}}^{\text{obs}} \subset \mathcal{D}^{\text{obs}}$ since given $(\widetilde{B}_{0, \emptyset}, \widetilde{B}_{1, 1}, \widetilde{B}_{1, 0})$, $\widetilde{D}_{\widetilde{b}} = D_{\bar{b}}^{\text{obs}}$ for $\bar{B}_{0, \emptyset, 1} = \widetilde{B}_{0, \emptyset}$, $\bar{B}_{0, \emptyset, 0} = 0, \bar{B}_{1, 1, \emptyset} = \widetilde{B}_{1, 1}$ and $\bar{B}_{1, 0, \emptyset} = \widetilde{B}_{1, 0} \left( \frac{A_{0}}{\Pi_{0}} - 1 \right)$. This concludes the proof when $K = 2$. 

We now turn to the proof for general $K$. Again, to show that $\Lambda_{\mathsf{NDE}, \text{obs}}^{c, r^{\ast} \perp} \subset \Upsilon^{\text{obs}}$ it suffices to show that any element of $\mathcal{D}^{\text{full}}$ that depends only on the observed data $O$, is an element of the set
\begin{align*}
\mathcal{D}^{\text{obs}} \equiv \left\{ \sum_{t = 0}^{K - 1} \sum_{\bar{i}_{t - 1}, \underaccent{\bar}{i}_{t + 1}} D_{b, t, \bar{i}_{t - 1}, \underaccent{\bar}{i}_{t + 1}}^{\text{obs}}: D_{b, t, \bar{i}_{t - 1}, \underaccent{\bar}{i}_{t + 1}}^{\text{obs}} \in \mathcal{D}_{b, t, \bar{i}_{t - 1}, \underaccent{\bar}{i}_{t + 1}}^{\text{obs}} \right\}.
\end{align*}

First, we define $\mathcal{D}^{\text{full}}$ and $\tilde{\mathcal{D}}^{\text{full}}$ as follows:
\begin{align*}
\mathcal{D}^{\text{full}} & \equiv \left\{ \sum_{t = 0}^{K - 1} \dfrac{B_{t}}{W_{t + 1}} \left( \dfrac{A_{t}}{\Pi_{t}} - 1 \right): B_{t} \in \mathcal{B}_{t}^{\text{full}}, t = 0, 1, \cdots, K - 1 \right\}. \\
& \equiv \left\{ \sum_{t = 0}^{K - 1} \sum\limits_{\bar{i}_{t - 1} \in \{0, 1\}^{t}, \underaccent{\bar}{i}_{t} \in \{0, 1\}^{K - t}} \dfrac{B_{t, \bar{i}_{t - 1}, \underaccent{\bar}{i}_{t}}}{W_{t + 1}} \left( \dfrac{A_{t}}{\Pi_{t}} - 1 \right): B_{t} \in \mathcal{B}_{t}^{\text{full}}, t = 0, 1, \cdots, K - 1 \right\}
\end{align*}
and
\begin{align*}
\tilde{\mathcal{D}}^{\text{full}} \equiv \left\{ \begin{array}{c}
\sum\limits_{t = 0}^{K - 1} \sum\limits_{\bar{j}_{t - 1} \in \{0, 1\}^{t}} \dfrac{\tilde{B}_{t, \bar{j}_{t - 1}}}{W_{t + 1}} \left[ \prod\limits_{t' = 0}^{t - 1} \left( \dfrac{A_{t'}}{\Pi_{t'}} \right)^{j_{t'}} \left( \dfrac{A_{t'}}{\Pi_{t'}} - 1 \right)^{1 - j_{t'}} \right] \left( \dfrac{A_{t}}{\Pi_{t}} - 1 \right) \prod\limits_{t' = t + 1}^{K - 1} \dfrac{A_{t'}}{\Pi_{t'}}: \\
\tilde{B}_{t, \bar{j}_{t - 1}} \in \mathcal{B}_{t}^{\text{full}}, \forall \ \bar{j}_{t - 1} \in \{0, 1\}^{t} \text{ and } \forall \ t = 0, \cdots, K - 1
\end{array} \right\}.
\end{align*}
Notice that the subscript $\bar{j}_{t - 1}$ in $\tilde{B}_{t, \bar{j}_{t - 1}}$ does not indicate the dependence on subvector of $\bar{R}_{K}^{\ast}$ and it is simply an indexing subscript to distinguish different $\tilde{B}_{t}$'s. 

We now show that $\mathcal{D}^{\text{full}} = \tilde{\mathcal{D}}^{\text{full}}$. First, we show $\mathcal{D}^{\text{full}} \supset \tilde{\mathcal{D}}^{\text{full}}$. Take any $\tilde{D}^{\text{full}} \in \tilde{\mathcal{D}}^{\text{full}}$, it can be rewritten as: denoting $x_{t} \equiv A_{t} / \Pi_{t} - 1$
\begin{align*}
& \ \tilde{D}^{\text{full}} \\
= & \ \sum\limits_{t = 0}^{K - 1} \sum\limits_{\bar{j}_{t - 1} \in \{0, 1\}^{t}} \dfrac{\tilde{B}_{t, \bar{j}_{t - 1}}}{W_{t + 1}} \left[ \prod\limits_{t' = 0}^{t - 1} \left( \dfrac{A_{t'}}{\Pi_{t'}} \right)^{j_{t'}} \left( \dfrac{A_{t'}}{\Pi_{t'}} - 1 \right)^{1 - j_{t'}} \right] \left( \dfrac{A_{t}}{\Pi_{t}} - 1 \right) \prod\limits_{t' = t + 1}^{K - 1} \dfrac{A_{t'}}{\Pi_{t'}} \\
= & \ \sum\limits_{t = 0}^{K - 1} \sum\limits_{\bar{j}_{t - 1} \in \{0, 1\}^{t}} \dfrac{\tilde{B}_{t, \bar{j}_{t - 1}}}{W_{t + 1}} \left[ \prod\limits_{t' = 0}^{t - 1} (x_{t'} + j_{t'}) \right] x_{t} \prod\limits_{t' = t + 1}^{K - 1} (x_{t'} + 1) \\
= & \ \sum\limits_{t = 0}^{K - 1} \sum\limits_{\bar{j}_{t - 1} \in \{0, 1\}^{t}} \dfrac{\tilde{B}_{t, \bar{j}_{t - 1}}}{W_{t + 1}} \left[ \prod\limits_{t' = 0}^{t - 1} (x_{t'} + j_{t'}) \right] x_{t} \left\{ 1 + \sum\limits_{\mathfrak{s}_{t} \subset \{t + 1, \cdots, K - 1\}, \mathfrak{s} \neq \emptyset} \prod_{m \in \mathfrak{s}} x_{m} \right\} \\
= & \ \sum\limits_{t = 0}^{K - 1} \sum\limits_{\bar{j}_{t - 1} \in \{0, 1\}^{t}} \dfrac{\tilde{B}_{t, \bar{j}_{t - 1}}}{W_{t + 1}} \left[ \prod\limits_{t' = 0}^{t - 1} (x_{t'} + j_{t'}) \right] \left\{ x_{t} + \sum\limits_{\mathfrak{s}_{t} \subset \{t + 1, \cdots, K - 1\}, \mathfrak{s} \neq \emptyset} x_{t} \prod_{m \in \mathfrak{s}} x_{m} \right\} \\
= & \ \sum\limits_{t = 0}^{K - 1} \left\{ \begin{array}{c}
\dfrac{\underbrace{\sum\limits_{\bar{j}_{t - 1} \in \{0, 1\}^{t}} \tilde{B}_{t, \bar{j}_{t - 1}} \prod\limits_{t' = 0}^{t - 1} (x_{t'} + j_{t'})}_{=: \bar{B}_{t}}}{W_{t + 1}} x_{t} \\
+ \sum\limits_{\mathfrak{s}_{t} \subset \{t + 1, \cdots, K - 1\}, \mathfrak{s}_{t} \neq \emptyset} \dfrac{\underbrace{\sum\limits_{\bar{j}_{t - 1} \in \{0, 1\}^{t}} \frac{\tilde{B}_{t, \bar{j}_{t - 1}}}{W_{t + 1}^{\bar{\mathfrak{s}}_{t}}} \prod\limits_{t' = 0}^{t - 1} (x_{t'} + j_{t'}) x_{t} \prod\limits_{m \in \mathfrak{s}_{t}, m \neq \bar{\mathfrak{s}}_{t}} x_{m}}_{=: \bar{B}_{\bar{\mathfrak{s}}_{t} + 1}^{\dag}}}{W_{\bar{\mathfrak{s}}_{t} + 1}} x_{\bar{\mathfrak{s}}_{t}}
\end{array} \right\} \\
\equiv & \ \sum\limits_{t = 0}^{K - 1} \left\{ \frac{\bar{B}_{t}}{W_{t + 1}} x_{t} + \sum\limits_{\mathfrak{s}_{t} \subset \{t + 1, \cdots, K - 1\}, \mathfrak{s}_{t} \neq \emptyset} \frac{\bar{B}_{\mathfrak{s}_{t}}^{\dag}}{W_{\bar{\mathfrak{s}}_{t} + 1}} x_{\bar{\mathfrak{s}}_{t}} \right\} \\
\equiv & \ \sum\limits_{t = 0}^{K - 1} \frac{B_{t}}{W_{t + 1}} x_{t} \equiv \sum\limits_{t = 0}^{K - 1} \frac{B_{t}}{W_{t + 1}} \left( \frac{A_{t}}{\Pi_{t}} - 1 \right)
\end{align*}
is of the form $D^{\text{full}}$, where in the last line $B_{t} \equiv \bar{B}_{t} + \sum\limits_{t' \leq t} \sum\limits_{\mathfrak{s}_{t' - 1} \subset \{t', \cdots, K - 1\}, \mathfrak{s}_{t' - 1} \neq \emptyset, \bar{\mathfrak{s}}_{t' - 1} = t} \bar{B}_{\mathfrak{s}_{t' - 1}}^{\dag}$.

Second, we show $\mathcal{D}^{\text{full}} \subset \tilde{\mathcal{D}}^{\text{full}}$. Take any $D^{\text{full}} \in \mathcal{D}^{\text{full}}$. In this case, given fixed $B_{0}, \cdots, B_{K - 1}$, we want to find the corresponding $\tilde{B}$'s in $\tilde{\mathcal{D}}^{\text{full}}$. This can be done by identifying the coefficients in front of the terms $x_{t}, t = 0, \cdots, K - 1$ in both $D^{\text{full}}$ and $\tilde{D}^{\text{full}}$, which leads to the following linear system of equations (arranged in the order from $t = 0$ to $t = K - 1$):
\begin{align*}
\left\{ \begin{array}{c}
\bar{B}_{0} \equiv \tilde{B}_{0} = B_{0} \\
\bar{B}_{1} + \bar{B}_{\mathfrak{s}_{0} = \{1\}}^{\dag} \equiv \{\tilde{B}_{1, 0} x_{0} + \tilde{B}_{1, 1} (x_{0} + 1)\} + \dfrac{\tilde{B}_{0} x_{0}}{W_{1}^{1}} = B_{1} \\
\boldsymbol{\vdots} \\
\bar{B}_{t} + \sum\limits_{t' \leq t} \sum\limits_{\mathfrak{s}_{t' - 1} \subset \{t', \cdots, K - 1\}, \mathfrak{s}_{t' - 1} \neq \emptyset, \bar{\mathfrak{s}}_{t' - 1} = t} \bar{B}_{\mathfrak{s}_{t' - 1}}^{\dag} = B_{t} \\
\boldsymbol{\vdots} \\
\bar{B}_{K - 1} + \sum\limits_{t \leq K - 1} \sum\limits_{\mathsf{s}_{t - 1} \subset \{t, \cdots, K - 1\}: \bar{\mathfrak{s}}_{t - 1} = K - 1} \bar{B}_{\mathfrak{s}_{t - 1}}^{\dag} = B_{K - 1}.
\end{array} \right.
\end{align*}
The above system of linear equations is underdetermined so may have multiple roots. Any root gives us appropriate $\tilde{B}$'s such that $D^{\text{full}} \in \tilde{\mathcal{D}}^{\text{full}}$. Note that when $K = 2$, the above system of equations reduces to
\begin{align*}
\left\{ \begin{array}{c}
\bar{B}_{0} \equiv \tilde{B}_{0} = B_{0}, \\
\bar{B}_{1} + \bar{B}_{\mathfrak{s}_{0} = \{1\}}^{\dag} \equiv \{\tilde{B}_{1, 0} x_{0} + \tilde{B}_{1, 1} (x_{0} + 1)\} + \dfrac{\tilde{B}_{0} x_{0}}{W_{1}^{1}} = B_{1},
\end{array} \right.
\end{align*}
one solution of which is
\begin{align*}
\left\{ \begin{array}{l}
\tilde{B}_{0} = B_{0}, \\
\tilde{B}_{1, 1} = B_{1} - \frac{\tilde{B}_{0} x_{0}}{W_{1}^{1}}, \\
\tilde{B}_{1, 0} = - \tilde{B}_{1, 1},
\end{array} \right.
\end{align*}
and this is exactly the solution that was obtained above when $K = 2$.

Next as in the case of $K = 2$, $\tilde{\mathcal{D}}^{\text{obs}}$, the subset of $\tilde{\mathcal{D}}^{\text{full}}$ that depends only on the observed data, can be written as
\begin{align*}
\tilde{\mathcal{D}}^{\text{obs}} & \equiv \left\{ \tilde{D} \in \tilde{\mathcal{D}}^{\text{full}}: \text{ $\tilde{D}$ depends only on the observed data} \right\} \\
& = \left\{ \begin{array}{c}
\sum\limits_{t = 0}^{K - 1} \sum\limits_{\bar{i}_{t - 1} \in \{0, 1\}^{t}} \dfrac{\tilde{B}_{t, \bar{i}_{t - 1}}}{W_{t + 1}} \left[ \prod\limits_{t' = 0}^{t - 1} \left( \dfrac{A_{t'}}{\Pi_{t'}} \right)^{i_{t'}} \left( \dfrac{A_{t'}}{\Pi_{t'}} - 1 \right)^{1 - i_{t'}} \right] \left( \dfrac{A_{t}}{\Pi_{t}} - 1 \right) \prod\limits_{t' = t + 1}^{K - 1} \dfrac{A_{t'}}{\Pi_{t'}}: \\
\tilde{B}_{t, \bar{i}_{t - 1}} \in \mathcal{B}_{t}^{\text{full}} \text{ does not depend on $R_{j + 1}^{\ast}$, $\forall \ j$ such that $i_{j} = 0$}, \\
\forall \ \bar{i}_{t - 1} = \{i_{0}, \cdots, i_{t - 1}\} \in \{0, 1\}^{t} \text{ and } \forall \ t = 0, \cdots, K - 1
\end{array} \right\}.
\end{align*}

We are left to show $\tilde{\mathcal{D}}^{\text{obs}} \subset \mathcal{D}^{\text{obs}}$, which is true because we can identify $$\dfrac{\tilde{B}_{t, \bar{i}_{t - 1}}}{W_{t + 1}} \left[ \prod\limits_{t' = 0}^{t - 1} \left( \dfrac{A_{t'}}{\Pi_{t'}} \right)^{i_{t'}} \left( \dfrac{A_{t'}}{\Pi_{t'}} - 1 \right)^{1 - i_{t'}} \right] \left( \dfrac{A_{t}}{\Pi_{t}} - 1 \right) \prod\limits_{t' = t + 1}^{K - 1} \dfrac{A_{t'}}{\Pi_{t'}}$$ with one $D_{b, t, \bar{i}_{t - 1}, \underaccent{\bar}{i}_{t + 1}}^{\text{obs}}$, by setting $B_{t, \bar{i}_{t - 1}, \underaccent{\bar}{i}_{t + 1}} \equiv \tilde{B}_{t, \bar{i}_{t - 1}} \prod\limits_{t' = 0}^{t - 1} \left( \dfrac{A_{t'}}{\Pi_{t'}} - 1 \right)^{1 - i_{t'}}$ if $\underaccent{\bar}{i}_{t + 1} = \{1\}^{K - t - 1}$ and setting $B_{t, \bar{i}_{t - 1}, \underaccent{\bar}{i}_{t + 1}} \equiv 0$ if otherwise. This concludes the proof that $\Lambda_{\mathsf{NDE}, \text{obs}}^{c, r^{\ast} \perp} \subset \Upsilon^{\text{obs}}$.

\end{proof}

With the characterization of $\Lambda_{\mathsf{NDE}, obs}^{c, r^{\ast} \perp}$, we can also define its subspace $\Omega_{obs}^{c, r^{\ast}}$ that facilitates computation as in Section \ref{sec:optimal_dr} in the main text, such that for any random variable $U$, $\Pi (U | \Omega_{obs}^{c, r^{\ast}})$ has closed form expression. To this end, denote $\bm{\varphi}_{t, \underaccent{\bar}{i}_{t + 1}} = (\varphi_{1, t, \underaccent{\bar}{i}_{t + 1}}, \cdots, \varphi_{\xi, t, \underaccent{\bar}{i}_{t + 1}})^{\top}$ as a $\xi$-dimensional transformation (basis) of the vector $(\underaccent{\bar}{C}_{t + 1}, \underaccent{\bar}{R}_{\underaccent{\bar}{i}_{t + 1}}^{\ast}, Y^{d})$ and define
\begin{equation*}
b_{t, \underaccent{\bar}{i}_{t + 1}}^{\dag} (\underaccent{\bar}{C}_{t + 1}, \underaccent{\bar}{R}_{\underaccent{\bar}{i}_{t + 1}}^{\ast}, \underaccent{\bar}{S}_{t}, Y^{d}) = (\varphi_{1, t, \underaccent{\bar}{i}_{t + 1}} (\underaccent{\bar}{C}_{t + 1}, \underaccent{\bar}{R}_{\underaccent{\bar}{i}_{t + 1}}^{\ast}, Y^{d}) I_{t}^{\top}, \cdots, \varphi_{\xi, t, \underaccent{\bar}{i}_{t + 1}} (\underaccent{\bar}{C}_{t + 1}, \underaccent{\bar}{R}_{\underaccent{\bar}{i}_{t + 1}}^{\ast}, Y^{d}) I_{t}^{\top})^{\top}
\end{equation*}
where $I_{t}$ is defined in Section \ref{sec:optimal_dr} of the main text. Then define 
\begin{align*}
\mathcal{B}_{t, \bar{i}_{t - 1}, \underaccent{\bar}{i}_{t + 1}}^{\dag} \coloneqq \left\{ B_{t, \bar{i}_{t - 1}, \underaccent{\bar}{i}_{t + 1}}^{\dag} \equiv d_{t, \bar{i}_{t - 1}} (\bar{R}_{\bar{i}_{t - 1}}^{\ast}, \bar{H}_{t}) b_{t, \underaccent{\bar}{i}_{t + 1}}^{\dag} (\underaccent{\bar}{C}_{t + 1}, \underaccent{\bar}{R}_{\underaccent{\bar}{i}_{t + 1}}^{\ast}, \underaccent{\bar}{S}_{t}, Y^{d}): B_{t, \bar{i}_{t - 1}, \underaccent{\bar}{i}_{t + 1}}^{\dag} \in L_{2} (P) \right\}
\end{align*}
where $d_{t, \bar{i}_{t - 1}} (\bar{R}_{\bar{i}_{t - 1}}^{\ast}, \bar{H}_{t})$ is a row vector with dimension $\xi$. Further define
$$
\tilde{d}_{t, \bar{i}_{t - 1}} (\bar{R}_{\bar{i}_{t - 1}}^{\ast}, \bar{H}_{t}) = d_{t, \bar{i}_{t - 1}} (\bar{R}_{\bar{i}_{t - 1}}^{\ast}, \bar{H}_{t}) \prod\limits_{t' = 0}^{t - 1} \left( \dfrac{A_{t'}}{\Pi_{t'}} \right)^{i_{t'}}
$$
and notice that $\tilde{d}_{t, \bar{i}_{t - 1}} (\bar{R}_{\bar{i}_{t - 1}}^{\ast}, \bar{H}_{t}) = \tilde{d}_{t, \bar{i}_{t - 1}} (\bar{R}_{\bar{i}_{t - 1}}, \bar{H}_{t}) = \widetilde{d}_{t, \bar{i}_{t - 1}}^{\dag} (\bar{H}_{t})$ for some function $\widetilde{d}_{t, \bar{i}_{t - 1}}^{\dag}$.

Then, define
\begin{align*}
\mathcal{D}_{t, \bar{i}_{t - 1}, \underaccent{\bar}{i}_{t + 1}}^{\text{obs} \dag} & \equiv \left\{ D_{t, \bar{i}_{t - 1}, \underaccent{\bar}{i}_{t + 1}}^{\text{obs} \dag} \equiv \frac{B_{t, \underaccent{\bar}{i}_{t + 1}}^{\dag}}{W_{t + 1}} \left( \frac{A_{t}}{\Pi_{t}} - 1 \right) \prod\limits_{t' = 0}^{t - 1} \left( \dfrac{A_{t'}}{\Pi_{t'}} \right)^{i_{t'}} \prod\limits_{t' = t + 1}^{K - 1} \left( \dfrac{A_{t'}}{\Pi_{t'}} \right)^{i_{t'}}: B_{t, \bar{i}_{t - 1}, \underaccent{\bar}{i}_{t + 1}}^{\dag} \in \mathcal{B}_{t, \bar{i}_{t - 1}, \underaccent{\bar}{i}_{t + 1}}^{\dag} \right\} \\
& \equiv \left\{ D_{t, \bar{i}_{t - 1}, \underaccent{\bar}{i}_{t + 1}}^{\text{obs} \dag} \equiv \tilde{d}_{t, \bar{i}_{t - 1}}^{\dag} (\bar{H}_{t}) \frac{B_{t, \underaccent{\bar}{i}_{t + 1}}^{\dag}}{W_{t + 1}} \left( \frac{A_{t}}{\Pi_{t}} - 1 \right) \prod\limits_{t' = t + 1}^{K - 1} \left( \dfrac{A_{t'}}{\Pi_{t'}} \right)^{i_{t'}}: B_{t, \bar{i}_{t - 1}, \underaccent{\bar}{i}_{t + 1}}^{\dag} \in \mathcal{B}_{t, \bar{i}_{t - 1}, \underaccent{\bar}{i}_{t + 1}}^{\dag} \right\}
\end{align*}
where the second line follows from the definition of $\tilde{d}_{t, \bar{i}_{t - 1}} (\bar{R}_{\bar{i}_{t - 1}}^{\ast}, \bar{H}_{t})$ given above.

Finally, define $\Omega_{obs}^{c, r^{\ast}}$ as 
\begin{equation*}
\Omega_{obs}^{c, r^{\ast}} = \left\{ \sum_{t = 0}^{K} \sum_{\bar{i}_{t - 1}, \underaccent{\bar}{i}_{t + 1}} \Omega_{obs, t, \bar{i}_{t - 1}, \underaccent{\bar}{i}_{t + 1}}^{c, r^{\ast}}: D_{t, \bar{i}_{t - 1}, \underaccent{\bar}{i}_{t + 1}}^{\text{obs} \dag} \in \mathcal{D}_{t, \bar{i}_{t - 1}, \underaccent{\bar}{i}_{t + 1}}^{\text{obs} \dag} \right\}
\end{equation*}
where $\Omega_{obs, t, \bar{i}_{t - 1}, \underaccent{\bar}{i}_{t + 1}}^{c, r^{\ast}} \equiv D_{t, \bar{i}_{t - 1}, \underaccent{\bar}{i}_{t + 1}}^{\text{obs} \dag} - \Pi [D_{t, \bar{i}_{t - 1}, \underaccent{\bar}{i}_{t + 1}}^{\text{obs} \dag} | \Lambda_{\mathsf{ancillary}}]$. Then we have
\begin{corollary}
\label{lem:obs_proj} 
\begin{equation*}
\Pi [\BU (\bm{q}, \Psi^{\ast}) | \Omega_{obs}^{c, r^{\ast}}] = \sum_{t = 0}^{K} \sum_{\bar{i}_{t - 1}, \underaccent{\bar}{i}_{t + 1}} \tilde{d}_{t, \bar{i}_{t - 1}}^{\ast \ \dag} (\bar{H}_{t}) T_{b^{\dag}, t, \underaccent{\bar}{i}_{t + 1}}^{obs}
\end{equation*}
where $\tilde{d}_{t, \bar{i}_{t - 1}}^{\ast \ \dag} (\bar{H}_{t})$ is defined as in equations \eqref{main_0} and \eqref{main_d1} but with $T_{t}$ replaced by $T_{b^{\dag}, t, \underaccent{\bar}{i}_{t + 1}}^{obs}$ and
\begin{align*}
T_{b^{\dag}, t, \underaccent{\bar}{i}_{t + 1}}^{obs} \coloneqq & \ \frac{b_{t, \underaccent{\bar}{i}_{t + 1}}^{\dag} (\underaccent{\bar}{C}_{t + 1}, \underaccent{\bar}{R}_{\underaccent{\bar}{i}_{t + 1}}^{\ast}, \underaccent{\bar}{S}_{t}, Y^{d})}{W_{t + 1}} \left( \frac{A_{t}}{\Pi_{t}} - 1 \right) \prod\limits_{t' = t + 1}^{K - 1} \left( \dfrac{A_{t'}}{\Pi_{t'}} \right)^{i_{t'}} \\
& - \Pi \left[ \left. \frac{b_{t, \underaccent{\bar}{i}_{t + 1}}^{\dag} (\underaccent{\bar}{C}_{t + 1}, \underaccent{\bar}{R}_{\underaccent{\bar}{i}_{t + 1}}^{\ast}, \underaccent{\bar}{S}_{t}, Y^{d})}{W_{t + 1}} \left( \frac{A_{t}}{\Pi_{t}} - 1 \right) \prod\limits_{t' = t + 1}^{K - 1} \left( \dfrac{A_{t'}}{\Pi_{t'}} \right)^{i_{t'}} \right\vert \Lambda_{\mathsf{ancillary}} \right].
\end{align*}
\end{corollary}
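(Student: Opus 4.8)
The plan is to obtain Corollary \ref{lem:obs_proj} as a direct application of Theorem \ref{thm:sub}, mirroring exactly how Corollary \ref{cor:sub} was deduced from it, but now with the observed-data building blocks $T_{b^{\dag}, t, \underaccent{\bar}{i}_{t + 1}}^{obs}$ of $\Lambda_{\mathsf{NDE}, obs}^{c, r^{\ast} \perp}$ (characterized in Theorem \ref{thm:obs_nde}) playing the role of the fixed vectors $T_t$. First I would record that, by the very definition of $\mathcal{D}_{t, \bar{i}_{t - 1}, \underaccent{\bar}{i}_{t + 1}}^{\text{obs} \dag}$, each generator of $\Omega_{obs}^{c, r^{\ast}}$ has the product form $D_{t, \bar{i}_{t - 1}, \underaccent{\bar}{i}_{t + 1}}^{\text{obs} \dag} = \tilde{d}_{t, \bar{i}_{t - 1}}^{\dag}(\bar{H}_t) Z_{t, \underaccent{\bar}{i}_{t + 1}}$, where the coefficient $\tilde{d}_{t, \bar{i}_{t - 1}}^{\dag}$ is an arbitrary $\bar{H}_t$-measurable row vector and $Z_{t, \underaccent{\bar}{i}_{t + 1}} \equiv \frac{b_{t, \underaccent{\bar}{i}_{t + 1}}^{\dag}}{W_{t + 1}} \left( \frac{A_t}{\Pi_t} - 1 \right) \prod_{t' = t + 1}^{K - 1} \left( \frac{A_{t'}}{\Pi_{t'}} \right)^{i_{t'}}$ is a fixed, coefficient-free random vector. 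The $\bar{H}_t$-measurability of the coefficient is exactly the observation recorded just before the corollary: the factor $\prod_{t' < t} (A_{t'} / \Pi_{t'})^{i_{t'}}$ forces $\bar{R}_{\bar{i}_{t - 1}}^{\ast}$ to be revealed wherever the weight is nonzero, so $\tilde{d}_{t, \bar{i}_{t - 1}}(\bar{R}_{\bar{i}_{t - 1}}^{\ast}, \bar{H}_t)$ collapses to $\tilde{d}_{t, \bar{i}_{t - 1}}^{\dag}(\bar{H}_t)$.

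The key technical step is to show that the residual operator onto $\Lambda_{\mathsf{ancillary}}$ commutes with multiplication by the $\bar{H}_t$-measurable factor, i.e.
\begin{align*}
\Omega_{obs, t, \bar{i}_{t - 1}, \underaccent{\bar}{i}_{t + 1}}^{c, r^{\ast}}
&= D_{t, \bar{i}_{t - 1}, \underaccent{\bar}{i}_{t + 1}}^{\text{obs} \dag} - \Pi [D_{t, \bar{i}_{t - 1}, \underaccent{\bar}{i}_{t + 1}}^{\text{obs} \dag} | \Lambda_{\mathsf{ancillary}}] \\
&= \tilde{d}_{t, \bar{i}_{t - 1}}^{\dag}(\bar{H}_t) \left\{ Z_{t, \underaccent{\bar}{i}_{t + 1}} - \Pi [Z_{t, \underaccent{\bar}{i}_{t + 1}} | \Lambda_{\mathsf{ancillary}}] \right\} \\
&= \tilde{d}_{t, \bar{i}_{t - 1}}^{\dag}(\bar{H}_t)\, T_{b^{\dag}, t, \underaccent{\bar}{i}_{t + 1}}^{obs}.
\end{align*}
Writing $\Pi [\cdot | \Lambda_{\mathsf{ancillary}}] = \sum_{m = 0}^{K} (\E [\cdot | \bar{H}_m, S_m, A_m] - \E [\cdot | \bar{H}_m])$, I would split the sum at $m = t$. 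For $m \geq t$ the factor $\tilde{d}_{t, \bar{i}_{t - 1}}^{\dag}(\bar{H}_t)$ is $\bar{H}_m$-measurable and hence pulls out of both conditional expectations, contributing the $m$-th term of the residual of $Z_{t, \underaccent{\bar}{i}_{t + 1}}$ scaled by $\tilde{d}_{t, \bar{i}_{t - 1}}^{\dag}$. The harder half is to verify that the contributions from $m < t$ vanish for $D_{t, \bar{i}_{t - 1}, \underaccent{\bar}{i}_{t + 1}}^{\text{obs} \dag}$ and for $Z_{t, \underaccent{\bar}{i}_{t + 1}}$ separately; this is driven by the mean-zero factor $(A_t / \Pi_t - 1)$ together with the inverse weighting, and is precisely the cancellation already established at times earlier than $t$ in the proofs of Theorem \ref{lem:nuisance_nde} and Theorem \ref{thm:obs_nde}. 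I expect this commutation to be the main obstacle, since one must track how the $\bar{H}_t$-measurable coefficient interacts with the forward-looking weights without disturbing the sequential mean-zero structure.

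With these two facts in hand, the final step is combinatorial regrouping followed by invocation of Theorem \ref{thm:sub}. For each $t$ I would stack the coefficient-free residual vectors $(T_{b^{\dag}, t, \underaccent{\bar}{i}_{t + 1}}^{obs})_{\underaccent{\bar}{i}_{t + 1}}$ into a single random vector $\mathbb{T}_t$ and set $\Gamma_t \equiv \{ d_t(\bar{H}_t) \mathbb{T}_t \}$ over $\bar{H}_t$-measurable row vectors $d_t$. Because $T_{b^{\dag}, t, \underaccent{\bar}{i}_{t + 1}}^{obs}$ does not depend on $\bar{i}_{t - 1}$, summation over $\bar{i}_{t - 1}$ merely reparametrizes the arbitrary coefficient, so the commutation identity yields $\Omega_{obs}^{c, r^{\ast}} = \Gamma_0 + \cdots + \Gamma_K$ in the precise sense of Theorem \ref{thm:sub} with $T_t \equiv \mathbb{T}_t$. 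Applying Theorem \ref{thm:sub} to $U = \BU (\bm{q}, \Psi^{\ast})$ then gives $\Pi [\BU (\bm{q}, \Psi^{\ast}) | \Omega_{obs}^{c, r^{\ast}}] = \sum_{t = 0}^{K} d_t^{\ast}(\bar{H}_t) \mathbb{T}_t$, with $d_t^{\ast}$ produced by the recursion \eqref{main_0}--\eqref{main_d1} after replacing $T_t$ by $\mathbb{T}_t$ and $T_t^{(t)}$ by the Gram-orthogonalized $\mathbb{T}_t^{(t)}$. Expanding $\mathbb{T}_t$ back into its components and relabeling the corresponding blocks of $d_t^{\ast}$ as $\tilde{d}_{t, \bar{i}_{t - 1}}^{\ast\,\dag}(\bar{H}_t)$ reproduces the asserted formula $\sum_{t = 0}^{K} \sum_{\bar{i}_{t - 1}, \underaccent{\bar}{i}_{t + 1}} \tilde{d}_{t, \bar{i}_{t - 1}}^{\ast\,\dag}(\bar{H}_t) T_{b^{\dag}, t, \underaccent{\bar}{i}_{t + 1}}^{obs}$. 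Throughout I would carry the standing regularity assumption, inherited from Theorem \ref{thm:sub}, that the conditional Gram matrices $\E [\mathbb{T}_t^{(t)} \mathbb{T}_t^{(t) \top} | \bar{H}_t]$ are nonsingular so that the recursion is well-defined.
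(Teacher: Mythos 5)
Your proposal is correct and follows essentially the same route as the paper, which simply states that the corollary ``immediately follows from the proof of Theorem \ref{thm:sub} and Corollary \ref{cor:sub}''; you have supplied the details that the paper leaves implicit, namely that each generator of $\Omega_{obs}^{c, r^{\ast}}$ factors as an $\bar{H}_{t}$-measurable coefficient times a fixed residual vector, that the residual onto $\Lambda_{\mathsf{ancillary}}$ commutes with that coefficient because the $m < t$ terms vanish under the NDE model, and that Theorem \ref{thm:sub} then applies after stacking the components over $\underaccent{\bar}{i}_{t+1}$. The observation that summing over $\bar{i}_{t-1}$ merely reparametrizes the arbitrary $\bar{H}_{t}$-measurable coefficient is exactly the right way to reconcile the doubly indexed sum in the statement with the single-index form of Theorem \ref{thm:sub}.
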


\begin{proof}
The proof immediately follows from the proof of Theorem \ref{thm:sub} and Corollary \ref{cor:sub}.
\end{proof}

Finally, note that $T_{b^{\dag}, t, \underaccent{\bar}{i}_{t + 1}}^{obs}$ is a statistic because $b_{t, \underaccent{\bar}{i}_{t + 1}}^{\dag} (\underaccent{\bar}{C}_{t + 1}, \underaccent{\bar}{R}_{\underaccent{\bar}{i}_{t + 1}}^{\ast}, \underaccent{\bar}{S}_{t}, Y^{d}) \equiv b_{t, \underaccent{\bar}{i}_{t + 1}}^{\dag} (\underaccent{\bar}{C}_{t + 1}, \underaccent{\bar}{R}_{\underaccent{\bar}{i}_{t + 1}}, \underaccent{\bar}{S}_{t}, Y^{d})$ when multiplied by $\prod\limits_{t' = t + 1}^{K - 1} \left( \dfrac{A_{t'}}{\Pi_{t'}} \right)^{i_{t'}}$.

\subsection{Proof of Lemma \ref{lem:nde-ipw-1} and \ref{lem:nde-ipw-3} in the main text}
\label{app:lem_err}

In this section, we only prove Lemma \ref{lem:nde-ipw-1} and \ref{lem:nde-ipw-3} for the oracle estimators assuming all the nuisance functions to be known. The results for the feasible estimators then follow immediately. 

\begin{proof}[Proof of Lemma \ref{lem:nde-ipw-1}]
We prove Lemma \ref{lem:nde-ipw-1} by introducing the notation $\hat{\theta}_{nde\mbox{-}ipw, g}^{z} \coloneqq \mathbb{P}_{n} \left[ V_{nde\mbox{-}ipw, g}^{z} \right]$
where 
\begin{equation*}
V_{nde\mbox{-}ipw, g}^{z} \coloneqq \prod\limits_{t = 0}^{K} \left\{ \frac{1 - A_{t}}{1 - \Pi_{t}} (1 - Z_{t}) + \frac{A_{t}}{\Pi_{t}} Z_{t} \right\}^{1 - A_{t, g}} \prod\limits_{t = 0}^{K - 1} \left\{ \frac{A_{t}}{\Pi_{t}} \right\}^{A_{t, g}} \frac{\prod\limits_{t = 0}^{K} \mathbbm{1} \{S_{t} = S_{t, g}\} Y}{W_{0}}.
\end{equation*}
So $\hat{\theta}_{nde\mbox{-}ipw, g}^{z} = \hat{\theta}_{nde\mbox{-}ipw, g}$ when we choose $Z_{t} = \Pi_{t}$ and $\hat{\theta}_{nde\mbox{-}ipw, g}^{z} = \hat{\theta}_{ipw, g}$ when we choose $Z_{t} = 0$. Thus we only need to consider the more general $\hat{\theta}_{nde\mbox{-}ipw, g}^{z}$.

Denote $\bar{i}_{t - 1} \equiv \{ i_{j} \in \{0, 1\}, j = 0, 1, \cdots, t - 1\}$ for any $t = 0, \cdots, K - 1$ and $\bar{R}_{t, \bar{i}_{t - 1}} \equiv \{ R_{t'} i_{t' - 1} + ? i_{t' - 1}: 1 \leq t' \leq t \}$ for any $t = 1, \cdots, K$. Then we have
\begin{align*}
V_{nde\mbox{-}ipw, g}^{z} = & \ \prod\limits_{t = 0}^{K} \left\{ \frac{1 - A_{t}}{1 - \Pi_{t}} (1 - Z_{t}) + \frac{A_{t}}{\Pi_{t}} Z_{t} \right\}^{1 - A_{t, g}} \prod\limits_{t = 0}^{K - 1} \left\{ \frac{A_{t}}{\Pi_{t}} \right\}^{A_{t, g}} \frac{\prod\limits_{t = 0}^{K} \mathbbm{1} \{S_{t} = S_{t, g}\} Y}{W_{0}} \\
= & \ \left[ \prod\limits_{t = 0}^{K - 1} \left\{ \left( \frac{1 - A_{t}}{1 - \Pi_{t}} (1 - Z_{t}) + \frac{A_{t}}{\Pi_{t}} Z_{t} \right)^{1 - A_{t, g}} \left( \frac{A_{t}}{\Pi_{t}} \right)^{A_{t, g}} \right\} \frac{\mathbbm{1} \{S_{t} = S_{t, g}\}}{P_{t}} \right] \left[ \frac{\mathbbm{1} \{S_{K} = S_{K, g}\} Y}{P_{K}} \right] \\
= & \ \sum_{\bar{i}_{K} \in \{0, 1\}^{K}} \left[ \begin{array}{c}
\prod\limits_{t = 0}^{K - 1} \left\{ \left( \frac{1 - A_{t}}{1 - \Pi_{t}} (1 - Z_{t}) + \frac{A_{t}}{\Pi_{t}} Z_{t} \right)^{1 - i_{t}} \left( \frac{A_{t}}{\Pi_{t}} \right)^{i_{t}} \mathbbm{1} \{ g_{a, t} (\bar{R}_{t, g}, \bar{C}_{t, g}, \bar{S}_{t - 1, g}) = i_{t} \} \right\} \\
\times \frac{\mathbbm{1} \{S_{t} = g_{s, t} (\bar{R}_{t, g}, \bar{C}_{t, g}, \bar{S}_{t - 1, g})\}}{P_{t}}
\end{array} \right] \\
& \times \left[ \frac{\mathbbm{1} \{S_{K} = g_{s, K} (\bar{R}_{K, g}, \bar{C}_{K, g}, \bar{S}_{K - 1, g})\} Y}{P_{K}} \right] \\
= & \ \sum_{\bar{i}_{K} \in \{0, 1\}^{K}} \left[ \begin{array}{c}
\left\{ \left( \frac{1 - A_{0}}{1 - \Pi_{0}} (1 - Z_{0}) + \frac{A_{0}}{\Pi_{0}} Z_{0} \right)^{1 - i_{0}} \left( \frac{A_{0}}{\Pi_{0}} \right)^{i_{0}} \mathbbm{1} \{ g_{a, 0} (\bar{R}_{0}, \bar{C}_{0}) = i_{0} \} \frac{\mathbbm{1} \{S_{0} = g_{s, 0} (\bar{R}_{0}, \bar{C}_{0})\}}{P_{0}} \right\} \\
\prod\limits_{t = 1}^{K - 1} \left\{ \left( \frac{1 - A_{t}}{1 - \Pi_{t}} (1 - Z_{t}) + \frac{A_{t}}{\Pi_{t}} Z_{t} \right)^{1 - i_{t}} \left( \frac{A_{t}}{\Pi_{t}} \right)^{i_{t}} \mathbbm{1} \{ g_{a, t} (\bar{R}_{t, g}, \bar{C}_{t, g}, \bar{S}_{t - 1, g}) = i_{t} \} \right\} \\
\times \frac{\mathbbm{1} \{S_{t} = g_{s, t} (\bar{R}_{t, g}, \bar{C}_{t, g}, \bar{S}_{t - 1, g})\}}{P_{t}}
\end{array} \right] \\
& \times \left[ \frac{\mathbbm{1} \{S_{K} = g_{s, K} (\bar{R}_{K, g}, \bar{C}_{K, g}, \bar{S}_{K - 1, g})\} Y}{P_{K}} \right] \\
= & \ \sum_{\bar{i}_{K} \in \{0, 1\}^{K}} \left[ \begin{array}{c}
\left\{ \left( \frac{1 - A_{0}}{1 - \Pi_{0}} (1 - Z_{0}) + \frac{A_{0}}{\Pi_{0}} Z_{0} \right)^{1 - i_{0}} \left( \frac{A_{0}}{\Pi_{0}} \right)^{i_{0}} \mathbbm{1} \{ g_{a, 0} (\bar{R}_{0}, \bar{C}_{0}) = i_{0} \} \frac{\mathbbm{1} \{S_{0} = g_{s, 0} (\bar{R}_{0}, \bar{C}_{0})\}}{P_{0}} \right\} \\
\prod\limits_{t = 1}^{K - 1} \left\{ \left( \frac{1 - A_{t}}{1 - \Pi_{t}} (1 - Z_{t}) + \frac{A_{t}}{\Pi_{t}} Z_{t} \right)^{1 - i_{t}} \left( \frac{A_{t}}{\Pi_{t}} \right)^{i_{t}} \mathbbm{1} \{ g_{a, t} (\bar{R}_{t, \bar{i}_{t - 1}}, \bar{C}_{t}, \bar{S}_{t - 1}) = i_{t} \} \right\} \\
\times \frac{\mathbbm{1} \{S_{t} = g_{s, t} (\bar{R}_{t, \bar{i}_{t - 1}}, \bar{C}_{t}, \bar{S}_{t - 1})\}}{P_{t}}
\end{array} \right] \\
& \times \left[ \frac{\mathbbm{1} \{S_{K} = g_{s, K} (\bar{R}_{K, \bar{i}_{K - 1}}, \bar{C}_{K}, \bar{S}_{K - 1})\} Y}{P_{K}} \right]
\end{align*}
is a function of the observed data and thus a statistic. Note that in the last line of the above display, we use the NDE$(\bar{R}_{K}^{\ast}, \bar{C}_{K}, Y)$ assumption and in the second but last line, we use the fact that at $t = 0$, $L_{0, g} = L_{0}, C_{0, g} = C_{0}$ and $R_{0, g} = R_{0} \equiv 0$ (by definition as there is no test before $t = 0$).

\end{proof}

\begin{proof}[Proof of Lemma \ref{lem:nde-ipw-3}]
Now we will show $V_{nde\mbox{-}ipw, g} - V_{ipw, g} - \Pi [V_{nde\mbox{-}ipw, g} - V_{ipw, g} | \Lambda_{\mathsf{ancillary}}] \in \Lambda_{\mathsf{NDE}, obs}^{c, r^{\ast} \perp}$. It suffices to show that $V_{nde\mbox{-}ipw, g} - V_{ipw, g}$ is of the form
$
\sum\limits_{t = 0}^{K - 1} \sum\limits_{\bar{i}_{t - 1}, \underaccent{\bar}{i}_{t + 1}} D_{b, t, \bar{i}_{t - 1}, \underaccent{\bar}{i}_{t + 1}}^{\text{obs}}
$
and then apply Theorem \ref{thm:obs_nde}. We now show the former as follows.

Define set $I_{\bar{i}_{K}} \coloneqq \{t: i_{t} = 1\}$. Then 
\begin{align*}
& \ V_{nde\mbox{-}ipw, g} - V_{ipw, g} \\
= & \ \sum_{\bar{i}_{K} \in \{0, 1\}^{K}} \left\{ \prod_{t = 0}^{K} \frac{\mathbbm{1} \{S_{t} = g_{s, t} (\bar{R}_{t, \bar{i}_{t - 1}}, \bar{C}_{t}, \bar{S}_{t - 1})\}}{p (S_{t} | \bar{H}_{t})} \mathbbm{1} \{g_{a, t} (\bar{R}_{t, \bar{i}_{t - 1}}, \bar{C}_{t}, \bar{S}_{t - 1}) = i_{t} \} \right\} \prod_{t \in I_{\bar{i}_{K}}} \left( \frac{A_{t}}{\Pi_{t}} \right) \\
& \times \left\{ 1 - \prod_{t \not\in I_{\bar{i}_{K}}} \left( \frac{1 - A_{t}}{1 - \Pi_{t}} \right) \right\} Y \\
\coloneqq & \ \sum_{\bar{i}_{K} \in \{0, 1\}^{K}} \left\{ \frac{\prod_{t = 0}^{K} \mathbbm{1}\{S_{t} = g_{s, t} (\bar{R}_{t, \bar{i}_{t - 1}}, \bar{C}_{t}, \bar{S}_{t - 1})\} \mathbbm{1} \{g_{a, t} (\bar{R}_{t, \bar{i}_{t - 1}}, \bar{C}_{t}, \bar{S}_{t - 1}) = i_{t}\}}{W_{0}} \right\} e_{\bar{i}_{K}} (\bar{O}_{K}).
\end{align*}
For each $\bar{i}_{K}$, 
\begin{align*}
e_{\bar{i}_{K}} (\bar{O}_{K}) = & \ \prod_{t \in I_{\bar{i}_{K}}} \left( \frac{A_{t}}{\Pi_{t}} \right) \left\{1 - \prod_{t \not\in I_{\bar{i}_{K}}} \left( \frac{1 - A_{t}}{1 - \Pi_{t}} \right) \right\} Y \\
= & \ \left\{ \prod_{t = 0}^{K - 1} \left( \frac{A_{t}}{\Pi_{t}} \right)^{\mathbbm{1} \{t \in I_{\bar{i}_{K}}\}} - \prod_{t = 0}^{K - 1} \left( \frac{A_{t}}{\Pi_{t}} \right)^{\mathbbm{1} \{t \in I_{\bar{i}_{K}}\}} \left( \frac{1 - A_{t}}{1 - \Pi_{t}} \right)^{\mathbbm{1} \{t \not\in I_{\bar{i}_{K}}\}} \right\} Y \\
= & \ \sum_{t = 0}^{K - 1} \prod_{t_{1} < t, t_{1} \in I_{\bar{i}_{K}}} \frac{A_{t_{1}}}{\Pi_{t_{1}}} \left[ \mathbbm{1} \{i_{t} = 0\} \frac{A_{t} - \Pi_{t}}{1 - \Pi_{t}} \right] \prod_{t_{2} > t, t_{2} \in I_{\bar{i}_{K}}} \frac{A_{t_{2}}}{\Pi_{t_{2}}} \prod_{t_{2} > t, t_{2} \not\in I_{\bar{i}_{K}}}\frac{1 - A_{t_{2}}}{1 - \Pi_{t_{2}}} Y.
\end{align*}
Now it is easy to see that $V_{nde\mbox{-}ipw, g} - V_{ipw, g}$ is of the form $\sum\limits_{t = 0}^{K} \sum\limits_{\bar{i}_{t - 1}, \underaccent{\bar}{i}_{t + 1}} D_{b, t, \bar{i}_{t - 1}, \underaccent{\bar}{i}_{t + 1}}^{\text{obs}}$ by observing that $\dfrac{1 - A_{t}}{1 - \Pi_{t}} = 1 - \dfrac{A_{t} - \Pi_{t}}{1 - \Pi_{t}}$. Now by changing the actual form of $B_{t, \bar{i}_{t - 1}, \underaccent{\bar}{i}_{t + 1}}$ from line to line, we have
\begin{align*}
& \ V_{nde\mbox{-}ipw, g} - V_{ipw, g} \\
= & \ \sum_{t = 0}^{K - 1} \sum_{\bar{i}_{t - 1}, \underaccent{\bar}{i}_{t + 1}} \frac{B_{t, \bar{i}_{t - 1}, \underaccent{\bar}{i}_{t + 1}}}{W_{t + 1}} \prod_{t' = 0}^{t - 1} \left( \frac{A_{t'}}{\Pi_{t'}} \right)^{i_{t'}} \left[ \mathbbm{1} \{i_{t} = 0\} \left( \frac{A_{t}}{\Pi_{t}} - 1 \right) \frac{\Pi_{t}}{1 - \Pi_{t}} \right] \\
& \times \prod_{t' = t + 1}^{K - 1} \left( \frac{A_{t'}}{\Pi_{t'}} \right)^{i_{t'}} \left( 1 - \frac{A_{t'} - \Pi_{t'}}{1 - \Pi_{t'}} \right)^{1 - i_{t'}} Y \\
= & \ \sum_{t = 0}^{K - 1} \sum_{\bar{i}_{t - 1}, \underaccent{\bar}{i}_{t + 1}} \frac{B_{t, \bar{i}_{t - 1}, \underaccent{\bar}{i}_{t + 1}}}{W_{t + 1}} \prod_{t' = 0}^{t - 1} \left( \frac{A_{t'}}{\Pi_{t'}} \right)^{i_{t'}} \left( \frac{A_{t}}{\Pi_{t}} - 1 \right) \prod_{t' = t + 1}^{K - 1} \left( \frac{A_{t'}}{\Pi_{t'}} \right)^{i_{t'}} \left( 1 + \frac{A_{t'} - \Pi_{t'}}{\Pi_{t'} - 1} \right)^{1 - i_{t'}} Y \\
= & \ \sum_{t = 0}^{K - 1} \sum_{\bar{i}_{t - 1}, \underaccent{\bar}{i}_{t + 1}} \frac{B_{t, \bar{i}_{t - 1}, \underaccent{\bar}{i}_{t + 1}}}{W_{t + 1}} \prod_{t' = 0}^{t - 1} \left( \frac{A_{t'}}{\Pi_{t'}} \right)^{i_{t'}} \left( \frac{A_{t}}{\Pi_{t}} - 1 \right) \prod_{t' = t + 1}^{K - 1} \left( \frac{A_{t'}}{\Pi_{t'}} \right)^{i_{t'}} \\
& \times \left\{ 1 + \sum_{\mathfrak{s} \subset \{t + 1 \leq k \leq K - 1: i_{k} = 0\}, \mathfrak{s} \neq \emptyset} \prod_{m \in \mathfrak{s}} \left( \frac{A_{m} - \Pi_{m}}{\Pi_{m} - 1} \right) \right\} Y \\
= & \ \sum_{t = 0}^{K - 1} \sum_{\bar{i}_{t - 1}, \underaccent{\bar}{i}_{t + 1}} \frac{B_{t, \bar{i}_{t - 1}, \underaccent{\bar}{i}_{t + 1}}}{W_{t + 1}} \prod_{t' = 0}^{t - 1} \left( \frac{A_{t'}}{\Pi_{t'}} \right)^{i_{t'}} \left( \frac{A_{t}}{\Pi_{t}} - 1 \right) \prod_{t' = t + 1}^{K - 1} \left( \frac{A_{t'}}{\Pi_{t'}} \right)^{i_{t'}} Y,
\end{align*}
where the last line follows from exactly the same calculations as in the proof of Theorem \ref{thm:obs_nde} and this concludes the proof.
\end{proof}

\begin{remark}\label{rem:app_cost}
Suppose the model defined by the NDE$(\bar{C}_{K}, \bar{R}_{K}^{\ast}, Y^{d})$ and the modified identifiability assumptions hold. Consider a testing cost function $c^{\ast} (t, \bar{c}_{t}, \bar{r}_{t}, \bar{s}_{t})$. In Remarks \ref{rem:cost} and \ref{rem:6}.1 of Section \ref{sec:formal}, we pointed out that, for non-NDE-censored subjects, the results stated for $V_{nde\mbox{-}ipw, g}$ in the main text remain true for $V_{nde\mbox{-}ipw, g}^{c^{\ast}}$ where $V_{nde\mbox{-}ipw, g}^{c^{\ast}}$ is $V_{nde\mbox{-}ipw, g}$ except with
$Y_{g} = Y + \sum\limits_{t = 0}^{K - 1} \mathbbm{1} \{A_{t} = 1, A_{t, g} = 0\} c^{\ast} (t, \bar{C}_{t}, \bar{R}_{t}, \bar{S}_{t}) = Y^{d} - \sum\limits_{t = 0}^{K - 1} A_{t, g} c^{\ast} (t, \bar{C}_{t}, \bar{R}_{t}, \bar{S}_{t})$ substituted for $Y$. Note $V_{nde\mbox{-}ipw, g}^{c^{\ast}}$ like $V_{nde\mbox{-}ipw, g}$ is zero for all NDE-censored subjects. Consider the statistic
\begin{align*}
V_{nde\mbox{-}ipw, g}^{c^{\ast} \dag} \equiv & \ - \sum\limits_{j = 0}^{K - 1} A_{j, g} c^{\ast} (j,\bar{C}_{j},\bar{R}_{j},\bar{S}_{j}) \prod\limits_{t = 0}^{j - 1} \left( \frac{A_{t}}{\Pi_{t}} \right)^{A_{t, g}} \frac{\prod\limits_{t = 0}^{j} \mathbbm{1} \{S_{t} = S_{t, g}\}}{W_{0}^{j}} \\
& + \prod\limits_{t = 0}^{K - 1} \left( \frac{A_{t}}{\Pi_{t}} \right)^{A_{t, g}} \frac{\prod\limits_{t = 0}^{K} \mathbbm{1} \{S_{t} = S_{t, g}\}}{W_{0}} Y^{d}
\end{align*}
where we recall that $W_{0}^{j} = \prod\limits_{t = 0}^{j} p (S_{t} | \bar{H}_{t})$ and $W_{0} = W_{0}^{K}$. $V_{nde\mbox{-}ipw, g}^{c^{\ast} \dag}$, like $V_{nde\mbox{-}ipw, g}^{c^{\ast}}$, has mean $\theta_{g} = \E [Y_{g}]$ but, unlike $V_{nde\mbox{-}ipw, g}^{c^{\ast}}$, need not be zero for all NDE-censored subjects, raising the question whether using $V_{nde\mbox{-}ipw, g}^{c^{\ast} \dag}$ offers efficiency advantages. Although it is true that under the model, the variance of $V_{nde\mbox{-}ipw, g}^{c^{\ast} \dag}$ is often less and never greater than that of $V_{nde\mbox{-}ipw, g}^{c^{\ast}}$, nonetheless, the residuals from the projection on $\Lambda_{\mathsf{ancillary}}$ are algebraically identical: i.e. $V_{nde\mbox{-}ipw, g}^{c^{\ast}} - \Pi [V_{nde\mbox{-}ipw, g}^{c^{\ast}} | \Lambda_{\mathsf{ancillary}}] = V_{nde\mbox{-}ipw, g}^{c^{\ast} \dag} - \Pi [V_{nde\mbox{-}ipw, g}^{c^{\ast} \dag} | \Lambda_{\mathsf{ancillary}}]$. It follows that any RAL estimator constructed from $V_{nde\mbox{-}ipw, g}^{c^{\ast} \dag}$ will have the same influence function as the analogous estimator constructed from $V_{nde\mbox{-}ipw, g}^{c^{\ast}}$, although $V_{nde\mbox{-}ipw, g}^{c^{\ast} \dag}$ would be expected to have better finite sample performance.
\end{remark}

\subsection{Further results for the simulation study in Section \ref{sec:simulation}}
\label{sim:further}

Table \ref{tab:dgp1-2voi} reports the Monte Carlo means and standard deviations of $\tilde{\theta}_{ipw, g^{opt}}$, $\tilde{\theta}_{nde\mbox{-}ipw, g^{opt}}$, $\tilde{\theta}_{ipw, g^{opt}} (\hat{b}_{sub, ipw, g^{opt}})$ and $\tilde{\theta}_{nde\mbox{-}ipw, g^{opt}} (\hat{b}_{sub, nde\mbox{-}ipw, g^{opt}})$ when $\rho = 0.5$. We observe that the RE of adjusting for the NDE assumption by projection is $(0.494 / 0.169)^2 = 8.544$ (or $(0.441 / 0.115)^2 = 14.706$) at $n = 2.5 \times 10^{4}$ (or at $n = 5 \times 10^{4}$), higher than the RE of adjusting for the NDE assumption by simply not censoring $\{A_{0} = 1\}$, which is $(0.494 / 0.295)^2 = 2.804$ (or $(0.441 / 0.223)^2 = 3.911$) at $n = 2.5 \times 10^{4}$ (or at $n = 5 \times 10^{4}$).

\begin{table}[h]
\centering
\begin{tabular}{c|cc}
\hline
& $\tilde{\theta}_{ipw, g^{opt}}$ & $\tilde{\theta}_{nde\mbox{-}ipw, g^{opt}}$ \\ \hline
$n = 2.5 \times 10^{4}$ & 26.036 (0.494) & 25.982 (0.295) \\ 
$n = 5 \times 10^{4}$ & 25.926 (0.441) & 25.946 (0.223) \\ \hline
& $\tilde{\theta}_{ipw, g^{opt}} (\hat{b}_{sub, ipw, g^{opt}})$ & $\tilde{\theta}_{nde\mbox{-}ipw, g^{opt}} (\hat{b}_{sub, nde\mbox{-}ipw, g^{opt}})$ \\ \hline
$n = 2.5 \times 10^{4}$ & 25.927 (0.169) & 25.939 (0.169) \\ 
$n = 5 \times 10^{4}$ & 25.961 (0.115) & 25.966 (0.115) \\ \hline
Truth & 25.971 & 25.971 \\ \hline
\end{tabular}
\caption{Simulation results for DGP: A comparison between the estimated value functions with and without adjusting for the NDE assumption. The numerical values outside (inside) the parentheses are the Monte Carlo means (standard deviations).}
\label{tab:dgp1-2voi}
\end{table}

In some other settings, the efficiency gain can be much greater than those reported in Table \ref{tab:dgp1-2voi}. For example, we could make the following modification to the treatment probability in DGP while fixing $\rho = 0.5$: 
\begin{align*}
S_{1} \sim & \ \text{Bernoulli} (0.90) \mathbbm{1} \{A_{0} = 1, R_{1} = 0\} + \text{Bernoulli} (0.95) \mathbbm{1} \{A_{0} = 1, R_{1} = 1\} \\
& + \text{Bernoulli} (0.01) \mathbbm{1} \{A_{0} = 0, R_{1} = ?\}.
\end{align*}
We call this DGP'. What distinguishes DGP' from DGP is the high correlation between $A_{0}$ and $S_{1}$. In DGP', the optimal regime is still ``always treat''. In the simulation, we again observe that the optimal regime is selected in 100 out of 100 replications. Reading from Table \ref{tab:dgp1voi}, the RE of adjusting for the NDE assumption by not censoring $\{A_{0} = 1\}$ is $(0.881 / 0.550)^2 = 2.566$ (or $(0.743 / 0.396)^2 = 3.520$) at $n = 2.5 \times 10^{4}$ (or $n = 5 \times 10^{4}$). Ignoring screening only improves efficiency by a small margin, owing to the strong correlation between $S_{1}$ and $A_{0}$ in DGP' (informally, $S_{1} = A_{0}$ with large probability in this simulation). The RE of adjusting for the NDE assumption by projection is much greater: $(0.881 / 0.167)^2 = 27.830$ (or $(0.743 / 0.114)^2 = 42.478$) at $n = 2.5 \times 10^{4}$ (or $n = 5 \times 10^{4}$).

\begin{table}[h]
\centering
\begin{tabular}{c|cc}
\hline
& $\tilde{\theta}_{ipw, g^{opt}}$ & $\tilde{\theta}_{nde\mbox{-}ipw, g^{opt}}$ \\ \hline
$n = 2.5 \times 10^{4}$ & 26.161 (0.881) & 26.000 (0.550) \\ 
$n = 5 \times 10^{4}$ & 25.921 (0.743) & 25.915 (0.396) \\ \hline
& $\tilde{\theta}_{ipw, g^{opt}} (\hat{b}_{sub, ipw, g^{opt}})$ & $\tilde{\theta}_{nde\mbox{-}ipw, g^{opt}} (\hat{b}_{sub, nde\mbox{-}ipw, g^{opt}})$ \\ \hline
$n = 2.5 \times 10^{4}$ & 25.923 (0.167) & 25.937 (0.167) \\ 
$n = 5 \times 10^{4}$ & 25.963 (0.114) & 25.968 (0.114) \\ \hline
Truth & 25.971 & 25.971 \\ \hline
\end{tabular}
\caption{Simulation results for DGP': A comparison between the estimated value functions with and without adjusting for the NDE assumption. The numerical values outside (inside) the parentheses are the Monte Carlo means (standard deviations).}
\label{tab:dgp1voi}
\end{table}

\subsection{A simulation study on cost benefit analysis based on VoI estimated by opt-SNMM}
\label{sim:cost} 

In this section, we slightly modify the DGP in the main text by introducing a baseline covariate $L_{0}$ which is a noisy measurement of the disease status, such as some symptom of the disease. We call this DGP as DGP2. In particular, DGP2 includes the following time-ordered random variables: $(Z_{0}, L_{0}, A_{0}, R_{1}, S_{1}, Y^{d})$ with $Z_{0}$ being latent. DGP2 is generated as follows:

\begin{itemize}
\item $Z_{0} \sim \text{Bernoulli} (0.6)$ describes the unobserved underlying disease status of a subject at the start of the study ($t = 0$).

\item $L_{0}$ is a noisy measurement of the latent disease status $Z_{0}$, where $L_{0} \sim \mathbbm{1} \{ Z_{0} = 1 \} \text{Bernoulli} (0.55) + \mathbbm{1} \{ Z_{0} = 0 \} \text{Bernoulli} (0.45)$.

\item $A_{0} \sim \mathbbm{1} \{ L_{0} = 1 \} \text{Bernoulli}(0.1) + \mathbbm{1} \{ L_{0} = 0 \} \text{Bernoulli} (0.9)$.

\item When $A_{0} = 0$, $R_{1} = ?$; when $A_{0} = 1$, the test result $R_{1}$, unlike $L_{0}$, is a very accurate measurement of the disease status $Z_{0}$: $R_{1} \sim \mathbbm{1} \{ Z_{0} = 1 \} \text{Bernoulli} (0.95) + \mathbbm{1} \{ Z_{0} = 0 \} \text{Bernoulli} (0.05)$.

\item For $S_{1}$ 
\begin{align*}
& \ S_{1} \sim \text{Bernoulli} (0.45) \mathbbm{1} \{L_{0} = 0, A_{0} = 1, R_{1} = 0\} + \text{Bernoulli} (0.55) \mathbbm{1} \{L_{0} = 1, A_{0} = 1, R_{1} = 0\} \\
& + \text{Bernoulli} (0.45) \mathbbm{1} \{L_{0} = 0, A_{0} = 0, R_{1} = ?\} + \text{Bernoulli} (0.55) \mathbbm{1} \{L_{0} = 1, A_{0} = 0, R_{1} = ?\} \\
& + \text{Bernoulli} (0.85) \mathbbm{1} \{L_{0} = 0, A_{0} = 1, R_{1} = 1\} + \text{Bernoulli} (0.9) \mathbbm{1} \{L_{0} = 1, A_{0} = 1, R_{1} = 1\}.
\end{align*}
Given $L_{0}$, the probability of taking the treatment at $t = 1$ is increased when $A_{0} = 1$ and $R_{1} = 1$. The probability of taking the treatment is also larger when $L_{0} = 1$ compared to when $L_{0} = 0$, given $(A_{0}, R_{1})$.

\item $Y^{d} \sim N(- 5 Z_{0} + 9 S_{1} Z_{0} - 3 S_{1} (1 - Z_{0}), 1)$. The utility is defined as $Y \coloneqq Y^{d} - c^{\ast} A_{0}$ with the cost of screening $c^{\ast} = 0.9$.
\end{itemize}

In DGP2, we are interested in the counterfactual utility under the optimal regimes, the counterfactual utility under the optimal regime withholding screening, and the value of information (VoI), i.e. their difference, both unconditional and conditional on $L_{0} = 0, 1$. The conditional value functions and VoIs are often of interest in personalized medicine. In this section we only consider the opt-SNMM estimators with and without adjusting for the NDE assumption by projection. The methods to compute the (adjusted) opt-SNMM estimators can be found in Appendix \ref{app:sim}. They are special cases of the methods described in Sections \ref{sec:estimator} and \ref{sec:optimal_dr}.

A large $\mathsf{VoI}$ indicates that the screening test, at the cost $c$, is valuable for treatment decision making because it improves the utility under the optimal regimes with the extra information provided by the test result. In practice, one may choose some predetermined threshold $\nu$ and test $\mathsf{H}_{0}: \mathsf{VoI} \leq \nu$. If rejected, there is strong evidence that the screening should be allowed in decision making at the current cost $c^{\ast}$. The choice of $\nu$ may be determined not only by scientific and statistical concerns, but also by issues related to sociological, economical, and even ethical concerns. It is beyond the scope of this paper to propose any appropriate procedure for choosing such a $\nu$ in practice\footnote{In practice, one can also consider rescaling VoI by the value function. Then the rescaled VoI will be bounded between 0 and 1, easier for choosing $\nu$.}. Later we will demonstrate that the efficiency gain by adjusting for the NDE assumption can make a difference in a cost-benefit analysis based on VoI.

First, using a dataset with $10^{7}$ observations as the truth, the optimal regime for DGP2 is: 
\begin{align}
a_{0}^{opt} & = 1 - l_{0},  \label{opt0_1} \\
s_{1}^{opt} (a_{0}^{opt}) & = a_{0}^{opt} r_{1} + (1 - a_{0}^{opt}).
\label{opt1_1}
\end{align}
In words, the optimal screening strategy at $t = 0$ is to screen if and only if $L_{0} = 0$; the optimal treatment strategy at $t = 1$ is: if not tested at $t = 0$, always treat; if tested at $t = 0$, treat if and only if the test result is positive ($R_{1} = 1$). When withholding screening, the optimal treatment regime at $t = 1$ (always treat) for DGP2: 
\begin{align} \label{opt1_2}
s_{1}^{opt} (a_{0} = 0) = 1.
\end{align}
Further, the value function (based on this dataset with size $10^{7}$) is 1.265 and the utility under the optimal regime withholding screening is 1.204. Hence the VoI is 0.061 and the screening test is cost-beneficial in DGP2 (see the last row of Table \ref{tab:dgp2voi}). Conditioning on $L_{0} = 1$, the value function based on this huge dataset is 1.529 and the utility under the optimal regime withholding screening is the same. Hence the conditional VoI given $L_{0} = 1$ is 0 and the screening test is not cost-beneficial given $L_{0} = 1$ (see the last row of Table \ref{tab:dgp2voi}). Conditioning on $L_{0} = 0$, the value function based on this huge dataset is 0.990 and the utility under the optimal regime withholding screening is 0.871. Hence the VoI is 0.119 and the screening test is cost-beneficial given $L_{0} = 0$ (see the last row of Table \ref{tab:dgp2voi}).

Then we compare the percentages of the 100 replications selecting the actual optimal regimes with and without adjusting for the NDE assumption in Table \ref{tab:dgp2perc}:

\begin{itemize}
\item When withholding screening (the third column of Table \ref{tab:dgp2perc}), 100 out of 100 replications select the optimal treatment strategy given the status $L_{0}$ with and without adjusting for the NDE assumption.

\item At $t = 1$ (the first column of Table \ref{tab:dgp2perc}), 100 out of 100 replications select the optimal treatment strategy given the status $L_{0}$, the screening strategy at $t = 0$, and the test result $R_{1}$, at both sample sizes $n = 2.5 \times 10^{4}, 5 \times 10^{4}$, with and without adjusting for the NDE assumption.

\item At $t = 0$ (the second column of Table \ref{tab:dgp2perc}), the percentage of selecting the actual optimal screening strategy is improved from 82\% at $n = 2.5 \times 10^{4}$ (100\% at $n = 5 \times 10^{4}$) to 85\% at $n = 2.5 \times 10^{4}$ (100\% at $n = 5 \times 10^{4}$) after adjusting for the NDE assumption. This result demonstrates that adjusting for the NDE assumption can help improve the probability of selecting the optimal regimes.
\end{itemize}

\begin{table}[h]
\centering
\begin{tabular}{c|c|c|c}
\hline
& $t = 0$: optimal & $t = 1$: optimal & $t = 1$: optimal no screening \\ 
\hline
w/o NDE ($n = 2.5 \times 10^{4}$) & 82\% & 100\% & 100\% \\ 
w/ NDE ($n = 2.5 \times 10^{4}$) & 100\% & 100\% & 100\% \\ 
w/o NDE ($n = 5 \times 10^{4}$) & 85\% & 100\% & 100\% \\ 
w/ NDE ($n = 5 \times 10^{4}$) & 100\% & 100\% & 100\% \\ \hline
\end{tabular}
\caption{Simulation results for DGP2: A comparison between the percentages of all the 100 replications selecting the actual optimal regimes (or the optimal regimes withholding screening) at $t = 0, 1$ with and without adjusting for the NDE assumption.}
\label{tab:dgp2perc}
\end{table}

Next, Table \ref{tab:dgp2voi} shows the Monte Carlo means and Monte Carlo standard deviations of the estimated value function, the estimated utility under the optimal regime withholding screening, and the estimated VoI with and without adjusting for the NDE assumption:

\begin{itemize}
\item We first look at the second column of Table \ref{tab:dgp2voi}, where we show the Monte Carlo mean and standard deviation of the estimated utility under the optimal regime withholding screening (i.e. always treat, see equation \eqref{opt1_2}), marginally (upper panel) and conditional on $L_{0} = 1$ (middle panel) and $L_{0} = 0$ (lower panel). After adjusting for the NDE assumption, the Monte Carlo mean of the estimated marginal and conditional utilities get closer to the true values. At $n = 2.5 \times 10^{4}$ (or $n = 5 \times 10^{4}$), the RE of adjusting for the NDE assumption for marginal utility is $(0.0764 / 0.0263)^{2} \approx 8$ (or $(0.0591 / 0.0182)^{2} \approx 10$). Conditional on $L_{0} = 0$, the RE of adjusting for the NDE assumption for the conditional utility is $(0.129 / 0.0215)^{2} \approx 36$ (or $(0.106 / 0.0173)^{2} \approx 38$) at $n = 2.5 \times 10^{4}$ (or $n = 5 \times 10^{4}$). Conditional on $L_{0} = 1$, however, we do not observe large efficiency improvement by adjusting for the NDE assumption. The RE of adjusting for the NDE assumption for the conditional utility is $(0.0418 / 0.0378)^{2} \approx 1.22$ (or $(0.0308 / 0.0274)^{2} \approx 1.26$) at $n = 2.5 \times 10^{4}$ (or $n = 5 \times 10^{4}$). The different magnitude in efficiency gains while conditioning on $L_{0} = 1$ or $L_{0} = 0$ can be explained by the difference in the percentages of subjects following the optimal regime (always treat) withholding for testing in the simulation, which equal: 
\begin{align*}
& \ \Pr (L_{0} = 1, A_{0} = 0, R_{1} = ?, S_{1} = 1) \\
= & \ \sum_{z = 0, 1} \Pr (Z_{0} = z, L_{0} = 1, A_{0} = 0, R_{1} = ?, S_{1} = 1) \\
= & \ 0.6 \times 0.55 \times 0.9 \times 0.55 + 0.4 \times 0.45 \times 0.9 \times 0.55 \approx 0.25
\end{align*}
and 
\begin{align*}
& \ \Pr (L_{0} = 0, A_{0} = 0, R_{1} = ?, S_{1} = 1) \\
= & \ \sum_{z = 0, 1} \Pr (Z_{0} = z, L_{0} = 0, A_{0} = 0, R_{1} = ?, S_{1} = 1) \\
= & \ 0.6 \times 0.45 \times 0.1 \times 0.45 + 0.4 \times 0.55 \times 0.1 \times 0.45 \approx 0.02.
\end{align*}
Given that the percentages of $L_{0} = 1$ (51\%) and that of $L_{0} = 0$ (49\%) are very close in the simulation, it is not surprising to see a greater efficiency gain given $L_{0} = 0$ because only 2\% of subjects are following the regime of ``no testing and always treat'' without incorporating the NDE assumption.

Recall that in Table \ref{tab:dgp2perc}, for both sample sizes ($n = 2.5 \times 10^{4}, 5 \times 10^{4}$), all the 100 replications select the actual optimal regime withholding screening, whether or not adjusting for the NDE assumption. Thus the above efficiency gain either marginally or conditional on $L_{0} = 0$ is not mainly due to the higher chance of selecting the actual optimal regime.

\item The first column of Table \ref{tab:dgp2voi} shows the Monte Carlo mean and standard deviation of the estimated value functions, marginally (upper panel) and conditional on $L_{0} = 1$ (middle panel) and $L_{0} = 0$ (lower panel). Again, we observe that after adjusting for the NDE assumption, the Monte Carlo mean of the estimated marginal and conditional value function gets closer to the truth. At $n = 2.5 \times 10^{4}$, the RE of adjusting for the NDE assumption is $(0.0315 / 0.0243)^{2} \approx 1.68$. This is partially due to the higher chance of selecting the optimal regime after adjusting for the NDE assumption (see the first column of Table \ref{tab:dgp2perc}). At $n = 5 \times 10^{4}$, the RE of adjusting for the NDE assumption decreases to $(0.0198 / 0.0167)^{2} \approx 1.47$, because the percentage of selecting the optimal regime is 85\% without adjusting for the NDE assumption, closer to 100\% than the percentage (63\%) at $n = 2.5 \times 10^{4}$. The results conditional on $L_{0}$ are quite similar and hence we omit the details.

\item The third column of Table \ref{tab:dgp2voi} shows the Monte Carlo mean and standard deviation of the estimated VoIs, marginally (upper panel) and conditional on $L_{0} = 1$ (middle panel) and $L_{0} = 0$ (lower panel). We again observe that the estimated marginal and conditional VoIs are closer to the truth after adjusting for the NDE assumption. At $n = 2.5 \times 10^{4}$ (or $n = 5 \times 10^{4}$), the RE of adjusting for the NDE assumption is $(0.0666 / 0.0106)^{2} \approx 39$ (or $(0.0517 / 0.00848)^{2} \approx 37$). Such efficiency gain can be valuable in a cost-benefit analysis for the screening test, which we will discuss next. Similarly, conditional on $L_{0} = 0$, the RE of adjusting for the NDE assumption is $(0.129 / 0.0215)^{2} \approx 36$ (or $(0.106 / 0.0173)^{2} \approx 38$) at sample size $n = 2.5 \times 10^{4}$ (or $n = 5 \times 10^{4}$). Given $L_{0} = 1$, the VoI is very close to 0 without adjusting for the NDE assumption, but becomes 0 (the true value) after adjusting for the NDE assumption. In the strata $L_{0} = 1$, it is optimal to treat regardless of the screening status, or equivalently screening is not beneficial. Hence the VoI conditional on $L_{0} = 1$ should be 0. As we mentioned above, since only 2\% of the sample with $L_{0} = 0$ follow the optimal regime withholding screening, it is not surprising to see a very large efficiency gain in the VoI conditional on $L_{0} = 0$.
\end{itemize}

\begin{table}[h]
\centering
\begin{tabular}{c|c|c|c}
\hline
& $\E [Y_{g^{opt}}]$ & $\E [Y_{a_{0} = 0, s_{1}^{opt} (a_{0}
= 0)}]$ & VoI \\ \hline
w/o NDE ($n = 2.5 \times 10^{4}$) & 1.270 (0.0315) & 1.194 (0.0764) & 0.0758 (0.0666) \\ 
w/ NDE ($n = 2.5 \times 10^{4}$) & 1.257 (0.0243) & 1.194 (0.0263) & 0.0631 (0.0106) \\ 
w/o NDE ($n = 5 \times 10^{4}$) & 1.269 (0.0198) & 1.198 (0.0591) & 0.0701 (0.0517) \\ 
w/ NDE ($n = 5 \times 10^{4}$) & 1.264 (0.0163) & 1.200 (0.0182) & 0.0636 (0.00848) \\ \hline
Truth & 1.265 & 1.204 & 0.0610 \\ \hline
w/o NDE ($n = 2.5 \times 10^{4}$) $\vert L_{0} = 1$ & 1.525 (0.0407) & 1.524 (0.0418) & $1.179 \times 10^{-3}$ (0.0117) \\ 
w/ NDE ($n = 2.5 \times 10^{4}$) $\vert L_{0} = 1$ & 1.517 (0.0378) & 1.517 (0.0378) & 0 (0) \\ 
w/o NDE ($n = 5 \times 10^{4}$) $\vert L_{0} = 1$ & 1.531 (0.0308) & 1.531 (0.0308) & $1.057 \times 10^{-4}$ ($1.057 \times 10^{-3}$) \\ 
w/ NDE ($n = 5 \times 10^{4}$) $\vert L_{0} = 1$ & 1.528 (0.0274) & 1.528 (0.0274) & 0 (0) \\ \hline
Truth $\vert L_{0} = 1$ & 1.529 & 1.529 & 0 \\ \hline
w/o NDE ($n = 2.5 \times 10^{4}$) $\vert L_{0} = 0$ & 1.004 (0.0466) & 0.851 (0.155) & 0.154 (0.129) \\ 
w/ NDE ($n = 2.5 \times 10^{4}$) $\vert L_{0} = 0$ & 0.987 (0.0247) & 0.858 (0.0329) & 0.135 (0.0215) \\ 
w/o NDE ($n = 5 \times 10^{4}$) $\vert L_{0} = 0$ & 0.995 (0.0282) & 0.852 (0.116) & 0.143 (0.106) \\ 
w/ NDE ($n = 5 \times 10^{4}$) $\vert L_{0} = 0$ & 0.988 (0.0179) & 0.858 (0.0251) & 0.130 (0.0173) \\ \hline
Truth $\vert L_{0} = 0$ & 0.990 & 0.871 & 0.119 \\ \hline
\end{tabular}
\caption{Simulation results for DGP2: A comparison between the estimated value functions, the estimated utilities under the optimal regime withholding screening, and the estimated VoIs with and without adjusting for the NDE assumption. Upper panel: marginal versions; middle panel: conditional version given $L_{0} = 1$; lower panel: conditional version given $L_{0} = 0$. The numerical values outside (inside) the parentheses are the Monte Carlo means (standard deviations).}
\label{tab:dgp2voi}
\end{table}

Finally, we discuss how a cost-benefit analysis can be done based on the estimated VoI and its estimated standard deviation. We denote the VoI conditional on $L_{0} = l_{0}$ as $\text{cVoI}(l_{0})$. To determine if the screening test at $t = 0$ is cost effective, we are interested in testing the following null hypothesis $\mathsf{H}_{0}: \text{VoI} \leq \nu$ with $\nu = 0$ or $\mathsf{cH}_{0} (L_{0} = l_{0}): \text{cVoI} (l_{0}) \leq \nu$ with $\nu = 0$, $l_{0} = 0, 1$, depending on whether we want a cost effectiveness analysis on the population level or on the subgroup level given $L_{0}$. For each simulated dataset, we approximate the distribution of the estimated VoI or $\text{cVoI}(l_{0})$ by bootstrap sampling 100 times. In the first two columns of Table \ref{tab:dgp2cost}, we report the Monte Carlo means and Monte Carlo standard deviations of (1) the bootstrapped averages of the estimated VoIs and (2) the bootstrapped standard deviations of the estimated VoIs, marginally (upper panel) and conditional on $L_{0} = 1$ (middle panel) and $L_{0} = 0$ (lower panel). The two summary statistics based on nonparametric bootstrap are very close to the ``truths'' reported in the third column of Table \ref{tab:dgp2voi}. For example, at $n = 5 \times 10^{4}$, before (or after) adjusting for the NDE assumption, the bootstrapped average of the estimated marginal VoI has Monte Carlo mean 0.0759 (or 0.0632) and Monte Carlo standard deviation 0.0461 (or 0.00848) over 100 replications. Hence its Monte Carlo standard error is $0.0461 / \sqrt{100} = 0.00461$ or ($0.00848 / \sqrt{100} = 0.000848$). The corresponding Monte Carlo mean of the estimated VoI reported in the third column of Table \ref{tab:dgp2voi} is 0.0701 (or 0.0637), within the interval $0.0759 \pm 2 \times 0.00461 = [0.0667, 0.0851]$ (or $0.0632 \pm 2 \times 0.000848 = [0.0615, 0.0649]$).

In each of the 100 replications, if 0 is below the 5\% quantile of the empirical distribution of the 100 bootstrapped VoI estimators, we reject $\mathsf{H}_{0}$ and conclude that the screening test is marginally cost effective. The third column of Table \ref{tab:dgp2cost} reports empirical rejection rate of the null hypothesis $\mathsf{H}_{0}$ over 100 replications: adjusting for NDE assumption improves the empirical rejection rate from 25\% (or 40\%) to 100\% (or 100\%) at $n = 2.5 \times 10^{4}$ (or at $n = 5 \times 10^{4}$). We can perform similar analyses for conditional VoIs. The empirical rejection rate of $\mathsf{cH}_{0}(L_{0} = 1)$ is 0\% regardless of the sample size or whether adjusting for the NDE assumption. As we have seen previously, in the strata $L_{0} = 1$, it is optimal to always treat without screening even allowing screening. Hence screening is not beneficial for subjects whose $L_{0}$ is 1, in terms of VoI. On the other hand, the empirical rejection rate of $\mathsf{cH}_{0}(L_{0} = 0)$ is improved from 22\% (or 40\%) to 100\% (or 100\%) at $n = 2.5 \times 10^{4}$ (or at $n = 5 \times 10^{4}$) after adjusting for the NDE assumption.

\begin{landscape}
\begin{table}[h]
\centering
\begin{tabular}{c|c|c|c}
\hline
& bootstrap mean & bootstrap standard deviation & rejection rate \\
\hline
w/o NDE ($n = 2.5 \times 10^{4}$) & 0.0865 (0.0608) & 0.0600 (0.0176) & 25\%  \\
w/ NDE ($n = 2.5 \times 10^{4}$) & 0.0623 (0.0107) & 0.0122 (0.000889) & 100\% \\
w/o NDE ($n = 5 \times 10^{4}$) & 0.0759 (0.0461) & 0.0454 (0.0109) & 40\% \\
w/ NDE ($n = 5 \times 10^{4}$) & 0.0632 (0.00848) & 0.00865 (0.000716) & 100\% \\
\hline
w/o NDE ($n = 2.5 \times 10^{4}$) $\vert L_{0} = 1$ & 0.00426 (0.0141) & 0.00895 (0.0143) & 0\%  \\
w/ NDE ($n = 2.5 \times 10^{4}$) $\vert L_{0} = 1$ & $8.625 \times 10^{-5}$ ($5.189 \times 10^{-4}$) & $3.880 \times 10^{-4}$ ($1.535 \times 10^{-3}$) & 0\% \\
w/o NDE ($n = 5 \times 10^{4}$) $\vert L_{0} = 1$ & 0.000798 (0.00328) & 0.00257 (0.00576) & 0\% \\
w/ NDE ($n = 5 \times 10^{4}$) $\vert L_{0} = 1$ & $1.424 \times 10^{-5}$ ($1.424 \times 10^{-6}$) & $1.424 \times 10^{-6}$ ($1.424 \times 10^{-5}$) & 0\% \\
\hline
w/o NDE ($n = 2.5 \times 10^{4}$) $\vert L_{0} = 0$ & 0.172 (0.123) & 0.121 (0.0358) & 22\%  \\
w/ NDE ($n = 2.5 \times 10^{4}$) $\vert L_{0} = 0$ & 0.127 (0.0217) & 0.0249 (0.00175) & 100\% \\
w/o NDE ($n = 5 \times 10^{4}$) $\vert L_{0} = 0$ & 0.154 (0.0944) & 0.0923 (0.0224) & 40\% \\
w/ NDE ($n = 5 \times 10^{4}$) $\vert L_{0} = 0$ & 0.129 (0.0173) & 0.0176 (0.00146) & 100\% \\
\hline
\end{tabular}
\caption{Cost-effective analysis for DGP2. The first and second columns display the Monte Carlo means and standard deviations of the bootstrapped means and standard deviations of the estimated VoIs. The third column reports the empirical rejection rate of the null hypothesis $\mathsf{H}_{0}$ that the screening test is not cost effectiveness. Upper panel: marginal versions; middle panel: conditional version given $L_{0} = 1$; lower panel: conditional version given $L_{0} = 0$.} \label{tab:dgp2cost}
\end{table}
\end{landscape}

In summary, the above simulation demonstrates that adjusting for the NDE assumption can (1) improve the probability of selecting the optimal regimes in the opt-SNMM estimation procedure, (2) improve efficiency of the estimated value functions and VoIs, and thus (3) be a potentially valuable tool for cost-benefit evaluations of some expensive clinical tests. It should be noted that (1) may be a reason for the efficiency gain but efficiency can still be improved without increasing the probability of selecting the optimal regimes.

\subsection{opt-SNMM estimators in Appendix \ref{sim:cost}}
\label{app:sim} 

In our simulations, since all random variables but $Y$ are $\{0, 1\}$-valued ($R_1$ has an extra missingness category; see Remark \ref{rem:missing}), we can estimate the population (conditional) expectations by computing the empirical proportions of each strata of the binary random variables being conditioned on so that we can focus on the issue of adjusting for the NDE assumption without worrying about model misspecification.

In terms of adjusting for the NDE assumption, since there is a single screening occasion at $t = 0$, we only need to find one optimal function $b \left( \bar{H}_{0}, \underaccent{\bar}{S}_{0}, Y^{d} \right) \equiv b_{0} \left( \bar{H}_{0}, \underaccent{\bar}{S}_{0}, Y^{d} \right)$ using the techniques developed in Section \ref{sec:optimal_dr}. For simplicity, we consider the following class of $b \left( \bar{H}_{0}, \underaccent{\bar}{S}_{0}, Y^{d} \right)$'s: 
\begin{align*}
b \left( \bar{H}_{0}, \underaccent{\bar}{S}_{0}, Y^{d} \right) \equiv b \left( L_{0}, S_{1}, Y^{d} \right) = \bm{\beta}_{0, 4\xi}^{\top} \underbrace{\left( \begin{array}{c}
L_{0} S_{1} \bm{\varphi} (Y^d) \\ 
L_{0} (1 - S_{1}) \bm{\varphi} (Y^d) \\ 
(1 - L_{0}) S_{1} \bm{\varphi} (Y^d) \\ 
(1 - L_{0}) (1 - S_{1}) \bm{\varphi} (Y^d)
\end{array} \right)_{4\xi \times 1}}_{\coloneqq \ \breve{\bm{\varphi}} (L_0, S_1, Y^d)}
\end{align*}
where $\bm{\varphi} (\cdot): \mathbb{R} \rightarrow \mathbb{R}^{\xi}$ is an $\xi$-dimensional vector of (basis) functions, e.g. natural splines, wavelets, etc. In particular, $\bm{\varphi}$ denotes the natural spline transformations with $\xi$ degrees of freedom. We choose $\xi = 6$ in the simulation. Then finding the optimal $b_{sub} \left( \bar{H}_{0}, \underaccent{\bar}{S}_{0}, Y^{d} \right)$ is equivalent to finding the optimal coefficients $\bm{\beta}_{0, 8\xi}$.

\begin{remark}
\label{rem:connection_to_sub} 
In particular, the form of $b_{0} \left( \bar{H}_{0}, \underaccent{\bar}{S}_{0}, Y^{d} \right)$ given above can be represented as a special case of equation \eqref{eq:sublinear2} in Corollary \ref{cor:sub}, with 
\begin{align*}
d_{0}^{\ast} (L_{0}) = \bm{\beta}_{0, 4\xi}^{\top} \circ \left( \underbrace{L_0, \ldots, L_0}_{2\xi \text{ copies}}, \underbrace{1 - L_0, \ldots, 1 - L_0}_{2\xi \text{ copies}} \right)^{\top}
\end{align*}
where $\circ$ is the matrix Hadamard product and 
\begin{align*}
b_{0}^{\ast} \left( \bar{H}_{0}, \underaccent{\bar}{S}_{0}, Y^{d} \right) \equiv b_{0}^{\ast} \left( L_{0}, S_{1}, Y^{d} \right) = \left( \begin{array}{c}
S_{1} \bm{\varphi} (Y^d) \\ 
(1 - S_{1}) \bm{\varphi} (Y^d)
\end{array} \right)_{2\xi \times 1}
\end{align*}
is a special case of equation \eqref{eq:sublinear1}.
\end{remark}

The corresponding $T_{b, 0}$ is then of the following form: 
\begin{align*}
T_{b, 0} = \bm{\beta}_{0, 4\xi}^{\top} \underbrace{\left\{ \begin{array}{c}
\frac{\breve{\bm{\varphi}} ( L_0, S_1, Y^d )}{p_1 ( S_1 \vert \bar{H}_1 )} - \left( \E \left[ \frac{\breve{\bm{\varphi}} ( L_0, S_1, Y^d )}{p_1 (S_1 \vert \bar{H}_1 )} \vert L_{0}, A_{0} \right] - \E \left[ \frac{\breve{\bm{\varphi}} ( L_0, S_1, Y^d )}{p_1 (S_1 \vert \bar{H}_1)} \vert L_{0} \right] \right) \\ 
- \ \left( \E \left[ \frac{\breve{\bm{\varphi}} ( L_0, S_1, Y^d )}{p_1 ( S_1 \vert \bar{H}_1 )} \vert \bar{H}_{1}, S_{1} \right] - \E \left[ \frac{\breve{\bm{\varphi}} ( L_0, S_1, Y^d ))}{p_1 (S_1 \vert \bar{H}_1)} \vert \bar{H}_{1} \right] \right)
\end{array} \right\} (A_0 - \Pi_0)}_{\coloneqq \ \breve{T}_{b, 0}}.
\end{align*}

For any random variable $U$, the optimal $\bm{\beta}_{0, 8\xi}^{\ast, U}$ corresponding to the projection of $U$ onto $T_{b, 0}$ is simply 
\begin{align}  \label{eq:beta}
\bm{\beta}_{0, 8\xi}^{\ast, U} = \left\{ \E \left[ \breve{T}_{b, 0} \breve{T}_{b, 0}^{\top} \right] \right\}^{-1} \E \left[ U \breve{T}_{b, 0} \right],
\end{align}
where $\breve{T}_{b, 0}$ is defined in the above display.

We now describe how to estimate the opt-SNMM parameters for the simulation setup described in Section \ref{sec:simulation} and Appendix \ref{sim:further} and \ref{sim:cost}. The opt-SNMM parameters are $\Psi_{0} = \{ \Psi_{0, L_{0} = l_0}^{\top}, l_0 = 0, 1 \}^{\top}$ and $\Psi_{1} = \{ \Psi_{1, \bar{H}_{1} = \bar{h}_{1}}^{\top}, \bar{h}_{1} \in \bar{\mathcal{H}}_{1} \}^{\top}$ where $\bar{\mathcal{H}}_{1}$ is the sample space of $\bar{H}_{1}$. The estimating equations $U_0$ and $U_1$ (see equation \eqref{eq:U}) without adjusting for the NDE assumption are of the following forms: 
\begin{align*}
U_{1} = & \ \left( U_{1, \bar{H}_{1} = \bar{h}_{1}}, \bar{h}_{1} \in \bar{\mathcal{H}}_{1} \right) = \left( \mathcal{E}_{\bar{H}_{1} = \bar{h}_{1}} \left[ Y - \Psi_{1, \bar{H}_{1} = \bar{h}_{1}} S_{1} \right] \mathcal{E}_{\bar{H}_{1} = \bar{h}_{1}} \left[ S_{1} \right], \bar{h}_{1} \in \bar{\mathcal{H}}_{1} \right) \\
U_{0} = & \ \left( U_{0, L_{0} = l_0}, l_0 = 0, 1 \right) = \left( \mathcal{E}_{L_0 = l_0} \left[ Y_{\mathrm{blip}, 1} \left( \Psi_{1} \right) - \Psi_{0, L_{0} = 0}^{\top} A_{0} \right] \mathcal{E}_{L_0 = l_0} \left[ A_{0} \right], l_0 = 0, 1 \right)
\end{align*}
where $\mathcal{E}_{Z = z}(\cdot) \coloneqq id (\cdot) - \E [ \cdot \vert Z = z ]$, $\bar{H}_{1} = \left( L_0, A_0, R_1 \right)$, $Y_{\mathrm{blip}, 1} \left( \tilde{\Psi}_{1, \bar{H}_{1}} \right) \coloneqq Y - \tilde{\Psi}_{1, \bar{H}_{1}}^{\top} S_{1} + \tilde{\Psi}_{1, \bar{H}_{1}}^{\top} S_{1}^{opt}$. So the corresponding $\bm{q} \equiv (q_{t} = Q_{t}, t = 0, 1)$ in the definition of $U_t$ (see equation \eqref{eq:U}) are 
\begin{align*}
Q_1 (S_{1}) & = \left( \mathbbm{1} \{ \bar{H}_{1} = \bar{h}_{1} \} S_{1}, \bar{h}_{1} \in \bar{\mathcal{H}}_{1} \right), \\
Q_0 (A_{0}) & = \left( \mathbbm{1} \{ L_{0} = l_{0} \} A_{0}, l_{0} = 0, 1 \right).
\end{align*}
Now at the first occasion ($t = 0$), the total number of parameters in $\Psi_{0}$ is $2$ (counting $L_0 = 0$ or $L_0 = 1$), whereas at the second occasion ($t = 1$), the total number of parameters in $\Psi_{1}$ is $6$ (counting all possible combinations of $L_0$, $A_0$ and $R_1$). Then the unadjusted estimator $\tilde{\underaccent{\bar}{\Psi}}_0 (\bm{q})$ can be computed by solving the above system of linear equations: 
\begin{align*}
\tilde{\Psi}_{1, \bar{H}_{1} = \bar{h}_{1}} (\bm{q}) = & \; \mathbb{P}_n \left[ \hat{\mathcal{E}}_{\bar{H}_{1} = \bar{h}_{1}} \left[ S_{1} \right]^{2} \mathbbm{1} \{ \bar{H}_{1} = \bar{h}_{1} \} \right]^{-1} \mathbb{P}_n \left[ Y \hat{\mathcal{E}}_{\bar{H}_{1} = \bar{h}_{1}} \left[ S_{1} \right] \mathbbm{1} \{ \bar{H}_{1} = \bar{h}_{1} \} \right] \\
\tilde{\Psi}_{0, L_{0} = l_0} (\bm{q}) = & \; \mathbb{P}_n \left[ \hat{\mathcal{E}}_{L_0 = l_0} \left[ A_{0} \right]^{2} \mathbbm{1} \{ L_0 = l_0 \} \right]^{-1} \mathbb{P}_n \left[ Y_{\mathrm{blip}, 1} \left( \tilde{\Psi}_{1, \bar{H}_{1}} (q) \right) \hat{\mathcal{E}}_{L_0 = l_0} \left[ A_0 \right] \mathbbm{1} \{ L_0 = l_0 \} \right],
\end{align*}
where $\hat{\mathcal{E}}_{Z = z} = id (\cdot) - \hat{\E} [\cdot \vert Z = z]$ with conditional expectation replaced by its corresponding estimator (in our case the empirical mean in each strata of $Z$).

In the next step, plugging the unadjusted estimator $\tilde{\underaccent{\bar}{\Psi}}_0 (q)$ into $U_0$ and $U_1$, we obtain the corresponding $\hat{U}_{0}$ and $\hat{U}_{1}$. In turn, we can replace $U$ in equation \eqref{eq:beta} by $\hat{U}_{t, \cdot}$ (in each dimension) to obtain $\hat{\bm{\beta}}_{0, 4\xi}^{\ast, \hat{U}_{t, \cdot}}$, with all the population expectations replaced by empirical means. Define $\hat{T}_{b^{\ast}, 0}^{\hat{U}_{t, \cdot}} = \left( \hat{\bm{\beta}}_{0, 4\xi}^{\ast, \hat{U}_{t, \cdot}} \right)^{\top} \breve{T}_{b, 0}$, which is the estimated version of $T_{b^{\ast}, 0}^{U_{t, \cdot}}$.

Finally, we need to obtain the adjusted estimator $\tilde{\underaccent{\bar}{\Psi}}_{0} (\bm{q}, b_{sub})$ by plugging in the estimated $\hat{T}_{b^{\ast}, 0}^{\hat{U}_{t, \cdot}}$ computed from the procedure described above, which encodes the NDE assumption described in Section \ref{sec:nde}. The existence of a non-iterative closed-form solution has been discussed in Remark \ref{rem:close} and the solution is of the following form: 
\begin{align*}
\tilde{\Psi}_{1, \bar{H}_{1} = \bar{h}_{1}} (\bm{q}, b_{sub}) = & \ \BP_{n} \left[ \hat{\mathcal{E}}_{\bar{H}_{1} = \bar{h}_{1}} \left[ S_{1} \right]^{2} \mathbbm{1} \{ \bar{H}_{1} = \bar{h}_{1} \} \right]^{-1} \mathbb{P}_n \left[ \left\{ Y \hat{\mathcal{E}}_{\bar{H}_{1} = \bar{h}_{1}} \left[ S_{1} \right] - \hat{T}_{b^{\ast}, 0}^{\hat{U}_{1, \bar{H}_{1} = \bar{h}_{1}}} \right\} \mathbbm{1} \{ \bar{H}_{1} = \bar{h}_{1} \} \right] \\
\tilde{\Psi}_{0, L_{0} = l_{0}} (\bm{q}, b_{sub}) = & \ \mathbb{P}_n \left[ \hat{\mathcal{E}}_{L_0 = l_0} \left[ A_{0} \right]^{2} \mathbbm{1} \{ L_0 = l_0 \} \right]^{-1} \\
& \times \mathbb{P}_n \left[ \left\{ Y_{\mathrm{blip}, 1} \left( \tilde{\Psi}_{1, \bar{H}_{1}} (q) \right) \hat{\mathcal{E}}_{L_0 = l_0} \left[ A_{0} \right] - \hat{T}_{b^{\ast}, 0}^{\hat{U}_{0, L_{0} = l_{0}}} \right\} \mathbbm{1} \{ L_0 = l_0 \} \right].
\end{align*}

\subsection{The last example in Section \ref{sec:simulation}}
\label{app:worse}

We consider the following data generating process (abbreviated as DGP3):

\begin{itemize}
\item $Z^{\prime}_{0} \sim \text{Bernoulli}(0.5)$ and $Z_{0} = 2 Z_{0}^{\prime} - 1$.

\item $A_{0} \sim \text{Bernoulli}(\rho)$, with $\rho = 0.5$.

\item If $A_{0} = 1$, the test result $R_{1} = Z_{0}$; otherwise, $R_{1} = ?$.

\item The treatment $S_{1}$ is assigned by the following law: 
\begin{align*}
S_{1} & \sim \text{Bernoulli} (\epsilon_{?}) \mathbbm{1} \{A_{0} = 0\} \\
& + \text{Bernoulli}(\epsilon^{\prime }) \mathbbm{1} \{A_{0} = 1, R_{1} = -1\} + \text{Bernoulli}(\epsilon) \mathbbm{1} \{A_{0} = 1, R_{1} = 1\}
\end{align*}

\item $Y^{d} = 10 Z_{0} + N(0, 1)$ and we assume the cost $c^{\ast} = 0$ (i.e. $Y \equiv Y^{d}$).
\end{itemize}

With $g = (a_{0} = 0, s_{1} = 1)$, we are interested in $\theta_{g} \coloneqq \E [Y_{g}] (\equiv 0)$, which is estimated by the following estimators, depending on whether we impose the NDE assumption: 
\begin{align*}
\hat{\theta}_{ipw, g} & = \sum_{i = 1}^{n} V_{ipw, i} \equiv \frac{1}{n} \sum_{i = 1}^{n} \frac{(1 - A_{0, i}) S_{i} Y_{i}}{\Pr (A_{0} = 0) \Pr (S_{1} = 1 | A_{0, i}, R_{1, i})} \\
\hat{\theta}_{nde\mbox{-}ipw, g} & = \frac{1}{n} \sum_{i = 1}^{n} V_{nde\mbox{-}ipw, i} \equiv \frac{1}{n} \sum_{i = 1}^{n} \frac{S_{i} Y_{i}}{\Pr (S_{1} = 1 | A_{0, i}, R_{1, i})}.
\end{align*}
For simplicity, we only consider the oracle estimators and assume that the oracle probabilities of testing and treatment are known. In particular, we deliberately design the simulation setup such that the projections of $V_{ipw, i}$ or $V_{nde\mbox{-}ipw, i}$ onto the scores of the treatment $S_{1}$ and the test $A_{0}$ are zero.

The variances of the above two estimators are: 
\begin{align*}
n \var (\hat{\theta}_{ipw, g}) & = \var [V_{ipw, 1}] = \var \left[ \frac{(1 - A_{0}) S_{1} Y}{\Pr (A_{0} = 0) \Pr (S_{1} = 1 | A_{0} = 0, R_{1})} \right] \\
& = \E \left[ \frac{(1 - A_{0}) S_{1} Y^{2}}{\Pr (A_{0} = 0)^{2} \Pr (S_{1} = 1 | A_{0} = 0, R_{1})^{2}} \right] - \theta_{g}^{2} \\
& = \E \left[ \frac{101}{\Pr (A_{0} = 0) \Pr (S_{1} = 1 | A_{0} = 0, R_{1} = ?)} \right] - 0 \\
& = \frac{101}{\rho \epsilon_{?}} \\
n \var (\hat{\theta}_{nde\mbox{-}ipw, g}) & = \var (V_{nde\mbox{-}ipw, 1}) = \var \left[ \frac{S_{1} Y}{\Pr (S_{1} = 1 | A_{0}, R_{1})} \right] = \E \left[ \frac{S_{1} Y^{2}}{\Pr (S_{1} = 1 | A_{0}, R_{1})^{2}} \right] - \theta_{g}^{2} \\
& = \frac{101 \Pr (A_{0} = 0)}{\Pr (S_{1} = 1 | A_{0} = 0)} + \frac{101 \Pr (Z_{0} = 1, A_{0} = 1)}{\Pr (S_{1} = 1 | A_{0} = 1, R_{1} = 1)} + \frac{101 \Pr (Z_{0} = -1, A_{0} = 1)}{\Pr (S_{1} = 1 | A_{0} = 1, R_{1} = -1)} \\
& = 101 \rho \left( \frac{1}{\epsilon_{?}} + \frac{1}{2 \epsilon} + \frac{1}{2 \epsilon^{\prime}} \right).
\end{align*}
Thus the ratio of these two variances is 
\begin{align*}
\frac{\var (\hat{\theta}_{nde\mbox{-}ipw, g})}{\var (\hat{\theta}_{ipw, g})} = \rho^{2} \left( 1 + \frac{\epsilon_{?}}{2 \epsilon} + \frac{\epsilon_{?}}{2 \epsilon^{\prime}} \right)
\end{align*}
which can be made arbitrarily large by letting either of the two ratios $\frac{\epsilon_{?}}{\epsilon}$ or $\frac{\epsilon_{?}}{\epsilon^{\prime}} \rightarrow \infty$ (i.e. $\epsilon$ or $\epsilon^{\prime} \rightarrow 0$).

\begin{remark}
The reason for this counterintuitive phenomenon is the following. When $\epsilon$ (or $\epsilon^{\prime}$) is very small, if $A_{0} = 1$, the treatment is assigned with very low probability when the test result is +1 (or -1), which leads to inflated variance compared to $\hat{\theta}_{ipw, g}$ if we do not censor the patients incompatible with the testing regime of interest (that is, $a_{0} = 0$). Here we can consider an even more simplified example where there is no test result involved: $S_{1} = \text{Bernoulli} (\epsilon_{?}) \mathbbm{1} \{A_{0} = 0\} + \text{Bernoulli} (\epsilon) \mathbbm{1} \{A_{0} = 1\}$. In this scenario, when $\epsilon$ is small, we will again have inflated variance compared to $\hat{\theta}_{ipw, g}$ if the patients with $A_{0} = 1$ is not censored because the inverse probability of $A_{0} = 1$ is extremely large.
\end{remark}


\end{document}